\newcommand{\logic}{$\mathcal{L}_{LF}$}
\newcommand{\dom}[1]{dom\left({#1}\right)}
\newcommand{\ie}{{\rm i.e.}}
\newcommand{\emptybb}{\cdot}
\newcommand{\emptycs}{\cdot}
\newcommand{\declinst}[4]{#1 \vdash #2 \leadsto_{\mbox{\sl \scriptsize dec}} #3 \bowtie #4}
\newcommand{\bsinst}[4]{#1 ; #2 \vdash #3 \leadsto_{\mbox{\sl \scriptsize bs}} #4}
\newcommand{\csinst}[4]{#1 ; #2 \vdash #3 \leadsto_{\mbox{\sl \scriptsize cs}} #4}
\newcommand{\csinstone}[4]{#1 ; #2 \vdash #3 \leadsto^1_{\mbox{\sl \scriptsize cs}} #4}
\newcommand{\emptycb}{\cdot}
\newcommand{\fvalid}[2]{#1 \vDash #2\ \mbox{\tt valid}}
\newcommand{\fatmann}[3]{{\fatm{#1}{#2}}^{#3}}
\newcommand{\repeatchar}[2]{
  \ifnum #2>-1
    \counter=0
    \loop
      \ifnum \counter<#2
        #1
        \advance \counter+1
    \repeat
  \fi
}
\newcommand{\ltann}[2][1]{{#2}^{\repeatchar{*}{#1}}}
\newcommand{\ltannaux}[2]{{#2}^{*^{#1}}}
\newcommand{\eqann}[2][1]{{#2}^{\repeatchar{@}{#1}}}
\newcommand{\eqannaux}[2]{{#2}^{@^{#1}}}
\newcommand{\assn}[3]{#1\lbrack{#2}\mapsfrom{#3}\rbrack}
\newcommand{\typedpi}[3]{\Pi {#1}{:}{#2}.{\mkern 3mu} #3}
\newcommand{\type}{\mbox{Type}}
\newcommand{\lflam}[2]{\lambda {#1}.{\mkern 3mu} #2}
\newcommand{\imp}[2]{#1\rightarrow #2}
\newcommand{\app}{\ }
\newcommand{\emptysig}{\cdot}
\newcommand{\emptyctx}{\cdot}
\newcommand{\lfprove}[3][\Sigma]{#2 \vdash_{#1} #3}
\newcommand{\lfsynthkind}[4][\Sigma]{\lfprove[#1]{#2}{#3\Rightarrow #4}}
\newcommand{\lfsynthtype}[4][\Sigma]{\lfprove[#1]{#2}{#3\Rightarrow #4}}
\newcommand{\lfchecktype}[4][\Sigma]{\lfprove[#1]{#2}{#3\Leftarrow #4}}
\newcommand{\lfkind}[3][\Sigma]{#2 \vdash_{#1} #3\ \mbox{\tt kind}}
\newcommand{\lftype}[3][\Sigma]{#2 \vdash_{#1} #3\ \mbox{\tt type}}
\newcommand{\lfctx}[2][\Sigma]{\vdash_{#1} #2\ \mbox{\tt ctx}}
\newcommand{\lfsig}[1]{\vdash {#1}\ \mbox{\tt sig}}
\newcommand{\domain}[1]{\mbox{\sl dom}(#1)}
\newcommand{\range}[1]{\mbox{\sl rng}(#1)}
\newcommand{\context}[1]{\mbox{\sl ctx}(#1)}
\newcommand{\supportof}[1]{\mbox{\sl supp}(#1)}
\newcommand{\subst}[2]{#2[#1]}
\newcommand{\hsubst}[2]{#2\llbracket {#1} \rrbracket} 
\newcommand{\hsubstseq}[3]{#3\llbracket {#2} \rrbracket_{#1}}
\newcommand{\hsubr}[4][]{#3\llbracket {#2} \rrbracket^r_{#1} = #4}
\newcommand{\hsub}[4][]{#3\llbracket {#2} \rrbracket_{#1} = #4}
\newcommand{\comp}[2]{#1\circ #2}
\newcommand{\seqsub}[2]{\langle{#1},{#2}\rangle}
\newcommand{\ctxvarminus}[2]{#1_{#2}}
\newcommand{\restrict}[2]{#1|_{#2}}
\newcommand{\STLCGamma}{\Theta}
\newcommand{\aritysum}[2]{#1 \uplus #2}
\newcommand{\stlcder}{\vdash_{\mbox{\sl \scriptsize at}}}
\newcommand{\stlctyjudg}[3]{#1 \stlcder #2 : #3}
\newcommand{\stlcderr}{\vdash^r_{\mbox{\sl\scriptsize at}}}
\newcommand{\stlctyjudgr}[3]{#1 \stlcderr #2 : #3}
\newcommand{\kindingder}{\vdash_{\mbox{\sl \scriptsize ak}}}
\newcommand{\wftype}[2]{#1 \kindingder #2\ \mbox{\sl type}}
\newcommand{\kindingderp}{\vdash^p_{\mbox{\sl \scriptsize ak}}}
\newcommand{\akindingp}[3]{#1 \kindingderp #2 : #3}
\newcommand{\declder}{\vdash_{\mbox{\sl \scriptsize dec}}}
\newcommand{\wfdecls}[3]{#1 \declder #2 \Rightarrow #3}
\newcommand{\abstyping}[1]{\vdash #1\ \mbox{\sl blk schema}}
\newcommand{\acstyping}[1]{\vdash #1\ \mbox{\sl ctx schema}}
\newcommand{\wfctx}[3]{#1;#2 \vdash #3\ \mbox{\sl context}}
\newcommand{\noms}{\mathcal{N}} 
\newcommand{\atyarr}{\rightarrow}
\newcommand{\arr}[2]{#1\rightarrow #2}
\newcommand{\erase}[1]{{(#1)}^{-}}
\newcommand{\oty}{o}
\newcommand{\ctxty}[2]{#1\lbrack {#2}\rbrack}
\newcommand{\ctxvarty}[3]{#1\!\uparrow\!#2 : #3}
\newcommand{\supp}[1]{\text{supp}\left({#1}\right)}
\newcommand{\ctxsanstype}[1]{#1^-}
\newcommand{\of}[2]{#1:#2}
\newcommand{\emptyce}{\cdot}
\newcommand{\fatm}[2]{\left\{#1 \vdash #2\right\}}
\newcommand{\fand}[2]{#1\wedge #2}
\newcommand{\for}[2]{#1\vee #2}
\newcommand{\fimp}[2]{#1\supset #2}
\newcommand{\fall}[2]{\forall #1. #2}
\newcommand{\fexists}[2]{\exists #1. #2}
\newcommand{\genericq}{\mathcal{Q}}
\newcommand{\fgeneric}[2]{\genericq #1. #2}
\newcommand{\fctx}[3]{\Pi\,#1 : #2.#3}
\newcommand{\ftrue}{\top}
\newcommand{\ffalse}{\bot}
\newcommand{\typdecomp}[4]{\mbox{\sl Decompose}\left(#1,#2, #3, #4\right)}
\newcommand{\typdecompsans}{\mbox{\sl Decompose}} 
\newcommand{\seq}[5][\mathbb{N}]{#1;#2;#3;#4\longrightarrow #5}
\newcommand{\seqsans}[3][\mathbb{N}]{#1;\emptyset;\emptyset;#2\longrightarrow#3}
\newcommand{\seqsansfmlas}[3]{#1;#2;#3;}
\newcommand{\seqsansctxts}[2]{#1\longrightarrow#2}
\newcommand{\setand}[2]{#1, #2}
\newcommand{\formeq}[4]{#1 \vdash #3 \equiv_{#2} #4}
\newcommand{\ctxeq}[4]{#1 \vdash #3 \equiv_{#2} #4}
\newcommand{\streq}[4]{#1 \vdash #3 \succeq_{#2} #4}
\newcommand{\ind}{\mbox{\sl ind}}
\newcommand{\cut}{\mbox{\sl cut}}
\newcommand{\id}{\mbox{\sl id}}
\newcommand{\absR}{\mbox{\sl atm-abs-R}}
\newcommand{\annabsR}{\mbox{\sl atm-abs-R}^*}
\newcommand{\absL}{\mbox{\sl atm-abs-L}}
\newcommand{\annabsL}{\mbox{\sl atm-abs-L}^*}
\newcommand{\appR}{\mbox{\sl atm-app-R}}
\newcommand{\annappR}{\mbox{\sl atm-app-R}^*}
\newcommand{\appL}{\mbox{\sl atm-app-L}}
\newcommand{\annappL}{\mbox{\sl atm-app-L}^*}
\newcommand{\topR}{\mbox{\sl $\top$-R}}
\newcommand{\botL}{\mbox{\sl $\bot$-L}}
\newcommand{\ctxR}{\mbox{\sl $\Pi$-R}}
\newcommand{\ctxL}{\mbox{\sl $\Pi$-L}}
\newcommand{\allR}{\mbox{\sl $\forall$-R}}
\newcommand{\allL}{\mbox{\sl $\forall$-L}}
\newcommand{\existsR}{\mbox{\sl $\exists$-R}}
\newcommand{\existsL}{\mbox{\sl $\exists$-L}}
\newcommand{\impR}{\mbox{\sl $\supset$-R}}
\newcommand{\impL}{\mbox{\sl $\supset$-L}}
\newcommand{\andR}{\mbox{\sl $\wedge$-R}}
\newcommand{\andL}{\mbox{\sl $\wedge$-L$_i$}}
\newcommand{\orR}{\mbox{\sl $\vee$-R$_i$}}
\newcommand{\orL}{\mbox{\sl $\vee$-L}}
\newcommand{\lfwk}{\mbox{\sl LF-wk}}
\newcommand{\lfstr}{\mbox{\sl LF-str}}
\newcommand{\lfperm}{\mbox{\sl LF-perm}}
\newcommand{\lfinst}{\mbox{\sl LF-inst}}
\newcommand{\sweak}{\mbox{\sl ctx-wk}}
\newcommand{\sstr}{\mbox{\sl ctx-str}}
\newcommand{\weakening}{\mbox{\sl wk}}
\newcommand{\contraction}{\mbox{\sl cont}}
\newcommand{\emptyseq}{\mbox{\sl nil}}
\newcommand{\consseq}[2]{ #1 :: #2}
\newcommand{\addblksans}{\mbox{\sl AddBlock}}
\newcommand{\addblk}[7]{\addblksans(#1,#2,#3,#4,#5,#6,#7)}
\newcommand{\allblks}[3]{\mbox{\sl AllBlocks}(#1,#2,#3)}
\newcommand{\implheadssans}{\mbox{\sl ImplicitHeads}}
\newcommand{\implheads}[2]{\implheadssans(#1,#2)}
\newcommand{\namessans}{\mbox{\sl NamesLsts}}
\newcommand{\names}[3]{\namessans(#1,#2,#3)}
\newcommand{\headssans}{\mbox{\sl Heads}}
\newcommand{\hds}[2]{\headssans(#1,#2)}
\newcommand{\decompseqsans}{\mbox{\sl ReduceSeq}}
\newcommand{\decompseq}[2]{\decompseqsans\left({#2}, #1\right)}
\newcommand{\makecasessans}{\mbox{\sl Cases}}
\newcommand{\makecases}[3][\fatm{G}{R:P}]{\makecasessans\left({#2}, #1, #3\right)}
\newcommand{\casessans}{\mbox{\sl AllCases}}
\newcommand{\casesfn}[2][\fatm{G}{R:P}]{\casessans \left(#2, #1\right)}
\newcommand{\unif}[3]{\left\langle{#1}, {#2}, {#3}\right\rangle}
\newcommand{\eqn}[2]{{#1}={#2}}
\newcommand{\solun}[2]{\langle{#1},{#2}\rangle}
\newcommand{\permute}[2]{{#1}.{#2}}
\newcommand{\inv}[1]{{#1}^{-1}}
\newcommand{\wfform}[3]{#1;#2 \vdash #3\ \mbox{\sl fmla}}
\newcommand{\wfctxvarty}[3]{#1 ; #2 \vdash #3 \mbox{ ctx-ty}}
\newcommand{\ctxtyinst}[5]{#1;#2; #3 \vdash #4 \leadsto_{\mbox{\sl \scriptsize csty}} #5}
\newcommand{\sigempty}{\mbox{\sl\small SIG\_EMPTY }}
\newcommand{\sigterm}{\mbox{\sl\small SIG\_TERM}}
\newcommand{\sigfam}{\mbox{\sl\small SIG\_FAM}}
\newcommand{\ctxempty}{\mbox{\sl\small CTX\_EMPTY}}
\newcommand{\ctxterm}{\mbox{\sl\small CTX\_TERM}}
\newcommand{\canonkindtype}{\mbox{\sl\small CANON\_KIND\_TYPE}}
\newcommand{\canonkindpi}{\mbox{\sl\small CANON\_KIND\_PI}}
\newcommand{\canonfamatom}{\mbox{\sl\small CANON\_FAM\_ATOM}}
\newcommand{\canonfampi}{\mbox{\sl\small CANON\_FAM\_PI}}
\newcommand{\atomfamconst}{\mbox{\sl\small ATOM\_FAM\_CONST}}
\newcommand{\atomfamapp}{\mbox{\sl\small ATOM\_FAM\_APP}}
\newcommand{\canontermatom}{\mbox{\sl\small CANON\_TERM\_ATOM}}
\newcommand{\canontermlam}{\mbox{\sl\small CANON\_TERM\_LAM}}
\newcommand{\atomtermvar}{\mbox{\sl\small ATOM\_TERM\_VAR}}
\newcommand{\atomtermconst}{\mbox{\sl\small ATOM\_TERM\_CONST}}
\newcommand{\atomtermapp}{\mbox{\sl\small ATOM\_TERM\_APP}}
\newcommand{\tpty}{\mbox{\sl tp}}
\newcommand{\unittm}{\mbox{\sl unit}}
\newcommand{\arrtm}{\mbox{\sl arr}}
\newcommand{\tmty}{\mbox{\sl tm}}
\newcommand{\emptytm}{\mbox{\sl empty}}
\newcommand{\lamtm}{\mbox{\sl lam}}
\newcommand{\apptm}{\mbox{\sl app}}
\newcommand{\ofty}{\mbox{\sl of}}
\newcommand{\ofemptytm}{\mbox{\sl of\_empty}}
\newcommand{\ofapptm}{\mbox{\sl of\_app}}
\newcommand{\oflamtm}{\mbox{\sl of\_lam}}
\newcommand{\eqty}{\mbox{\sl eq}}
\newcommand{\refltm}{\mbox{\sl refl}} 
\numberwithin{figure}{section}
\newtheorem{theorem}{Theorem}[section]
\newtheorem{lemma}[theorem]{Lemma}
\newtheorem{definition}[theorem]{Definition}
\title{A Logic for Reasoning About LF Specifications}
\author{Gopalan Nadathur \and Mary Southern}
\begin{document}
\maketitle

\begin{abstract}
\noindent We present a logic named \logic\ whose intended use is to
formalize properties of  specifications developed in the dependently
typed lambda calculus LF.  
The logic is parameterized by the LF signature that constitutes the
specification.
Atomic formulas correspond to typing derivations relative to this
signature. 
The logic includes a collection of propositional connectives and quantifiers.
Quantification ranges over expressions that denote LF terms and LF
contexts.
Quantifiers of the first variety are qualified by simple types that
describe the functional structure associated with the variables they
bind; deeper, dependency related properties are expressed by the body
of the formula.
Context-level quantifiers are qualified by \emph{context schemas} that
identify patterns of declarations out of which actual contexts may be 
constructed.
The semantics of variable-free atomic formulas is articulated via the
derivability in LF of the judgements they encode.
Propositional constants and connectives are understood in the usual
manner and the meaning of quantifiers is explicated through
substitutions of expressions that adhere to the type qualifications.
The logic is complemented by a proof system that enables reasoning
that is sound with respect to the described semantics.
The main novelties of the proof system are the provision for
case-analysis style reasoning about LF judgements, support for
inductive reasoning over the heights of LF derivations and the
encoding of LF meta-theorems.
The logic is motivated by the paradigmatic example of type assignment
in the simply-typed lambda calculus and the proof system is
illustrated through the formalization of a proof of type uniqueness
for this calculus. 

\end{abstract}

\section{Introduction}\label{sec:intro}

The Edinburgh Logical Framework of LF has been proposed as a vehicle
for formalizing rule-based presentation of object systems and has also
been successfully used in many such formalizations.
Two aspects of LF are critical to its use in such applications: its
basis in a regime of dependent types and its ability to support a
meta-level treatment of binding.
The typical deployment of LF in this context involves the
presentation of a signature that represents the object system.
Exploiting their parameterization by terms, types are then used to
encode relations between objects. 
Inhabitation of such a type, demonstrated through the
derivation of a typing judgement, provides witness to the validity of
a relation of interest.
The meta-level encoding of binding is exhibited in the form of typing
judgements in which typing contexts that scope over types and terms
play an important role.

The objective of this paper is to develop a framework for formalizing
and proving properties of specifications that are constructed in the
manner described above.
This framework is intended to provide a formal counterpart to the
informal style of reasoning about object systems based on their
rule-based description as is seen, for example, in \cite{pierce02book}.
To fit this role, the framework would need to be based on a logic that
allows for the expression of complex properties that are built on top
of derivability judgements whose validity is determined by the rules
describing the system. 
Moreover, the framework would need to accommodate case analysis and
inductive forms of reasoning, thereby enabling the least-fixed point
treatment of the rules that is typically assumed in the informal
arguments. 

Towards delivering on the mentioned objective, we describe here a logic
in which properties of LF specifications can be formalized.
The logic is parameterized by an LF signature that describes the
particular object system that is of interest.
The atomic formulas in the logic then comprise typing judgements that
assert the inhabitation of types relative to LF contexts.
Complex formulas can be constructed from these judgements using a
collection of propositional constants and connectives and
quantifiers.
The quantification that is permitted ranges over expressions that
denote both LF terms and LF contexts.
A critical issue that must be addressed in this context is the
qualification of quantifiers to limit their scope to meaningful
subclasses of expressions.
For term quantifiers, we address this issue by using a special class
of simple types that are referred to as \emph{arity types}.
Such a type limits attention to terms that satisfy a particular functional
structure, leaving deeper, dependency related properties to the domain
of the formula that the quantifier ranges over.
For context quantifiers, we introduce a new kind of types that we call
\emph{context schemas}.
These types, which are motivated by the \emph{regular worlds} used
with Twelf developments~\cite{Pfenning02guide,schurmann00phd},
identify patterns of declarations out of which the actual contexts
that instantiate the quantifiers must be constructed.
The validity of atomic formulas that are devoid of free variables is
determined, as might be expected, by the derivability of the LF
judgements that they represent.
Propositional symbols are understood in the usual manner.
An especially interesting part of the semantics is the treatment of
quantifiers: these are interpreted via the substitution of expressions
that adhere to the type qualifications that adorn them.

Perhaps the most significant contribution of this paper is a proof
system that can be used to establish the validity of formulas in 
the logic that is described.
This proof system is oriented around sequents that intuitively
encapsulate states in the development of a proof.
A basic part of this system is a set of rules that encode the
meanings of the propositional symbols and quantifiers.
The truly innovative part of the system is its treatment of atomic
formulas that embodies their interpretation as typing judgements in LF.
Included in this part is a mechanism for analyzing
an atomic assumption formula via the parameterizing signature and the
declarations in the LF context that is a part of the formula.
The rules also build in the capability to reason by induction on the
heights of LF derivations and to utilize meta-theorems about LF
derivability.
We show in this paper that the proof system is both coherent and
sound.
While we illustrate the logic and the proof system, we leave the
demonstration of their effectiveness to other work, e.g. see
\cite{adelfa.website,southern21lfmtp}.

The rest of the paper is organized as follows.
In the next section, we present the notions related to LF that
we need in other parts of this work.
A major part of this presentation is the treatment of \emph{simultaneous
hereditary substitution}, a concept that is important to much of the
technical development in this paper and that has also not been treated
in prior work.
Section~\ref{sec:logic} then presents the logic \logic\ and illustrates
its use in formalizing the type uniqueness property for the
simply-typed lambda calculus, a paradigmatic example for such work.
Section~\ref{sec:proof-system} describes the notion of a sequent and
develops a collection of proof rules for deriving sequents.
The main burden of the technical work in this section is the
demonstration of the coherence and soundness of the rules that are
presented. 
Section~\ref{sec:examples} illustrates the proof system by showing how
it can be used to encode the informal argument outlined in
Section~\ref{sec:logic}.
We conclude the paper in Section~\ref{sec:conclusion} by placing the
ideas in this paper in the context of other work related to reasoning
about LF specifications. 

\section{LF and the Formalization of Object Systems}
\label{sec:lf}

The methodology for modeling object systems in LF relies on adequacy
theorems that, in turn, depend on canonical $\beta\eta$-forms for
terms.
The original presentation of LF~\cite{harper93jacm} includes terms
in both canonical and non-canonical form.
Such a presentation simplifies the treatment of substitution but at
the price of complicating arguments concerning adequacy and LF
derivability.  
In light of this an alternative treatment of LF has been proposed that 
admits only terms that are in $\beta$-normal form and that are
well-typed only if they are additionally in $\eta$-long
form~\cite{harper07jfp,watkins03tr}.
We use this presentation of LF, called \emph{canonical LF}, as the
basis for our work. 
The first subsection below recalls this presentation and develops
notions related to it that will be used in the later parts of this
paper.  
Towards motivating the development of the logic that is the main
content of this paper, we then discuss the use of LF in representing
object systems and in reasoning about them at an informal level.
The section concludes with an identification of meta-theorems related
to derivability in LF that will be used in articulating proof rules in
our logic.

\subsection{Canonical LF}\label{ssec:lf-syntax}

Our presentation of canonical LF, henceforth referred to simply as LF,
differs from that in ~\cite{harper07jfp} in two respects.
First, we elide the subordination relation in typing judgements since
it is orthogonal to the thrust of this paper.
Second, we treat substitution independently of LF typing judgements
and we also extend the notion to include the simultaneous
replacement of multiple variables. 
The elaboration below builds in these ideas.

\begin{figure}[htpb]
\[
\begin{array}{r r c l}
  \mbox{\bf Kinds} & K & ::= & \type\ |\ \typedpi{x}{A}{K}\\[5pt]

  \mbox{\bf Canonical Type Families} & A,B & ::= &
           P\ |\ \typedpi{x}{A}{B}\\
  \mbox{\bf Atomic Type Families} & P & ::= & a\ |\ P\app M\\[5pt]
  \mbox{\bf Canonical Terms} & M,N & ::= & R\ |\ \lflam{x}{M}\\
  \mbox{\bf Atomic Terms} & R & ::= & c\ |\ x\ |\ R\app M\\[5pt]
  \mbox{\bf Signatures} & \Sigma & ::= &
  \emptysig\ |\ \Sigma,c:A\ |\ \Sigma,a:K\\[5pt]
  \mbox{\bf Contexts} & \Gamma & ::= & \emptyctx\ |\ \Gamma,x:A
\end{array}
\]
\caption{The Syntax of LF Expressions}
\label{fig:lf-terms}
\end{figure}

\subsubsection{The Syntax} The syntax of LF expressions is described in
Figure~\ref{fig:lf-terms}.
The primary interest is in three categories of expressions:
kinds, types which are indexed by kinds, and terms which are indexed
by types. 
In these expressions, $\lambda$ and $\Pi$ are binding or abstraction
operators.
Relative to these operators, we assume the principle of
equivalence under renaming that is applied as needed. 
We also assume as understood the notions of free and bound
variables that are usual to expressions involving such operators.
To ensure the absence of $\beta$-redexes, terms are stratified into
\emph{canonical} and \emph{atomic} forms.  
A similar stratification is used with types that is exploited by the
formation rules to force all well-typed terms to be in $\eta$-long
form.
We use $x$ and $y$ to represent term-level variables, which are bound
by abstraction operators or in the contexts that are associated with
terms. 
Further, we use $c$ and $d$ for term-level constants, and $a$
and $b$ for type-level constants, both of which are typed in
signatures. 
The expression $\imp{A_1}{A_2}$ is used as an alternative notation
for the type family $\typedpi{x}{A_1}{A_2}$ when $x$ does
not appear free in $A_2$. 
An atomic term has the form $(h\app M_1\app \ldots\app M_n)$ where $h$
is a variable or a constant.
We refer to $h$ as the \emph{head symbol} of such a term. 

\subsubsection{Simultaneous Hereditary Substitution}

We will need to consider substitution into LF expressions when
explicating typing and other logical notions related to these
expressions.
To preserve the form of these expressions, it is necessary to build
$\beta$-reduction into the application of such substitutions.
An important consideration in this context is that substitution
application must be a terminating operation. 
Towards ensuring this property, substitutions are indexed by types
that are eventually intended to characterize the functional structure
of expressions. 

\begin{definition}\label{def:aritytypes}
The collection of expressions that are obtained from the constant
$\oty$ using the binary infix constructor $\atyarr$ constitute the
{\it arity types}.
Corresponding to each canonical type $A$, there is an arity type
called its {\it erased form} and denoted by $\erase{A}$ and given as
follows: $\erase{P} = \oty$ and $\erase{\typedpi{x}{A_1}{A_2}} =
\arr{\erase{A_1}}{\erase{A_2}}$.  

\end{definition}

\begin{definition}\label{def:substitution}
A substitution $\theta$ is a finite set of 
the form $\{\langle x_1,M_1,\alpha_1 \rangle, \ldots, \langle x_n,
M_n, \alpha_n \rangle \}$, where, for $1 \leq i \leq n$, $x_i$ is a
distinct variable, $M_i$ is a canonical term and $\alpha_i$ is an
arity type.\footnote{Note that by a 
systematic abuse of notation, $n$ may be less than $m$ in a sequence written in
the form $s_m,\ldots,s_n$, in which case the empty sequence is
denoted. In this particular instance, a substitution can be an empty
set of triples.}
Given such a substitution, $\domain{\theta}$ denotes the set 
$\{x_1,\ldots,x_n\}$ and $\range{\theta}$ denotes the set
$\{M_1,\ldots,M_n\}$.
\end{definition}
 
Given a substitution $\theta$ and an expression $E$ that is a kind, a
type, a canonical term or a context, we wish the expression
$\hsubst{\theta}{E}$ notionally to denote the application of $\theta$
to $E$.
However, such an application is not guaranteed to exist.
We therefore use the expression $\hsub{\theta}{E}{E'}$ to indicate
when it is defined and has $E'$ as a result.
The key part of defining this relation is that of articulating
its meaning when $E$ is a canonical term.
This is done in Figure~\ref{fig:hsub} via rules for deriving this
relation. 
These rules use an auxiliary definition of substitution into an
atomic term which accounts for any normalization that is necessitated
by the replacement of a variable by a term.
The different categories of rules in this figure are distinguished by
being preceded by a box containing the judgement form they relate to.
The extension of this definition to the case where $E$ is a kind or a type
corresponds essentially to the application of the substitution to the
terms that appear within $E$. 
This idea is made explicit for types in Figure~\ref{fig:hsubtypes} and its
elaboration for kinds is similar.
A substitution is meaningfully applied to a context only when it does
not replace variables to which the context assigns types and when a
replacement does not lead to inadvertent capture.
When these conditions are satisfied, the substitution distributes to
the types that are assigned to the variables as the rules in
Figure~\ref{fig:hsubctx} make clear. 

\begin{figure}[tbhp]

\fbox{$\hsub{\theta}{M}{M'}$}

\begin{center}
\begin{tabular}{ccc}    
  \infer{\hsub{\theta}
              {R}
              {R'}}
        {\hsubr{\theta}
              {R}
              {R'}} \qquad
  & \qquad
  \infer{\hsub{\theta}
              {R}
              {M'}}
        {\hsubr{\theta}
              {R}
              {M':\alpha'}}

  & \qquad
\infer{\hsub{\theta}
            {(\lflam{x}{M})}
            {\lflam{x}{M'}}}
      {x\ \mbox{\rm not free in}\ \domain{\theta} \cup \range{\theta}
        \qquad
        \hsub{\theta}{M}{M'}}
\end{tabular}
\end{center}

\vspace{7pt}

\fbox{$\hsubr{\theta}{R}{M':\alpha'}$}

\begin{center}
\begin{tabular}{cc}
\infer{\hsubr{\theta}{x}{M:\alpha}}
      {\langle x,M,\alpha \rangle \in \theta}

& \qquad

\infer{\hsubr{\theta}{(R\app M)}{M''':\alpha''}}
      {\hsubr{\theta}{R}{\lflam{x}{M'}:\arr{\alpha'}{\alpha''}} 
       \qquad
       \hsub{\theta}{M}{M''}
       \qquad
       \hsub{\{\langle x, M'',\alpha'\rangle \}}{M'}{M'''}}
\end{tabular}
\end{center}

\vspace{7pt}

\fbox{$\hsubr{\theta}{R}{R'}$}

\begin{center}
\begin{tabular}{ccc}
\infer{\hsubr{\theta}{c}{c}}
      { }
& \qquad
\infer{\hsubr{\theta}
            {x}
            {x}}
      {x\not\in\domain{\theta}}

& \qquad 
\infer{\hsubr{\theta}{(R\app M)}{R'\app M'}}
       {\hsubr{\theta}{R}{R'} \qquad\qquad
        \hsub{\theta}{M}{M'}}
\end{tabular}
\end{center}

\caption{Applying Substitutions to Terms}
\label{fig:hsub}
\end{figure}

\begin{figure}[tbhp]
\begin{center}
\begin{tabular}{c}
\infer{\hsub{\theta}{a}{a}}
      { }

\qquad \qquad
  
\infer{\hsub{\theta}{(P\app M)}{(P'\app M')}}
      {\hsub{\theta}{P}{P'} &
       \hsub{\theta}{M}{M'}}

\\[15pt]

\infer{\hsub{\theta}{(\typedpi{x}{A_1}{A_2})}{\typedpi{x}{A_1'}{A_2'}}}
      {x\ \mbox{\rm not free in}\ \domain{\theta} \cup \range{\theta}
        \qquad
        \hsub{\theta}{A_1}{A_1'} \qquad
       \hsub{\theta}{A_2}{A_2'}}
\end{tabular}
\end{center}

\caption{Applying Substitutions to Types}
\label{fig:hsubtypes}
\end{figure}

\begin{figure}[tbhp]
\begin{center}
\begin{tabular}{c}
  \infer
    {\hsub{\theta}
          {\emptyctx}
          {\emptyctx}}
    {}

    \qquad
    
  \infer
    {\hsub{\theta}{(\Gamma, x:A)}{\Gamma', x:A'}}
    {x\ \mbox{\rm not free in}\ \domain{\theta} \cup \range{\theta} \qquad
     \hsub{\theta}{\Gamma}{\Gamma'} \qquad
     \hsub{\theta}{A}{A'}} 
\end{tabular}
\end{center}  
\caption{Applying Substitutions to Contexts}
\label{fig:hsubctx}
\end{figure}

We define a measure on substitutions that is useful in showing that
their application terminates. 

\begin{definition}\label{def:typesubsize}
The size of an arity type is the number of occurrences of $\atyarr$ in
it. The size of a substitution is the largest of the sizes of the arity types
in each of its triples.
\end{definition}

The following theorem can be proved by induction first on
the sizes of substitutions and then on the structures of expressions;
it would first be proved simultaneously for canonical and atomic terms
and then extended to types, kinds and contexts. 
\begin{theorem}\label{th:uniqueness}
For any context, kind, type or canonical term $E$ and any substitution
$\theta$, it is decidable whether there is an $E'$ such that
$\hsub{\theta}{E}{E'}$ is derivable.
Moreover, there is at most one $E'$ for which it is derivable.
Similarly, for any atomic term $R$ and substitution $\theta$, it is
decidable whether there is an $R'$ or an $M'$ and $\alpha'$ such
that $\hsubr{\theta}{R}{R'}$ or $\hsubr{\theta}{R}{M' : \alpha'}$ is
derivable, at most one of these judgements is derivable and that too
for a unique $R'$, respectively, $M'$ and $\alpha'$. 
\end{theorem}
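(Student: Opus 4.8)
The plan is to prove the existence-and-uniqueness (``functionality'') of the substitution relations by a nested induction exactly as the statement's preamble prescribes: the outer induction is on $\size{\theta}$, the size of the substitution as given in Definition~\ref{def:typesubsize}, and the inner induction is on the structure of the expression $E$ (respectively the atomic term $R$). Because the rules for $\hsubst{\theta}{M}$ on canonical terms and the two judgements $\hsubr{\theta}{R}{R'}$ and $\hsubr{\theta}{R}{M':\alpha'}$ on atomic terms refer to one another, the first stage establishes all three claims \emph{simultaneously} for terms; only once that is done are types, kinds and contexts handled, since their rules (Figures~\ref{fig:hsubtypes} and~\ref{fig:hsubctx}) merely propagate the substitution to the embedded terms and introduce no new recursion on $\size{\theta}$.

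First I would fix $\theta$ and argue that I may assume the claim holds for every substitution of strictly smaller size; this is the outer induction hypothesis. Within a fixed $\theta$, I would then induct on the structure of the term. For a canonical term the two nontrivial cases are an atomic term $R$, where I appeal directly to the atomic-term claims, and an abstraction $\lflam{x}{M}$, where, after invoking renaming to satisfy the side condition that $x$ is not free in $\domain{\theta}\cup\range{\theta}$, the result follows from the structurally smaller $M$. For atomic terms the base cases are a constant $c$ (always yielding $c$ with no choice) and a variable $x$: here the outcome is determined by whether $x\in\domain{\theta}$, and the two families of atomic rules are mutually exclusive precisely because of this membership test, so at most one judgement fires and its result is forced by the unique triple $\langle x,M,\alpha\rangle\in\theta$. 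The decidability half of each case is immediate once I observe that membership in $\domain{\theta}$, the syntactic side conditions, and the applicability of each rule are all effectively checkable, and that the inductive calls are themselves decidable.

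The crux, and the step I expect to be the main obstacle, is the application case $R\app M$ in the rule producing $\hsubr{\theta}{(R\app M)}{M''':\alpha''}$. Here the premise $\hsubr{\theta}{R}{\lflam{x}{M'}:\arr{\alpha'}{\alpha''}}$ returns a term together with an arrow arity type, and the final premise is a substitution into $M'$ under the \emph{new} substitution $\{\langle x,M'',\alpha'\rangle\}$. That new substitution is not a sub-structure of $\theta$, so neither the inner structural induction nor the outer fixed-$\theta$ hypothesis applies to it directly. This is exactly where the measure of Definition~\ref{def:typesubsize} earns its keep: the arity type carried by the head determines that $\alpha'$ has size strictly smaller than that of the arrow type $\arr{\alpha'}{\alpha''}$, and one shows that $\arr{\alpha'}{\alpha''}$ is bounded by $\size{\theta}$, so $\size{\{\langle x,M'',\alpha'\rangle\}} < \size{\theta}$ and the \emph{outer} induction hypothesis licenses the recursive call. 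Making this size accounting precise — tracking how the arity annotations on $\hsubr{}{}{\cdot:\alpha'}$ relate to the sizes in $\theta$ and confirming the strict decrease — is the technical heart of the argument.

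For uniqueness in this same case I would note that, once $R$ and $M$ are fixed, the induction hypotheses force $\lflam{x}{M'}$, $\alpha'$, $\alpha''$ and $M''$ to be unique, whence the unique smaller-size substitution $\{\langle x,M'',\alpha'\rangle\}$ yields a unique $M'''$; the mutual exclusivity of the two atomic-output forms (an $R'$ versus an $M':\alpha'$) follows because the head of $R$ reduces to a constant/variable in the first family and to an abstraction in the second, and these cannot coincide. Finally, having settled terms, I would lift the result to types, kinds and contexts by a routine structural induction using Figures~\ref{fig:hsubtypes} and~\ref{fig:hsubctx}, where every premise is a substitution application to a strictly smaller type or to a term already covered, and where the context case additionally requires checking the stated side conditions on $\domain{\theta}$ and capture-avoidance; none of these introduces a new difficulty beyond the term-level argument already completed.
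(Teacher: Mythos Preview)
Your proposal is correct and follows essentially the same approach the paper sketches: a primary induction on the size of the substitution and a secondary induction on the structure of the expression, carried out simultaneously for canonical and atomic terms and then extended to types, kinds, and contexts. Your elaboration of the application case---using the outer hypothesis for the recursive call on $M'$ because $\size{\{\langle x,M'',\alpha'\rangle\}} < \size{\theta}$---is exactly the intended mechanism; to make the bound $\size{\arr{\alpha'}{\alpha''}} \leq \size{\theta}$ go through you will want to fold into the atomic-term induction hypothesis the invariant that whenever $\hsubr{\theta}{R}{M':\alpha'}$ is derivable the returned arity type $\alpha'$ has size at most $\size{\theta}$, which follows since it originates from a triple in $\theta$ and only shrinks along the spine.
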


The following property has an obvious proof by induction on the
structure of the expression.

\begin{theorem}\label{th:vacuoussubs}
If $E$ is a kind, a type or a canonical term none of whose free
variables is a member of $\domain{\theta}$, then $\hsub{\theta}{E}{E}$ has
as derivation. If $R$ is an atomic term none of whose free variables
is a member of $\domain{\theta}$ then $\hsubr{\theta}{R}{R}$ has a
derivation.
\end{theorem}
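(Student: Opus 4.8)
The plan is to argue by structural induction on the expression, treating terms first—simultaneously for the mutually recursive categories of canonical and atomic terms, since a canonical term may itself be atomic and an atomic application carries a canonical argument—and then lifting the result to types and kinds. Concretely, I would establish in parallel the two statements: (a) if $M$ is a canonical term none of whose free variables lies in $\domain{\theta}$ then $\hsub{\theta}{M}{M}$ is derivable, and (b) if $R$ is an atomic term none of whose free variables lies in $\domain{\theta}$ then $\hsubr{\theta}{R}{R}$ is derivable. The base cases are immediate: for $R = c$ the constant axiom gives $\hsubr{\theta}{c}{c}$, and for $R = x$ the variable $x$ is a free variable of $R$, so by hypothesis $x \notin \domain{\theta}$ and the side-conditioned variable axiom yields $\hsubr{\theta}{x}{x}$. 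For $R = R_1\app M_1$ the free variables of $R_1$ and of $M_1$ are among those of $R$, so the induction hypotheses supply $\hsubr{\theta}{R_1}{R_1}$ and $\hsub{\theta}{M_1}{M_1}$, and the application rule of Figure~\ref{fig:hsub} assembles $\hsubr{\theta}{(R_1\app M_1)}{R_1\app M_1}$. Among the canonical cases, $M = R$ uses part (b) together with the first rule of Figure~\ref{fig:hsub} that injects an atomic result into a canonical one, while $M = \lflam{x}{M_1}$ uses the abstraction rule once its side condition is discharged as discussed below.

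Lifting to types and kinds is then a routine structural induction using the rules of Figure~\ref{fig:hsubtypes} and their evident kind analogues. The atomic-type head $a$ is fixed by its axiom; an application $P\app M$ recurses on $P$ and invokes the term result of the previous paragraph on $M$; the constant $\type$ is fixed by its axiom; and the $\typedpi{x}{A_1}{A_2}$ and $\typedpi{x}{A}{K}$ cases mirror exactly the abstraction case for terms. At each such step the free variables of the immediate subexpressions are contained in those of the whole expression (modulo the bound variable, handled next), so the induction hypothesis applies and the corresponding rule delivers the identity result.

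The one genuinely delicate point—and the reason the two binder cases are the crux—is the side condition ``$x$ not free in $\domain{\theta}\cup\range{\theta}$'' carried by the rules for $\lflam{x}{M_1}$ and $\typedpi{x}{A_1}{A_2}$. A bound variable of $E$ is not constrained by the theorem's hypothesis, which restricts only the \emph{free} variables of $E$, so $x$ could a priori clash with $\theta$. I would dispatch this using the equivalence-under-renaming convention assumed for these expressions: before applying the rule, rename the bound variable to one chosen fresh for $\domain{\theta}\cup\range{\theta}$, which leaves $E$ unchanged as an $\alpha$-equivalence class and validates the side condition; after renaming, the free variables of the body are exactly those of $E$ together with the fresh bound variable, all of which avoid $\domain{\theta}$, so the induction hypothesis applies. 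A companion observation, needed uniformly in the atomic cases, is that the normalizing rule producing $\hsubr{\theta}{R}{M':\alpha'}$ is never triggered here: whenever the head of $R$ is a variable it is free in $R$ and hence outside $\domain{\theta}$, so the derivation remains within the identity-producing $\hsubr{\theta}{R}{R'}$ judgement throughout. With these two points in place the induction closes, giving the claimed derivations.
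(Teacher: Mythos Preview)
Your proof is correct and follows exactly the approach the paper indicates: a straightforward structural induction on the expression, with the term cases handled simultaneously for canonical and atomic forms before lifting to types and kinds. The paper itself offers no further detail beyond declaring the induction obvious, so your careful treatment of the binder side condition via $\alpha$-renaming and your observation that the normalizing judgement $\hsubr{\theta}{R}{M':\alpha'}$ is never triggered simply spell out what the paper leaves implicit.
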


Simultaneous hereditary substitution enjoys a permutation 
property that is similar to the one described in \cite{harper07jfp}
for unitary substitution.
\begin{theorem}\label{th:subspermute}
Let $\theta_1$ be the substitution $\{\langle x_1, M_1,\alpha_1
\rangle, \ldots \langle x_n,M_n,\alpha_n \rangle \}$.
Further, let
$\theta_2$ be the substitution  $\{\langle y_1, N_1,\beta_1
\rangle, \ldots \langle y_m,N_m,\beta_m \rangle \}$ where
$y_1,\ldots,y_m$ are variables that are distinct from $x_1,\ldots,x_n$ 
and that do not appear free in $M_1,\ldots,M_n$.  
Finally, suppose that for each $i$, $1 \leq i \leq m$, there is some
$N'_i$ such that $\hsub{\theta_1}{N_i}{N'_i}$ has a derivation and let
$\theta_3 = \{ \langle y_1, N'_1,\beta_1 \rangle, \ldots \langle
y_m,N'_m,\beta_m \rangle \}$.
For every kind, type and canonical term $E$, $E_1$ and
$E_2$ such $\hsub{\theta_1}{E}{E_1}$ and $\hsub{\theta_2}{E}{E_2}$ have
derivations, there must be an $E'$ such 
that $\hsub{\theta_1}{E_2}{E'}$ and $\hsub{\theta_3}{E_1}{E'}$ have
derivations.
\end{theorem}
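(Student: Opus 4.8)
The plan is to establish this commutation property by a nested induction whose outer measure is the size of the second substitution $\theta_2$ in the sense of Definition~\ref{def:typesubsize}, and whose inner measure is the structure of $E$, proving the statement simultaneously for canonical terms, types and kinds and strengthening it to cover the two atomic-term judgement forms $\hsubr{\theta_2}{R}{R'}$ and $\hsubr{\theta_2}{R}{M':\alpha'}$. Throughout, the hypotheses on $\theta_2$ and $\theta_3$ — that the $y_j$ are disjoint from $\domain{\theta_1}$ and do not occur free in $\range{\theta_1}$, and that $N'_j=\hsubst{\theta_1}{N_j}$ — are exactly what is needed so that $\theta_1$ and $\theta_2$ do not interfere; in particular $\theta_1$ acts vacuously on each $y_j$ by Theorem~\ref{th:vacuoussubs}. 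The commutations for kinds and types will reduce to the one for terms, since substitution there merely distributes to the embedded terms.

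The cases in which $E$ is a kind, a type, a $\lambda$-abstraction, or an atomic type or atomic term whose head is \emph{not} turned into an abstraction by a substitution are routine: one peels the outermost constructor off the given derivations of $\hsub{\theta_1}{E}{E_1}$ and $\hsub{\theta_2}{E}{E_2}$, applies the induction hypothesis at the same $\size{\theta_2}$ to the immediate subexpressions, and reassembles. Freshness of bound variables relative to $\domain{\theta_1}\cup\range{\theta_1}\cup\domain{\theta_2}\cup\range{\theta_2}$ may be assumed by the renaming convention, so no capture arises, and uniqueness (Theorem~\ref{th:uniqueness}) lets us identify the two reconstructed results as a single $E'$.

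The only genuinely delicate case is that of an atomic term $E=R\app M$ for which applying $\theta_2$ forces a hereditary reduction, i.e.\ $\hsubr{\theta_2}{R}{\lflam{x}{P}:\arr{\beta'}{\beta''}}$, so that $E_2$ is obtained by first computing $\hsub{\theta_2}{M}{M''}$ and then $\hsub{\{\langle x,M'',\beta'\rangle\}}{P}{E_2}$. Here the head of $R$ must be one of the $y_j$, so applying $\theta_1$ first leaves the head untouched, and $\theta_3$ then drives the corresponding reduction through $N'_j=\lflam{x}{\hsubst{\theta_1}{P}}$. Tracing both computations, I would reduce the required equality to a single commutation between $\theta_1$ and the singleton substitution $\{\langle x,M'',\beta'\rangle\}$: that $\hsubst{\theta_1}{\hsubst{\{\langle x,M'',\beta'\rangle\}}{P}}$ and $\hsubst{\{\langle x,\hsubst{\theta_1}{M''},\beta'\rangle\}}{\hsubst{\theta_1}{P}}$ coincide. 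Two appeals to the induction hypothesis then close the case: first, the structural hypothesis applied to the argument $M$ gives $\hsubst{\theta_1}{M''}=\hsubst{\theta_3}{\hsubst{\theta_1}{M}}$, which aligns the contents of the two singleton substitutions; second — and this is the crux — because $\arr{\beta'}{\beta''}$ arises by stripping arrows from the arity type $\beta_j$ of $y_j$, we have $\size{\beta'}<\size{\beta_j}\le\size{\theta_2}$, so the singleton substitution has strictly smaller size and the desired commutation is an instance of the induction hypothesis at the reduced outer measure, even though its subject $P$ is not a subexpression of $E$.

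The main obstacle, and the reason the outer induction on $\size{\theta_2}$ is indispensable, is precisely this last step: the reduction replaces reasoning about $E$ by reasoning about the body $P$ of a term drawn from $\range{\theta_2}$, which defeats a purely structural induction, and it is only the strict drop in arity-type size guaranteed by the formation of the hereditary redex that restores a well-founded lexicographic measure. Care must also be taken to verify that the singleton substitution meets the hypotheses of the theorem — its bound variable $x$ is fresh for $\theta_1$, and $\hsub{\theta_1}{M''}{M'''}$ is derivable for some $M'''$, the latter following from the existence of the given derivations together with Theorem~\ref{th:uniqueness} — so that the induction hypothesis genuinely applies.
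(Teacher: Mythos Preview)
Your treatment of the case where $\theta_2$ triggers the hereditary reduction is correct, and your outer induction on $\size{\theta_2}$ handles it. But you have overlooked the symmetric case in which the head of the atomic term lies in $\domain{\theta_1}$, so that it is $\theta_1$ rather than $\theta_2$ that forces the reduction. You dismiss this as ``routine,'' but it is not: here $E_2=\hsubst{\theta_2}{E}$ remains atomic, while $E_1=\hsubst{\theta_1}{E}$ is the result of a hereditary reduction through some $M_i=\lflam{x}{Q}$ and is \emph{not} a subterm of $E$. Computing $\hsubst{\theta_3}{E_1}$ and matching it against $\hsubst{\theta_1}{E_2}$ then requires commuting $\theta_3$ past the singleton $\{\langle x,\hsubst{\theta_1}{M},\alpha'\rangle\}$ on $Q$, which is again an instance of the theorem---but now the singleton's arity type $\alpha'$ is obtained by stripping an arrow from some $\alpha_i$ in $\theta_1$, not from any $\beta_j$ in $\theta_2$. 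Hence $\size{\alpha'}<\size{\theta_1}$, with no bound in terms of $\size{\theta_2}$; for instance, if every $\beta_j=\oty$ while some $\alpha_i$ is higher order, your outer measure does not decrease and the induction hypothesis is unavailable.

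The paper closes this gap by taking the primary induction on the \emph{sum} $\size{\theta_1}+\size{\theta_2}$. In the recursive call arising from the head-in-$\domain{\theta_1}$ case, $\theta_3$ plays the role of $\theta_1$ and the singleton plays the role of $\theta_2$; since $\size{\theta_3}=\size{\theta_2}$ (the arity types $\beta_j$ are unchanged) and the singleton has size strictly less than $\size{\theta_1}$, the sum strictly decreases. With this amended outer measure your argument goes through; the inner induction may then be on the derivation of $\hsub{\theta_2}{E}{E_2}$, as the paper does, or equivalently on the structure of $E$ together with your strengthened clauses for atomic terms.
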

\begin{proof}
The proof proceeds by a primary induction on the sum of the sizes
of $\theta_1$ and $\theta_2$ and a secondary induction on the
derivation of $\hsub{\theta_2}{E}{E_2}$.
We omit the details which are similar to those for Lemma 2.10 in
\cite{harper07jfp}.
\end{proof}

\begin{figure}[tbhp]
\begin{center}

  \begin{tabular}{c}

  \infer
      {\stlctyjudgr{\STLCGamma}{c}{\alpha}}
      {c: \alpha \in \STLCGamma}

  \qquad
  \infer
      {\stlctyjudgr{\STLCGamma}{x}{\alpha}}
      {x: \alpha \in \STLCGamma}

  \qquad
  \infer
      {\stlctyjudgr{\STLCGamma}{R \app M}{\alpha}}
      {\stlctyjudgr{\STLCGamma}{R}{\alpha' \atyarr \alpha} \qquad
       \stlctyjudg{\STLCGamma}{M}{\alpha'}}

  \\[15pt]
  \infer
      {\stlctyjudg{\STLCGamma}{\lflam{x}{M}}{\alpha_1 \atyarr \alpha_2}}
      {\stlctyjudg{\aritysum{\{x:\alpha_1\}}{\STLCGamma}}{M}{\alpha_2}}

  \qquad
  \infer
      {\stlctyjudg{\STLCGamma}{R}{\oty}}
      {\stlctyjudgr{\STLCGamma}{R}{\oty}}
  \end{tabular}
\end{center}
\caption{Arity Typing for Canonical Terms}\label{fig:aritytyping}
\end{figure}

While the application of a substitution to an LF expression may not
always exist, this is guaranteed to be the case when certain arity
typing constraints are satisfied as we describe below.

\begin{definition}\label{def:aritytyping}
An \emph{arity context} $\STLCGamma$ is a set of unique assignments of
arity types to (term) constants and variables; these assignments are
written as $x:\alpha$ or $c:\alpha$.
Given two arity contexts $\STLCGamma_1$ and $\STLCGamma_2$, we write
$\aritysum{\STLCGamma_1}{\STLCGamma_2}$ to denote the collection of all the 
assignments in $\STLCGamma_1$ and the assignments in $\STLCGamma_2$ to
the constants or variables not already assigned a type in $\STLCGamma_1$. 
The rules in Figure~\ref{fig:aritytyping} define the arity typing
relation denoted by $\stlctyjudg{\STLCGamma}{M}{\alpha}$ between
a term $M$ and an arity type $\alpha$ relative to an arity context
$\STLCGamma$. 
A kind or type $E$ is said to respect an arity context $\STLCGamma$
under the following conditions: if $E$ is $\type$; if $E$ is an atomic
type and for each canonical term $M$ appearing in $E$ there is an
arity type $\alpha$ such that $\stlctyjudg{\STLCGamma}{M}{\alpha}$ is 
derivable; and if $E$ has the form $\typedpi{x}{A}{E'}$ and $A$
respects $\STLCGamma$ and $E'$ respects
$\aritysum{\{x:\erase{A}\}}{\STLCGamma}$. 
A context $\Gamma$ is said to respect $\STLCGamma$ if
for every $x:A$ appearing in $\Gamma$ it is the case that $A$ respects
$\STLCGamma$.
A substitution $\theta$ is {\it arity type preserving}
with respect to $\Theta$ if for every $\langle x,M,\alpha \rangle \in
\theta$ it is  the case that $\stlctyjudg{\STLCGamma}{M}{\alpha}$ is
derivable. 
Associated with a substitution $\theta$ is the arity context $\{ x :
\alpha\ \vert\ \langle x, M, \alpha \rangle \in \theta \}$ that is
denoted by $\context{\theta}$.
\end{definition}

\begin{theorem}\label{th:aritysubs}
Let $\theta$ be a substitution that is arity type preserving with
respect to $\STLCGamma$ and let $\STLCGamma'$ denote the arity context 
$\aritysum{\context{\theta}}{\STLCGamma}$. 
\begin{enumerate}
\item If $E$ is a canonical type or kind that respects the
  arity context $\STLCGamma'$, then there must be an $E'$ that
  respects $\STLCGamma$ and that is such that $\hsub{\theta}{E}{E'}$
  is derivable. 

\item If $M$ is a canonical term such that
$\stlctyjudg{\STLCGamma'}{M}{\alpha}$ is
derivable, then there must be an $M'$ such that
$\hsub{\theta}{M}{M'}$ and $\stlctyjudg{\STLCGamma}{M'}{\alpha}$ are
derivable.

\item If $R$ is an atomic term such that
  $\stlctyjudgr{\STLCGamma'}{R}{\alpha}$
  is derivable, then either there is an atomic term $R'$ such that
  $\hsub{\theta}{R}{R'}$ and $\stlctyjudgr{\STLCGamma}{R}{\alpha}$ are
  derivable or there is a canonical term $M$ such that
  $\hsub{\theta}{R}{M : \alpha}$ and
  $\stlctyjudg{\STLCGamma}{M}{\alpha}$ are derivable. 
\end{enumerate}
\end{theorem}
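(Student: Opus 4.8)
The plan is to establish the three parts simultaneously, by a primary induction on $\size{\theta}$ and a secondary induction on the structure of $E$ — equivalently, on the arity typing derivation of $M$ in part (2) and of $R$ in part (3). A purely structural induction does not suffice: the reducing rule for applying a substitution to an application introduces, in its third premise, a freshly formed singleton substitution acting on a term that is not a subterm of the original input, and it is the measure $\size{\theta}$ that will let us discharge that premise.

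Parts (1) and (2) consist mainly of propagating the induction hypothesis through the structural rules of Figures~\ref{fig:hsubtypes} and~\ref{fig:hsub}, ultimately reducing to part (3). For the atomic-type case of (1), each embedded term is arity typed with respect to $\STLCGamma'$, so part (2) produces its substitution instance typed with respect to $\STLCGamma$, and the results are reassembled; the case $\type$ is immediate. For $\typedpi{x}{A_1}{A_2}$ in (1) and $\lflam{x}{M}$ in (2), I would first rename so that $x$ does not occur free in $\domain{\theta}\cup\range{\theta}$ and then appeal to the induction hypothesis on the components. Since the renamed $x$ lies outside $\domain{\theta}$, adjoining $x{:}\erase{A_1}$ (respectively $x{:}\alpha_1$) to both $\STLCGamma$ and $\STLCGamma'$ preserves the invariant $\STLCGamma' = \aritysum{\context{\theta}}{\STLCGamma}$ and keeps $\theta$ arity type preserving, so the hypothesis applies.

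The substance lies in part (3). If $R=c$, then $\hsubr{\theta}{c}{c}$, and since $\context{\theta}$ assigns types only to variables, the typing of $c$ transfers from $\STLCGamma'$ to $\STLCGamma$. If $R=x$ and $x\notin\domain{\theta}$, then $\hsubr{\theta}{x}{x}$ and the assignment to $x$ is inherited by $\STLCGamma$, yielding the first disjunct; if $x\in\domain{\theta}$, then $\langle x,M,\alpha\rangle\in\theta$ gives $\hsubr{\theta}{x}{M:\alpha}$, and arity type preservation of $\theta$ gives $\stlctyjudg{\STLCGamma}{M}{\alpha}$, the second disjunct. If $R=R_1\app M$, inversion of the arity typing yields $\stlctyjudgr{\STLCGamma'}{R_1}{\arr{\alpha''}{\alpha}}$ and $\stlctyjudg{\STLCGamma'}{M}{\alpha''}$; part (2) supplies $M'$ with $\hsub{\theta}{M}{M'}$ and $\stlctyjudg{\STLCGamma}{M'}{\alpha''}$, and part (3) applied to $R_1$ returns one of two results. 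If $R_1$ reduces to an atomic $R_1'$, the non-reducing application rule gives $\hsubr{\theta}{R_1\app M}{R_1'\app M'}$, typed by the application rule — the first disjunct. If $R_1$ reduces to a canonical term of type $\arr{\alpha''}{\alpha}$, then, since a canonical term of arrow arity type can only be formed by the abstraction rule, that term is some $\lflam{x}{M_1'}$ with $\stlctyjudg{\aritysum{\{x{:}\alpha''\}}{\STLCGamma}}{M_1'}{\alpha}$, and we are in the reducing case, needing to discharge the premise $\hsub{\{\langle x, M', \alpha''\rangle\}}{M_1'}{M'''}$.

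The crux — and the step I expect to be the main obstacle — is justifying the induction hypothesis for that last premise. The substitution $\{\langle x, M', \alpha''\rangle\}$ has size $\size{\alpha''}$, so only the primary induction on $\size{\theta}$ can apply, which requires $\size{\alpha''}<\size{\theta}$. I would obtain this by tracing the origin of the type $\arr{\alpha''}{\alpha}$: a reducing derivation for $R_1$ arises only because its head is a variable $y\in\domain{\theta}$ with some $\langle y,N,\beta\rangle\in\theta$, and $\arr{\alpha''}{\alpha}$ is exactly the suffix of the arrow spine of $\beta$ left after stripping the arrows consumed by the arguments of $R_1$. Hence $\arr{\alpha''}{\alpha}$ is a subexpression of $\beta$ and $\alpha''$ a proper subexpression of it, so $\size{\alpha''}<\size{\arr{\alpha''}{\alpha}}\le\size{\beta}\le\size{\theta}$. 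With this strict decrease, the induction hypothesis for part (2) yields $M'''$ with $\stlctyjudg{\STLCGamma}{M'''}{\alpha}$, and the reducing application rule delivers $\hsubr{\theta}{R_1\app M}{M''':\alpha}$, the second disjunct. Throughout, Theorem~\ref{th:uniqueness} guarantees that the separately constructed results cohere into the claimed substitution instances.
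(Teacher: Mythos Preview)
Your proposal is correct and follows essentially the same strategy as the paper: a primary induction on the size of $\theta$ and a secondary induction on term structure, with clauses (2) and (3) proved together and clause (1) reduced to clause (2). Your explicit identification of the key obstacle---justifying the inner induction hypothesis in the reducing application case by bounding $\size{\alpha''}$ via the head variable's type in $\theta$---is exactly the point the paper's terse proof leaves implicit.
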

\begin{proof}
The first clause in the theorem is an easy consequence of the
second. We prove clauses (2) and (3) simultaneously by induction first
on the sizes of substitutions and then on the structure of terms.
The argument proceeds by considering the cases for the term structure,
first proving (3) and then using this in proving (2). 
\end{proof}

We will often consider expressions and substitutions that satisfy the
arity typing requirements of the theorem above, which then guarantees
that the applications of the substitutions have results.  
We introduce a notation that is convenient in this situation: we will
write $\hsubst{\theta}{E}$ to denote the unique $E'$ such that
$\hsub{\theta}{E}{E'}$ has a derivation whenever such a derivation is
known to exist.

\begin{definition}\label{def:composition}
Two substitutions $\theta_1$ and $\theta_2$ are said to be \emph{arity type
compatible} relative to the arity context $\STLCGamma$ if 
$\theta_2$ is type preserving with respect to $\STLCGamma$ and
$\theta_1$ is type preserving with respect to
$\aritysum{\context{\theta_2}}{\STLCGamma}$. The composition of two
such substitutions, written as $\theta_2 \circ \theta_1$, is 
the substitution
\begin{tabbing}
  \qquad\qquad\qquad\=\qquad\qquad\qquad\qquad\qquad\=\kill
  \> $\{ \langle x,M',\alpha \rangle\ \vert\ \langle x,M,\alpha
  \rangle \in \theta_1\ \mbox{\rm and}\ \hsub{\theta_2}{M}{M'}\ \mbox{\rm has a
    derivation}\}\ \cup$\\
  \>\>$\{ \langle y,N,\beta \rangle\ \vert\ \langle y,N,\beta \rangle
  \in \theta_2\ \mbox{\rm and}\ y \not\in \domain{\theta_1} \}$
\end{tabbing}
By Theorem~\ref{th:aritysubs} there must be an $M'$ for
which $\hsub{\theta_2}{M}{M'}$ has a derivation for each $\langle
x,M,\alpha \rangle \in \theta_1$. Moreover such an $M'$ must be 
unique. Thus, the composition described herein is well-defined. Note
also that the composition must also be arity type preserving with
respect to $\STLCGamma$.
\end{definition}

\begin{theorem}\label{th:composition}
Let $\theta_1$ and $\theta_2$ be substitutions that are arity type
compatible relative to $\STLCGamma$ and let $\STLCGamma'$ denote the
arity context $\aritysum{\context{\theta_2 \circ \theta_1}}{\STLCGamma}$.
\begin{enumerate}
\item If $E$ is a canonical kind, type or context that respects $\STLCGamma'$ and 
$E'$ and $E''$ are, respectively, canonical types or kinds such that 
$\hsub{\theta_1}{E}{E'}$ and $\hsub{\theta_2}{E'}{E''}$ have
  derivations, then $\hsub{\theta_2 \circ \theta_1}{E}{E''}$ has a
  derivation.
  
\item If $M$ is a canonical term such that, for some arity type
  $\alpha$, $\stlctyjudg{\STLCGamma'}{M}{\alpha}$ is derivable and $M'$
  and $M''$ are canonical terms such that 
$\hsub{\theta_1}{M}{M'}$ and $\hsub{\theta_2}{M'}{M''}$ have
derivations, then $\hsub{\theta_2 \circ \theta_1}{M}{M''}$ has a
derivation.

\item If $R$ is a canonical term such that, for some arity type
  $\alpha$, $\stlctyjudg{\STLCGamma'}{R}{\alpha}$ is derivable and
  \begin{enumerate}
    \item $M'$ and $M''$ are canonical terms such that
      $\hsubr{\theta_1}{R}{M' : \alpha}$ and $\hsub{\theta_2}{M'}{M''}$
      have derivations, then $\hsubr{\theta_2 \circ \theta_1}{R}{M'':\alpha}$
      has a derivation;

    \item $R'$ and $M''$ are, respectively, an atomic and a
      canonical term such that $\hsubr{\theta_1}{R}{R'}$ and
      $\hsubr{\theta_2}{R'}{M'' : \alpha}$ have derivations then
      $\hsubr{\theta_2 \circ \theta_1}{R}{M'' : \alpha}$ has a
      derivation; 

    \item $R'$ and $R''$ are atomic terms such that
      $\hsubr{\theta_1}{R}{R'}$ and $\hsubr{\theta_2}{R'}{R''}$ have
      derivations, then $\hsubr{\theta_2 \circ \theta_2}{R}{R''}$ has
      a derivation.
  \end{enumerate}
\end{enumerate}
\end{theorem}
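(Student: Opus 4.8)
The plan is to reduce clause (1) to clause (2) and to establish clauses (2) and (3) simultaneously by a primary induction on $\size{\theta_1} + \size{\theta_2}$ and a secondary induction on the structure of the term being substituted into (equivalently, on the derivation of the $\theta_1$-substitution). Throughout I would keep in mind that $\theta_2 \circ \theta_1$ denotes first applying $\theta_1$ and then $\theta_2$, and that the arity-typing hypotheses, together with Theorem~\ref{th:aritysubs}, guarantee that every substitution application I need to name actually has a result, unique by Theorem~\ref{th:uniqueness}. Once (2) and (3) are in hand, clause (1) follows by a routine structural induction on the type, kind or context, pushing the composition down to the embedded terms and invoking clause (2) there; the binder cases (a $\typedpi{x}{A_1}{A_2}$ or a context extension) are handled by first $\alpha$-renaming $x$ to be fresh for all three substitutions.

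For clause (3) I would case on the structure of the atomic term $R$. The constant case is immediate. For a variable $R = x$ there are three subcases matching the definition of $\theta_2 \circ \theta_1$: if $x \in \domain{\theta_1}$ then $\theta_1$ produces the paired term and $\theta_2$ reduces it further, which is exactly the residual recorded for $x$ in the composition; if $x \in \domain{\theta_2}$ but $x \notin \domain{\theta_1}$ the triple from $\theta_2$ survives into the composition; and if $x$ lies in neither domain it is fixed by all three. For an application $R = R_0 \app M_0$, the non-$\beta$ subcases (3b) and (3c) are direct: I apply the secondary induction hypothesis to the head $R_0$ and clause (2) to the argument $M_0$, both structurally smaller, and then reassemble with the same application rule, reusing the $\beta$-reduction premise that is already supplied by the given $\theta_2$-derivation in (3b). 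Clause (2) for a canonical term reduces to (3) through the canonical-from-atomic rules, and the abstraction case $M = \lflam{x}{M_0}$ proceeds by pushing the composition under the binder after the usual freshness renaming.

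The one genuinely delicate case is the $\beta$-reducing subcase (3a) when $R = R_0 \app M_0$ and the head of $R_0$ lies in $\domain{\theta_1}$, so that $\hsubr{\theta_1}{R_0}{\lflam{x}{N} : \arr{\alpha'}{\alpha}}$, $\hsub{\theta_1}{M_0}{M_1}$, and the result $M'$ arises from the hereditary $\beta$-step $\hsub{\{\langle x, M_1, \alpha'\rangle\}}{N}{M'}$, after which $\hsub{\theta_2}{M'}{M''}$. To build the composed derivation I would use the induction hypothesis on $R_0$ to obtain $\hsubr{\theta_2\circ\theta_1}{R_0}{\lflam{x}{N^*} : \arr{\alpha'}{\alpha}}$ with $\hsub{\theta_2}{N}{N^*}$, and clause (2) on $M_0$ to obtain $\hsub{\theta_2\circ\theta_1}{M_0}{M_1^*}$ with $M_1^* = \hsubst{\theta_2}{M_1}$. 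The application rule for the composition then demands a final premise $\hsub{\{\langle x, M_1^*, \alpha'\rangle\}}{N^*}{M''}$, that is, that performing the residual $\beta$-step after $\theta_2$ yields the same term as performing it before $\theta_2$.

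This last equality, $\hsubst{\theta_2}{\hsubst{\{\langle x, M_1,\alpha'\rangle\}}{N}} = \hsubst{\{\langle x, \hsubst{\theta_2}{M_1},\alpha'\rangle\}}{\hsubst{\theta_2}{N}}$, is exactly the commutation furnished by the permutation property, Theorem~\ref{th:subspermute}: after $\alpha$-renaming $x$ to be fresh for $\domain{\theta_2}\cup\range{\theta_2}$, I would instantiate that theorem with its first substitution taken to be $\theta_2$ and its second to be the singleton $\{\langle x, M_1, \alpha'\rangle\}$, whose domain and range are then disjoint from $\theta_2$ as required, and with $E = N$; its conclusion, combined with the uniqueness of results from Theorem~\ref{th:uniqueness}, delivers precisely $\hsub{\{\langle x, M_1^*, \alpha'\rangle\}}{N^*}{M''}$. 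Its hypotheses are met because $\hsub{\theta_2}{M_1}{M_1^*}$ and $\hsub{\theta_2}{N}{N^*}$ exist by Theorem~\ref{th:aritysubs} under the standing arity-typing assumptions. This $\beta$-case is where essentially all the difficulty sits, as it is the only place the nontrivial interaction between hereditary reduction and an outer substitution must be untangled, and it is the analogue of the step handled by the permutation lemma in the unitary development of \cite{harper07jfp}. Note finally that the residual arity type $\alpha'$ is a strict subexpression of the type assigned by $\theta_1$ to the head of $R_0$, so $\size{\alpha'} < \size{\theta_1}$; this keeps all the generated single-variable substitutions within the scope of the primary induction and confirms that the measure genuinely decreases in the cases that require it.
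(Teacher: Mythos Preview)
Your proposal is correct and matches the paper's proof: both reduce clause (1) to clause (2), prove (2) and (3) together by induction on the $\theta_1$-derivation, and dispatch the critical $\beta$-reducing application case by appealing to the permutation property (Theorem~\ref{th:subspermute}) exactly as you describe. The one difference is that the paper carries out the argument with only the secondary induction on the derivation of the $\theta_1$-substitution; your primary induction on $\size{\theta_1}+\size{\theta_2}$ is never actually invoked (you resolve the residual singleton substitution via Theorem~\ref{th:subspermute}, not via the induction hypothesis), so it is harmless but unnecessary.
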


\begin{proof}
Clause 1 of the theorem follows easily from an induction on the
structure of the canonical type or kind, assuming the property stated
in clause 2.

\smallskip
We prove clauses 2 and 3 together. These clauses are premised on the 
existence of a derivation corresponding to the application of the
substitution $\theta_1$ to either $M$ or $R$.
The argument is by induction on the size of this derivation and it
proceeds by considering the cases for the last rule in the
derivation. 

\smallskip
We consider first the cases where the derivation is for
$\hsub{\theta_1}{M}{M'}$; the clause in the theorem relevant to these
cases is 2. 
An easy argument using the induction hypothesis yields the desired
conclusion when $M$ is of the form $\lflam{x}{M_1}$.
In the case that $M$ is an atomic term, there is a shorter derivation
for $\hsubr{\theta_1}{M}{M' : \alpha}$ or $\hsubr{\theta_1}{M}{M'}$.
In the first case, the induction hypothesis, specifically clause 3(a),
allows us to conclude that $\hsub{\theta_2 \circ \theta_1}{M}{M''}$
has a derivation. 
In the second case, $M'$ must be an atomic term and there must
therefore be a derivation for $\hsubr{\theta_2}{M'}{M'':\alpha}$ or
$\hsubr{\theta_2}{M'}{M''}$.
Using the induction hypothesis, specifically clause 3(b) or 3(c), we
can again conclude that there must be a derivation for $\hsub{\theta_2
  \circ \theta_1}{M}{M''}$.

\smallskip
We consider next the cases for the last rule when the derivation is for
$\hsubr{\theta_1}{R}{M' : \alpha}$.
\begin{itemize}
\item If $M$ is a variable $x$ such that $\langle x,M',\alpha \rangle
  \in \theta_1$, it must be the case that $\langle x,M'',\alpha\rangle
  \in \theta_2 \circ \theta_1$.
Hence there must be a derivation for $\hsubr{\theta_2 \circ
  \theta_1}{M}{M'' : \alpha}$. 

\item Otherwise $M$ must be of the form $(R_1\app M_2)$ where
  $\hsub{\theta_1}{R_1}{\lflam{x}{M_3} : \alpha' \atyarr \alpha}$, 
  $\hsub{\theta_1}{M_2}{M_4}$, and
  $\hsub{\{\langle x, M_4,\alpha'\rangle \}}{M_3}{M'}$ have
  derivations for suitable choices for $M_3$, $\alpha'$ and
  $M_4$.
  We note first that
  $\aritysum{(\context{\theta_2 \circ \theta_1})}{\STLCGamma} =
  \aritysum{\context{\theta_1}}{(\aritysum{\context{\theta_2}}{\STLCGamma})}$. 
  Then, by the assumptions of the theorem and
  Theorem~\ref{th:aritysubs}, it follows that there must be terms
  $M'_3$ and $M'_4$ such that $\hsub{\theta_2}{M_3}{M'_3}$ and
  $\hsub{\theta_2}{M_4}{M'_4}$ have derivations.
  We see by using the induction hypothesis with respect to the derivation for
  $\hsub{\theta_1}{R_1}{\lflam{x}{M_3} : \alpha' \atyarr \alpha}$ that
  there must be a derivation for $\hsub{\theta_2 \circ 
    \theta_1}{R_1}{\lflam{x}{M'_3} : \alpha' \atyarr \alpha}$. 
  Using the induction hypothesis again with respect
  to the derivation for $\hsub{\theta_1}{M_2}{M_4}$, we see that
  there must be a derivation for $\hsub{\theta_2 \circ
    \theta_1}{M_2}{M'_4}$. By Theorem~\ref{th:subspermute} it follows
  that $\hsub{\{\langle  
      x, M'_4,\alpha' \rangle \}}{M'_3}{M''}$ has a derivation and,
  hence that $\hsubr{\theta_2 \circ \theta_1}{(R_1 \app M_2)}{M'' :
    \alpha}$ has one too. 
\end{itemize}

Finally we consider the cases for the last rule when the derivation is
for $\hsubr{\theta_1}{R}{R'}$.
The argument when $R$ is a constant is trivial.
The case when $R$ is a variable follows almost as immediately using
the definition of $\theta_2\circ\theta_1$.
The only remaining case is when $R$ is of the form $(R_1 \app M_2)$
and $R'$ is $(R'_1\app M'_2)$ where $\hsubr{\theta_1}{R_1}{R'_1}$ and
$\hsub{\theta_1}{M_2}{M'_2}$ have shorter derivations for suitable
terms $R'_1$ and $M'_2$.
We then have two subcases to consider with respect to the application
of $\theta_2$ to $(R'_1 \app M'_2)$:
\begin{itemize}
  \item There is a derivation for $\hsubr{\theta_2}{(R'_1
    \app M'_2)}{(R''_1 \app M''_2)}$ where $R''_1$ and $M''_2$ are terms
    such that $\hsubr{\theta_2}{R'_1}{R''_1}$ and
    $\hsubr{\theta_2}{M'_2}{M''_2}$ have derivations; note that the
    relevant clause in this case is 3(c) and $R''$ is $(R''_1 \app
    M''_2)$. 
    The induction hypothesis lets us conclude 
    that $\hsubr{\theta_2 \circ \theta_1}{R_1}{R''_1}$ and
    $\hsubr{\theta_2 \circ \theta_1}{M_2}{M''_2}$ have derivations.
    Hence, $\hsubr{\theta_2 \circ \theta_1}{(R_1\app M_2)}{(R''_1\app
      M''_2)}$ must have a derivation.

   \item There is a derivation for $\hsubr{\theta_2}{(R'_1\app
     M'_2)}{M'' : \alpha}$.
     In this case, for suitable choices for $M_3$,
     $\alpha'$ and $M_4$, there must be derivations for
     $\hsubr{\theta_2}{R'_1}{\lflam{x}{M_3} : \alpha' \atyarr \alpha}$,
     $\hsub{\theta_2}{M'_2}{M_4}$ and $\hsub{\{\langle x, M_4,\alpha'
     \rangle \}}{M_3}{M''}$.
     The induction hypothesis now lets us conclude that there are
     derivations for
     $\hsubr{\theta_2\circ\theta_1}{R_1}{\lflam{x}{M_3} : \alpha' 
       \atyarr \alpha}$ and $\hsub{\theta_2 \circ \theta_1}{M_2}{M_4}$.
     It then follows easily that there must be a derivation for
     $\hsubr{\theta_2 \circ \theta_1}{(R_1 \app M_2)}{M'' : \alpha}$.
    \end{itemize}
\end{proof}

The erased form of a type is invariant under substitution. This is the
content of the theorem below whose proof is straightforward.

\begin{theorem}\label{th:erasure}
For any type $A$ and substitution $\theta$, if $\hsub{\theta}{A}{A'}$
has a derivation, then $\erase{A} = \erase{A'}$.
\end{theorem}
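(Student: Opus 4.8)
The plan is to proceed by a straightforward induction on the structure of the type $A$, equivalently on the derivation of $\hsub{\theta}{A}{A'}$. The essential observation driving the argument is that substitution preserves the outermost type constructor---atomic types are mapped to atomic types and $\Pi$-types to $\Pi$-types---while the erasure operation of Definition~\ref{def:aritytypes} is entirely insensitive to the term arguments buried inside an atomic type. Thus erasure discards precisely the information that substitution is capable of altering.

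First I would consider the case in which $A$ is an atomic type $P$. Inspecting the rules of Figure~\ref{fig:hsubtypes}, the only ones whose conclusion can apply to an atomic type are the rule for the type-level constant $a$ and the rule for an application $P\app M$; in both cases the resulting type $A'$ is again atomic. Since $\erase{P} = \oty$ for every atomic type $P$ by definition, we obtain $\erase{A} = \oty = \erase{A'}$, no matter how the embedded terms are transformed by $\theta$. This case requires no appeal to the induction hypothesis.

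Next I would treat the case in which $A$ has the form $\typedpi{x}{A_1}{A_2}$. Here the only applicable rule is the $\Pi$ rule, whose premises supply subderivations of $\hsub{\theta}{A_1}{A_1'}$ and $\hsub{\theta}{A_2}{A_2'}$ with $A' = \typedpi{x}{A_1'}{A_2'}$. Applying the induction hypothesis to each subderivation yields $\erase{A_1} = \erase{A_1'}$ and $\erase{A_2} = \erase{A_2'}$. Unfolding the definition of erasure on a $\Pi$-type then gives $\erase{A} = \arr{\erase{A_1}}{\erase{A_2}} = \arr{\erase{A_1'}}{\erase{A_2'}} = \erase{A'}$, which completes this case and hence the induction.

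I do not anticipate any genuine obstacle. The entire argument rests on the fact that the two operations act on disjoint strata of the type---erasure records only the $\Pi$/atomic skeleton, whereas substitution touches only the term arguments that erasure throws away---so they commute at the level of arity types essentially by construction.
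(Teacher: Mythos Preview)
Your proposal is correct and matches the paper's own treatment: the paper simply declares the proof ``straightforward'' without spelling out any details, and the structural induction you outline is precisely the obvious argument that justifies that claim.
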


\subsubsection{Wellformedness Judgements}

\begin{figure}[htbp]

\fbox{$\lfsig{\Sigma}$}
  
\begin{center}
\begin{tabular}{c}

\infer[\sigempty]{\lfsig{\emptysig}}{} \qquad
  
\infer[\sigterm]
      {\lfsig{\Sigma,c:A}}
      {\lfsig{\Sigma} \qquad \lftype{\emptyctx}{A} \qquad c\ \mbox{\rm does not
          appear in}\ \Sigma}

\\[10pt]

\infer[\sigfam]
      {\lfsig{\Sigma,a:K}}
      {\lfsig{\Sigma} \qquad \lfkind{\emptyctx}{K} \qquad a\ \mbox{\rm does not
          appear in}\ \Sigma}
\end{tabular}
\end{center}
      
\vspace{4pt}
\fbox{$\lfctx{\Gamma}$}

\begin{center}
\begin{tabular}{c}
    
\infer[\ctxempty]
      {\lfctx{\emptyctx}}{}

\qquad
      
\infer[\ctxterm]
      {\lfctx{\Gamma,x:A}}
      {\lfctx{\Gamma} \qquad \lftype{\Gamma}{A} \qquad x\ \mbox{\rm does not
          appear free in}\ \Gamma}
\end{tabular}
\end{center}

\vspace{4pt}
\fbox{$\lfkind{\Gamma}{K}$}

\begin{center}
\begin{tabular}{c}
\infer[\canonkindtype]
      {\lfkind{\Gamma}{\type}}{}

\qquad 
\infer[\canonkindpi]
      {\lfkind{\Gamma}{\typedpi{x}{A}{K}}}
      {\lftype{\Gamma}{A} \qquad
       \lfkind{\Gamma,x:A}{K}}
\end{tabular}
\end{center}

 \fbox{$\lftype{\Gamma}{A}$}

\begin{center}
\begin{tabular}{cc}       
\infer[\canonfamatom]
      {\lftype{\Gamma}{P}}
      {\lfsynthkind{\Gamma}{P}{\type}}\quad & \quad
\infer[\canonfampi]
      {\lftype{\Gamma}{\typedpi{x}{A_1}{A_2}}}
      {\lftype{\Gamma}{A_1} \qquad \lftype{\Gamma, x:A_1}{A_2}}
\end{tabular}
\end{center}

\vspace{4pt}
\fbox{$\lfsynthkind{\Gamma}{P}{K}$}

\begin{center}
\begin{tabular}{c}
\infer[\atomfamconst]
      {\lfsynthkind{\Gamma}{a}{K}}
      {a:K\in\Sigma}
\\[10pt]
\infer[\atomfamapp]
      {\lfsynthkind{\Gamma}{P\app M}{K}}
      {\lfsynthkind{\Gamma}{P}{\typedpi{x}{A}{K_1}} \qquad
       \lfchecktype{\Gamma}{M}{A} \qquad
      \hsub{\{\langle x, M, \erase{A}\rangle\}}{K_1}{K}}
\end{tabular}
\end{center}

\vspace{4pt}
\fbox{$\lfchecktype{\Gamma}{M}{A}$}

\begin{center}
\begin{tabular}{cc}
\infer[\canontermatom]
      {\lfchecktype{\Gamma}{R}{P}}
      {\lfsynthtype{\Gamma}{R}{P}} \quad & \quad 
\infer[\canontermlam]
      {\lfchecktype{\Gamma}{\lflam{x}{M}}{\typedpi{x}{A_1}{A_2}}}
      {\lfchecktype{\Gamma,x:A_1}{M}{A_2}}
\end{tabular}
\end{center}

\vspace{4pt}
\fbox{$\lfsynthtype{\Gamma}{R}{A}$}
    
\begin{center}
  \begin{tabular}{c}
\infer[\atomtermvar]
      {\lfsynthtype{\Gamma}{x}{A}}
      {x:A\in\Gamma}
\qquad
\infer[\atomtermconst]
      {\lfsynthtype{\Gamma}{c}{A}}
      {c:A\in\Sigma}
      
\\[10pt]
\infer[\atomtermapp]
      {\lfsynthtype{\Gamma}{R\app M}{A}}
      {\lfsynthtype{\Gamma}{R}{\typedpi{x}{A_1}{A_2}} \qquad
       \lfchecktype{\Gamma}{M}{A_1} \qquad
       \hsub{\{\langle x, M, \erase{A_1}\rangle\}}{A_2}{A}} 
 \end{tabular}
\end{center}

\caption{The Formation Rules for LF}
\label{fig:lf-judgements}
\end{figure}

Canonical LF includes seven judgements: $\lfsig{\Sigma}$ that ensures
that the constants declared in a signature are distinct and their type
or kind classifiers are well-formed; $\lfctx{\Gamma}$ that ensure that
the variables declared in a signature are distinct and their type
classifiers are well-formed in the preceding declarations and
well-formed signature $\Sigma$; $\lfkind{\Gamma}{K}$ that determines
that a kind $K$ is well-formed with respect to a well-formed signature
and context pair; $\lftype{\Gamma}{A}$ and
$\lfsynthkind{\Gamma}{P}{K}$ that check, respectively, the formation
of a canonical and atomic type relative to a well-formed 
signature, context and kind triple; and $\lfchecktype{\Gamma}{M}{A}$ and
$\lfsynthtype{\Gamma}{R}{A}$ that ensure, respectively, that a
canonical and atomic term are well-formed with respect to a
well-formed signature, context and canonical type triple. 
Figure~\ref{fig:lf-judgements} presents the rules for deriving
these judgements.
In the rules \canonkindpi\ and \canontermlam\ we assume $x$ to be a
variable that does not appear free in $\Gamma$.
The formation rule for type and term level application,
i.e. $\atomfamapp$ and $\atomtermapp$, require the substitution of a
term into a kind or a type.
Use is made towards this end of hereditary substitution. The index for
such a substitution is obtained by erasure from the type established
for the term.

The judgement forms other than $\lfsig{\Sigma}$ that are described
above are parameterized by a signature that remains unchanged in the
course of their derivation.
In the rest of this paper we will assume a fixed signature that has in
fact been verified to be well-formed at the outset. 
The judgement forms require some of their other components to satisfy
additional restrictions. 
For example, judgements of the forms $\lfchecktype{\Gamma}{M}{A}$ and
$\lfsynthtype{\Gamma}{R}{A}$ require that $\Sigma$, $\Gamma$ and $A$
be well-formed as an ensemble.
To be coherent, the rules in Figure~\ref{fig:lf-judgements} must
ensure that in deriving a judgement that satisfies these requirements,
it is necessary only to consider the derivation of judgements that
also accord with these requirements.
The fact that they possess this property can be verified by an
inspection of their structure, using the observation that will be made
in Theorem~\ref{th:transitivity} that hereditary substitution
preserves the property of being well-formed for kinds and types. 

Arity typing judgements for terms approximate LF typing judgements as
made precise below. 

\begin{definition}
  The arity context induced by the signature $\Sigma$  and context
  $\Gamma$ is the collection of assignments that includes $x :
  \erase{A}$ for each $x : A \in \Gamma$ and $c : \erase{A}$ for each
  $c : A \in \Sigma$.
  When the context $\Gamma$ is irrelevant or empty, we shall refer to
  the arity context as the one induced by just $\Sigma$.
\end{definition}

\begin{theorem}\label{th:arityapprox}
  Let $\STLCGamma$ be the arity context induced by the signature
  $\Sigma$ and context $\Gamma$.
  If $\lfctx{\Gamma}$ then $\Gamma$ respects $\Theta$.
  If $\lfkind{\Gamma}{K}$ or $\lftype{\Gamma}{A}$ then, respectively,
  $K$ or $A$ respect $\STLCGamma$.
  If $\lfchecktype{\Gamma}{M}{A}$ is derivable, then
  $\stlctyjudg{\STLCGamma}{M}{\erase{A}}$ must also be derivable.
  If $\lfsynthtype{\Gamma}{R}{A}$ is derivable, then
  $\stlctyjudg{\STLCGamma}{R}{\erase{A}}$ must also be derivable.
\end{theorem}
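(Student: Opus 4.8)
The plan is to establish all the clauses by a simultaneous induction on the heights of the relevant LF derivations, organized into three stages that follow the dependency structure of the formation rules. The two clauses about terms (those for $\lfchecktype{\Gamma}{M}{A}$ and $\lfsynthtype{\Gamma}{R}{A}$) form a closed mutual induction and can be settled first; the clauses about kinds and types then build on them; and the clause about contexts builds on those in turn. Throughout, I would let $\STLCGamma$ denote the arity context induced by $\Sigma$ and whatever LF context decorates the judgement currently under analysis, so that passing to a premise that extends the context by $x{:}A$ corresponds exactly to extending the arity context by $x{:}\erase{A}$. Two supporting facts are needed: the invariance of erasure under hereditary substitution (Theorem~\ref{th:erasure}), and a routine weakening property stating that arity typing, and hence the respects relation, is preserved when the arity context is extended with an assignment to a variable not occurring free in the expression.

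For the term clauses I would induct simultaneously on the derivations of $\lfchecktype{\Gamma}{M}{A}$ and $\lfsynthtype{\Gamma}{R}{A}$, showing that $\stlctyjudg{\STLCGamma}{M}{\erase{A}}$ and $\stlctyjudgr{\STLCGamma}{R}{\erase{A}}$ are respectively derivable. The cases \atomtermvar\ and \atomtermconst\ are immediate: if $x{:}A\in\Gamma$ or $c{:}A\in\Sigma$ then $x{:}\erase{A}$ or $c{:}\erase{A}$ belongs to $\STLCGamma$ by the definition of the induced arity context, so the matching arity rule applies. For \canontermatom, the premise $\lfsynthtype{\Gamma}{R}{P}$ yields $\stlctyjudgr{\STLCGamma}{R}{\oty}$ by the induction hypothesis (since $\erase{P}=\oty$), and the last rule of Figure~\ref{fig:aritytyping} lifts this to $\stlctyjudg{\STLCGamma}{R}{\oty}$. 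For \canontermlam, applying the induction hypothesis to the premise $\lfchecktype{\Gamma,x:A_1}{M}{A_2}$ gives a derivation over $\aritysum{\{x:\erase{A_1}\}}{\STLCGamma}$, and the arity rule for abstraction then delivers $\stlctyjudg{\STLCGamma}{\lflam{x}{M}}{\arr{\erase{A_1}}{\erase{A_2}}}$, whose type is precisely $\erase{\typedpi{x}{A_1}{A_2}}$.

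The one case that requires real care, and the main obstacle of the argument, is \atomtermapp. Here $\lfsynthtype{\Gamma}{R\app M}{A}$ rests on the premises $\lfsynthtype{\Gamma}{R}{\typedpi{x}{A_1}{A_2}}$, $\lfchecktype{\Gamma}{M}{A_1}$, and $\hsub{\{\langle x,M,\erase{A_1}\rangle\}}{A_2}{A}$. The induction hypotheses give $\stlctyjudgr{\STLCGamma}{R}{\arr{\erase{A_1}}{\erase{A_2}}}$ and $\stlctyjudg{\STLCGamma}{M}{\erase{A_1}}$, so the arity rule for application produces $\stlctyjudgr{\STLCGamma}{R\app M}{\erase{A_2}}$. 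The difficulty is that the LF rule reports the type $A$ obtained by substituting into $A_2$, not $A_2$ itself; this gap is bridged by Theorem~\ref{th:erasure}, which yields $\erase{A}=\erase{A_2}$ from the substitution premise, so the derived arity type is exactly $\erase{A}$. The kind-level application in \atomfamapp\ is handled the same way but is in fact easier: the auxiliary claim I would carry for $\lfsynthkind{\Gamma}{P}{K}$ asserts only that $P$ respects $\STLCGamma$ and makes no claim about $K$, so the substitution into $K_1$ never needs to be examined.

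For the kind and type clauses I would induct on the derivations of $\lftype{\Gamma}{A}$ and $\lfkind{\Gamma}{K}$, carrying the auxiliary statement that $\lfsynthkind{\Gamma}{P}{K}$ implies $P$ respects $\STLCGamma$. The leaves \canonkindtype\ and \atomfamconst\ give vacuously respecting expressions; the cases \canonfampi\ and \canonkindpi\ follow from the induction hypotheses on their two premises, the second taken over the extended context and hence over $\aritysum{\{x:\erase{A}\}}{\STLCGamma}$, which matches the definition of respects for a $\Pi$-expression. The case \canonfamatom\ appeals to the auxiliary claim for $\lfsynthkind{\Gamma}{P}{\type}$, itself proved by induction on the kind-synthesis derivation, where \atomfamapp\ needs an arity type for each argument of $P$: this is supplied for the newest argument by the already-proved term clause applied to $\lfchecktype{\Gamma}{M}{A}$, and for the remaining arguments by the induction hypothesis. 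Finally, the context clause follows by induction on $\lfctx{\Gamma}$: the case \ctxempty\ is vacuous, and for \ctxterm\ the type clause gives that $A$ respects the arity context induced by $\Gamma$ while the induction hypothesis gives that $\Gamma$ does, whence the weakening property lifts both to the larger arity context induced by $\Gamma,x{:}A$. Apart from the application cases settled through Theorem~\ref{th:erasure}, every step is a direct match between an LF formation rule and its arity counterpart.
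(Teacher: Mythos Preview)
Your proposal is correct and follows essentially the same approach as the paper: prove the two term clauses by a simultaneous induction on the LF derivations, then derive the kind/type and context clauses from them by further inductions. Your treatment is considerably more detailed than the paper's sketch, and you correctly isolate the one nontrivial point---that the \atomtermapp\ and \atomfamapp\ cases require Theorem~\ref{th:erasure} to reconcile $\erase{A_2}$ with $\erase{A}$---as well as the need for a routine weakening lemma for arity typing in the context clause, neither of which the paper makes explicit.
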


\begin{proof}
The last two parts of the theorem are proved simultaneously by
induction on the size of the derivation of
$\lfchecktype{\Gamma}{M}{A}$ and $\lfsynthtype{\Gamma}{R}{A}$. 
The first two parts follows from them, again by induction on the
derivation size.  
\end{proof}

\subsection{Formalizing Object Systems in LF}\label{ssec:informal-reasoning}

\begin{figure}[htbp]
\[
\begin{array}{lcl}
  \of{\tpty}{\type}
  & \quad\quad & 
  \of{\ofemptytm}{\ofty\app\emptytm\app\unittm}

  \\
  
  \of{\unittm}{\tpty}
  & &

  \\

  \of{\arrtm}{\arr{\tpty}{\tpty}}
  & &
  \of{\ofapptm}
     {\typedpi{E_1}{\tmty}{\typedpi{E_2}{\tmty}
     {\typedpi{T_1}{\tpty}{\typedpi{T_2}{\tpty}{}}}}}
  \\
  & &
    \qquad
    \typedpi{D_1}{\ofty\app E_1\app (\arrtm\app T_1\app T_2)}{
      \typedpi{D_2}{\ofty\app E_2\app T_1}{}}

  \\

  \of{\tmty}{\type}
  & &
  \qquad \ofty\app(\apptm\app E_1\app E_2)\app T_2

  \\

  \of{\emptytm}{\tmty}
  & &
  \\

  \of{\apptm}{\arr{\tmty}{\arr{\tmty}{\tmty}}} & &
    \of{\oflamtm}
       {\typedpi{R}
       {\arr{\tmty}{\tmty}}
       {\typedpi{T_1}{\tpty}{\typedpi{T_2}{\tpty}{}}}}
  \\

  \of{\lamtm}{\arr{\tpty}{\arr{(\arr{\tmty}{\tmty})}{\tmty}}}
  &  & 
  \qquad \typedpi{D}{(\typedpi{x}{\tmty}
          {\typedpi{y}{\ofty\app x\app T_1}
          {\ofty\app(R\app x)\app T_2}})}{}
  \\
  
  & & \qquad \ofty\app(\lamtm\app T_1\app(\lflam{x}{R\app x}))
                             \app(\arrtm\app T_1\app T_2)
  \\

  \of{\ofty}{\arr{\tmty}{\arr{\tpty}{\type}}} & & \\  

  \of{\eqty}{\arr{\tpty}{\arr{\tpty}{\type}}} & &
    \of{\refltm}{\typedpi{T}{\tpty}{\eqty\app T\app T}}
\end{array}
\]
\caption{An LF Specification for the Simply-Typed Lambda Calculus}
\label{fig:stlc-term-spec}
\end{figure}

A key use of LF is in formalizing systems that are described 
through relations between objects that are specified through a
collection of inference rules. 
In the paradigmatic approach, each such relation is represented by a
dependent type whose term arguments are encodings of objects that
might be in the relation in question.
The inference rules translate in this context into term constructors
for the type representing the relation.
We illustrate these ideas through an encoding of the typing relation
for the simply-typed $\lambda$-calculus (STLC), a running example for this
paper. 

We assume the reader to be familiar with the types and terms in the
STLC and also with the rules that define its typing relation.
Figure~\ref{fig:stlc-term-spec} presents an LF signature that serves as
an encoding of this system. 
This encoding uses the higher-order abstract syntax approach to
treating binding. 
The specification introduces two type families, $\tpty$ and $\tmty$ to 
represent the simple types and $\lambda$-terms.
Additionally, for each expression form in the object system, it
includes a constant that produces a term of type $\tpty$ or $\tmty$;
as should be apparent from the declarations, we have assumed an object
language whose terms are constructed from a single constant of
atomic type that is represented by the LF constant \emptytm\ and whose
type is represented by the LF constant \unittm.
This signature also provides a representation of two relations over
object language expressions: typing between terms and types and
equality between types. 
Specifically, the type-level constants $\ofty$ and $\eqty$ are
included towards this end. 
The rules defining the relations of interest in the object system are
encoded by constants in the signature.
The types associated with these constants ensure that well-formed
terms of atomic type that are formed using the constants correspond to
derivations of the relation in the object language that is represented
by the type. 

One of the purposes for constructing a specification is to use it to
prove properties about the object system.
For example, we may want to show that when a type can be associated
with a term in the STLC, it must be unique.
Based on our encoding, this property can be stated as the following
about typing derivations in LF:
\begin{quotation}
\noindent For any terms $M_1,M_2,E,T_1,T_2$, if there are LF
derivations for
$\lfchecktype{}{E}{\tmty}$, $\lfchecktype{}{T_1}{\tpty}$,
$\lfchecktype{}{T_2}{\tpty}$,
$\lfchecktype{}{M_1}{\ofty\app E\app T_1}$ and
$\lfchecktype{}{M_2}{\ofty\app E\app T_2}$, then there must be a term
$M_3$ such than there is a derivation for
$\lfchecktype{}{M_3}{\eqty\app T_1\app T_2}$. 
\end{quotation}
To prove this property, we would obviously need to unpack its logical
structure.
We would also need to utilize an understanding of LF in
analyzing the hypothesized typing derivations corresponding to the
STLC typing judgements. 
Considering the case where $E$ is an abstraction will lead us to actually
wanting to prove a more general property:
\begin{quotation}
\noindent For any terms $M_1,M_2,E,T_1,T_2$ and contexts $\Gamma$, if
there are LF derivations for the judgements
$\lfchecktype{\Gamma}{E}{\tmty}$, $\lfchecktype{\Gamma}{T_1}{\tpty}$,
$\lfchecktype{\Gamma}{T_2}{\tpty}$,
$\lfchecktype{\Gamma}{M_1}{\ofty\app E\app T_1}$ and
$\lfchecktype{\Gamma}{M_2}{\ofty\app E\app T_2}$, 
then there must be a term $M_3$ such than there is a derivation
for $\lfchecktype{}{M_3}{\eqty\app T_1\app T_2}$. 
\end{quotation}
Now, this property is not provable without some constraints on the form
of contexts.
In this example, it suffices to prove it when $\Gamma$ is restricted
to being of the form
\begin{center}
 $(x_1:\tmty,y_1:\ofty\app x_1\app{Ty}_1,\ldots,x_n:\tmty,y_n:\ofty\app x_n\app{Ty}_n)$.
\end{center}
\noindent In completing the argument, we would need to use properties of LF
derivability.
A property that would be essential in this case is the finiteness of
LF derivations, which enables us to use an inductive argument.

The objective in this paper is to provide a formal mechanism for
carrying out such analysis.
We do this by describing a logic that is suitable for this purpose.
One of the requirements of this logic is that it should permit the
expression of the kinds of properties that arise in the process of
reasoning.
Beyond this, it should further be possible to complement the statement
of properties with inference rules that permit the encoding of
interesting and sound forms of reasoning. 

\subsection{Meta-Theoretic Properties about LF Derivability}
\label{ssec:lf-properties}

Our reasoning system will need to embody an understanding of
derivability in LF.  
We describe some properties related to this notion here that will be 
useful in this context. 
The first three theorems, which express structural properties
about derivations, have easy proofs.
The fourth theorem states a subsitutivity property for wellformedness
judgements.
This theorem is proved in \cite{harper07jfp}.

\begin{theorem}\label{th:weakening}
If $\mathcal{D}$ is a derivation for $\lfkind{\Gamma}{K}$,
$\lftype{\Gamma}{A}$ or $\lfchecktype{\Gamma}{M}{A}$, then, for any
variable $x$ that is fresh to the judgement and for any $A'$ such that
$\lftype{\Gamma}{A'}$ is derivable, there is a derivation,
respectively, for $\lfkind{\Gamma, x:A'}{K}$,
$\lftype{\Gamma, x : A'}{A}$ or $\lfchecktype{\Gamma, x: A'}{M}{A}$
that has the same structure as $\mathcal{D}$. 
\end{theorem}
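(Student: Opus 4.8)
The plan is to proceed by induction on the structure of the derivation $\mathcal{D}$, reconstructing at each node the very same rule with the declaration $x:A'$ threaded through every context; this is exactly what produces a result derivation of the same shape as $\mathcal{D}$, as the statement demands. Since the five formation judgements $\lfkind{\Gamma}{K}$, $\lftype{\Gamma}{A}$, $\lfsynthkind{\Gamma}{P}{K}$, $\lfchecktype{\Gamma}{M}{A}$ and $\lfsynthtype{\Gamma}{R}{A}$ are mutually recursive through the rules of Figure~\ref{fig:lf-judgements}, I would first strengthen the statement so as to cover all five forms simultaneously, thereby establishing the two synthesis judgements that appear only as premises in the three forms mentioned in the theorem. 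I read freshness of $x$ as meaning that $x$ occurs neither in $\Gamma$ nor in the subject or classifier of the judgement nor as a bound variable; by the renaming convention for $\lambda$ and $\Pi$ we may always take the bound variables encountered in $\mathcal{D}$ to be distinct from $x$, so that the side conditions of \canonkindpi, \canonfampi\ and \canontermlam\ (that the bound variable be fresh for $\Gamma$) remain in force once $x:A'$ is added.

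Most cases are immediate. The axioms \canonkindtype, \atomfamconst\ and \atomtermconst\ either ignore $\Gamma$ entirely or consult only the fixed signature, so the identical instance derives the weakened conclusion. For \atomtermvar\ the premise $y:A\in\Gamma$ still holds in $\Gamma, x:A'$ because $x$ is fresh and hence distinct from $y$. The congruence rules \canonfamatom, \atomfamapp, \canontermatom\ and \atomtermapp\ are dispatched by applying the induction hypothesis to each subderivation in the unchanged context; crucially, the hereditary-substitution premises of \atomfamapp\ and \atomtermapp, such as $\hsub{\{\langle x, M, \erase{A}\rangle\}}{K_1}{K}$, mention only the subject terms and their classifiers and are wholly independent of $\Gamma$, so they carry over verbatim with no appeal to any property of hereditary substitution.

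The one case requiring care is a binding rule, say \canonfampi, whose second premise is derived in the \emph{extended} context $\Gamma, z:A_1$ for a bound variable $z$ chosen distinct from $x$. Weakening this premise naively would append $x:A'$ after $z:A_1$, whereas the rule demands a derivation in $(\Gamma, x:A'), z:A_1$: the new declaration must be inserted at an interior point of the context. To absorb this I would generalise the induction so that the hypothesis permits inserting $x:A'$ at an arbitrary split $\Gamma_1,\Gamma_2$ of the context, with $A'$ required to be well formed in the prefix $\Gamma_1$; the theorem as stated is then the instance with empty suffix. Under this formulation the recursive call on the second premise is made with suffix $\Gamma_2, z:A_1$ and returns a derivation in $\Gamma_1, x:A', \Gamma_2, z:A_1$, which is precisely the context the rule needs. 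The same device handles \canonkindpi\ and \canontermlam.

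I expect this interior-insertion point to be the only genuine obstacle, and it is a mild one: because none of these formation judgements ever takes $\lfctx{\Gamma}$ as a premise and variable lookup is by membership, the position of $x:A'$ within the context is immaterial to derivability, so the generalised statement is no harder to prove than the original and the binding cases close without difficulty. Every remaining step is a routine matter of rebuilding the derivation rule for rule, and the well-formedness hypothesis $\lftype{\Gamma}{A'}$ is used only to guarantee that each extended context is a legitimate one, not in the construction of the derivation itself.
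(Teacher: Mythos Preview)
Your proposal is correct and is precisely the standard structural-induction argument one would write out for this result. The paper itself does not give a proof: it simply states that ``the first three theorems, which express structural properties about derivations, have easy proofs'' and moves on. Your generalisation to all five mutually recursive judgement forms and to insertion at an arbitrary split $\Gamma_1,\Gamma_2$ is the natural way to make the binding cases go through, and your observation that none of the formation rules takes $\lfctx{\Gamma}$ as a premise (so the position of $x{:}A'$ is immaterial and the hypothesis $\lftype{\Gamma}{A'}$ plays no role in the induction itself) is exactly right.
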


\begin{theorem}\label{th:strengthening}
  If $\mathcal{D}$ is a derivation for judgements $\lfkind{\Gamma, x:A'}{K}$,
$\lftype{\Gamma, x : A'}{A}$ or $\lfchecktype{\Gamma, x: A'}{M}{A}$
  and $x$ is a variable that does not appear free in $K$, $A$, or $M$
  and $A$ respectively, then there must be a derivation that has the
  same structure as $\mathcal{D}$ for $\lfkind{\Gamma}{K}$,
  $\lftype{\Gamma}{A}$ or $\lfchecktype{\Gamma}{M}{A}$, respectively.
\end{theorem}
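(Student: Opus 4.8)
The plan is to prove the statement by a simultaneous induction on the structure of the derivation $\mathcal{D}$, treating all five judgement forms of Figure~\ref{fig:lf-judgements} together. The three forms named in the statement cannot be handled in isolation, because the formation rules are mutually recursive: $\lftype{\Gamma}{A}$ appeals to $\lfsynthkind{\Gamma}{P}{K}$ and $\lfchecktype{\Gamma}{M}{A}$ appeals to $\lfsynthtype{\Gamma}{R}{A}$. For the two synthesis judgements I would strengthen the claim to record not only that a same-structure derivation exists over the smaller context, but also that the variable being removed stays absent from the synthesized classifier: if $x$ is not free in $P$ (respectively $R$) and $\lfsynthkind{\Gamma, x:A'}{P}{K}$ (respectively $\lfsynthtype{\Gamma, x:A'}{R}{A}$) is derivable, then $x$ is not free in $K$ (respectively $A$) and the judgement is rederivable over $\Gamma$. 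A second generalization is needed so that the induction can descend under binders: I would allow $x$ to be removed from an arbitrary position in the context, subject to the side condition that $x$ is not free in any type assigned to a variable declared after it. Removing only the trailing declaration, as in the stated theorem, is the special case in which this condition holds vacuously.

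The base cases are the lookup rules $\atomfamconst$, $\atomtermconst$, $\atomtermvar$ and the axiom $\canonkindtype$. For a constant the synthesized classifier comes from the fixed signature and so contains no context variable; for a variable $y$, the hypothesis that $x$ is not free in the term forces $y \neq x$, and well-formedness of the context together with the tail freshness condition guarantees that $x$ does not occur in the type recorded for $y$. The inductive cases for the rules that introduce a bound variable, namely $\canonkindpi$, $\canonfampi$ and $\canontermlam$, are routine: from the hypothesis that $x$ is not free in $\typedpi{z}{A_1}{A_2}$ we obtain, choosing $z$ distinct from $x$, that $x$ is free in neither $A_1$ nor $A_2$, so extending the context with $z:A_1$ preserves the tail freshness condition and the induction hypothesis applies to both premises. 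Reassembling the strengthened subderivations under the same rule yields a derivation of identical structure.

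The delicate cases are the application rules $\atomfamapp$ and $\atomtermapp$, in which the synthesized classifier is produced by a hereditary substitution, for instance $\hsub{\{\langle z, M, \erase{A_1}\rangle\}}{A_2}{A}$ in $\atomtermapp$. Since the hypothesis gives $x$ free in neither $R$ nor $M$, the induction hypothesis for the synthesis premise yields both a same-structure rederivation of $\lfsynthtype{\Gamma}{R}{\typedpi{z}{A_1}{A_2}}$ and the fact that $x$ is free in neither $A_1$ nor $A_2$, while the induction hypothesis for the checking premise rederives $\lfchecktype{\Gamma}{M}{A_1}$. Because hereditary substitution is defined independently of the typing context, the very same substitution derivation establishes $\hsub{\{\langle z, M, \erase{A_1}\rangle\}}{A_2}{A}$ over $\Gamma$, so the rule can be reapplied; the analogous remark handles $\atomfamapp$. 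The one point that requires a separate argument is that $x$ remains absent from the result $A$, as the strengthened induction hypothesis demands. This I would obtain from an easy auxiliary observation, proved by induction on the hereditary substitution derivation, that the free variables of $\hsubst{\theta}{E}$ are confined to those free in $E$ other than the members of $\domain{\theta}$, together with the free variables of the terms in $\range{\theta}$; applied here with $x$ free in neither $A_2$ nor $M$, it gives that $x$ is not free in $A$. This freshness-preservation fact for hereditary substitution is the main thing that must be checked, the remainder of the argument being a direct transcription of the subderivations under the same final rule.
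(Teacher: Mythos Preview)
Your argument is correct and is the standard way to carry out this strengthening result. The paper itself does not give a proof: it groups Theorems~\ref{th:weakening}, \ref{th:strengthening} and \ref{th:exchange} together and simply remarks that these structural properties ``have easy proofs,'' without spelling out the induction. Your simultaneous induction over all five judgement forms, your strengthening of the synthesis cases to track that the removed variable stays out of the synthesized classifier, your generalization to an interior position in the context, and your auxiliary free-variable lemma for hereditary substitution are exactly the ingredients one needs; the paper presumably has the same argument in mind but leaves it implicit.
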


\begin{theorem}\label{th:exchange}
If $x$ does not appear in $A_2$ then
$\Gamma_1,y:A_2,x:A_1,\Gamma_3$ is a well-formed context with respect
to a signature $\Sigma$ whenever $\Gamma_1,x:A_1,y:A_2,\Gamma_3$ is. 
Further, if there is a derivation $\mathcal{D}$ for
$\lfkind{\Gamma, x:A_1, y:A_2,\Gamma_2}{K}$,
$\lftype{\Gamma, x:A_1, y:A_2,\Gamma_2}{A}$ or $\lfchecktype{\Gamma,
  x:A_1, y:A_2,\Gamma_2}{M}{A}$, then there must be a derivation that
has the same structure as $\mathcal{D}$ for
$\lfkind{\Gamma, y:A_2, x:A_1,\Gamma_2}{K}$,
$\lftype{\Gamma, y:A_2, x:A_1,\Gamma_2}{A}$ or $\lfchecktype{\Gamma,
  y:A_2, x:A_1,\Gamma_2}{M}{A}$, respectively.
\end{theorem}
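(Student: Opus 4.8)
The plan is to establish the judgement-level exchange (the second part of the statement) first, since the context-wellformedness claim (the first part) will invoke it, and then to assemble the first part from the structural lemmas already available. Throughout, I would state the second part with the trailing context universally quantified, i.e.\ for an arbitrary split $\Gamma_1, x:A_1, y:A_2, \Gamma_2$, so that the induction hypothesis remains available for the context extensions introduced by the binding rules.

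For the second part I would argue by induction on the structure of $\mathcal{D}$, examining the last rule applied and observing that none of the formation rules in Figure~\ref{fig:lf-judgements} is sensitive to the order of declarations. Concretely, the rules consult the context in only two ways: through the membership premise of \atomtermvar\ (the constant lookups in \atomtermconst\ and \atomfamconst\ touch only $\Sigma$), and through the freshness side-conditions attached to the bound variables in \canonkindpi, \canonfampi\ and \canontermlam. Permuting the adjacent pair $x:A_1, y:A_2$ leaves the underlying set of declarations unchanged, so every such membership premise and every freshness condition transfers verbatim to the permuted context. In the leaf cases this finishes the argument immediately; in the binding cases the premise is derived in a context extended on the right by a fresh declaration, so the swapped pair remains adjacent and in the same position, and the induction hypothesis, instantiated with the enlarged trailing context, applies before the same rule is reapplied. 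The application rules \atomfamapp\ and \atomtermapp\ carry a hereditary-substitution side-condition of the form $\hsub{\{\langle z, M, \erase{A_1}\rangle\}}{A_2'}{A}$; since this refers only to the terms and types involved and not to the ambient context, it is untouched by the permutation. Because each case reapplies exactly the rule used in $\mathcal{D}$ to correspondingly-permuted premises, the resulting derivation has the same structure as $\mathcal{D}$, as required.

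For the first part I would reconstruct the well-formedness derivation of $\Gamma_1, y:A_2, x:A_1, \Gamma_3$ from that of $\Gamma_1, x:A_1, y:A_2, \Gamma_3$. Inverting the \ctxterm\ rule repeatedly isolates $\lfctx{\Gamma_1}$, $\lftype{\Gamma_1}{A_1}$ and $\lftype{\Gamma_1, x:A_1}{A_2}$, together with the freshness of $x$ and $y$. The hypothesis that $x$ does not appear in $A_2$ is exactly what licenses Theorem~\ref{th:strengthening} to yield $\lftype{\Gamma_1}{A_2}$, after which Theorem~\ref{th:weakening} gives $\lftype{\Gamma_1, y:A_2}{A_1}$; since $x$ and $y$ are distinct and both fresh to $\Gamma_1$, two applications of \ctxterm\ produce the well-formed prefix $\Gamma_1, y:A_2, x:A_1$. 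To extend this prefix by $\Gamma_3$ I would proceed by induction on the length of $\Gamma_3$: for each declaration $z:C$ with corresponding prefix $\Gamma_3'$, inversion supplies a derivation of $\lftype{\Gamma_1, x:A_1, y:A_2, \Gamma_3'}{C}$, and the second part of the theorem converts it into $\lftype{\Gamma_1, y:A_2, x:A_1, \Gamma_3'}{C}$; the freshness of $z$ is unaffected by the permutation, so \ctxterm\ applies and the context is reassembled.

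The only real subtlety I anticipate is the bookkeeping in the second part: the binding rules force the trailing context to grow, so the induction genuinely must be carried out with the trailing context quantified rather than fixed. Once that generalization is in place the argument is entirely routine, since order-insensitivity of the formation rules does all the work, the hereditary-substitution conditions never mention the context, and the hypothesis $x \notin A_2$ is consumed only in the first part, precisely to make the permuted context well-formed via strengthening.
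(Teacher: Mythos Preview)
Your proposal is correct and fully consistent with the paper, which does not spell out a proof but simply remarks that the first three theorems ``have easy proofs.'' Your argument---induction on $\mathcal{D}$ for the judgement-level claim, noting that the formation rules depend on the context only through membership and freshness, followed by the strengthening/weakening reconstruction of the permuted prefix and an inductive reattachment of the trailing segment using the judgement-level claim---is exactly the routine elaboration the paper is gesturing at.
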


\begin{theorem}\label{th:transitivity}
Assume that $\lfctx{\Gamma_1,x_0:A_0,\Gamma_2}$ and
$\lfchecktype{\Gamma_1}{M_0}{A_0}$ have derivations, and let $\theta$ be
the substitution $\{ \langle x_0, M_0,\erase{A_0} \rangle \}$. 
Then there is a $\Gamma'_2$ such that
$\hsub{\theta}{\Gamma_2}{\Gamma'_2}$ and $\lfctx{\Gamma_1,\Gamma'_2}$
have derivations. 
Further,
\begin{enumerate}
  \item if $\lfkind{\Gamma_1,x_0:A_0,\Gamma_2}{K}$ has a derivation,
    then there is a $K'$ such that $\hsub{\theta}{K}{K'}$ and
    $\lfkind{\Gamma_1,\Gamma'_2}{K'}$ have derivations;

  \item if $\lftype{\Gamma_1,x_0:A_0,\Gamma_2}{A}$ has a derivation,
    then there is an $A'$ such that $\hsub{\theta}{A}{A'}$ and
    $\lftype{\Gamma_1,\Gamma'_2}{A'}$ have derivations; and

  \item if $\lfchecktype{\Gamma_1,x_0:A_0,\Gamma_2}{M}{A}$ has a
    derivation (for some well-formed type $A$), there is an $A'$ and
    an $M'$ such that $\hsub{\theta}{A}{A'}$, $\hsub{\theta}{M}{M'}$,
    and $\lfchecktype{\Gamma_1,\Gamma_2'}{M'}{A'}$ have derivations.
\end{enumerate}
\end{theorem}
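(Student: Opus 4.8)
The plan is to prove a strengthened statement that treats all of the formation judgements of Figure~\ref{fig:lf-judgements} simultaneously, since the rules for kinds, types and canonical terms appeal in their premises to the atomic judgements $\lfsynthkind{\Gamma}{P}{K}$ and $\lfsynthtype{\Gamma}{R}{A}$. For the atomic term judgement the conclusion must be phrased disjunctively: if $\lfsynthtype{\Gamma_1,x_0{:}A_0,\Gamma_2}{R}{A}$ is derivable then there is an $A'$ with $\hsub{\theta}{A}{A'}$ derivable such that \emph{either} $\hsubr{\theta}{R}{R'}$ and $\lfsynthtype{\Gamma_1,\Gamma'_2}{R'}{A'}$ are derivable (the head of $R$ is not $x_0$) \emph{or} $\hsubr{\theta}{R}{M':\erase{A'}}$ and $\lfchecktype{\Gamma_1,\Gamma'_2}{M'}{A'}$ are derivable (the head is $x_0$ and the substitution fires). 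The atomic type judgement needs only the first alternative, since the head of an atomic type is always a type constant. I would run a \emph{primary induction on} $\size{\erase{A_0}}$, the size of the single triple in $\theta$, and a \emph{secondary induction on the height of the LF derivation} being transformed, proving all the clauses together and settling the context claim at the end.

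The leaf and structural cases are largely routine. For \atomtermconst{} and \atomfamconst{} the head is a constant, so Theorem~\ref{th:vacuoussubs} leaves the expression unchanged and its classifier, drawn from the fixed signature, does not mention $x_0$. The crux among the leaves is \atomtermvar{}, where $R=x$ with $x{:}A$ in $\Gamma_1,x_0{:}A_0,\Gamma_2$: when $x\neq x_0$ the variable passes through and the first alternative holds, while when $x=x_0$ the substitution fires, giving $\hsubr{\theta}{x_0}{M_0:\erase{A_0}}$; here well-formedness of $\Gamma_1,x_0{:}A_0,\Gamma_2$ forces $A_0$ to be formed over $\Gamma_1$, so $x_0$ is not free in $A_0$ and $\hsub{\theta}{A_0}{A_0}$ holds by Theorem~\ref{th:vacuoussubs}, and $\lfchecktype{\Gamma_1}{M_0}{A_0}$ yields $\lfchecktype{\Gamma_1,\Gamma'_2}{M_0}{A_0}$ by the weakening result Theorem~\ref{th:weakening}, establishing the second alternative with $A'=A_0$ and $M'=M_0$. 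The binder rules \canonkindpi{}, \canonfampi{} and \canontermlam{} recurse under an abstraction: after renaming the bound variable so that the side condition that it avoid $\domain{\theta}\cup\range{\theta}$ is met, I apply the secondary hypothesis in the context extended by the bound variable, noting that the induced arity context extends correspondingly (Theorem~\ref{th:arityapprox}) so the needed arity-typing premises remain available.

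The delicate cases are the applications \atomtermapp{} and \atomfamapp{}. Take $R\app M$ typed by $\lfsynthtype{\Gamma}{R}{\typedpi{x}{A_1}{A_2}}$, $\lfchecktype{\Gamma}{M}{A_1}$ and $\hsub{\{\langle x,M,\erase{A_1}\rangle\}}{A_2}{A}$. The induction hypothesis on $R$ offers two outcomes. If $R$ remains atomic as $R'$ synthesizing $\typedpi{x}{A_1'}{A_2'}$, then using the hypothesis on $M$ to get $\hsub{\theta}{M}{M'}$ with $\lfchecktype{\Gamma_1,\Gamma'_2}{M'}{A_1'}$, I form $R'\app M'$ and must check that the type it synthesizes, obtained by substituting $M'$ into $A_2'$, coincides with the $A'$ demanded by $\hsub{\theta}{A}{A'}$. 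This is a commuting-substitutions obligation, discharged by the permutation property of Theorem~\ref{th:subspermute} with $\theta_1=\theta$ and $\theta_2=\{\langle x,M,\erase{A_1}\rangle\}$, using the erasure invariance of Theorem~\ref{th:erasure} to keep the arity index $\erase{A_1'}=\erase{A_1}$ aligned.

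\textbf{The main obstacle} is the remaining application subcase, where the induction hypothesis reports that $R$ reduces to a canonical term; being canonical of $\Pi$-type it must be an abstraction, delivered as $\hsubr{\theta}{R}{\lflam{x}{M_R}:\arr{\alpha'}{\alpha''}}$. The defining rule for hereditary substitution into an application then forces a \emph{further} substitution $\hsub{\{\langle x,M'',\alpha'\rangle\}}{M_R}{M'''}$, and the whole argument hinges on the observation that its index $\alpha'=\erase{A_1}$ is strictly smaller than $\erase{A_0}$: by the arity approximation of Theorem~\ref{th:arityapprox} the arity type synthesised for $R$ is a tail of the type $\erase{A_0}$ assigned to its head $x_0$, and $\alpha'$ is its domain, so $\size{\alpha'}<\size{\erase{A_0}}=\size{\theta}$. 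This is exactly why the outer induction is on $\size{\erase{A_0}}$: the \emph{primary} hypothesis applies to this strictly smaller substitution, supplying both the existence of $M'''$ and its typing, and a second appeal to Theorem~\ref{th:subspermute} matches the resulting type with the required $A'$. Assembling the pieces gives $\hsubr{\theta}{(R\app M)}{M''':\alpha''}$, placing us in the firing alternative. Finally, the context claim, that there is a $\Gamma'_2$ with $\hsub{\theta}{\Gamma_2}{\Gamma'_2}$ and $\lfctx{\Gamma_1,\Gamma'_2}$ derivable, follows by a subsidiary induction on the length of $\Gamma_2$, applying the type clause just proved to each assigned type and reassembling the context with \ctxterm{}.
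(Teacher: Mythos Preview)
The paper does not give its own proof of this theorem; it simply states that the result ``is proved in \cite{harper07jfp}.'' Your proposal is essentially the standard argument from that source adapted to the simultaneous-substitution setting of this paper: strengthen the statement to cover the atomic judgements $\lfsynthkind{\Gamma}{P}{K}$ and $\lfsynthtype{\Gamma}{R}{A}$, phrasing the latter disjunctively according to whether the head of $R$ is $x_0$; run a lexicographic induction first on $\size{\erase{A_0}}$ and then on the derivation; use Theorem~\ref{th:subspermute} to commute the two substitutions in the application cases; and exploit the key observation that when the head fires, the nested substitution generated by the hereditary-substitution rule for application has arity index strictly smaller than $\erase{A_0}$, which is precisely what licenses the appeal to the primary hypothesis. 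This is the right proof.

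One organizational remark: you describe the context claim as ``settled at the end,'' but in the \atomtermvar\ case with $x=x_0$ you already need $\lfctx{\Gamma_1,\Gamma'_2}$ in order to weaken $\lfchecktype{\Gamma_1}{M_0}{A_0}$ out to $\Gamma_1,\Gamma'_2$ via Theorem~\ref{th:weakening}. The clean packaging is to add an outermost induction on the length of $\Gamma_2$: at each step the inductive hypothesis supplies both $\lfctx{\Gamma_1,\Gamma_2'^-}$ and the clauses for the shorter suffix, the type clause then extends the context by one binding, and the clauses for the current $\Gamma_2$ are proved with the context wellformedness already in hand. This is a matter of arrangement rather than a gap in the argument.
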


The reasoning system will need to build in a means for analyzing
typing derivations of the form $\lfchecktype{\Gamma}{M}{A}$.
This analysis will be driven by the structure of the type $A$.
The decomposition when $A$ is of the form $\typedpi{x_1}{A_1}{A_2}$
has an obvious form.
The development below, culminating in Theorem~\ref{th:atomictype},
provides the basis for the analysis when $A$ is an atomic type. 

\begin{lemma}\label{lem:arityrespecting}
Let $\Gamma$ be a context such that $\lfctx{\Gamma}$ has a derivation
and let $\STLCGamma$ be the arity context induced by $\Sigma$ and
$\Gamma$. 
Suppose that $\typedpi{y_1}{A_1}{\ldots\typedpi{y_n}{A_n}{A}}$ is a
type associated with a (term) constant or variable by $\Sigma$ or
$\Gamma$, or that $\typedpi{y_1}{A_1}{\ldots\typedpi{y_n}{A_n}{K}}$ is
a kind associated with a (type) constant by $\Sigma$, where the $y_i$s
are distinct variables.  
Then, for $1 \leq i \leq n$, $A_i$ and
$\typedpi{y_{i}}{A_{i}}{\ldots\typedpi{y_n}{A_n}{A}}$ or,
respectively, $\typedpi{y_{i}}{A_{i}}{\ldots\typedpi{y_n}{A_n}{K}}$  
respect the arity context $\aritysum{\{y_1 : \erase{A_1}, \ldots, y_{i-1} :
  \erase{A_{i-1}}\}}{\STLCGamma}$.
Further, $A$ or, respectively, $K$ respects the arity context 
$\aritysum{\{y_1 : \erase{A_1}, \ldots, y_{n} : \erase{A_{n}}\}}{\STLCGamma}$.
\end{lemma}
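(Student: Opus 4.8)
The plan is to reduce the statement to a straightforward unfolding of the definition of what it means for a type or kind to respect an arity context, after first establishing that the \emph{entire} classifier $\typedpi{y_1}{A_1}{\ldots\typedpi{y_n}{A_n}{A}}$ (or the corresponding kind) respects $\STLCGamma$. The only auxiliary fact I need beyond the results already available is a monotonicity property for arity typing: if $\stlctyjudg{\STLCGamma_1}{M}{\alpha}$ is derivable and $\STLCGamma_2$ extends $\STLCGamma_1$ by assignments for constants or variables not already occurring in $\STLCGamma_1$, then $\stlctyjudg{\STLCGamma_2}{M}{\alpha}$ is derivable; and, correspondingly, that the property of respecting an arity context is preserved under such an extension. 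Both follow by a routine induction — the first on the arity typing derivation (the variable and constant rules of Figure~\ref{fig:aritytyping} only look up assignments, which persist under extension) and the second on the $\Pi$-structure of the type or kind (the atomic case invoking the first property).

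First I would establish that the full classifier respects $\STLCGamma$. Consider a type $E = \typedpi{y_1}{A_1}{\ldots\typedpi{y_n}{A_n}{A}}$. If $E = A'$ is the type of a constant $c : A' \in \Sigma$, then wellformedness of the signature (inverting $\sigterm$) supplies a derivation of $\lftype{\emptyctx}{A'}$ relative to the relevant signature, whence Theorem~\ref{th:arityapprox} yields that $A'$ respects the arity context induced by that signature. If instead $x : A' \in \Gamma$, then from $\lfctx{\Gamma}$ I extract, by inverting $\ctxterm$ along the prefix $\Gamma_1$ of $\Gamma$ preceding $x$, a derivation of $\lftype{\Gamma_1}{A'}$; Theorem~\ref{th:arityapprox} then gives that $A'$ respects the arity context induced by $\Sigma$ and $\Gamma_1$. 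In both cases the arity context so obtained is contained in $\STLCGamma$, so the monotonicity property lets me conclude that $E$ respects $\STLCGamma$. The kind case is identical, using $\sigfam$ and the kind clause of Theorem~\ref{th:arityapprox}.

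The conclusion then follows by an induction on $i$ that peels $\Pi$-abstractions off the front of $E$. Writing $E_i = \typedpi{y_i}{A_i}{\ldots\typedpi{y_n}{A_n}{A}}$, I maintain the invariant that $E_i$ respects the arity context $\STLCGamma_i = \aritysum{\{y_1 : \erase{A_1}, \ldots, y_{i-1} : \erase{A_{i-1}}\}}{\STLCGamma}$; the base case $i = 1$ is exactly what was just proved. Unfolding the definition of respecting for the $\Pi$-type $E_i = \typedpi{y_i}{A_i}{E_{i+1}}$ gives directly that $A_i$ respects $\STLCGamma_i$ — the first assertion of the lemma — and that $E_{i+1}$ respects $\aritysum{\{y_i : \erase{A_i}\}}{\STLCGamma_i}$. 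Since the $y_j$ are distinct and, by the renaming convention, may be taken fresh for $\STLCGamma$, this latter context equals $\STLCGamma_{i+1}$, advancing the invariant. After $n$ steps the residual classifier is $A$ (resp. $K$) and the invariant reads that it respects $\aritysum{\{y_1 : \erase{A_1}, \ldots, y_n : \erase{A_n}\}}{\STLCGamma}$, which is the final claim.

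The step requiring the most care is the passage, in the second paragraph, from the wellformedness judgement to respecting $\STLCGamma$: one must reconcile the arity context in which the classifier was originally checked — induced by a signature prefix and the empty context for a constant, or by $\Sigma$ with a proper prefix of $\Gamma$ for a variable — with the larger $\STLCGamma$, and this is precisely the role of the monotonicity property. The remaining bookkeeping, namely the associativity-style rearrangement of $\uplus$ used to identify $\aritysum{\{y_i : \erase{A_i}\}}{\STLCGamma_i}$ with $\STLCGamma_{i+1}$, is routine given the freshness of the bound variables.
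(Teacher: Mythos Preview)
Your proof is correct and follows essentially the same approach as the paper: obtain well-formedness of the classifier from the well-formedness of $\Sigma$ and $\Gamma$, apply Theorem~\ref{th:arityapprox} to get that it respects the relevant arity context, and then unfold Definition~\ref{def:aritytyping} along the $\Pi$-prefix. The only minor difference is that the paper asserts $\lftype{\Gamma}{\typedpi{y_1}{A_1}{\ldots\typedpi{y_n}{A_n}{A}}}$ directly (tacitly invoking LF weakening to pass from the prefix context to $\Gamma$) before applying Theorem~\ref{th:arityapprox}, whereas you apply Theorem~\ref{th:arityapprox} at the prefix and then use arity-context monotonicity; the two routes are interchangeable.
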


\begin{proof}
  Since $\Sigma$ and $\Gamma$ are well-formed by assumption, depending
  on the case under consideration, either
  $\lftype{\Gamma}{\typedpi{y_1}{A_1}{\ldots\typedpi{y_n}{A_n}{A}}}$  
  or $\lfkind{\emptyctx}{\typedpi{y_1}{A_1}{\ldots\typedpi{y_n}{A_n}{K}}}$
  must have a derivation.
  The desired conclusions now follow from Theorem~\ref{th:arityapprox}
  and Definition~\ref{def:aritytyping}.
\end{proof} 

\begin{lemma}\label{lem:lfcomp}
  Let $\Gamma_1$ be a context such that $\lfctx{\Gamma_1}$ has a
  derivation, let $\STLCGamma$ be the arity context induced by $\Sigma$
  and $\Gamma_1$, and let $\theta$ be a substitution that is arity type
  preserving with respect to $\STLCGamma$. Further, let $x_0$ be a
  variable that is neither bound in $\Gamma_1$ nor a member of 
  $\domain{\theta}$,  let $A_0$ and $M_0$ be such that
  $\lftype{\Gamma_1}{A_0}$ and $\lfchecktype{\Gamma_1}{M_0}{A_0}$ are
  derivable and let $\theta' = \theta \cup \{\langle
  x_0,M_0,\erase{A_0}\rangle \}$.
  \begin{enumerate}
    \item $\theta'$ is arity type preserving with respect to
      $\STLCGamma$.

    \item Let $\Gamma_2$ be a context that respects an arity context
      $\STLCGamma'$ such that 
      $\aritysum{\context{\theta'}}{\STLCGamma}\subseteq \STLCGamma'$ and let
      $\Gamma'_2$ be a context such that
      $\hsub{\theta}{\Gamma_2}{\Gamma'_2}$, and
      $\lfctx{\Gamma_1, x_0 : A_0, \Gamma'_2}$ have derivations. Then
      there is a context $\Gamma_2''$ such that the following hold:

      \begin{enumerate}
      \item $\hsub{\{\langle x_0,M_0,\erase{A_0}\rangle \}}
                  {\Gamma'_2}
                  {\Gamma''_2}$,
            $\hsub{\theta'}{\Gamma_2}{\Gamma''_2}$ and
            $\lfctx{\Gamma, \Gamma''_2}$ have derivations;

      \item if $K$ is a kind that also respects $\STLCGamma'$ and $K'$
        is a kind such that there are derivations for $\hsub{\theta}{K}{K'}$ and
        $\lfkind{\Gamma_1,x_0:A_0,\Gamma'_2}{K'}$, 
        then there is a kind $K''$ such that
        $\hsub{\{\langle x_0,M_0,\erase{A_0}\rangle \}}{K'}{K''}$,
        $\hsub{\theta'}{K}{K''}$ and
        $\lfkind{\Gamma_1, \Gamma''_2}{K''}$ are derivable; and

      \item if $A$ is a type that also respects $\STLCGamma'$ and $A'$
        is a type such that there are derivations for
        $\hsub{\theta}{A}{A'}$ and
        $\lftype{\Gamma_1,x_0:A_0,\Gamma'_2}{A'}$,
        then there is a type $A''$ such that
        $\hsub{\{\langle x_0,M_0,\erase{A_0}\rangle \}}{A'}{A''}$,
        $\hsub{\theta'}{A}{A''}$ and $\lftype{\Gamma_1,
        \Gamma''_2}{A''}$ have derivations.
      \end{enumerate}
   \end{enumerate}
\end{lemma}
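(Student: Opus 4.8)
The plan is to obtain clause 1 directly from the arity approximation theorem and to reduce clause 2 to the substitutivity theorem for wellformedness (Theorem~\ref{th:transitivity}) and the composition theorem (Theorem~\ref{th:composition}), once $\theta'$ has been recognised as a composition of substitutions. For clause 1, observe that $\theta'$ extends $\theta$ by the single triple $\langle x_0, M_0, \erase{A_0}\rangle$. The triples of $\theta$ are covered by the hypothesis that $\theta$ is arity type preserving with respect to $\STLCGamma$, and for the new triple it suffices that $\lfchecktype{\Gamma_1}{M_0}{A_0}$ yields $\stlctyjudg{\STLCGamma}{M_0}{\erase{A_0}}$ by Theorem~\ref{th:arityapprox}, since $\STLCGamma$ is the arity context induced by $\Sigma$ and $\Gamma_1$.

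Writing $\theta_2$ for the unary substitution $\{\langle x_0, M_0, \erase{A_0}\rangle\}$, the heart of clause 2 is the identity $\theta' = \comp{\theta_2}{\theta}$. Because $x_0$ is neither bound in $\Gamma_1$ nor in $\domain{\theta}$, it is not a variable of $\STLCGamma$; as every free variable of a term that is arity typed in $\STLCGamma$ must occur in $\STLCGamma$ and $\theta$ is arity type preserving with respect to $\STLCGamma$, the variable $x_0$ is free in no member of $\range{\theta}$. Theorem~\ref{th:vacuoussubs} then gives $\hsub{\theta_2}{M}{M}$ for each such $M$, so forming $\comp{\theta_2}{\theta}$ leaves the triples of $\theta$ unchanged while retaining $\langle x_0, M_0, \erase{A_0}\rangle$ (as $x_0 \notin \domain{\theta}$); thus $\comp{\theta_2}{\theta} = \theta'$. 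Inspecting Definition~\ref{def:composition} also shows that $\theta$ and $\theta_2$ are arity type compatible relative to $\STLCGamma$: $\theta_2$ is type preserving with respect to $\STLCGamma$ by the computation just used, and $\theta$ is type preserving with respect to $\aritysum{\{x_0 : \erase{A_0}\}}{\STLCGamma}$ since enlarging an arity context preserves arity typing.

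With this identification in hand, I would first construct $\Gamma''_2$ from Theorem~\ref{th:transitivity} applied to $\lfctx{\Gamma_1, x_0 : A_0, \Gamma'_2}$ and $\lfchecktype{\Gamma_1}{M_0}{A_0}$ under the substitution $\theta_2$: this yields $\hsub{\theta_2}{\Gamma'_2}{\Gamma''_2}$ and $\lfctx{\Gamma_1, \Gamma''_2}$, which are two of the three assertions of (a), and its clauses (1) and (2) analogously produce $K''$ and $A''$ satisfying $\hsub{\theta_2}{K'}{K''}$, $\lfkind{\Gamma_1, \Gamma''_2}{K''}$ and $\hsub{\theta_2}{A'}{A''}$, $\lftype{\Gamma_1, \Gamma''_2}{A''}$, which are the wellformedness assertions of (b) and (c). The remaining claims $\hsub{\theta'}{\Gamma_2}{\Gamma''_2}$, $\hsub{\theta'}{K}{K''}$ and $\hsub{\theta'}{A}{A''}$ then come from clause 1 of Theorem~\ref{th:composition}: feeding the hypothesised $\hsub{\theta}{\Gamma_2}{\Gamma'_2}$ together with the freshly obtained $\hsub{\theta_2}{\Gamma'_2}{\Gamma''_2}$ into that clause gives $\hsub{\comp{\theta_2}{\theta}}{\Gamma_2}{\Gamma''_2}$, which is exactly $\hsub{\theta'}{\Gamma_2}{\Gamma''_2}$, and likewise for $K$ and $A$. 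Uniqueness of substitution results (Theorem~\ref{th:uniqueness}) guarantees that the objects delivered by the two theorems coincide, so no inconsistency arises.

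The step needing the most care is the arity-context bookkeeping required to invoke Theorem~\ref{th:composition}, which asks that $\Gamma_2$, $K$ and $A$ respect the precise context $\aritysum{\context{\theta'}}{\STLCGamma}$, whereas the lemma only assumes they respect a possibly larger $\STLCGamma'$. The resolution is that the assignments in $\STLCGamma'$ beyond $\aritysum{\context{\theta'}}{\STLCGamma}$ are never actually used in arity typing these expressions: any free variable of $\Gamma_2$ outside $\domain{\theta}$ survives the application of $\theta$ and hence occurs free in $\Gamma'_2$, so the wellformedness of $\Gamma_1, x_0 : A_0, \Gamma'_2$ forces it to be a variable of $\Gamma_1$ or $x_0$ itself and therefore to be recorded already in $\aritysum{\context{\theta'}}{\STLCGamma}$; the same reasoning applies to $K$ and $A$ through $K'$ and $A'$. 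Consequently these expressions respect the smaller context and Theorem~\ref{th:composition} applies. Carrying out this confinement argument carefully, and keeping the three parallel threads for contexts, kinds and types in step, is the only genuinely delicate part of the proof; the rest is a direct assembly of the cited theorems.
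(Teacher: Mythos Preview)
Your proposal is correct and follows essentially the same route as the paper: identify $\theta'$ as the composition $\{\langle x_0,M_0,\erase{A_0}\rangle\}\circ\theta$, obtain clause~1 from Theorem~\ref{th:arityapprox}, and derive clause~2 by combining Theorem~\ref{th:transitivity} (for the wellformedness claims and the single-step substitutions) with Theorem~\ref{th:composition} (for the $\theta'$-substitution claims). The paper's proof is terser and does not spell out the arity-context bookkeeping you discuss in your last paragraph, but the underlying argument is the same.
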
    

\begin{proof}
Since $\lfchecktype{\Gamma_1}{M_0}{A_0}$ has a derivation, it follows
from Theorem~\ref{th:arityapprox} that $\{\langle
x_0,M_0,\erase{A_0}\rangle\}$ is type preserving with respect to $\STLCGamma$.
It then follows from the assumptions in the lemma that $\theta'$ is in fact
$\{\langle x_0, M_0, \erase{A_0} \rangle \} \circ \theta$ and type
preserving with respect to $\STLCGamma$. The various observations in clause 2 
now follow from Theorems~\ref{th:composition} and \ref{th:transitivity}.
\end{proof}

\begin{theorem}\label{th:atomictype}
Let $\Gamma$ be a context such that $\lfctx{\Gamma}$ has a derivation.
\begin{enumerate}
\item $\lfsynthtype{\Gamma}{R}{A'}$  has a derivation 
  if
  \begin{enumerate}
  \item $R$ is of the form $(c \app M_1 \app \ldots\app M_n)$ for some
    $c:\typedpi{y_1}{A_1}{\ldots \typedpi{y_n}{A_n}{A}} \in \Sigma$ or
    of the form $(x \app M_1 \app \ldots\app M_n)$ for some
    $x:\typedpi{y_1}{A_1}{\ldots \typedpi{y_n}{A_n}{A}} \in \Gamma$,
    
  \item there is a sequence of types $A'_1,\ldots,A'_n$ such that, for
    $1\leq i\leq n$, there are derivations for
    $\hsub{\{\langle y_1, M_1,\erase{A_1}\rangle, \ldots,
             \langle y_{i-1}, M_{i-1}, \erase{A_{i-1}} \rangle\}}
          {A_i}
          {A'_i}$
    and $\lfchecktype{\Gamma}{M_i}{A'_i}$, and
         
  \item $\hsub{\{\langle y_1, M_1,\erase{A_1}\rangle, \ldots,
                 \langle y_n, M_n, \erase{A_n} \rangle\}}
              {A}
              {A'}$
    and $\lftype{\Gamma}{A'}$ have derivations.
  \end{enumerate}
  
\item $\lfsynthtype{\Gamma}{R}{A'}$  has a derivation of height $h$
  only if
  \begin{enumerate}
  \item $R$ is of the form $(c \app M_1 \app \ldots\app M_n)$
    for some $c:\typedpi{y_1}{A_1}{\ldots \typedpi{y_n}{A_n}{A}} \in \Sigma$
    or of the form $(x \app M_1 \app \ldots\app M_n)$ for some
    $x:\typedpi{y_1}{A_1}{\ldots \typedpi{y_n}{A_n}{A}} \in \Gamma$,

  \item there is a sequence of types $A'_1,\ldots,A'_n$ such that, for
    $1 \leq i \leq n$, there is a derivation for 
    $\hsub{\{\langle y_1, M_1,\erase{A_1}\rangle, \ldots,
             \langle y_{i-1}, M_{i-1}, \erase{A_{i-1}} \rangle\}}
          {A_i}
          {A'_i}$
    and a derivation of height less than $h$ for
    $\lfchecktype{\Gamma}{M_i}{A'_i}$, and  

  \item $\hsub{\{\langle y_1, M_1,\erase{A_1}\rangle, \ldots,
                 \langle y_n, M_n, \erase{A_n} \rangle\}}
              {A}
              {A'}$
    and $\lftype{\Gamma}{A'}$ have derivations.
  \end{enumerate}      
\end{enumerate}
\end{theorem}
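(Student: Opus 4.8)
The two parts both rest on reconciling the \emph{iterated single-variable} substitutions that the rule $\atomtermapp$ performs with the \emph{simultaneous} substitutions that appear in the statement. Write $\theta_k$ for the substitution $\{\langle y_1,M_1,\erase{A_1}\rangle,\ldots,\langle y_k,M_k,\erase{A_k}\rangle\}$, so that $\theta_0$ is empty and $A'_i=\hsubst{\theta_{i-1}}{A_i}$. The central observation is that, since $y_1,\ldots,y_n$ are distinct and (after renaming) do not occur free in any $M_j$, Theorem~\ref{th:vacuoussubs} gives $\hsub{\{\langle y_k,M_k,\erase{A_k}\rangle\}}{M_j}{M_j}$ for $j<k$, and hence, unwinding Definition~\ref{def:composition}, $\{\langle y_k,M_k,\erase{A_k}\rangle\}\circ\theta_{k-1}=\theta_k$. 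The arity side-conditions needed to invoke the composition theorem are supplied by Lemma~\ref{lem:arityrespecting} together with Theorems~\ref{th:arityapprox} and~\ref{th:aritysubs}, which also guarantee that all the substitution applications in sight have results.

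For clause~1 I would show, by induction on $k$ ranging from $0$ to $n$, that $(c\app M_1\app\ldots\app M_k)$ (or the variable-headed analogue) has a synthesis derivation of the type $\hsubst{\theta_k}{\typedpi{y_{k+1}}{A_{k+1}}{\ldots\typedpi{y_n}{A_n}{A}}}$; the case $k=n$ is then the desired conclusion. The base case is the rule $\atomtermconst$ (or $\atomtermvar$), which lets the head synthesize its declared type $\typedpi{y_1}{A_1}{\ldots\typedpi{y_n}{A_n}{A}}$. For the inductive step, applying a substitution to a $\Pi$-type distributes over it (Figure~\ref{fig:hsubtypes}), so the type synthesized for $(c\app M_1\app\ldots\app M_{k-1})$ has the form $\typedpi{y_k}{A'_k}{\hsubst{\theta_{k-1}}{(\typedpi{y_{k+1}}{A_{k+1}}{\ldots})}}$ with leading argument type exactly $A'_k=\hsubst{\theta_{k-1}}{A_k}$. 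Hypothesis~(b) supplies $\lfchecktype{\Gamma}{M_k}{A'_k}$, so $\atomtermapp$ applies; its side-condition substitutes $M_k$ for $y_k$ using index $\erase{A'_k}$, which equals $\erase{A_k}$ by Theorem~\ref{th:erasure}. Theorem~\ref{th:composition} together with the identity $\{\langle y_k,M_k,\erase{A_k}\rangle\}\circ\theta_{k-1}=\theta_k$ then shows the resulting type is $\hsubst{\theta_k}{(\typedpi{y_{k+1}}{A_{k+1}}{\ldots})}$, re-establishing the invariant.

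For clause~2 I would invert the given derivation, arguing by induction on its height. If the last rule is $\atomtermvar$ or $\atomtermconst$ then $n=0$ and every condition is vacuous. If it is $\atomtermapp$ then $R=(R_1\app M_n)$, and there are a shorter derivation of $\lfsynthtype{\Gamma}{R_1}{\typedpi{y_n}{B_n}{C}}$, a derivation of $\lfchecktype{\Gamma}{M_n}{B_n}$ of height below $h$, and $A'=\hsubst{\{\langle y_n,M_n,\erase{B_n}\rangle\}}{C}$. The induction hypothesis applied to $R_1$ yields a head whose declared type is $\typedpi{y_1}{A_1}{\ldots\typedpi{y_{n-1}}{A_{n-1}}{A^\ast}}$ with $\typedpi{y_n}{B_n}{C}=\hsubst{\theta_{n-1}}{A^\ast}$ and the lower-index instances of~(b) and~(c). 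Since substitution preserves the top-level type constructor (Figure~\ref{fig:hsubtypes}) and is deterministic (Theorem~\ref{th:uniqueness}), $A^\ast$ must itself be a $\Pi$-type $\typedpi{y_n}{A_n}{A}$ with $B_n=\hsubst{\theta_{n-1}}{A_n}=A'_n$ and $C=\hsubst{\theta_{n-1}}{A}$; this recovers~(a) at level $n$ and, via $B_n=A'_n$, the remaining instance of~(b). Finally, $A'=\hsubst{\{\langle y_n,M_n,\erase{B_n}\rangle\}}{C}=\hsubst{\theta_n}{A}$ by Theorem~\ref{th:composition} and the composition identity (using $\erase{B_n}=\erase{A_n}$), while $\lftype{\Gamma}{A'}$ follows by inverting $\canonfampi$ on the well-formed $\typedpi{y_n}{B_n}{C}$ and applying the substitutivity Theorem~\ref{th:transitivity} to $\lfchecktype{\Gamma}{M_n}{B_n}$; the height bounds on the $M_i$ carry over since the synthesis derivation for $R_1$ has height below $h$.

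The routine propositional work is negligible; the real content, and the step I expect to be most delicate, is the bookkeeping that equates the composition of single substitutions with the simultaneous substitution $\theta_k$ — in particular checking that the arity-type-compatibility hypotheses of Theorem~\ref{th:composition} are met so that it may be invoked, which is exactly the purpose served by Lemma~\ref{lem:arityrespecting}. A secondary subtlety, specific to clause~2, is the inversion of $\hsubst{\theta_{n-1}}{A^\ast}$ back to a $\Pi$-type: this relies on the fact that the substitution rules for types map atomic types to atomic types and $\Pi$-types to $\Pi$-types, so that the shape of the output determines the shape of the input.
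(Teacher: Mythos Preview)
Your proposal is correct and follows essentially the same route as the paper: both arguments set up the simultaneous substitutions $\theta_k$, prove clause~1 by forward induction on the length of the partial application, and prove clause~2 by induction on the height of the synthesis derivation, with the key step in each case being the reconciliation of the single-variable substitution in \atomtermapp\ with $\theta_k$. The only difference is packaging: where you invoke Theorem~\ref{th:composition} together with the composition identity $\{\langle y_k,M_k,\erase{A_k}\rangle\}\circ\theta_{k-1}=\theta_k$ and then separately appeal to Theorem~\ref{th:transitivity} for the well-formedness of $A'$, the paper bundles exactly these two ingredients into Lemma~\ref{lem:lfcomp} and invokes that instead; your argument is in effect an inlining of that lemma's proof.
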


\begin{proof} At the outset, we should check the coherence of clauses
  1(b) and 2(b) 
  in the theorem statement by verifying that, for $1 \leq i \leq n$, it
  is the case that $\lftype{\Gamma}{A'_i}$ has a derivation.
  Towards this end, we first note that there must be a derivation for
  $\lftype{\Gamma, y_1 : A_1, \ldots, y_{i-1} : A_{i-1}}{A_i}$ since
  $\Sigma$ and $\Gamma$ are well-formed. 
  The desired conclusion then follows from using
  Lemma~\ref{lem:lfcomp} repeatedly and observing, via
  Theorem~\ref{th:erasure}, that erasure is preserved under
  substitution.

\smallskip
\noindent We now introduce some notation that will be useful in the
arguments that follow.
We will use $\STLCGamma$ to denote the arity context induced by
$\Sigma$ and $\Gamma$.
Further, for $1 \leq i \leq n+1$, we will write $\theta_i$ for the  
substitution $\{ \langle y_1,M_1,\erase{A_1} \rangle, \ldots, \langle
y_{i-1}, M_{i-1}, \erase{A_{i-1}}\rangle \}$.
An observation that we will make use of below is that if for $1 \leq j
< i$ it is the case that $\lfchecktype{\Gamma}{M_i}{A'_i}$ has a
derivation, then $\theta_i$ is type preserving with respect to
$\STLCGamma$.
This is an easy consequence of Theorems~\ref{th:arityapprox} and
\ref{th:erasure}.

\smallskip
\noindent {\it Proof of (1).}
  We will consider explicitly only the case where $R$ is $(c\app M_1 \app
  \ldots \app M_n)$; the argument for the case when $R$ is $(x\app M_1
  \app \app \ldots \app M_n)$ is similar.
  We will show for $1 \leq i \leq n+1$ that, under the conditions
  assumed for $M_1,\ldots, M_{i-1}$, there is a type $A''_{i}$ such
  that $\hsub{\theta_{i}}
             {(\typedpi{y_{i}}{A_{i}}{\ldots\typedpi{x_n}{A_n}{A}})}
             {A''_{i}}$,
  $\lftype{\Gamma}{A''_{i}}$ and
  $\lfsynthtype{\Gamma}{(c\app M_1 \app \ldots \app M_{i-1})}{A''_{i}}$ have
  derivations. 
  The desired conclusion follows from noting that $A'$ must be
  $A''_{n+1}$ because the result of substitution application is
  unique.  

  The claim is proved by induction on $i$.
  Consider first the case when $i$ is $1$.
  Since $\theta_1 = \emptyset$, $A''_{1}$ is
  $\typedpi{y_1}{A_1}{\ldots \typedpi{y_n}{A_n}{A}}$. 
  The wellformedness of $\Sigma$ ensures that $\lftype{\Gamma}{A''_1}$
  has a derivation and we get a derivation for
  $\lfsynthtype{\Gamma}{c}{A'}$ by using an \atomtermconst\ rule.

  Let us then assume the claim for $i$ and show that it
  must also hold for $i+1$.
  By the hypothesis, there is an $A''_i$ of the form 
  $\typedpi{y_i}{A'_i}{A''}$ where $A''$ is such that 
  $\hsub{\theta_i}{\typedpi{y_{i+1}}{A_{i+1}}{\ldots\typedpi{x_n}{A_n}{A}}}{A''}$
  has a derivation.
  Since $\lftype{\Gamma}{A''_i}$ has a derivation, so must
  $\lftype{\Gamma, y_i : A'_i}{A''}$.
  By Lemma~\ref{lem:arityrespecting},
  $\typedpi{y_{i+1}}{A_{i+1}}{\ldots\typedpi{x_n}{A_n}{A}}$
  respects the arity context
  $\aritysum{\{y_1 : \erase{A_1},\ldots, y_i : \erase{A_i}\}}{\STLCGamma}$.
  Since there are derivations for $\lfchecktype{\Gamma}{M_j}{A'_j}$ for $1 \leq
  j < i$, $\theta_i$ is type preserving over $\STLCGamma$.
  We now invoke Lemma~\ref{lem:lfcomp} to conclude that there is a
  term $A'''$ such that there are derivations for 
  $\hsub{\{ \langle y_i, M_i,\erase{A'_i} \rangle \}}{A''}{A'''}$,
  $\hsub{\theta_{i+1}}{\typedpi{y_{i+1}}{A_{i+1}}{\ldots\typedpi{x_n}{A_n}{A}}}{A'''}$,
  and $\lftype{\Gamma}{A'''}$.
  By the hypothesis, there is a derivation for
  $\lfsynthtype{\Gamma}{c\app M_1 \app \ldots \app
    M_{i-1}}{\typedpi{y_i}{A'_i}{A''}}$. 
  Using an \atomtermapp\ rule together with this derivation and the
  ones for  
  $\lfchecktype{\Gamma}{M_i}{A'_i}$, and 
  $\hsub{\{ \langle y_i, M_i,\erase{A'_i} \rangle \}}{A''}{A'''}$, we
  get a derivation for $\lfsynthtype{\Gamma}{(c\app M_1 \app \ldots \app
    M_i)}{A'''}$.
  Letting $A''_{i+1}$ be $A'''$ we see that all the requirements are satisfied.

\smallskip
\noindent {\it Proof of (2).} We prove the claim by induction on
the height of the derivation of $\lfsynthtype{\Gamma}{R}{A'}$. We 
consider the cases for the last rule used in the derivation.
If this rule is \atomtermvar\ or \atomtermconst, the argument is
straightforward. 
The only case to be considered further, then, is that when the rule is
\atomtermapp.

In this case, we know that $R$ must be of the form $(R'\app M')$
where there is a shorter derivation for $\lfsynthtype{\Gamma}{R'}{B'}$
for some type $B'$.
From the induction hypothesis, it follows that $R'$ has the form
$(c\app M_1\app \ldots\app M_n)$ or $(x\app M_1\app \ldots\app M_n)$ 
for some $c : \typedpi{y_1}{A_1}{\ldots \typedpi{y_n}{A_n}{B}} \in
  \Sigma$ or $x : \typedpi{y_1}{A_1}{\ldots \typedpi{y_n}{A_n}{B}} \in
  \Gamma$ and that there must be a sequence of types
  $A'_1,\ldots,A'_n$ that, together with the terms $M_1,\ldots,M_n$
  satisfy the requirements stated in clause 2(b).
Moreover, $B'$ must be such that $\hsub{\theta_{n+1}}{B}{B'}$ and
$\lftype{\Gamma}{B'}$ have derivations.
Since the rule is an \atomtermapp, $B'$ must have the structure of an
abstracted type.
From this it follows that $B$ must be of the form
$\typedpi{y_{n+1}}{A_{n+1}}{A}$ and, correspondingly, $B'$ must be of
the form $\typedpi{y_{n+1}}{A'_{n+1}}{A''}$ where
$\hsub{\theta_{n+1}}{A_{n+1}}{A'_{n+1}}$ and $\hsub{\theta_{n+1}}{A}{A''}$ have
derivations.
Noting that the type of $c$ or $x$ is really of the form
$\typedpi{y_1}{A_1}{\ldots \typedpi{y_{n+1}}{A_{n+1}}{A}}$ it follows
  from Lemma~\ref{lem:arityrespecting} that $A$
respects the arity context $\aritysum{\{y_1 : \erase{A_1}, \ldots,
  y_{n+1} : \erase{A_{n+1}}\}}{\STLCGamma}$.
Also, since $\lftype{\Gamma}{B'}$ has a derivation, it must be the
case that $\lftype{\Gamma, y_{n+1} : A'_{n+1}}{A''}$ has one.
Since the derivation concludes with a \atomtermapp\ rule, it must be
the case that $\lfchecktype{\Gamma}{M'}{A'_{n+1}}$ and $\hsub{\{
  \langle y_{n+1}, M', \erase{A'_{n+1}} \rangle \}}{A''}{A'}$ have
shorter derivations than the one for $\lfsynthtype{\Gamma}{R}{A'}$.
Since $\theta_{n+1}$ is type preserving with respect to $\STLCGamma$, we
may now use Lemma~\ref{lem:lfcomp} and Theorem~\ref{th:erasure} to
conclude that $\hsub{\theta_{n+1} \cup \{\langle y_{n+1}, M_2, \erase{A_{n+1}} 
  \rangle \}}{A}{A'}$ and $\lftype{\Gamma}{A'}$ have derivations. 
Renaming $M'$ to $M_{n+1}$ we see that all the requirements of clause
2 are satisfied.
\end{proof}

Theorem~\ref{th:atomictype} gives us an alternative means for deriving
judgements of the form $\lfchecktype{\Gamma}{R}{P}$, in the process
dispensing with judgements of the form $\lfsynthtype{\Gamma}{R}{A}$.
Note also that in \emph{analyzing} judgements of the form
$\lfchecktype{\Gamma}{R}{P}$, it is necessary to consider only
\emph{shorter} derivations for subterms of $R$.
This observation will be used in developing a means for arguing
inductively on the heights of LF derivations. 

A property similar to that in Theorem~\ref{th:atomictype} can be
observed for wellformedness judgements for atomic types.
Theorem~\ref{th:atomickind} presents a version that suffices for this
paper. 
A proof of this theorem can be constructed based essentially on the
one for Theorem~\ref{th:atomictype}.
  
\begin{theorem}\label{th:atomickind}
Let $\Gamma$ be a context such that $\lfctx{\Gamma}$ is derivable.
Then $\lfsynthkind{\Gamma}{P}{K'}$ has a derivation if and only if
there is an
$a : \typedpi{y_1}{A_1}{\ldots \typedpi{y_n}{A_n}{K}} \in \Sigma$
such that 
\begin{enumerate}
\item $P$ is of the form $(a \app M_1 \app \ldots\app M_n)$;
  
\item there is a sequence of types $A'_1,\ldots,A'_n$ such that, for $1
  \leq i \leq n$, there are derivations for
  $\hsub{\{\langle y_1,  M_1,\erase{A_1}\rangle, \ldots,
           \langle y_{i-1}, M_{i-1},  \erase{A_{i-1}} \rangle\}}
        {A_i}
        {A'_i}$ and
  $\lfchecktype{\Gamma}{M_i}{A'_i}$; and 

\item $\hsub{\{\langle y_1, M_1,\erase{A_1}\rangle, \ldots, \langle
  y_n, M_n, \erase{A_n} \rangle\}}{K}{K'}$ and $\lfkind{\Gamma}{K'}$
  have derivations. 
\end{enumerate}
\end{theorem}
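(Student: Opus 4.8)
The plan is to adapt the argument for Theorem~\ref{th:atomictype}, exploiting the close parallel between the synthesis rules \atomfamconst\ and \atomfamapp\ for atomic type families and the rules \atomtermconst\ and \atomtermapp\ for atomic terms. The essential simplification is that the head of an atomic type $P$ must be a type-level constant $a$ drawn from $\Sigma$, so there is no variable case to consider and no dependence on a context assignment. As in the earlier proof, I would adopt the abbreviation $\theta_i$ for the substitution $\{\langle y_1, M_1, \erase{A_1}\rangle, \ldots, \langle y_{i-1}, M_{i-1}, \erase{A_{i-1}}\rangle\}$ and let $\STLCGamma$ denote the arity context induced by $\Sigma$ and $\Gamma$. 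Before proving either direction, I would discharge the coherence obligation that each $A'_i$ in clause~(2) satisfies $\lftype{\Gamma}{A'_i}$; this follows, exactly as for Theorem~\ref{th:atomictype}, from the well-formedness of $\typedpi{y_1}{A_1}{\ldots\typedpi{y_n}{A_n}{K}}$, repeated appeals to Lemma~\ref{lem:lfcomp}, and the invariance of erasure under substitution recorded in Theorem~\ref{th:erasure}.

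For the right-to-left (``if'') direction I would prove, by induction on $i$ for $1 \leq i \leq n+1$, that under the stated hypotheses on $M_1, \ldots, M_{i-1}$ there is a kind $K''_i$ for which $\hsub{\theta_i}{(\typedpi{y_i}{A_i}{\ldots\typedpi{y_n}{A_n}{K}})}{K''_i}$, $\lfkind{\Gamma}{K''_i}$, and $\lfsynthkind{\Gamma}{(a \app M_1 \app \ldots \app M_{i-1})}{K''_i}$ all have derivations; the conclusion is then the case $i = n+1$, with $K' = K''_{n+1}$ by uniqueness of substitution results. The base case $i = 1$ uses $\theta_1 = \emptyset$ and an \atomfamconst\ step, relying on the well-formedness of $\Sigma$. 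The inductive step mirrors its counterpart: the hypothesis gives $K''_i$ of the form $\typedpi{y_i}{A'_i}{K_1}$, and I would use Lemma~\ref{lem:arityrespecting} to see that the remaining abstracted kind respects the appropriate arity context, then Lemma~\ref{lem:lfcomp} to produce a $K_1'$ with $\hsub{\{\langle y_i, M_i, \erase{A'_i}\rangle\}}{K_1}{K_1'}$, $\hsub{\theta_{i+1}}{(\typedpi{y_{i+1}}{A_{i+1}}{\ldots\typedpi{y_n}{A_n}{K}})}{K_1'}$, and $\lfkind{\Gamma}{K_1'}$ derivable, and finally an \atomfamapp\ step combining this with the derivation of $\lfchecktype{\Gamma}{M_i}{A'_i}$ to obtain $K''_{i+1} = K_1'$.

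For the left-to-right (``only if'') direction I would induct on the structure (equivalently, the height) of the derivation of $\lfsynthkind{\Gamma}{P}{K'}$ and split on its last rule. If this is \atomfamconst\ then $P$ is $a$, $n = 0$, and the three conditions hold vacuously with $K' = K$. If it is \atomfamapp\ then $P = (P' \app M')$ with a shorter derivation of $\lfsynthkind{\Gamma}{P'}{\typedpi{x}{A}{K_1}}$; the induction hypothesis exhibits $P'$ as $(a \app M_1 \app \ldots \app M_n)$ with an accompanying sequence $A'_1, \ldots, A'_n$, and forces the kind originally assigned to $a$ to have the form $\typedpi{y_1}{A_1}{\ldots\typedpi{y_{n+1}}{A_{n+1}}{K}}$. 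Renaming $M'$ to $M_{n+1}$ and using Lemma~\ref{lem:lfcomp} together with Theorem~\ref{th:erasure} to merge the single-variable substitution built into the \atomfamapp\ rule with $\theta_{n+1}$ then yields the required $A'_{n+1}$ and the derivations demanded by clauses~(2) and~(3).

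The step I expect to be the main obstacle is the same bookkeeping that dominates the proof of Theorem~\ref{th:atomictype}: reconciling the single-variable hereditary substitution built into each \atomfamapp\ step with the accumulated simultaneous substitution $\theta_i$, which requires checking at every stage that the relevant substitutions are arity type preserving so that Lemma~\ref{lem:lfcomp} (and through it Theorems~\ref{th:composition} and~\ref{th:transitivity}) applies and delivers unique results. Since the kind-level instances of Lemma~\ref{lem:lfcomp} are already available, this is a matter of careful adaptation rather than of any new idea.
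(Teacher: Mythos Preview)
Your proposal is correct and follows exactly the approach the paper takes: the paper's entire proof is the single sentence that a proof can be constructed based essentially on the one for Theorem~\ref{th:atomictype}, and your detailed adaptation of that argument to the kind level (with the simplification that only the constant case arises for the head) is precisely what is intended.
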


\section{A Logic for Expressing Properties of LF Specifications}
\label{sec:logic}

We turn now to the task of designing a logic in which we can express
properties of an object system that has been specified in LF.
The discussions in Section~\ref{ssec:informal-reasoning} suggest a 
possible structure for such a logic.
The logic would be parameterized by an LF signature that has been
determined to be well-formed at the outset.
The basic building blocks for the properties that are to be described
would be typing judgements.
More specifically, the logic would use such judgements as its atomic
formulas and would interpret them using LF derivability.
More complex formulas would be then be constructed using 
logical connectives and quantifiers over LF terms.
As the example in Section~\ref{ssec:informal-reasoning} illustrates,
it would be necessary to also permit a quantification over LF contexts.

To develop an actual logic based on these ideas, we need to describe a
more precise correspondence between LF typing judgements and atomic
formulas. 
The judgement forms that need to be considered in this context are
those for typing canonical and atomic terms, \ie, the
$\lfchecktype{\Gamma}{M}{A}$ and $\lfsynthtype{\Gamma}{R}{A}$ forms. 
The main judgement form is in fact the first one: the second form
serves mainly to explicate judgements of the first kind when the type
is atomic and, as we have noted already, Theorem~\ref{th:atomictype}
provides the basis for circumventing such an explicit treatment
through a special ``focused'' typing rule. 
In light of this, it suffices to describe an encoding of only the
first judgement form.
The judgement in the LF setting assumes the wellformedness of the
context $\Gamma$ and the type $A$.
In the logic, the context and, therefore, also the type can be
dynamically determined by instantiations for context variables.
To deal with this situation, we will build the wellformedness of
$\Gamma$ and $A$ into the interpretation of the encoding of the
judgement. 
There is, however, an aspect of the wellformedness checking that we
would like to extract into a static pre-processing phase.
The LF typing rules combine the checking of canonicity of terms with
the determination of inhabitation that relies on the semantically more
meaningful aspect of dependencies in types.
To allow the focus in the logic to be on the latter aspect, we will
build the former into a wellformedness criterion for formulas using
arity types.

Another aspect that needs further consideration is the representation of
LF contexts in atomic formulas.
To support typing derivations that use the
\canontermlam\ rule, this representation must allow for the explicit
association of types with variables.
These variables may appear free in the terms and types in the atomic
formula.
However, their interpretation in this context must be different from
the variables that are bound by quantifiers: in particular, these
variables cannot be instantiated and each of them must be treated as
being distinct within the atomic formula. 
The necessary treatment of these variables can be realized by
representing them by \emph{nominal constants} in the style of
\cite{gacek11ic,tiu06lfmtp}.
Context expressions must, in addition, allow for an unspecified part
whose exact extent is to be determined by instantiation of an external
context quantifier.
To support this ability, we will allow context variables to appear in
these expressions.
However, as observed in Section~\ref{ssec:informal-reasoning}, we
would like to be able to restrict the instantiation of such variables
to blocks of declarations adhering to specified forms.
To impose such constraints, the logic will permit context variables to
be typed by \emph{context schemas} that are motivated by regular world
descriptions used in the Twelf
system~\cite{Pfenning02guide,schurmann00phd}. 

In the rest of this section, we present a logic called \logic\ that
substantiates the ideas outlined above.
The first two subsections present the well-formed formulas of \logic and
identify their intended meaning. 
The end result of this discussion is a means for describing properties
of a specification given by an LF signature and for assessing the
validity of such properties.
The third subsection illuminates this capability through a
collection of examples.
The last subsection observes the counterpart in \logic\ of the
property of irrelevance of the particular names that are chosen for
the variables bound by the context in an LF judgement.
The particular expression of this property takes the form of the
invariance of validity of formulas under permutations of
nominal constants. 

\subsection{The Formulas of the Logic}

We begin by considering the representation of LF terms and types in \logic.
Figure~\ref{fig:logic-terms-and-types} presents the syntax of these
expressions.
As with LF syntax, we use $c$ and $d$ to represent term level
constants, $a$ and $b$ to represent type level constants and $x$ and
$y$ to represent term-level variables.
We also use $n$ to represent a special category of symbols called the
nominal constants.
LF terms and types are obviously a subset of the expressions presented
here.
Going the other way, there are two main additions to the LF
counterparts in the collection of expressions described here.
First, nominal constants may be used in constructing terms.
Second, as we shall soon see, variables may be bound not only by
term and type level abstractions but also by formula level
quantifiers.

\begin{figure}[tbhp]
\[
\begin{array}{r r c l}
  \mbox{\bf Terms} & M,N & ::= & R\ |\ \lflam{x}{M}\\
  \mbox{\bf Atomic Terms} & R & ::= & c\ |\ x\ |\ n\ |\ R\app M\\[5pt]
  \mbox{\bf Types} & A & ::= &
           P\ |\ \typedpi{x}{A_1}{A_2}\\
  \mbox{\bf Atomic Types} & P & ::= & a\ |\ P\app M\\[5pt]
\end{array}
\]
\caption{Terms and Types in \logic}
\label{fig:logic-terms-and-types}
\end{figure}

The logic \logic\ is parameterized by an LF style signature $\Sigma$
that assigns kinds to type-level constants and types to term-level ones.
This signature is assumed to be well-formed in the sense described in
Section~\ref{sec:lf}.
The logic also assumes as given a set $\noms$ of nominal constants,
each specified with an arity type, with a countably infinite supply of
such constants for each arity type $\alpha$.
The types of the nominal constants are fixed once and for all by
$\noms$.
This allows us to treat sets of such constants ambiguously as
collections whose elements are of the form $n : \alpha$ or simply
$n$. 

\begin{figure}[tbhp]

\begin{center}
\begin{tabular}{c}

\infer{\akindingp{\STLCGamma}{a}{K}}
      {a:K \in \Sigma}

\qquad\qquad

\infer{\akindingp{\STLCGamma}{P\app M}{K}}
      {\akindingp{\STLCGamma}{P}{\typedpi{x}{A}{K}} \qquad
       \stlctyjudg{\STLCGamma}{M}{\erase{A}}}

\\[10pt]

\infer{\wftype{\STLCGamma}{P}}
      {\akindingp{\STLCGamma}{P}{\type}}

\qquad\qquad

\infer{\wftype{\STLCGamma}{\typedpi{x}{A_1}{A_2}}} 
      {\wftype{\STLCGamma}{A_1} \qquad \wftype{\aritysum{\{x :
            \erase{A_1} \}}{\STLCGamma}}{A_2}}
\end{tabular}
\end{center}

\caption{Arity Kinding for Canonical Types}
\label{fig:arity-kinding}
\end{figure}

As explained earlier, expressions in \logic\ will be expected to
satisfy typing constraints that check for canonicity.
At the term level, these constraints will be realized through arity
typing relative to a suitable arity context.
At the type level, we must additionally ensure that (type) constants
have been supplied with an adequate number of arguments.
We make these notions precise below; we assume the obvious extension
of erasure to types in \logic\ here and elsewhere.
\begin{definition}
The typing relation between an arity context, a term and an arity type
that is described in Definition~\ref{def:aritytyping} is extended to
the present context by permitting terms to contain nominal constants
and by allowing arity contexts to contain assignments to such
constants.
The rules in Figure~\ref{fig:arity-kinding} define an arity kinding
property denoted by $\wftype{\STLCGamma}{A}$ for a type $A$ relative to
an arity context $\STLCGamma$.
In these rules, $\Sigma$ is the signature parameterizing \logic.
We will often need to refer to the arity context induced by $\Sigma$.
We call this the \emph{initial constant context} and we reserve the symbol
$\STLCGamma_0$ to denote it.
\end{definition}

Hereditary substitution extends naturally to the terms and types in
\logic\ by treating nominal constants like other constants.
The following theorem relating to such substitutions has an obvious
proof.  

\begin{theorem}\label{th:aritysubs-ty}
If $\theta$ is type preserving with respect to $\STLCGamma$ and
$\wftype{\aritysum{\context{\theta}}{\STLCGamma}}{A}$ and
$\hsub{\theta}{A}{A'}$ have derivations, then
$\wftype{\STLCGamma}{A'}$ has a derivation.
\end{theorem}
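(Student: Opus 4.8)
The plan is to prove a slightly strengthened statement by simultaneous induction on the structure of the type (equivalently, on the kinding derivation), handling the auxiliary atomic-kinding judgement alongside the main one, and quantifying over the arity context so that the inductive hypothesis is available for extended contexts. Fixing $\theta$, I would show that \emph{for every} arity context $\STLCGamma$ with respect to which $\theta$ is type preserving, writing $\STLCGamma'$ for $\aritysum{\context{\theta}}{\STLCGamma}$: (i) if $\akindingp{\STLCGamma'}{P}{K}$ and $\hsub{\theta}{P}{P'}$ are derivable, then so is $\akindingp{\STLCGamma}{P'}{K}$; and (ii) if $\wftype{\STLCGamma'}{A}$ and $\hsub{\theta}{A}{A'}$ are derivable, then so is $\wftype{\STLCGamma}{A'}$. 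The theorem is the instance (ii) with the $\STLCGamma$ and $\theta$ given in its statement.

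For (i), the base case $P = a$ is immediate: here $\hsub{\theta}{a}{a}$, and $\akindingp{\STLCGamma}{a}{K}$ holds because $a:K \in \Sigma$ independently of the arity context. For $P = P_1\app M$, the substitution rules for types force $P' = P_1'\app M'$ with $\hsub{\theta}{P_1}{P_1'}$ and $\hsub{\theta}{M}{M'}$, while the kinding derivation supplies $\akindingp{\STLCGamma'}{P_1}{\typedpi{x}{B}{K}}$ and $\stlctyjudg{\STLCGamma'}{M}{\erase{B}}$. The induction hypothesis yields $\akindingp{\STLCGamma}{P_1'}{\typedpi{x}{B}{K}}$, and clause (2) of Theorem~\ref{th:aritysubs} (applicable since $\theta$ is type preserving with respect to $\STLCGamma$) produces a term $M''$ with $\hsub{\theta}{M}{M''}$ and $\stlctyjudg{\STLCGamma}{M''}{\erase{B}}$; by the uniqueness part of Theorem~\ref{th:uniqueness}, $M'' = M'$. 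Reapplying the arity-kinding rule then gives $\akindingp{\STLCGamma}{P_1'\app M'}{K}$. Note that the classifying kind $K$ is never substituted into by these rules, so it is carried through unchanged.

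For (ii), the atomic case $A = P$ is exactly (i) with $K = \type$. The interesting case is $A = \typedpi{x}{A_1}{A_2}$, where I would first alpha-rename $x$ so that it is fresh for $\STLCGamma$ and $\theta$, in particular $x \notin \domain{\theta} \cup \range{\theta}$ as required by the substitution rule. That rule gives $A' = \typedpi{x}{A_1'}{A_2'}$ with $\hsub{\theta}{A_1}{A_1'}$ and $\hsub{\theta}{A_2}{A_2'}$, and the kinding derivation gives $\wftype{\STLCGamma'}{A_1}$ together with $\wftype{\aritysum{\{x:\erase{A_1}\}}{\STLCGamma'}}{A_2}$. The induction hypothesis on $A_1$ yields $\wftype{\STLCGamma}{A_1'}$, and Theorem~\ref{th:erasure} gives $\erase{A_1} = \erase{A_1'}$.

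The main obstacle is then reshaping the premise for $A_2$ into the form demanded by the induction hypothesis. Since $x$ is fresh, and in particular $x \notin \domain{\theta}$, the left-biased operator $\aritysum{\cdot}{\cdot}$ satisfies $\aritysum{\{x:\erase{A_1}\}}{\STLCGamma'} = \aritysum{\context{\theta}}{(\aritysum{\{x:\erase{A_1}\}}{\STLCGamma})}$, i.e. the extension by $x$ commutes past $\context{\theta}$ because their domains are disjoint. Moreover $\theta$ remains type preserving with respect to $\aritysum{\{x:\erase{A_1}\}}{\STLCGamma}$, since its range terms do not contain $x$ and arity typing is preserved under extension of the context by a fresh assignment. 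I can therefore apply the induction hypothesis to $A_2$ with $\STLCGamma$ replaced by $\aritysum{\{x:\erase{A_1}\}}{\STLCGamma}$, obtaining $\wftype{\aritysum{\{x:\erase{A_1}\}}{\STLCGamma}}{A_2'}$. Rewriting $\erase{A_1}$ as $\erase{A_1'}$ and combining this with $\wftype{\STLCGamma}{A_1'}$ through the $\Pi$-kinding rule delivers $\wftype{\STLCGamma}{\typedpi{x}{A_1'}{A_2'}}$, as required. The only delicate points are these two small facts—commutativity of the fresh extension past $\context{\theta}$, and weakening of arity typing by a fresh variable—both of which follow directly from the definitions of $\aritysum{\cdot}{\cdot}$ and of the arity typing relation.
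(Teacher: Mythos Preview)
Your proof is correct and is precisely the structural-induction argument the paper has in mind; the paper itself omits all details, simply stating that the theorem ``has an obvious proof.'' Your decomposition into the auxiliary atomic-kinding claim and the main claim, with appeal to Theorem~\ref{th:aritysubs} for the term arguments and Theorem~\ref{th:erasure} in the $\Pi$-case, is the natural way to fill in those details.
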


\begin{figure}[tbhp]
  \[\begin{array}{rrcl}
\mbox{\bf Block Declarations} & \Delta & ::= & \emptybb\ \vert\ \Delta, y : A \\
\mbox{\bf Block Schema}   & \mathcal{B} & ::= & \{x_1:\alpha_1,\ldots, x_n:\alpha_n\}\Delta\\
\mbox{\bf Context Schema} & \mathcal{C} & ::= & \emptycs\ \vert\ \mathcal{C}, \mathcal{B}
\end{array}\]
\caption{Block Schemas and Context Schemas}
\label{fig:context-schemas}
\end{figure}

The logic allows for quantifiers over LF contexts.
In the intended interpretation, such quantifiers are meant to be
instantiated with context expressions that assign LF types to nominal
constants. 
However, it will be necessary to be able to constrain the possible
instantiations in real applications.
This ability is supported by typing
context quantifiers using \emph{context schemas} whose structure is
presented in Figure~\ref{fig:context-schemas}.
In essence, a context schema comprises a collection of \emph{block
  schemas}.
A block schema consists of a header of variables annotated with
arity types and a body of declarations associating types with
variables. 
Each variable in the header and that is assigned a type in the body of
a block schema is required to be distinct. 
A block is intended to serve as a template for generating a sequence
of bindings for nominal constants through an instantiation process
that will be made clear in the next subsection.
A context expression corresponding to a context schema is to be obtained
by some number of instantiations of its block schemas.
Block and context schemas are required to satisfy typing constraints
towards ensuring that the context expressions generated from them will
be well-formed in the manner required by the logic.
These constraints are represented by the typing judgements
$\abstyping{\mathcal{B}}$ and $\acstyping{\mathcal{C}}$, respectively,
that are defined by the rules in Figure~\ref{fig:schematyping}.

\begin{figure}[tbhp]

\begin{center}
\begin{tabular}{c}

\infer{\wfdecls{\STLCGamma}{\emptybb}{\STLCGamma}}{}

\qquad\qquad

\infer{\wfdecls{\STLCGamma}{\Delta, y:A}{\STLCGamma' \cup \{y:\erase{A}\}}} 
      {\wfdecls{\STLCGamma}{\Delta}{\STLCGamma'} \qquad 
       y\ \mbox{\rm is not assigned by}\ \STLCGamma' \qquad
       \wftype{\STLCGamma'}{A}}

\\[15pt]

\infer{\abstyping{\{x_1:\alpha_1,\ldots, x_n:\alpha_n\}\Delta}}
      {x_1,\ldots,x_n\ \mbox{\rm are distinct variables}
       \qquad
       \wfdecls{\STLCGamma_0 \cup \{x_1 : \alpha_1, \ldots,
                                    x_n : \alpha_n\}}
               {\Delta}
               {\STLCGamma'}}

\\[15pt]

\infer{\acstyping{\emptycs}}{}

\qquad

\infer{\acstyping{\mathcal{C},\mathcal{B}}}
      {\acstyping{\mathcal{C}} \qquad \abstyping{\mathcal{B}}}

\end{tabular}
\end{center}
\caption{Wellformedness Judgements for Block and Context Schemas}
\label{fig:schematyping}
\end{figure}

\begin{figure}[tbhp]
\[\begin{array}{lrcl}
\mbox{\bf Context Expressions} & G & ::= &
    \emptyce\ |\ \Gamma\ |\ G,n:A\\
\mbox{\bf Formulas} & F & ::= & 
    \fatm{G}{\of{M}{A}}\ |\ \ftrue\ |\ \ffalse\ |\ \fimp{F_1}{F_2}\ |\ \fand{F_1}{F_2}\ |\\
& & &\for{F_1}{F_2}\ |\ \fctx{\Gamma}{\mathcal{C}}{F}\ |\ \fall{x:\alpha}{F}\ |\ \fexists{x:\alpha}{F}
\end{array}\]
\caption{The Formulas of \logic}
\label{fig:formula-syntax}
\end{figure}

We are finally in a position to describe the formulas of \logic.
The syntax of these formulas is presented in
Figure~\ref{fig:formula-syntax}.
The symbol $\Gamma$ is used in these formulas to represent context
variables.
Atomic formulas, which represent LF typing judgements, have the form
$\fatm{G}{M:A}$.
The context in these formulas is constituted by a sequence of type
associations with nominal constants, possibly preceded by a context
variable.
Included in the collection are the logical constants $\ftrue$ and
$\ffalse$ and the familiar connectives for constructing more complex
formulas.
Universal and existential quantification over term variables is also
permitted and these are written as $\fall{x:\alpha}{F}$ and
$\fexists{x:\alpha}{F}$, respectively.
Such quantification is indexed, as might be expected, by arity types.
The collection also includes universal quantification over context
variables that is typed by context schemas, written as
$\fctx{\Gamma}{\mathcal{C}}{F}$.
We assume the usual principle of equivalence under renaming with
respect to the term and context quantifiers and apply them as needed. 

\begin{figure}[tbhp]

\begin{center}
\begin{tabular}{c}

\infer{\wfctx{\STLCGamma}{\Xi}{\emptyce}}
      {} 
\qquad

\infer{\wfctx{\STLCGamma}{\Xi}{\Gamma}}
      {\Gamma \in\Xi}

\\[10pt]

\infer{\wfctx{\STLCGamma}{\Xi}{G,n:A}}
      {\wfctx{\STLCGamma}{\Xi}{G} \qquad
       n:\erase{A}\in \STLCGamma \qquad
       \wftype{\STLCGamma}{A}}
\\[15pt]

\infer{\wfform{\STLCGamma}{\Xi}{\fatm{G}{M:A}}}
      {\wfctx{\STLCGamma}{\Xi}{G} \qquad
       \wftype{\STLCGamma}{A} \qquad
       \stlctyjudg{\STLCGamma}{M}{\erase{A}}}

\\[10pt]

\infer{\wfform{\STLCGamma}{\Xi}{\ftrue}}{} 

\qquad
      
\infer{\wfform{\STLCGamma}{\Xi}{\ffalse}}{}

\qquad
      
\infer[\bullet \in \{\supset,\land,\lor\}]
      {\wfform{\STLCGamma}{\Xi}{F_1 \bullet F_2}}
      {\wfform{\STLCGamma}{\Xi}{F_1} \qquad 
       \wfform{\STLCGamma}{\Xi}{F_2}}

\\[10pt]

\infer{\wfform{\STLCGamma}{\Xi}{\fctx{\Gamma}{\mathcal{C}}{F}}}
      {\acstyping{\mathcal{C}} \qquad
       \wfform{\STLCGamma}{\Xi \cup \{ \Gamma \}}{F}}

\qquad\qquad
      
\infer[\genericq \in \{\forall, \exists \}]
      {\wfform{\STLCGamma}{\Xi}{\fgeneric{x:\alpha}{F}}}
      {\wfform{\aritysum{\{x:\alpha\}}{\STLCGamma}}{\Xi}{F}}
 
\end{tabular}
\end{center}
 
\caption{The Wellformedness Judgement for Formulas}
\label{fig:wfform}
\end{figure}

A formula $F$ is determined to be well-formed or not relative to an arity
context $\STLCGamma$ and a collection of context variables $\Xi$.
This judgement is written concretely as $\wfform{\STLCGamma}{\Xi}{F}$
and the rules defining it are presented in Figure~\ref{fig:wfform}.
At the top-level, formulas are expected to be \emph{closed}, \ie, to not have
any free term or context variables.
More specifically, we expect $\wfform{\noms \cup \STLCGamma_0}{\emptyset}{F}$ to be
derivable for such formulas. 
The analysis within the scope of term and context
quantifiers augments these sets in the expected way.
For context quantifiers, this analysis must also check that the
annotating context schema is well-formed.
An atomic formula $\fatm{G}{M:A}$ is deemed well-formed if its
components $G$, $M$ and $A$ are well-formed and if $M$ can be assigned
the erased form of $A$ as its arity type.
The context expression $G$ is well-formed if any context variable used
in it is bound in the overall formula and if the types assigned to
nominal constants in the explicit part of $G$ are well-formed and
such that their erased forms match the arity types of the nominal
constants they are assigned to.
Note that these types may use nominal constants without paying
attention to dependency ordering; assessing
whether they are used in a manner that respects this ordering is a part
of the meaning of the atomic formula.

The following theorem, whose proof is obvious, shows that
the wellformedness judgement for formulas continues to
hold under the augmentation of the two contexts that  parameterize it. 
\begin{theorem}\label{th:wfsupset}
If $\wfform{\STLCGamma}{\Xi}{F}$ has a derivation and $\STLCGamma
\subseteq \STLCGamma'$ and $\Xi \subseteq \Xi'$, then
$\wfform{\STLCGamma'}{\Xi'}{F}$ also has a derivation.
\end{theorem}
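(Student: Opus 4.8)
The plan is to argue by a straightforward induction on the derivation of $\wfform{\STLCGamma}{\Xi}{F}$, equivalently on the structure of $F$. Since the rules of Figure~\ref{fig:wfform} appeal to three subsidiary judgements---arity typing $\stlctyjudg{\STLCGamma}{M}{\erase{A}}$, arity kinding $\wftype{\STLCGamma}{A}$, and context wellformedness $\wfctx{\STLCGamma}{\Xi}{G}$---I would first record the analogous monotonicity property for each of these. All three follow by an easy induction on the relevant derivation: the leaf rules merely check membership assertions such as $c:\alpha\in\STLCGamma$, $x:\alpha\in\STLCGamma$, $n:\erase{A}\in\STLCGamma$, or $\Gamma\in\Xi$, and such memberships persist when $\STLCGamma$ or $\Xi$ is enlarged, while the remaining rules simply relay the induction hypothesis to their premises.

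The only step in these inductions that is not entirely mechanical is the handling of the rules that extend the arity context with a fresh binding, namely the abstraction rule of Figure~\ref{fig:aritytyping} and the $\Pi$-rule of Figure~\ref{fig:arity-kinding}. There the relevant premise is typed relative to $\aritysum{\{x:\alpha_1\}}{\STLCGamma}$, so to invoke the induction hypothesis I need $\aritysum{\{x:\alpha_1\}}{\STLCGamma}\subseteq\aritysum{\{x:\alpha_1\}}{\STLCGamma'}$. This is immediate from the definition of $\uplus$: each side consists of $x:\alpha_1$ together with those assignments of the corresponding arity context that do not already assign $x$, so the inclusion $\STLCGamma\subseteq\STLCGamma'$ lifts to the augmented contexts. (Under the renaming convention for bound variables we may in any event take $x$ fresh for $\STLCGamma'$.) In short, $\uplus$ is monotone in its right argument, and this is the one observation the whole argument rests on.

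With these facts in place the main induction is routine. In the atomic case $\fatm{G}{M:A}$ the three premises transfer to $\STLCGamma'$ and $\Xi'$ by the monotonicity properties just noted, so the same rule instance reassembles the conclusion. The cases for $\ftrue$, $\ffalse$, and the binary connectives follow immediately by applying the induction hypothesis to the immediate subformulas under the same enlarged contexts. For a term quantifier $\fgeneric{x:\alpha}{F}$, I apply the induction hypothesis to $F$ with the pair $\aritysum{\{x:\alpha\}}{\STLCGamma}\subseteq\aritysum{\{x:\alpha\}}{\STLCGamma'}$ and the unchanged $\Xi\subseteq\Xi'$, again using monotonicity of $\uplus$. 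For a context quantifier $\fctx{\Gamma}{\mathcal{C}}{F}$, the premise $\acstyping{\mathcal{C}}$ mentions neither $\STLCGamma$ nor $\Xi$ and so is unaffected, while the induction hypothesis applies to $F$ with $\STLCGamma\subseteq\STLCGamma'$ and $\Xi\cup\{\Gamma\}\subseteq\Xi'\cup\{\Gamma\}$. As anticipated, there is no genuine difficulty here; the sole point requiring any care is the monotonicity of $\uplus$ in the binding cases.
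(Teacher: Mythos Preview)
Your proposal is correct and is precisely the routine induction the paper has in mind; the paper itself declares the proof ``obvious'' and gives no further detail, so your write-up simply makes explicit the monotonicity sublemmas and the $\uplus$-monotonicity observation that underlie that claim.
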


\subsection{The Interpretation of Formulas}

A key component to understanding the meanings of formulas is
understanding the interpretation of the quantifiers over term and
context variables.
These quantifiers are intended to range over closed expressions of the
relevant categories.
For a quantifier over a term variable, this translates concretely into
closed terms of the relevant arity type.
For a quantifier over a context variable, we must first explain when
a context expression satisfies a context schema. 

\begin{figure}[tbhp]

\begin{center}
\begin{tabular}{c}

\infer{\declinst{\mathbb{N}}{\emptybb}{\emptyce}{\emptyset}}{}

\qquad

\infer{\declinst{\mathbb{N}}{\Delta,y:A}{G, n : A'}
                 {\theta \cup \{\langle y,n,\erase{A}\rangle\}}}
      {\declinst{\mathbb{N}}{\Delta}{G}{\theta} \qquad
        n : \erase{A} \in {\mathbb{N}} \qquad
        \hsub{\theta}{A}{A'}}

\\[15pt]

\infer{\bsinst{\mathbb{N}}{\Psi}{\{x_1 : \alpha_1,\ldots, x_n : \alpha_n\}\Delta}{G}}
      {\declinst{\mathbb{N}}{\Delta}{G'}{\theta}
       \quad\ 
       \{ \stlctyjudg{{\mathbb{N}} \cup \Psi \cup \STLCGamma_0}{t_i}{\alpha_i}\ \vert\ 1 \leq i \leq n \}
       \quad\ 
       \hsub{\{\langle x_i,t_i,\alpha_i\rangle \ \vert\ 1 \leq i \leq n \}} 
             {G'}
             {G}
       }

\\[15pt]

\infer{\csinstone{\mathbb{N}}{\Psi}{\mathcal{C},\mathcal{B}}{G}}
      {\bsinst{\mathbb{N}}{\Psi}{\mathcal{B}}{G}}

\qquad

\infer{\csinstone{\mathbb{N}}{\Psi}{\mathcal{C},\mathcal{B}}{G}}
      {\csinstone{\mathbb{N}}{\Psi}{\mathcal{C}}{G}}

\\[15pt]

\infer{\csinst{\mathbb{N}}{\Psi}{\mathcal{C}}{\emptyce}}
      {}

\qquad 

\infer{\csinst{\mathbb{N}}{\Psi}{\mathcal{C}}{G, G'}}
      {\csinst{\mathbb{N}}{\Psi}{\mathcal{C}}{G} \qquad
       \csinstone{\mathbb{N}}{\Psi}{\mathcal{C}}{G'}}

\end{tabular}
\end{center}

\caption{Instantiating a Context Schema}
\label{fig:ctx-schema}
\end{figure}

We do this by describing the relation of ``being an instance of''
between a closed context expression $G$ and a context schema $\mathcal{C}$.
This relation is indexed by a nominal constant context $\mathbb{N}$
that is a subset of $\noms$ and a \emph{term variables context} $\Psi$
that identifies a finite collection of such variables together with their
arity types: in combination with  the constants in $\STLCGamma_0$,
these collections, circumscribe the symbols that can be used in the
declarations in the context expressions.\footnote{In  
  determining closed instances of context schemas, $\mathbb{N}$ will
  be $\noms$ and $\Psi$ will be the empty set. The more general form
  for this relation, which includes a parameterization by these sets,
  will be useful in later sections.} 
The relation is written as $\csinst{\mathbb{N}}{\Psi}{\mathcal{C}}{G}$
and it is defined by the rules in Figure~\ref{fig:ctx-schema}.
This relation is defined via the repeated use of a ``one-step''
instantiation relation written as
$\csinstone{\mathbb{N}}{\Psi}{\mathcal{C}}{G}$; note that by $G, G'$
we mean a context expression that is obtained by adding the bindings
corresponding to $G'$ in front of those in $G$.
The definition of the one-step instantiation relation for context
schemas uses an auxiliary judgement
$\bsinst{\mathbb{N}}{\Psi}{\mathcal{B}}{G}$ that denotes the relation
of ``being an instance of'' between a block 
schema and a context expression fragment.
This relation holds when the context expression is obtained by
generating a sequence of bindings for nominal constants from
$\mathbb{N}$ using the body of the block schema and then instantiating
the variables in the header of the block schema with terms of the
right arity types.
The former task is realized through the relation
$\declinst{\mathbb{N}}{\Delta}{G}{\theta}$ that holds between a block of
declarations $\Delta$, a context expression $G$ that is obtained
by replacing the variables assigned in $\Delta$ with
suitable nominal constants, and a substitution $\theta$ that
corresponds to this replacement.
We assume here and elsewhere that the application of a hereditary
substitution to a sequence of declarations  corresponds to its
application to the type in each assignment. 

\begin{theorem}\label{th:schemainst}
Let $\mathcal{C}$ and $G$ be a context schema and a context expression
such that $\acstyping{\mathcal{C}}$ and
$\csinst{\mathbb{N}}{\Psi}{\mathcal{C}}{G}$ are derivable. Then for
any arity context $\STLCGamma$ such that
$\mathbb{N} \cup \Psi \cup \STLCGamma_0 \subseteq \STLCGamma$, it is the case that 
$\wfctx{\STLCGamma}{\emptyset}{G}$ has a derivation. 
\end{theorem}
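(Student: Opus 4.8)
The plan is to mirror the layered definition of the instantiation relation with a correspondingly layered induction. Because every context expression produced by $\csinst{\mathbb{N}}{\Psi}{\mathcal{C}}{G}$ is built solely from nominal-constant bindings (it contains no context variable), and because $\wfctx{\STLCGamma}{\emptyset}{G}$ checks each binding $n:A$ independently against the \emph{fixed} arity context $\STLCGamma$ (asking only for $n:\erase{A}\in\STLCGamma$ and $\wftype{\STLCGamma}{A}$, with no growth of the context and no sensitivity to the order of bindings), the whole claim reduces to a per-binding statement. First I would run an outer induction on the derivation of $\csinst{\mathbb{N}}{\Psi}{\mathcal{C}}{G}$: the base case $G=\emptyce$ is immediate, and in the step $G$ is $G_1,G'$ with $\csinst{\mathbb{N}}{\Psi}{\mathcal{C}}{G_1}$ handled by the induction hypothesis and $\csinstone{\mathbb{N}}{\Psi}{\mathcal{C}}{G'}$; since well-formedness of a variable-free context expression against a fixed $\STLCGamma$ is just the conjunction of the per-binding conditions, it suffices to show $\wfctx{\STLCGamma}{\emptyset}{G'}$ and concatenate. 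A short induction on the derivation of $\csinstone{\mathbb{N}}{\Psi}{\mathcal{C}}{G'}$ then isolates a block schema $\mathcal{B}$ occurring in $\mathcal{C}$ with $\bsinst{\mathbb{N}}{\Psi}{\mathcal{B}}{G'}$, where $\acstyping{\mathcal{C}}$ supplies $\abstyping{\mathcal{B}}$ for this block. Everything thus comes down to the block case.

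For the block case, suppose $\mathcal{B}=\{x_1:\alpha_1,\ldots,x_n:\alpha_n\}\Delta$ with $\abstyping{\mathcal{B}}$ and $\bsinst{\mathbb{N}}{\Psi}{\mathcal{B}}{G'}$. Unfolding the latter yields a fragment $G''$ and substitution $\theta$ with $\declinst{\mathbb{N}}{\Delta}{G''}{\theta}$, header terms $t_i$ with $\stlctyjudg{\mathbb{N}\cup\Psi\cup\STLCGamma_0}{t_i}{\alpha_i}$, and $G'=\hsubst{\sigma}{G''}$ for $\sigma=\{\langle x_i,t_i,\alpha_i\rangle\mid 1\leq i\leq n\}$; here $\theta$ merely renames each body variable to a nominal constant drawn from $\mathbb{N}$. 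The goal is that every binding $n:C$ of $G'$ satisfies (a) $n:\erase{C}\in\STLCGamma$ and (b) $\wftype{\STLCGamma}{C}$.

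For (b) I would first establish, by an inner induction on the derivation of $\declinst{\mathbb{N}}{\Delta}{G''}{\theta}$, that each binding $n:B'$ of $G''$ satisfies $\wftype{\aritysum{\{x_1:\alpha_1,\ldots,x_n:\alpha_n\}}{\STLCGamma}}{B'}$. The wellformedness $\abstyping{\mathcal{B}}$ gives a derivation of $\wfdecls{\STLCGamma_0\cup\{x_1:\alpha_1,\ldots,x_n:\alpha_n\}}{\Delta}{\STLCGamma'}$, so when the body declaration producing $n:B'$ is $y:A$ (with $B'=\hsubst{\theta}{A}$ for the running substitution $\theta$ at that point) we have $\wftype{\STLCGamma''}{A}$ for the arity context $\STLCGamma''$ that extends the header context with the erased forms of the preceding body types. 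Since $\theta$ maps exactly those preceding body variables to nominal constants of $\mathbb{N}\subseteq\STLCGamma$, it is arity type preserving with respect to $\aritysum{\{x_i:\alpha_i\}}{\STLCGamma}$, while $\STLCGamma''$ is contained in $\aritysum{\context{\theta}}{(\aritysum{\{x_i:\alpha_i\}}{\STLCGamma})}$; enlarging the context by the evident monotonicity of arity kinding and then applying Theorem~\ref{th:aritysubs-ty} yields $\wftype{\aritysum{\{x_i:\alpha_i\}}{\STLCGamma}}{B'}$. Finally I would apply Theorem~\ref{th:aritysubs-ty} once more to the header substitution $\sigma$: it is arity type preserving with respect to $\STLCGamma$ because each $t_i$ is typable over $\mathbb{N}\cup\Psi\cup\STLCGamma_0\subseteq\STLCGamma$, and $\context{\sigma}=\{x_i:\alpha_i\}$, so from $\wftype{\aritysum{\context{\sigma}}{\STLCGamma}}{B'}$ and $\hsub{\sigma}{B'}{C}$ we obtain $\wftype{\STLCGamma}{C}$, which is (b). For (a), the side condition of the $\declinst$ rule records $n:\erase{A}\in\mathbb{N}\subseteq\STLCGamma$, and two applications of Theorem~\ref{th:erasure} (through $\theta$ and then through $\sigma$) give $\erase{C}=\erase{A}$, whence $n:\erase{C}\in\STLCGamma$.

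The bulk of the work, and the step I expect to be most delicate, is this block case: one must keep the intermediate arity contexts aligned so that the hypotheses of Theorem~\ref{th:aritysubs-ty} are met at each of the two substitution stages, carefully distinguishing the role of the body renaming $\theta$ from that of the header instantiation $\sigma$. The only auxiliary fact needed beyond the cited theorems is the monotonicity of arity typing and kinding under enlargement of the arity context, which is an immediate consequence of the rules in Figures~\ref{fig:aritytyping} and~\ref{fig:arity-kinding}. All remaining steps — the outer and one-step inductions and the concatenation of well-formed fragments — are routine given the order-insensitivity of $\wfctx$ against a fixed context.
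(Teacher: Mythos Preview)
Your proposal is correct and follows essentially the same layered approach as the paper's proof. The only difference is organizational: the paper packages the inner argument as a lemma proved by induction on the $\wfdecls$ derivation (tracking that the running $\theta$ is type preserving and that the output context equals $\aritysum{\context{\theta}}{\STLCGamma}$) and then separately weakens $\wfdecls$ from $\STLCGamma_0\cup\{x_i{:}\alpha_i\}$ to $\aritysum{\{x_i{:}\alpha_i\}}{\STLCGamma}$, whereas you inline the monotonicity of arity kinding at each step of the $\declinst$ induction; both finish with the same application of Theorem~\ref{th:aritysubs-ty} for the header substitution.
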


\begin{proof}
We first show that for any block declaration
$\Delta$ and any arity context $\STLCGamma$ such that $\mathbb{N}
\subseteq \STLCGamma$, if $\wfdecls{\STLCGamma}{\Delta}{\STLCGamma'}$ and
$\declinst{\mathbb{N}}{\Delta}{G'}{\theta'}$ are derivable for some $\STLCGamma'$
and $\theta'$, then (a)~$\theta'$ is type preserving with respect to
$\Theta$, (b)~$\Theta'$ is $\aritysum{\context{\theta'}}{\Theta}$, and
(c)~each binding in $G'$ is of the form $n : A$ where $n:\erase{A}\in
\Theta$ and $\wftype{\Theta}{A}$ has a derivation.
This claim is proved by induction on the derivation of
$\wfdecls{\STLCGamma}{\Delta}{\STLCGamma'}$; properties (a) and (b)
are included in the claim because they are useful together with
Theorem~\ref{th:aritysubs-ty} in showing property (c) in the
induction step. 
Next we show, through an easy inductive argument, that if
$\wfdecls{\STLCGamma_0 \cup \{x_1 : \alpha_1, \ldots,
                              x_n : \alpha_n\}}
         {\Delta}
         {\STLCGamma'}$
has a derivation and $\STLCGamma$ is such that
$\mathbb{N} \cup \Psi \cup \STLCGamma_0 \subseteq \STLCGamma$, then, for
some $\STLCGamma''$, it is the case that 
$\wfdecls{\aritysum{\{x_1 : \alpha_1, \ldots,x_n : \alpha_n\}}
                   {\STLCGamma}}           
         {\Delta}
         {\STLCGamma''}$
has a derivation.
Using Theorem~\ref{th:aritysubs-ty} with these two
observations, we can show easily that if
$\abstyping{\{x_1 : \alpha_1,\ldots, x_n : \alpha_n\}\Delta}$ and 
$\bsinst{\mathbb{N}}{\Psi} 
        {\{x_1 : \alpha_1,\ldots, x_n : \alpha_n\}\Delta}
        {G}$ 
have derivations then for each binding of the form $n:A$ in $G$ it is
the case that $n:\erase{A} \in \Theta$ and $\wftype{\Theta}{A}$.
The theorem follows easily from this observation.
\end{proof}

In defining validity for formulas, we will need to consider
substitutions for context and term variables.
Context variables substitutions have the form
$\{G_1/\Gamma_1,\ldots,G_n/\Gamma_n\}$ where, for $1 \leq i \leq n$,
$\Gamma_i$ is a context variable and $G_i$ is a context expression. 
If $\sigma$ is such a substitution, we will write $\domain{\sigma}$ to
denote the set $\{\Gamma_1,\ldots,\Gamma_n\}$.
Further, the application of $\sigma$ to a formula
$F$, which is denoted by $\subst{\sigma}{F}$, will correspond to the 
replacement of the free occurrences of the variables
$\Gamma_1,\ldots,\Gamma_n$ in $F$ by the corresponding context
expressions, renaming bound context variables appearing in $F$ away
from those appearing in $G_1,\ldots,G_n$.
For term variables, the replacement must also ensure the
transformation of the resulting expression to normal form.
Towards this end, we adapt hereditary substitution to formulas.
The application of this substitution simply distributes over
quantifiers and logical symbols, respecting the scopes of quantifiers
through the necessary renaming.
The application to the atomic formula $\fatm{G}{M:A}$ also distributes
to the component parts.
We have already discussed the application to terms and types.
The application to context expressions leaves context variables
unaffected and simply distributes to the types in the explicit
bindings.
Note that no check is mandated in the process for clashes in the names
of nominal constants appearing in the context expression being
substituted into and the substitution terms. 
In this respect, this application is unlike that to LF contexts
that is defined in Figure~\ref{fig:hsubctx}.

\begin{theorem}\label{th:subst-formula}
Let $\STLCGamma$ be an arity context and let $\Xi$ be a collection of
context variables.
\begin{enumerate}
\item If $\theta$ is a term variables substitution that is arity type preserving with
respect to $\STLCGamma$ and $F$ is a formula such that there is a
derivation for $\wfform{\aritysum{\context{\theta}}{\STLCGamma}}{\Xi}{F}$,
then there is a unique formula $F'$ such that
$\hsub{\theta}{F}{F'}$ has a derivation.
Moreover, for this $F'$ it is the case that
$\wfform{\STLCGamma}{\Xi}{F'}$ is derivable.  

\item If $\sigma=\{G_1/\Gamma_1,\ldots,G_n/\Gamma_n\}$ is a context variables
substitution which is such that all judgements in the collection
$\left\{\wfctx{\STLCGamma}
             {\Xi\setminus\{\Gamma_1,\ldots,\Gamma_n\}}{G_i}\ |\ 
                    1\leq i\leq n\right\}$
are derivable and $F$ is a formula such that there is a derivation for
$\wfform{\STLCGamma}{\Xi}{F}$, then there is a derivation for
$\wfform{\STLCGamma}{\Xi\setminus\{\Gamma_1,\ldots,\Gamma_n\}}{\subst{\sigma}{F}}$.
\end{enumerate}
\end{theorem}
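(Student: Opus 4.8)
The plan is to establish both clauses by a single induction on the structure of the formula $F$, reading off the premises from the derivation of the relevant wellformedness judgement in Figure~\ref{fig:wfform}. The logical constants $\ftrue$ and $\ffalse$ are immediate, and the connectives $\fimp{F_1}{F_2}$, $\fand{F_1}{F_2}$, $\for{F_1}{F_2}$ are handled uniformly by the induction hypothesis on the immediate subformulas, since both forms of substitution distribute over these symbols; the result is reassembled from the results on the components.

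For clause (1), the crux is the atomic case $F = \fatm{G}{M:A}$. From $\wfform{\aritysum{\context{\theta}}{\STLCGamma}}{\Xi}{F}$ I extract $\wftype{\aritysum{\context{\theta}}{\STLCGamma}}{A}$ and $\stlctyjudg{\aritysum{\context{\theta}}{\STLCGamma}}{M}{\erase{A}}$. First I would invoke Theorem~\ref{th:aritysubs} to obtain an $M'$ with $\hsub{\theta}{M}{M'}$ and $\stlctyjudg{\STLCGamma}{M'}{\erase{A}}$, and a type result $A'$ with $\hsub{\theta}{A}{A'}$; Theorem~\ref{th:aritysubs-ty} then gives $\wftype{\STLCGamma}{A'}$, while Theorem~\ref{th:erasure} gives $\erase{A} = \erase{A'}$, which is exactly what is needed so that $\stlctyjudg{\STLCGamma}{M'}{\erase{A'}}$ and the nominal-constant membership conditions (whose arity types are fixed by $\noms$) are re-established. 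The context $G$ is treated by distributing $\theta$ to each explicit binding $n:B$, obtaining a $B'$ with $\erase{B'}=\erase{B}$ by the same three theorems so that $n:\erase{B'}\in\STLCGamma$, while leaving any leading context variable untouched; reassembling yields $\wfctx{\STLCGamma}{\Xi}{G'}$. For the quantifier cases $\fall{x:\alpha}{F}$, $\fexists{x:\alpha}{F}$, and $\fctx{\Gamma}{\mathcal{C}}{F}$, I would first rename the bound variable fresh for $\domain{\theta}\cup\range{\theta}$, so that for a term quantifier $x\notin\domain{\theta}$ and the arity contexts commute, $\aritysum{\{x:\alpha\}}{(\aritysum{\context{\theta}}{\STLCGamma})} = \aritysum{\context{\theta}}{(\aritysum{\{x:\alpha\}}{\STLCGamma})}$, letting the induction hypothesis apply directly; the schema $\mathcal{C}$ is unaffected by a term substitution. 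Uniqueness of $F'$ then follows componentwise from the uniqueness half of Theorem~\ref{th:uniqueness}.

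For clause (2), writing $\Xi' = \Xi\setminus\{\Gamma_1,\ldots,\Gamma_n\}$, the only interesting cases are again the atomic one and the context quantifier. In the atomic case $\sigma$ acts only on the context part $G$, so the components $M$ and $A$, which contain no context variables, carry their wellformedness and arity-typing conditions over verbatim; by the syntax of context expressions, $G$ is an optional leading context variable followed by explicit bindings, so $\subst{\sigma}{G}$ replaces that variable by the corresponding $G_i$ and re-appends the explicit tail. I would then check that prepending $G_i$ and re-adding each original binding $n:A$ preserves $\wfctx{\STLCGamma}{\Xi'}{\cdot}$, which is immediate from the binding rule since its side conditions $n:\erase{A}\in\STLCGamma$ and $\wftype{\STLCGamma}{A}$ do not mention the context-variable set and are inherited from the original derivation. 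In the context-quantifier case I would rename $\Gamma$ fresh for $\Gamma_1,\ldots,\Gamma_n$ and for the free context variables of $G_1,\ldots,G_n$, so that Theorem~\ref{th:wfsupset} lifts each hypothesis $\wfctx{\STLCGamma}{\Xi'}{G_i}$ to the enlarged set $\Xi\cup\{\Gamma\}$, after which the induction hypothesis applies to the body.

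The main obstacle is the bookkeeping in these atomic cases. In clause (1) it is the coordination of existence (Theorem~\ref{th:aritysubs}), uniqueness (Theorem~\ref{th:uniqueness}), type preservation (Theorem~\ref{th:aritysubs-ty}), and erasure invariance (Theorem~\ref{th:erasure}): the last is precisely what guarantees that the arity annotation $\erase{A}$ borne by $M$ survives the substitution unchanged, so that the matching side condition in the wellformedness rule for atomic formulas still holds after substitution. In clause (2) it is verifying that reattaching the explicit tail bindings on top of the substituted context expression $G_i$ yields a well-formed context, together with getting the fresh renaming of the bound context variable right so that Theorem~\ref{th:wfsupset} is applicable.
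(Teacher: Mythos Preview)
Your proposal is correct and follows essentially the same approach as the paper: induction on the wellformedness derivation (equivalently, on the structure of $F$), with the atomic case as the crux, invoking Theorems~\ref{th:uniqueness} and~\ref{th:aritysubs} for clause~(1) and the hypothesis on the $G_i$ for clause~(2). You supply more detail than the paper's terse sketch, in particular explicitly naming Theorems~\ref{th:aritysubs-ty} and~\ref{th:erasure} for the type component and the erasure match, and Theorem~\ref{th:wfsupset} for the context-quantifier case in clause~(2); these are indeed needed but left implicit in the paper.
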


\begin{proof}
The first clause follows from an induction on the derivation of 
$\wfform{\aritysum{\context{\theta}}{\STLCGamma}}{\Xi}{F}$, using  
Theorems~\ref{th:uniqueness} and \ref{th:aritysubs} in the atomic case
to ensure the appropriate arity typing judgements will be derivable under 
the substitution $\theta$.
The second clause follows from an induction on the derivation of
$\wfform{\STLCGamma}{\Xi}{F}$, using the assumption that
$\wfctx{\STLCGamma}{\Xi\setminus\{\Gamma_1,\ldots,\Gamma_n\}}{G_i}$
is derivable to ensure wellformedness under the substitution $\sigma$
in the atomic case.
\end{proof}

\noindent Following the notation introduced after Theorem~\ref{th:aritysubs}, 
if $F$ and $\theta$ are a formula and a substitution that together satisfy
the requirements of the first part of the theorem, we will write
$\hsubst{\theta}{F}$ to denote the $F'$ for which 
$\hsub{\theta}{F}{F'}$ is derivable.
As is implicit in the preceding discussion, term and context variables
substitutions may introduce new nominal constants. 
If $\theta$ is a term variables substitution, we will write
$\supportof{\theta}$ to denote the collection of such constants that
appear in the terms in $\range{\theta}$.
Similarly, if $\sigma$ is the context variables substitution
$\{G_1/\Gamma_1,\ldots,G_n/\Gamma_n\}$, 
we will write $\supportof{\sigma}$ to denote the collection of nominal
constants that appear in $G_1,\ldots,G_n$.

A closed atomic formula of the form
$\fatm{G}{M:A}$ is intended to encode an LF judgement of the
form $\lfchecktype{\Gamma}{M}{A}$.
In this encoding, nominal constants that appear in terms represent
free variables for which bindings appear in the context in LF
judgements.
To substantiate this interpretation, the rules \canonkindpi, \canonfampi\ and
\canontermlam\ must introduce fresh nominal constants into contexts in
typing derivations and they must replace bound variables appearing in
terms and types with these constants. 
We use this interpretation to define validity for closed atomic
formulas with one further qualification: unlike in the LF judgement,
for the atomic formula we must also ascertain the wellformedness of
the context and the type.
This notion of validity is then extended to all closed formulas by
recursion on formula structure.

\begin{definition}\label{def:semantics}
Let $F$ be a formula such that $\wfform{\noms \cup
  \STLCGamma_0}{\emptyset}{F}$ is derivable. 
\begin{itemize}
\item If $F$ is $\fatm{G}{M:A}$, then it is valid exactly when all of $\lfctx{G}$,
  $\lftype{G}{A}$, and $\lfchecktype{G}{M}{A}$ are derivable in LF,
  under the interpretation of nominal constants as variables bound in
  a context and with the modification of the rules \canonkindpi, \canonfampi\ and
  \canontermlam\ to introduce fresh nominal constants into contexts and to
  instantiate the relevant bound variables in kinds, types and terms with
  these constants.

\item If $F$ is $\ftrue$ it is valid and if it is $\ffalse$ it is not valid.

\item If $F$ is $\fimp{F_1}{F_2}$, it is valid if $F_2$ is valid in
  the case that $F_1$ is valid.

\item If $F$ is $\fand{F_1}{F_2}$, it is valid if both $F_1$ and
  $F_2$ are valid.

\item If $F$ is $\for{F_1}{F_2}$, it is valid if either $F_1$ or $F_2$ is valid.

\item If $F$ is $\fctx{\Gamma}{\mathcal{C}}{F}$, it is valid if
  $\subst{\{G/\Gamma\}}{F}$ is valid for every $G$ such that
  $\csinst{\noms}{\emptyset}{\mathcal{C}}{G}$ is derivable.

\item If $F$ is $\fall{x:\alpha}{F}$, it is valid if
  $\hsubst{\{\langle x, M,\alpha\rangle \}}{F}$ is valid for every $M$ such that
  $\stlctyjudg{\noms \cup \STLCGamma_0}{M}{\alpha}$ is derivable.

\item If $F$ is $\fexists{x:\alpha}{F}$, it is valid if
  $\hsubst{\{\langle x, M,\alpha\rangle \}}{F}$ is valid for some $M$ such that 
  $\stlctyjudg{\noms \cup \STLCGamma_0}{M}{\alpha}$ is derivable.
\end{itemize}
Theorems~\ref{th:schemainst} and \ref{th:subst-formula} ensure the
coherence of this definition. 
\end{definition}

\subsection{Understanding the Notion of Validity}\label{ssec:logic-examples}

In the examples we consider below, we assume an instantiation of
\logic\ 
based on the signature presented in
Section~\ref{ssec:informal-reasoning}.
Obviously, any LF typing judgement based on that signature is expressable
in the logic.
Moreover, the corresponding formula will be valid exactly when the
typing judgement is derivable in LF.
Thus, the formulas
$\fatm{\emptyce}{\emptytm : \tmty}$,
$\fatm{\emptyce}{(\lamtm\app \unittm\app (\lflam{x}{x})) : \tmty}$ and
$\fatm{n : \tmty}{n:\tmty}$ are all valid. 
Similarly, the formulas
$\fexists{d:\oty}{\fatm{\emptyce}{d : (\ofty \app \emptytm\app
    \unittm)}}$ and
\begin{tabbing}
\qquad\=\kill
\>$\fexists{d:\oty}{\fatm{\emptyce}{d :\ofty \app (\lamtm \app \unittm\app
    (\lflam{x}{x}))\app (\arrtm\app \unittm\app \unittm)}}$
\end{tabbing}
are valid but the formula
$\fexists{d:\oty}{\fatm{\emptyce}{d:\ofty \app
    (\lamtm \app \unittm\app (\lflam{x}{x}))\app \unittm}}$
is not.
Note that the arity type associated with the quantified variable in
each of these formulas provides only a rough constraint on the
instantiation needed to verify the validity of the formula; to do
this, the instance must also satisfy LF typeability requirements
represented by formula that appears within the scope of the
quantifier.

Wellformedness conditions for formulas ensure only that the terms
appearing within formulas satisfy canonicity requirements, i.e. that
these terms are in $\beta$-normal form and that variables and
constants are applied to as many arguments as they can take.
Arity typing does not distinguish between terms in different
expression categories.
For example, the formula
\begin{tabbing}
\qquad\=\kill
\> $\fexists{d:\oty}{\fatm{\emptyce}{d:\ofty \app (\lamtm \app \emptytm \app
    (\lflam{x}{x}))\app (\arrtm\app \unittm\app \unittm)}}$
\end{tabbing}
is well-formed but not valid.
An alternative design choice, with equivalent consequences from the
perspective of the valid properties that can be expressed in the
logic, might have been to let the fact that $\lamtm$ is ill-applied to
$\emptytm$ to impact on the wellformedness of the formula.
The wellformedness conditions do not also enforce a distinctness
requirement for bindings in a context.
Thus, the formula $\fatm{n : \tmty, n : \tpty}{\emptytm : \tmty}$ is
well-formed.
However, it is not valid because $\lfctx{n : \tmty, n : \tpty}$ is not
derivable in LF under the described interpretation for nominal
constants.
An implication of these observations is that a naive form of weakening 
does not hold with respect to the encoding of LF derivability in
\logic; additional conditions similar to this described in 
Theorem~\ref{th:weakening} must be verified for this principle to
apply.

To provide a more substantive example of the kinds of properties that
can be expressed in \logic, let us consider the formal statement of
the property of uniqueness of type assignments for the STLC.
As noted in Section~\ref{ssec:informal-reasoning}, this property is
best described in a form that considers typing expressions in
contexts that have a particular kind of structure. 
That structure can be formalized in \logic\ by a context
schema comprising the single block
\begin{tabbing}
  \qquad\=\kill
  \> $\{t : o\}x:tm,y:\ofty\app x\app t$.
\end{tabbing}
Let us denote this context schema by $c$.
Observe that a context that instantiates this schema will not
provide a variable that can be used to construct an atomic term of
type $\tpty$.
Thus, the strengthening property for expressions representing types
that is expressed by the formula 
\begin{tabbing}
  \qquad\=\kill
  \> $\fctx{\Gamma}{c}
           {\fall{t :\oty}
                 {\fimp{\fatm{\Gamma}{t : \tpty}}
                   {\fatm{\emptyce}{t:\tpty}}}}$.
\end{tabbing}
should hold.
We can in fact easily show this formula to be valid by
using Theorem~\ref{th:atomictype} and an induction on the height of
the derivation for $\fatm{G}{t : \tpty}$ for a closed term $t$ and a
closed instance $G$ of $c$.
Using the validity of this formula, we can also easily argue that the
following formula that expresses a strengthening property pertaining to
the equality of types is also valid:
\begin{tabbing}
  \qquad\=\kill
  \> $\fctx{\Gamma}{c}
           {\fall{d :\oty}
           {\fall{t_1 :\oty}
           {\fall{t_2 : \oty}
                 {\fimp{\fatm{\Gamma}{d : \eqty\app t_1\app t_2}}
                 {\fatm{\emptyce}{d:\eqty\app t_1\app t_2}}}}}}$.
\end{tabbing}

The property of uniqueness of type assignments for the STLC can be expressed through the
following formula:
\begin{tabbing}
\qquad\=\qquad\=\qquad\=\kill
\>$\fctx{\Gamma}{c}{\fall{e:\oty}{\fall{t_1:\oty}{\fall{t_2:\oty}{\fall{d_1:\oty}{\fall{d_2:\oty}{}}}}}}$\\
\>\>$\fimp{\fatm{\Gamma}{e:\tmty}}
          {\fimp{\fatm{\Gamma}{t_1:\tpty}}
                {\fimp{\fatm{\Gamma}{t_2:\tpty}}{}}}$\\
\>\>\> $\fimp{\fatm{\Gamma}{d_1:\ofty\app e\app t_1}}
             {\fimp{\fatm{\Gamma}{d_2:\ofty\app e\app t_2}} 
                   {\fexists{d_3:\oty}
                            {\fatm{.}{d_3:\eqty\app t_1\app t_2}}}}$.
\end{tabbing}
This formula can be seen to be valid using the strengthening property
just described if we can establish the validity of the formula
\begin{tabbing}
\qquad\=\qquad\=\qquad\=\kill
\>$\fctx{\Gamma}{c}{\fall{e:\oty}{\fall{t_1:\oty}{\fall{t_2:\oty}{\fall{d_1:\oty}{\fall{d_2:\oty}{}}}}}}$\\
\>\>$\fimp{\fatm{\Gamma}{e:\tmty}}
          {\fimp{\fatm{\Gamma}{t_1:\tpty}}
                {\fimp{\fatm{\Gamma}{t_2:\tpty}}{}}}$\\
\>\>\> $\fimp{\fatm{\Gamma}{d_1:\ofty\app e\app t_1}}
             {\fimp{\fatm{\Gamma}{d_2:\ofty\app e\app t_2}} 
                   {\fexists{d_3:\oty}
                            {\fatm{\Gamma}{d_3:\eqty\app t_1\app t_2}}}}$.
\end{tabbing}
To show this, it suffices to argue that, for a closed context
expression $G$ that instantiates the schema $c$ and for closed
expressions $d_1$, $d_2$, $e$, $t_1$, and $t_2$, if the formulas
$\fatm{G}{e:\tmty}$, $\fatm{G}{t_1:\tpty}$, 
$\fatm{G}{t_2:\tpty}$, $\fatm{G}{d_1 : \ofty\app e\app t_1}$ and 
$\fatm{G}{d_2 : \ofty\app e\app t_2}$ are valid, then there must be a
closed expression $d_3$ such that
$\fatm{G}{d_3 : \eqty\app t_1\app t_2}$ is also valid.
Such an argument can be constructed by induction on the height
of the LF derivation of $\lfchecktype{G}{d_1}{\ofty\app e\app t_1}$,
which we analyze using Theorem~\ref{th:atomictype} in the manner
discussed earlier. 
There are essentially four cases to consider, corresponding to
whether the head symbol of $d_1$ is \ofemptytm, \ofapptm, 
\oflamtm, or a nominal constant that is assigned the type
$(\ofty\app n\app t_1)$ in $G$ where $n$ is also a nominal constant that
is bound in $G$.
In the last case, we use the fact that the validity of
$\fatm{G}{d_1 : \ofty\app e\app t_1}$ implies that $\lfctx{G}$ is
derivable to conclude the uniqueness of $n$ and, hence, of the typing.
The argument when $d_1$ is \ofemptytm\ has an obvious form.
The argument when $d_1$ has \ofapptm\ or \oflamtm\ as its head symbol
will invoke the induction hypothesis. 
In the case where the head symbol is \oflamtm, we will need
to consider a shorter derivation of a typing judgement in which the
context has been enhanced.
However, we will be able to use the induction hypothesis by observing
that the enhancements to the context conform to the constraints
imposed by the context schema. 
Note that the form of $d_1$ also constrains the form of $e$ in all the
cases, a fact that is used implicitly in the analysis outlined.

\subsection{Nominal Constants and Invariance Under Permutations}
\label{ssec:permutations}

The particular choices for bound variable names in the kinds, types
and terms that comprise LF expressions are considered irrelevant.
This understanding is built in concretely through the notion of
$\alpha$-conversion that renders equal expressions that differ
only in the names used for such variables.
Typing derivations transform expressions with bound variables into
ones where variables are ostensibly free but in fact bound implicitly
in the associated contexts.
The lack of importance of name choices is reflected in this case in an
invariance in the validity of typing judgements under a suitable
renaming of variables appearing in the judgements.
In a situation where context variables are represented by nominal
constants, this property can be expressed via an invariance of formula
validity under permutations of nominal constants as we describe here.
We begin with a definition of the notions of permutations of
nominal constants and their applications to expressions. 
\begin{definition}

A permutation of the nominal constants is an arity type preserving
bijection from $\noms$ to $\noms$ that differs from the identity
map at only a finite number of constants. 
The permutation that maps $n_1,\ldots, n_m$ to $n_1',\ldots,n_m'$,
respectively, and is the identity everywhere else is written as 
$\{n_1'/n_1,\ldots,n_m'/n_m\}$.
The support of a permutation $\pi=\{n_1'/n_1,\ldots,n_m'/n_m\}$, 
denoted by $\supp{\pi}$, is the collection of nominal constants
$\{n_1,\ldots, n_m\}$ or, identically, $\{n_1',\ldots,n_m'\}$.
Every permutation $\pi$ has an obvious inverse that is written as
$\inv{\pi}$. 
\end{definition}

\begin{definition}
The application of a permutation $\pi$ to an expression $E$ of a
variety of kinds is described below and is denoted in all cases by
$\permute{\pi}{E}$.
If $E$ is a term, type, or kind then the
application consists of replacing each nominal constant $n$ that
appears in $E$ with $\pi(n)$. 
If $E$ is a context then the application of $\pi$ to $E$
replaces each explicit binding $n:A$ in $E$ with 
$\pi(n):\permute{\pi}{A}$.
If $E$ is an LF judgement $\mathcal{J}$ then the permutation is applied
to each component of the judgement in the way described above.
If $E$ is a formula then the permutation is applied to its component parts.
The application of $\pi$ to a term variables substitution
$\{\langle x_1,M_1,\alpha_1\rangle,\ldots,
   \langle x_n,M_n,\alpha_n\rangle\}$
yields the substitution 
$\{\langle x_1,\permute{\pi}{M_1},\alpha_1\rangle,\ldots,
   \langle x_n,\permute{\pi}{M_n},\alpha_n\rangle\}$.
The application of $\pi$ to a context variables substitution
$\{G_1/\Gamma_1,\ldots, G_n/\Gamma_n\}$ yields
$\{\permute{\pi}{G_1}/\Gamma_1,\ldots, \permute{\pi}{G_n}/\Gamma_n\}$.
\end{definition}

The following theorem expresses the property of interest
concerning LF judgements cast in the form relevant to \logic. 
\begin{theorem}\label{th:perm-lf}
Let LF judgements and derivations be recast in the form discussed
earlier in this section: variables that are bound in a context are
represented by nominal constants and the rules \canonkindpi,
\canonfampi\ and \canontermlam\ introduce fresh nominal constants into
contexts and replace variables in kinds, types and terms with these
constants.
In this context, let $\mathcal{J}$ be an LF judgement which has a
derivation. 
Then for any permutation $\pi$, $\permute{\pi}{\mathcal{J}}$ is derivable.
Moreover, the structure of this derivation is the same as that for
$\mathcal{J}$.
\end{theorem}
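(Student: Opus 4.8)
The plan is to argue by induction on the structure of the given derivation $\mathcal{D}$ of $\mathcal{J}$, performing a case analysis on the rule applied at its root. For each rule I would apply $\pi$ to every component of the conclusion and of the premises, invoke the induction hypothesis to obtain derivations of the permuted premises, and then re-apply the same rule to assemble a derivation of $\permute{\pi}{\mathcal{J}}$. Because each step is matched by an application of the identical rule, the claim that the resulting derivation has the same structure as $\mathcal{D}$ falls out immediately from this induction. Before starting, I would isolate one auxiliary fact: permutation commutes with hereditary substitution, i.e.\ whenever $\hsub{\theta}{E}{E'}$ is derivable so is $\hsub{\permute{\pi}{\theta}}{\permute{\pi}{E}}{\permute{\pi}{E'}}$, and likewise for the atomic-term judgement forms. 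Since nominal constants are treated exactly like ordinary constants by the rules defining hereditary substitution, this is a routine induction on the derivation of the substitution judgement; I would also note, using Theorem~\ref{th:erasure} and the fact that $\pi$ preserves arity types, that the erased types indexing these substitutions are untouched by $\pi$.

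The leaf and lookup cases are immediate. The signature $\Sigma$ contains no nominal constants and is therefore fixed by $\pi$, so the side conditions $c:A\in\Sigma$ and $a:K\in\Sigma$ in \atomtermconst\ and \atomfamconst\ survive permutation unchanged. For the variable rule \atomtermvar\ applied to a nominal constant $n$ with $n:A$ appearing in the context, bijectivity of $\pi$ gives $\pi(n):\permute{\pi}{A}$ as a member of the permuted context, so the rule still applies. In the application rules \atomtermapp\ and \atomfamapp\ the only nontrivial premise beyond the recursive ones is the hereditary substitution into a type or kind; here the commutation lemma, together with the observation that $\permute{\pi}{\{\langle x,M,\erase{A}\rangle\}}=\{\langle x,\permute{\pi}{M},\erase{A}\rangle\}$ because $x$ is not a nominal constant, lets me rewrite the permuted conclusion as the result of the same substitution and re-apply the rule.

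The delicate cases, and where I expect the main work to lie, are the three binding rules \canonkindpi, \canonfampi\ and \canontermlam, which in the recast system introduce a \emph{fresh} nominal constant $n$ and replace the relevant bound variable by $n$ when passing to the premise. The key point is that I may use $\pi(n)$ as the fresh constant in the permuted derivation. I would verify two things. First, $\pi(n)$ is a legitimate fresh choice: since $\pi$ is arity-type preserving, $\pi(n)$ carries the same arity type as $n$, and since $\pi$ is a bijection and $n$ does not occur in the original context $\Gamma$, the constant $\pi(n)$ does not occur in $\permute{\pi}{\Gamma}$ (otherwise $\inv{\pi}$ would place $n$ in $\Gamma$). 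Second, permutation commutes with the operation of instantiating the bound variable by the fresh constant: as $\pi$ acts only on nominal constants while the instantiation replaces a bound term variable, one has $\permute{\pi}{(M[n/x])}=(\permute{\pi}{M})[\pi(n)/x]$, and similarly for types and kinds. With these two facts, the induction hypothesis applied to the premise derivation yields a derivation of the premise of the permuted judgement over the context extended by $\pi(n)$, and a final application of the same binding rule delivers $\permute{\pi}{\mathcal{J}}$.

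The principal obstacle is thus confined to the binding rules, and specifically to confirming that the freshness side condition is respected after permutation; this is exactly where bijectivity and arity-type preservation of $\pi$ are essential, and where one must be careful that the chosen fresh witness in the permuted derivation is $\pi(n)$ rather than an arbitrary fresh constant, so that the commutation identity above applies verbatim. All remaining bookkeeping---propagating $\pi$ through contexts (including the well-formedness judgement $\lfctx{\Gamma}$, whose \ctxterm\ premise is handled recursively and whose freshness condition is again preserved by bijectivity), and checking the well-formedness side premises such as $\lftype{\Gamma}{A}$ in the application rules through the recursive hypotheses---is routine.
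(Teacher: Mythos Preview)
Your proposal is correct and follows essentially the same approach as the paper: induction on the derivation of $\mathcal{J}$, with the only nontrivial point being that freshness of the nominal constants introduced by \canonkindpi, \canonfampi, and \canontermlam\ is preserved under permutation. The paper's proof is a two-sentence sketch that singles out exactly this freshness observation; you have simply fleshed out the details (the commutation of $\pi$ with hereditary substitution, the handling of lookup and application rules) that the paper leaves implicit.
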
 
\begin{proof}
This proof is by induction on the derivation for $\mathcal{J}$.
Perhaps the only observation worthy of note is that the freshness of
nominal constants used in \canonkindpi, \canonfampi, and
\canontermlam\ rules is preserved under permutations of nominal
constants. 
\end{proof}
The above observation underlies the main theorem of this section.
\begin{theorem}\label{th:perm-form}
Let $F$ be a closed formula and let $\pi$ be a permutation.
Then $F$ is valid if and only if $\permute{\pi}{F}$ is valid.
\end{theorem}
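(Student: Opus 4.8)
The plan is to prove both directions at once by induction on the measure that underlies the recursion in Definition~\ref{def:semantics}, namely the number of occurrences of logical connectives and quantifiers in $F$. This, rather than raw subformula structure, is the appropriate measure: instantiating a quantifier by hereditary or context-variable substitution preserves the logical skeleton of the body and merely alters the terms inside atomic formulas, so the measure strictly decreases when one passes under a quantifier. Since $\permute{\inv{\pi}}{(\permute{\pi}{F})} = F$, it suffices in each case to establish a single implication, say that validity of $F$ implies validity of $\permute{\pi}{F}$; the converse then follows by applying that implication to the permutation $\inv{\pi}$.

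The base case is the atomic formula $F = \fatm{G}{M:A}$, where $\permute{\pi}{F} = \fatm{\permute{\pi}{G}}{\permute{\pi}{M} : \permute{\pi}{A}}$. By Definition~\ref{def:semantics} the validity of $F$ is the LF derivability of $\lfctx{G}$, $\lftype{G}{A}$ and $\lfchecktype{G}{M}{A}$ under the nominal-constant reading of the rules, and each of these is an LF judgement $\mathcal{J}$ whose permuted form $\permute{\pi}{\mathcal{J}}$ is exactly the corresponding judgement for $\permute{\pi}{F}$. Hence the atomic case is an immediate consequence of Theorem~\ref{th:perm-lf}. The propositional cases ($\ftrue$, $\ffalse$, and the binary connectives) are equally direct: application of $\pi$ distributes over the connectives and validity is defined compositionally, so the induction hypothesis applies to each immediate subformula.

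The substance lies in the quantifier cases, and for these I would first record two auxiliary observations, each provable by a routine induction. The first is that permutation commutes with the substitution operations used in the semantics: $\permute{\pi}{(\hsubst{\theta}{F})} = \hsubst{(\permute{\pi}{\theta})}{(\permute{\pi}{F})}$ for a term variables substitution $\theta$, and $\permute{\pi}{(\subst{\sigma}{F})} = \subst{(\permute{\pi}{\sigma})}{(\permute{\pi}{F})}$ for a context variables substitution $\sigma$; both hold because these substitutions treat nominal constants exactly as they treat other constants, while $\pi$ merely renames the former. The second is that the index sets over which the quantifiers range are stable under $\pi$: since $\pi$ is an arity-type preserving bijection of $\noms$ that fixes the constants of $\STLCGamma_0$, the map $M \mapsto \permute{\pi}{M}$ is a bijection of the set of terms with $\stlctyjudg{\noms \cup \STLCGamma_0}{M}{\alpha}$ onto itself; and, because a well-formed schema $\mathcal{C}$ contains no nominal constants (so $\permute{\pi}{\mathcal{C}} = \mathcal{C}$) while the rules of Figure~\ref{fig:ctx-schema} draw the introduced constants from $\noms$, the map $G \mapsto \permute{\pi}{G}$ is a bijection of the set of $G$ with $\csinst{\noms}{\emptyset}{\mathcal{C}}{G}$ onto itself.

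With these in hand the quantifier cases follow uniformly. For $F = \fall{x:\alpha}{F_0}$, validity says $\hsubst{\{\langle x,M,\alpha\rangle\}}{F_0}$ is valid for every admissible $M$; since this instance has strictly smaller measure, the induction hypothesis followed by the commutation identity rewrites it as the validity of $\hsubst{\{\langle x,\permute{\pi}{M},\alpha\rangle\}}{(\permute{\pi}{F_0})}$ for every admissible $M$, and the bijection $M \mapsto \permute{\pi}{M}$ shows this is precisely the validity of $\permute{\pi}{F} = \fall{x:\alpha}{\permute{\pi}{F_0}}$. The case $\fexists{x:\alpha}{F_0}$ is identical with ``some'' in place of ``every'', and $\fctx{\Gamma}{\mathcal{C}}{F_0}$ is the analogue using $\subst{\{G/\Gamma\}}{F_0}$, the context-variable commutation identity, and the bijection on instances of $\mathcal{C}$. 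I expect the main obstacle to be the context-schema form of the second observation: verifying that $\csinst{\noms}{\emptyset}{\mathcal{C}}{G}$ is preserved under $\pi$ requires descending through the one-step and block-instantiation relations of Figure~\ref{fig:ctx-schema} and checking that relocating the freshly chosen nominal constants by $\pi$—together with pushing $\pi$ through the hereditary substitutions that build the declarations, via the first observation—again yields a valid instantiation derivation. Once the commutation and stability facts are established, the remaining steps are bookkeeping.
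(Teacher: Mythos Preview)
Your proposal is correct and follows essentially the same approach as the paper's proof: both reduce to one direction via $\inv{\pi}$, induct on the logical structure of $F$, invoke Theorem~\ref{th:perm-lf} for the atomic base case, and handle the quantifier cases by combining the commutation of permutation with (hereditary or context) substitution and the closure of the relevant instance sets under permutation. The only cosmetic difference is that the paper phrases the quantifier cases by pulling an arbitrary instance back along $\inv{\pi}$ rather than pushing forward along $\pi$ and invoking bijectivity, and it remarks in passing that the substituted body ``has the same structural complexity'' where you make the induction measure explicit.
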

\begin{proof}
Noting that $\inv{\pi}$ is also a permutation and that
$\permute{\inv{\pi}}{\permute{\pi}{F}}$ is $F$, it suffices to prove
the claim in only one direction.
We do this by induction on the structure of $F$.

The desired result follows easily from
Theorem~\ref{th:perm-lf} and the relationship of validity to LF
derivability when $F$ is atomic.
The cases where $F$ is $\ftrue$ or $\ffalse$ are trivial and the
ones in which $F$ is $\fimp{F_1}{F_2}$, $\fand{F_1}{F_2}$ or
$\for{F_1}{F_2}$ are easily argued with recourse to the induction
hypothesis and by noting that the permutation distributes to the
component formulas.

In the case where $F$ is $\fctx{\Gamma}{\mathcal{C}}{F'}$, we first
note that if $\csinst{\noms}{\emptyset}{\mathcal{C}}{G}$ has a
derivation then
$\csinst{\noms}{\emptyset}{\mathcal{C}}{\permute{\inv{\pi}}{G}}$ must also
have one.
From this and the validity of $F$ it follows that
$\subst{\{(\permute{\inv{\pi}}{G})/\Gamma\}}{F'}$ must be valid.
Moreover, $\subst{\{(\permute{\inv{\pi}}{G})/\Gamma\}}{F'}$ has the same
structural complexity as $F'$.
Hence, by the induction hypothesis,
$\permute{\pi}{(\subst{\{\permute{(\inv{\pi}}{G})/\Gamma\}}{F'})}$ is valid.
Noting that this formula is the same as
$\subst{\{G/\Gamma\}}{(\permute{\pi}{F'})}$ and that
$\fctx{\Gamma}{\mathcal{C}}{\permute{\pi}{F'}}$ is identical to 
$\permute{\pi}{(\fctx{\Gamma}{\mathcal{C}}{F'})}$, the validity of
$\permute{\pi}{F}$ easily follows.

Suppose that $F$ has the form $\fall{x:\alpha}{F'}$.
We observe here that if
$\stlctyjudg{\noms \cup \STLCGamma_0}{M}{\alpha}$ has a derivation
then
$\stlctyjudg{\noms \cup \STLCGamma_0}{\permute{\inv{\pi}}{M}}{\alpha}$
has one too and that
$\permute{\pi}{(\hsubst{\{\langle x,\permute{\inv{\pi}}{M},\alpha\rangle\}}{F'})}$
is the same formula as
$\hsubst{\{\langle x,M,\alpha\rangle\}}{(\permute{\pi}{F'})}$.
Using the definition of validity, the induction hypothesis and the
fact that permutation distributes to the component formula together
with the above observations, we may easily conclude that
$\permute{\pi}{F}$ is valid.

Finally, suppose that $F$ is of the form $\fexists{x:\alpha}{F'}$.
Here we note that if
$\stlctyjudg{\noms \cup \STLCGamma_0}{M}{\alpha}$ has a derivation
then
$\stlctyjudg{\noms \cup \STLCGamma_0}{\permute{\pi}{M}}{\alpha}$
has one too and that
$\permute{\pi}{(\hsubst{\{\langle x, M,\alpha\rangle\}}{F'})}$
is the same formula as
$\hsubst{\{\langle
  x,\permute{\pi}{M},\alpha\rangle\}}{(\permute{\pi}{F'})}$.
Using the definition of validity and the induction hypothesis, it is
now easy to conclude that $\permute{\pi}{F}$ must be valid.
\end{proof}

\section{A Proof System for the Logic} 
\label{sec:proof-system}

In this section we describe a proof system that provides a formal
mechanism for demonstrating validity for formulas in \logic.
This proof system is oriented around sequents that represent
assumption and conclusion formulas augmented with devices that capture
additional aspects of states that arise in the process of reasoning. 
The syntax for sequents is more liberal than is meaningful at the
outset, and this is rectified by imposing wellformedness requirements
on them.
We associate a semantics with sequents that is consistent with their
intended use.
We then present a collection of proof rules that can be used to derive
sequents.
These rules belong to two broad categories.
The first category comprises rules that embody logical aspects such as
the meanings of sequents and of the logical symbols that appear in formulas.
A key aspect of \logic\ is that its atomic formulas represent the
notion of derivability in LF that is also open to analysis.
The second category of proof rules builds in capabilities for such
analysis.
An important property for our proof rules is that they should require
the proofs only of well-formed sequents in constructing derivations
for well-formed sequents.
One concern in their presentation is therefore to check that they
satisfy this property. 
At a more substantive level, the proof rules must support
a reasoning process that is both sound and effective.
We focus in this paper on the issue of soundness, leaving the
demonstration of effectiveness to other work, e.g. see
\cite{southern21lfmtp}.
In the proofs of soundness, we will assume the wellformedness of
sequents, as guaranteed by the complementary consideration.

The first subsection below presents the sequents underlying the proof
system and identifies a semantics with them. 
The remaining subsections develop the collection of proof
rules. 
In Section~\ref{ssec:core-logic}, we present a collection of rules that
encapsulate the meanings of the logical symbols and also certain
structural aspects of sequents.
We then turn to the rules that internalize aspects of LF derivability
that are intrinsic to the understanding of the atomic formulas.
Section~\ref{ssec:atomic} develops rules for analyzing atomic formulas.
An important component of these rules is the interpretation of typing
judgements involving atomic types via the particular LF specification
that parameterizes the logic: this interpretation leads, in
particular, to a case analysis rule for such atomic formulas that
appear as assumptions in a sequent.
Section~\ref{ssec:induction} presents rules that enable reasoning based on
induction on the heights LF derivations.
In the final subsection, we introduce proof rules that
encode meta-theorems concerning LF derivability that often find use in
reasoning about LF specifications.

\subsection{The Structure of Sequents}\label{ssec:sequents}

A sequent in our proof system is characterized by a collection of
assumption formulas and a conclusion or goal formula.
The formulas may contain free term and context variables that are to
be interpreted as being implicitly universally quantified over the  
sequent and, therefore, its proof.
We find it useful also to identify with the sequent a collection of
nominal constants that circumscribes the ones that appear in its
formulas. 

The nominal constants and term variables that appear in the
sequent have arity types associated with them.
Context variables are also typed and their types are, in spirit, based
on context schemas.
However, subproofs may require a partial elaboration of a context
variable and the types associated with such variables accommodates
this possibility.
More specifically, these types have the form
$\ctxty{\mathcal{C}}{G_1;\ldots; G_n}$ where $\mathcal{C}$ is a context 
schema and $G_1,\ldots, G_n$ are context expressions.
Such a type is intended to represent the collection of context
expressions obtained by interspersing $G_1,\ldots,G_n$ with
instantiations of the context schema $\mathcal{C}$ and possibly
prefixed by a context variable of suitable type that represents a yet
to be elaborated sequence of declarations.
Additionally, context variables are annotated with a collection of
nominal constants that express the constraint that the elaborations of
these variables must not use names in these collections; the ability
to express such constraints is an essential part of the mechanism for
analyzing typing judgements involving abstractions as we will see
later in this section.

The ideas pertaining to context variable typing are made precise
through the following definition. 

\begin{figure}
\begin{center}
\begin{tabular}{c}
\infer{\wfctxvarty{\mathbb{N}}{\Psi}{\ctxty{\mathcal{C}}{\cdot}}}
      {}

\qquad\qquad      
\infer{\wfctxvarty{\mathbb{N}}{\Psi}{\ctxty{\mathcal{C}}{\mathcal{G};G}}}
      {\csinstone{\mathbb{N}}{\Psi}{\mathcal{C}}{G}
       \qquad
       \wfctxvarty{\mathbb{N}}{\Psi}{\ctxty{\mathcal{C}}{\mathcal{G}}}} 

\\[15pt]

\infer{\ctxtyinst{\mathbb{N}}{\Psi}{\Xi}{\ctxty{\mathcal{C}}{\emptycb}}{\emptyce}}
      {}

\qquad
\infer{\ctxtyinst{\mathbb{N}}{\Psi}{\Xi}{\ctxty{\mathcal{C}}{\cdot}}{\Gamma}}
      {\ctxvarty{\Gamma}
                {\mathbb{N}_{\Gamma}}
                {\ctxty{\mathcal{C}}{\mathcal{G}}} \in\Xi
         &
       (\noms\setminus\mathbb{N}_{\Gamma})\subseteq\mathbb{N}}

\\[15pt]
      
\infer{\ctxtyinst{\mathbb{N}}{\Psi}{\Xi}{\ctxty{\mathcal{C}}{\mathcal{G}}}{G,G'}}
      {\ctxtyinst{\mathbb{N}}{\Psi}{\Xi}{\ctxty{\mathcal{C}}{\mathcal{G}}}{G}
      \qquad
      \csinstone{\mathbb{N}}{\Psi}{\mathcal{C}}{G'}}

\qquad\qquad

\infer{\ctxtyinst{\mathbb{N}}{\Psi}{\Xi}{\ctxty{\mathcal{C}}{\mathcal{G};G'}}{G,G'}}
      {\ctxtyinst{\mathbb{N}}{\Psi}{\Xi}{\ctxty{\mathcal{C}}{\mathcal{G}}}{G}}

\end{tabular}
\end{center}
\caption{Well-Formed Context Variable Types and their Instantiations}
\label{fig:wfctxvar}
\end{figure}

\begin{definition}\label{def:cvar-types}
A \emph{context variable type} is a expression of the form
$\ctxty{\mathcal{C}}{\mathcal{G}}$ where $\mathcal{C}$ is a context
schema such that $\acstyping{\mathcal{C}}$ is derivable and
$\mathcal{G}$ represents a sequence of \emph{context blocks} given as
follows: 
\[ \mathcal{G} ::= \emptycb \ | \ \mathcal{G}; n_1 : A_1, \ldots, n_k : A_k. \]
Such a type is said to be well-formed with respect to a nominal
constant set $\mathbb{N} \subseteq \noms$ and a term variables context
$\Psi$ if it is the case that the relation
$\wfctxvarty{\mathbb{N}}{\Psi}{\ctxty{\mathcal{C}}{\mathcal{G}}}$
that is defined by the rules in Figure~\ref{fig:wfctxvar} holds;
intuitively, if $\mathcal{G}$ is a listing of blocks obtained by
instantiating block schemas comprising $\mathcal{C}$.
A \emph{context variables context} is a collection of associations with
context variables of sets of nominal constants and context variable types,
each written in the
form $\ctxvarty{\Gamma}{\mathbb{N}_\Gamma}{\ctxty{\mathcal{C}}{\mathcal{G}}}$.
Given a context variables context $\Xi$, we write $\ctxsanstype{\Xi}$ for
the set
$\{ \Gamma\ |\ \ctxvarty{\Gamma}
                        {\mathbb{N}_\Gamma}
                        {\ctxty{\mathcal{C}}{\mathcal{G}}}
               \in \Xi\}$, \ie, the collection of context variables
assigned types by $\Xi$.
A context expression $G$ is said to be an instance of a context type
$\ctxty{\mathcal{C}}{\mathcal{G}}$ relative to $\mathbb{N}$, $\Psi$
and $\Xi$ if the relation
$\ctxtyinst{\mathbb{N}}{\Psi}{\Xi}{\ctxty{\mathcal{C}}{\mathcal{G}}}{G}$,
that is also defined in Figure~\ref{fig:wfctxvar}, holds.
\end{definition}

The following theorem, whose proof is straightforward,
shows that an instance of a well-formed context variable type is a
well-formed context relative to the relevant arity context and context
variable collection. 

\begin{theorem}\label{th:ctx-var-type-instance}
If $\wfctxvarty{\mathbb{N}}{\Psi}{\ctxty{\mathcal{C}}{\mathcal{G}}}$
and
$\ctxtyinst{\mathbb{N}}{\Psi}{\Xi}{\ctxty{\mathcal{C}}{\mathcal{G}}}{G}$
have derivations then so does
$\wfctx{\mathbb{N}\cup \Psi}{\ctxsanstype{\Xi}}{G}$.
\end{theorem}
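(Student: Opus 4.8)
The plan is to proceed by induction on the derivation of the instantiation judgement $\ctxtyinst{\mathbb{N}}{\Psi}{\Xi}{\ctxty{\mathcal{C}}{\mathcal{G}}}{G}$, carrying the well-formedness assumption $\wfctxvarty{\mathbb{N}}{\Psi}{\ctxty{\mathcal{C}}{\mathcal{G}}}$ along as a side hypothesis that is inverted whenever the block sequence $\mathcal{G}$ is decomposed. The single fact that does the real work is the following consequence of Theorem~\ref{th:schemainst}: whenever $\csinstone{\mathbb{N}}{\Psi}{\mathcal{C}}{G'}$ holds, every binding $n:A$ occurring in the fragment $G'$ satisfies $n:\erase{A}\in\mathbb{N}\cup\Psi\cup\STLCGamma_0$ and is such that $\wftype{\mathbb{N}\cup\Psi\cup\STLCGamma_0}{A}$ is derivable. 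Indeed, a one-step instance is in particular a full instance, so $\csinst{\mathbb{N}}{\Psi}{\mathcal{C}}{G'}$ holds; since $\acstyping{\mathcal{C}}$ is derivable by Definition~\ref{def:cvar-types}, Theorem~\ref{th:schemainst} yields $\wfctx{\mathbb{N}\cup\Psi\cup\STLCGamma_0}{\emptyset}{G'}$, and inverting this derivation gives the stated per-binding conditions. These are exactly the side conditions imposed by the context-formation rule of Figure~\ref{fig:wfform}, so a well-formed prefix may be extended by $G'$ one binding at a time, independently of the context-variable collection $\ctxsanstype{\Xi}$.

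With this observation in hand the cases are routine. For the empty rule, $G$ is $\emptyce$ and the empty-context rule gives the conclusion at once. For the variable rule, $G$ is a bare context variable $\Gamma$ with $\ctxvarty{\Gamma}{\mathbb{N}_\Gamma}{\ctxty{\mathcal{C}}{\mathcal{G}}}\in\Xi$; hence $\Gamma\in\ctxsanstype{\Xi}$ and the context-variable rule of Figure~\ref{fig:wfform} delivers $\wfctx{\mathbb{N}\cup\Psi}{\ctxsanstype{\Xi}}{\Gamma}$. For the schema-extension rule the premises are $\ctxtyinst{\mathbb{N}}{\Psi}{\Xi}{\ctxty{\mathcal{C}}{\mathcal{G}}}{G}$ and $\csinstone{\mathbb{N}}{\Psi}{\mathcal{C}}{G'}$; the type is unchanged, so the side hypothesis $\wfctxvarty{\mathbb{N}}{\Psi}{\ctxty{\mathcal{C}}{\mathcal{G}}}$ is carried over verbatim, the induction hypothesis gives $\wfctx{\mathbb{N}\cup\Psi}{\ctxsanstype{\Xi}}{G}$, and the gluing observation appends $G'$. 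For the block-extension rule the type is $\ctxty{\mathcal{C}}{\mathcal{G};G'}$; inverting $\wfctxvarty{\mathbb{N}}{\Psi}{\ctxty{\mathcal{C}}{\mathcal{G};G'}}$ through the second rule of Figure~\ref{fig:wfctxvar} supplies both $\csinstone{\mathbb{N}}{\Psi}{\mathcal{C}}{G'}$ and the shrunken hypothesis $\wfctxvarty{\mathbb{N}}{\Psi}{\ctxty{\mathcal{C}}{\mathcal{G}}}$ that the induction hypothesis needs for the premise $\ctxtyinst{\mathbb{N}}{\Psi}{\Xi}{\ctxty{\mathcal{C}}{\mathcal{G}}}{G}$, after which the observation again appends $G'$. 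The freshness constraint $(\noms\setminus\mathbb{N}_\Gamma)\subseteq\mathbb{N}$ in the variable rule plays no role here, since context-expression well-formedness only asks for membership in $\ctxsanstype{\Xi}$.

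I expect the one genuinely delicate point to be bookkeeping around the arity context rather than any LF-level reasoning: the per-binding conditions extracted from Theorem~\ref{th:schemainst} live over $\mathbb{N}\cup\Psi\cup\STLCGamma_0$, so the conclusion must be read with the arity context understood to include the initial constant context $\STLCGamma_0$ (the constants of $\Sigma$ are needed for $\wftype{}{A}$ to be meaningful when $A$ mentions them), which is consistent with the standing convention that $\STLCGamma_0$ always circumscribes the available constants. Granting this, the crux of the argument is precisely the closure of the context-formation judgement under extension by a schema-instance fragment; every other step reduces to matching the side conditions of the formation rules against conclusions already guaranteed by Theorem~\ref{th:schemainst}, so no further induction on LF derivations is required.
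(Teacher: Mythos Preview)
Your proof is correct and is precisely the natural elaboration of the argument that the paper leaves unstated (the paper merely asserts the proof is ``straightforward''). Your induction on the instantiation derivation, the inversion of the well-formedness hypothesis in the block-extension case, and the appeal to Theorem~\ref{th:schemainst} to extract the per-binding side conditions are all the expected moves; your observation about the arity context needing $\STLCGamma_0$ is also well taken and consistent with how the theorem is actually invoked later in the paper (e.g., in the proof of Lemma~\ref{lem:form-perm}).
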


The following theorem, whose proof is also straightforward, provides
us a means that we will often use for adjusting contexts that
parameterize the relevant wellformedness judgements. 

\begin{theorem}\label{th:ctx-ty-wk}
Let $\mathbb{N} \subseteq \mathbb{N}'$, $\Psi \subseteq \Psi'$, and
$\Xi \subseteq \Xi'$.  

\begin{enumerate}
\item If $\wfctxvarty{\mathbb{N}}{\Psi}{\ctxty{\mathcal{C}}{\mathcal{G}}}$
is derivable, then so is
$\wfctxvarty{\mathbb{N}'}{\Psi'}{\ctxty{\mathcal{C}}{\mathcal{G}}}$.

\item If 
$\ctxtyinst{\mathbb{N}}{\Psi}{\Xi}{\ctxty{\mathcal{C}}{\mathcal{G}}}{G}$
is derivable, then so is
$\ctxtyinst{\mathbb{N}'}{\Psi'}{\Xi'}{\ctxty{\mathcal{C}}{\mathcal{G}}}{G}$.
\end{enumerate}
\end{theorem}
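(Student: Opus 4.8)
The plan is to prove both clauses by induction on the structure of the derivation of the hypothesized judgement in Figure~\ref{fig:wfctxvar}, checking in each case that the premises and side conditions of the last rule remain derivable (respectively, true) after $\mathbb{N}$, $\Psi$ and $\Xi$ are replaced by the larger $\mathbb{N}'$, $\Psi'$ and $\Xi'$. Most of this is immediate: the membership side condition $\ctxvarty{\Gamma}{\mathbb{N}_\Gamma}{\ctxty{\mathcal{C}}{\mathcal{G}}}\in\Xi$ in the context-variable rule for $\ctxtyinst{\mathbb{N}}{\Psi}{\Xi}{\ctxty{\mathcal{C}}{\emptycb}}{\Gamma}$ survives the passage to $\Xi\subseteq\Xi'$, and the accompanying constraint $(\noms\setminus\mathbb{N}_\Gamma)\subseteq\mathbb{N}$ is only weakened by passing to $\mathbb{N}\subseteq\mathbb{N}'$. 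The one premise in these rules that is not discharged directly by the induction hypothesis is the one-step instantiation judgement $\csinstone{\mathbb{N}}{\Psi}{\mathcal{C}}{G}$, which occurs in the second rule for $\wfctxvarty{\mathbb{N}}{\Psi}{\ctxty{\mathcal{C}}{\mathcal{G}}}$ and in the first concatenation rule for $\ctxtyinst{\mathbb{N}}{\Psi}{\Xi}{\ctxty{\mathcal{C}}{\mathcal{G}}}{G}$.

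Consequently, the only substantive step is a preliminary monotonicity lemma for that relation: if $\mathbb{N}\subseteq\mathbb{N}'$, $\Psi\subseteq\Psi'$ and $\csinstone{\mathbb{N}}{\Psi}{\mathcal{C}}{G}$ is derivable, then so is $\csinstone{\mathbb{N}'}{\Psi'}{\mathcal{C}}{G}$. I would prove this by induction on its derivation in Figure~\ref{fig:ctx-schema}, unfolding through the block-schema relation $\bsinst{\mathbb{N}}{\Psi}{\mathcal{B}}{G}$ and the declaration-instantiation relation $\declinst{\mathbb{N}}{\Delta}{G'}{\theta}$. The rules for the latter touch $\mathbb{N}$ only via the side condition $n:\erase{A}\in\mathbb{N}$ and otherwise rest on hereditary substitution judgements $\hsub{\theta}{A}{A'}$ that mention neither $\mathbb{N}$ nor $\Psi$; hence any derivation of $\declinst{\mathbb{N}}{\Delta}{G'}{\theta}$ is already one of $\declinst{\mathbb{N}'}{\Delta}{G'}{\theta}$. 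The rule for $\bsinst{\mathbb{N}}{\Psi}{\mathcal{B}}{G}$ adds the arity-typing premises $\stlctyjudg{\mathbb{N}\cup\Psi\cup\STLCGamma_0}{t_i}{\alpha_i}$, and this is the single place where the enlargement does real work: here I would appeal to the elementary weakening property of arity typing---immediate from Figure~\ref{fig:aritytyping}, whose rules only inspect membership of assignments in the arity context---to carry each such judgement over to the larger arity context $\mathbb{N}'\cup\Psi'\cup\STLCGamma_0$, which contains $\mathbb{N}\cup\Psi\cup\STLCGamma_0$.

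With this lemma available, the two clauses assemble mechanically. For clause~(1) the base case $\mathcal{G}=\emptycb$ is discharged by the corresponding axiom, while for $\mathcal{G}=\mathcal{G}'';G$ the premise $\csinstone{\mathbb{N}}{\Psi}{\mathcal{C}}{G}$ is transported by the lemma, the premise $\wfctxvarty{\mathbb{N}}{\Psi}{\ctxty{\mathcal{C}}{\mathcal{G}''}}$ by the induction hypothesis, and the rule reassembles the conclusion over $\mathbb{N}'$ and $\Psi'$. For clause~(2) the empty and context-variable cases are handled directly, the latter using the monotonicity of its two side conditions noted above, and both concatenation rules are handled by the induction hypothesis applied to the $\ctxtyinst{\mathbb{N}}{\Psi}{\Xi}{\ctxty{\mathcal{C}}{\mathcal{G}}}{G}$ premise, supplemented by the lemma for the $\csinstone{\mathbb{N}}{\Psi}{\mathcal{C}}{G'}$ premise in the rule that carries one. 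The whole difficulty is thus concentrated in the arity-typing weakening buried inside $\bsinst{\mathbb{N}}{\Psi}{\mathcal{B}}{G}$; every remaining step merely records that a side condition or premise survives the enlargement, which is what makes the overall argument routine.
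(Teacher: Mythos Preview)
Your proposal is correct and matches the paper's approach: the paper simply declares the proof ``straightforward'' without details, and your induction on the derivations in Figure~\ref{fig:wfctxvar}, together with the auxiliary monotonicity of $\csinstone{\mathbb{N}}{\Psi}{\mathcal{C}}{G}$ reduced to arity-typing weakening, is exactly the routine argument the paper has in mind.
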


The wellformedness judgements in Figure~\ref{fig:wfctxvar} are
preserved under meaningful substitutions as the theorem below
explicates.

\begin{theorem}\label{th:ctxtyinst-hsubst}
Let $\theta$ be an arity type preserving substitution with respect to
$\noms \cup \STLCGamma_0 \cup \Psi$ and let
$\wfctxvarty{\mathbb{N}}
            {\aritysum{\context{\theta}}{\Psi}}
            {\ctxty{\mathcal{C}}{\mathcal{G}}}$
have a derivation.
Then
\begin{enumerate}
\item there must be a derivation for
$\wfctxvarty{\mathbb{N} \cup \supportof{\theta}}{\Psi}{\ctxty{\mathcal{C}}{\hsubst{\theta}{\mathcal{G}}}}$,
and 
\item if
$\ctxtyinst{\mathbb{N}}{\aritysum{\context{\theta}}{\Psi}}{\Xi}{\ctxty{\mathcal{C}}{\mathcal{G}}}{G}$
also has a derivation, there must be a derivation for
$\ctxtyinst{\mathbb{N}\cup \supportof{\theta}}{\Psi}{\Xi}{\ctxty{\mathcal{C}}{\hsubst{\theta}{\mathcal{G}}}}{\hsubst{\theta}{G}}$.
\end{enumerate}
\end{theorem}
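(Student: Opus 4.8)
The plan is to establish both clauses by induction on the derivations supplied by the hypotheses, reducing each to a single auxiliary substitution lemma for the one-step instantiation relation. Concretely, I would prove clause (1) by induction on the derivation of $\wfctxvarty{\mathbb{N}}{\aritysum{\context{\theta}}{\Psi}}{\ctxty{\mathcal{C}}{\mathcal{G}}}$ and clause (2) by induction on the derivation of $\ctxtyinst{\mathbb{N}}{\aritysum{\context{\theta}}{\Psi}}{\Xi}{\ctxty{\mathcal{C}}{\mathcal{G}}}{G}$. In both inductions the only rules whose premises carry real content are those appealing to a one-step schema instantiation $\csinstone{\mathbb{N}}{\aritysum{\context{\theta}}{\Psi}}{\mathcal{C}}{G'}$; the remaining rules either have no premises or simply recurse on a shorter $\mathcal{G}$, and their treatment is routine once one observes that $\hsubst{\theta}{(\mathcal{G}; G')} = \hsubst{\theta}{\mathcal{G}}; \hsubst{\theta}{G'}$, that $\hsubst{\theta}{(G, G')} = \hsubst{\theta}{G}, \hsubst{\theta}{G'}$, that $\theta$ leaves context variables and the empty context $\emptyce$ untouched, and that the side condition $(\noms \setminus \mathbb{N}_\Gamma) \subseteq \mathbb{N}$ and the membership $\ctxvarty{\Gamma}{\mathbb{N}_\Gamma}{\ctxty{\mathcal{C}}{\mathcal{G}}} \in \Xi$ both survive the enlargement of $\mathbb{N}$ to $\mathbb{N} \cup \supportof{\theta}$ and the fact that $\Xi$ is held fixed throughout.

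The heart of the matter is thus the lemma: if $\theta$ is arity type preserving with respect to $\noms \cup \STLCGamma_0 \cup \Psi$ and $\csinstone{\mathbb{N}}{\aritysum{\context{\theta}}{\Psi}}{\mathcal{C}}{G'}$ is derivable, then so is $\csinstone{\mathbb{N} \cup \supportof{\theta}}{\Psi}{\mathcal{C}}{\hsubst{\theta}{G'}}$. Since the two rules for $\csinstone$ merely select a block schema $\mathcal{B}$ out of $\mathcal{C}$, this reduces to the corresponding claim for the block instantiation judgement $\bsinst{\mathbb{N}}{\aritysum{\context{\theta}}{\Psi}}{\mathcal{B}}{G'}$. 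Unfolding its defining rule, $G'$ arises from a declaration instantiation $\declinst{\mathbb{N}}{\Delta}{G''}{\theta_0}$, a family of header instantiation terms with $\stlctyjudg{\mathbb{N} \cup \context{\theta} \cup \Psi \cup \STLCGamma_0}{t_i}{\alpha_i}$, and a substitution $\hsub{\rho}{G''}{G'}$ with $\rho = \{\langle x_i, t_i, \alpha_i\rangle\}$. The declaration instantiation only binds nominal constants drawn from $\mathbb{N}$, so it is immediately re-derivable over the larger set $\mathbb{N} \cup \supportof{\theta}$, and the block body $G''$ has no free term variables other than the header variables $x_i$, so $\theta$ acts vacuously on it by Theorem~\ref{th:vacuoussubs}. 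I would then re-type each $t_i$ via Theorem~\ref{th:aritysubs}, noting that the only nominal constants it can introduce lie in $\supportof{\theta}$, so that $\stlctyjudg{(\mathbb{N} \cup \supportof{\theta}) \cup \Psi \cup \STLCGamma_0}{\hsubst{\theta}{t_i}}{\alpha_i}$ holds, and exhibit $\hsubst{\theta}{G'}$ as the result of the header instantiation $\{\langle x_i, \hsubst{\theta}{t_i}, \alpha_i\rangle\}$ applied to $G''$.

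The key calculation behind this last step, and the part I expect to demand the most care, is the commutation of the post-hoc substitution $\theta$ with the header instantiation $\rho$. Here I would invoke the composition theorem (Theorem~\ref{th:composition}): applying $\theta$ to the result of $\rho$ on $G''$ coincides with applying the composite $\comp{\theta}{\rho}$ to $G''$, so $\hsub{\comp{\theta}{\rho}}{G''}{\hsubst{\theta}{G'}}$ is derivable. Because the only free term variables of $G''$ are the $x_i$, the composite $\comp{\theta}{\rho}$ agrees on $G''$ with $\{\langle x_i, \hsubst{\theta}{t_i}, \alpha_i\rangle\}$, the remaining bindings of $\theta$ being vacuous; uniqueness of hereditary substitution (Theorem~\ref{th:uniqueness}) then yields $\hsub{\{\langle x_i, \hsubst{\theta}{t_i}, \alpha_i\rangle\}}{G''}{\hsubst{\theta}{G'}}$, which is exactly the third premise needed to rebuild a $\bsinst$ derivation over $\mathbb{N} \cup \supportof{\theta}$. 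Verifying the arity-type compatibility of $\rho$ and $\theta$ that licenses Theorem~\ref{th:composition}, and keeping the arity contexts aligned so that $\mathbb{N} \cup \supportof{\theta}$ appears uniformly, are the bookkeeping obligations that make this the main obstacle.

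With the $\csinstone$ lemma in hand, the two inductions close immediately. For clause (1) the base case $\mathcal{G} = \emptycb$ follows from the axiom for $\wfctxvarty$, and in the step case $\mathcal{G} = \mathcal{G}'; G'$ the induction hypothesis supplies $\wfctxvarty{\mathbb{N} \cup \supportof{\theta}}{\Psi}{\ctxty{\mathcal{C}}{\hsubst{\theta}{\mathcal{G}'}}}$ while the lemma supplies $\csinstone{\mathbb{N} \cup \supportof{\theta}}{\Psi}{\mathcal{C}}{\hsubst{\theta}{G'}}$, and the two combine under the second $\wfctxvarty$ rule. Clause (2) proceeds analogously across its four rules, each step appealing to the induction hypothesis for the smaller instantiation and, wherever a fresh block is appended, to the lemma for $\csinstone$, using the fixed $\Xi$ and the monotonicity of the side conditions in $\mathbb{N}$ noted above.
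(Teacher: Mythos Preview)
Your proposal is correct and follows essentially the same approach as the paper: both proofs isolate the one-step instantiation lemma for $\csinstone{}{}{}{}$ as the core, derive it from the typing preservation supplied by Theorem~\ref{th:aritysubs}, and then close the two clauses by straightforward inductions on the $\wfctxvarty{}{}{}$ and $\ctxtyinst{}{}{}{}{}$ derivations. Your treatment is more explicit than the paper's, which dispatches the commutation step in the $\bsinst{}{}{}{}$ case with a terse ``using the facts observed earlier''; your appeal to Theorem~\ref{th:composition} together with Theorems~\ref{th:vacuoussubs} and~\ref{th:uniqueness} to justify that $\hsubst{\theta}{\hsubst{\rho}{G''}} = \hsubst{\{\langle x_i,\hsubst{\theta}{t_i},\alpha_i\rangle\}}{G''}$ makes precise what the paper leaves implicit.
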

\begin{proof}
Since $\theta$ is arity type preserving with respect to 
$\noms \cup \STLCGamma_0 \cup \Psi$, using Theorem~\ref{th:aritysubs} we see that 
if there is a derivation for 
$\stlctyjudg{\mathbb{N} \cup (\aritysum{\context{\theta}}{\Psi}) \cup \STLCGamma_0}
            {t}
            {\alpha}$ 
then $\hsubst{\theta}{t}$ must be well-defined and there must be a
derivation for 
$\stlctyjudg{(\mathbb{N} \cup \supportof{\theta}
               \cup \Psi
               \cup \STLCGamma_0}
            {\hsubst{\theta}{t}}
            {\alpha}$.
An induction on the derivation of 
$\wfctxvarty{\mathbb{N}}
            {\aritysum{\context{\theta}}{\Psi}}
            {\ctxty{\mathcal{C}}{\mathcal{G}}}$
using these observations allows us to confirm the first part of the
theorem.
Now suppose that there is a derivation for
$\csinstone{\mathbb{N}}{\aritysum{\context{\theta}}{\Psi}}{\mathcal{C}}{G'}$.
By an induction on this derivation using the facts observed earlier,
it can be concluded that there must be a derivation for
$\csinstone{\mathbb{N} \cup \supportof{\theta}}
           {\Psi}
           {\mathcal{C}}
           {\hsubst{\theta}{G'}}$.
The second part of the theorem follows by another obvious induction from this.
\end{proof}

We now define the notion of sequents that underlies our proof system.

\begin{definition}\label{def:sequent}
A sequent, written as $\seq[\mathbb{N}]{\Psi}{\Xi}{\Omega}{F}$, is a
judgement that relates a finite subset $\mathbb{N}$ of $\noms$, a term
variables context $\Psi$, a
context variables context $\Xi$, a finite set $\Omega$ of
\emph{assumption formulas} and a \emph{conclusion} or \emph{goal} formula $F$.
The sequent is well-formed if (a)~for each
$\ctxvarty{\Gamma}{\mathbb{N}_\Gamma}{\ctxty{\mathcal{C}}{\mathcal{G}}}$ in
$\Xi$ it is the case that $\mathbb{N}_\Gamma \subseteq \mathbb{N}$ and
that
$\wfctxvarty{\mathbb{N} \setminus \mathbb{N}_\Gamma}
            {\Psi}
            {\ctxty{\mathcal{C}}{\mathcal{G}}}$ 
is derivable and (b)~for each formula $F'$ in $\{F\} \cup \Omega$ it is
the case that $\wfform{\mathbb{N} \cup \Psi \cup \Theta_0}{\ctxsanstype{\Xi}}{F'}$
is derivable. 
Given a well-formed sequent $\seq[\mathbb{N}]{\Psi}{\Xi}{\Omega}{F}$,
we will refer to $\mathbb{N}$ as its \emph{support set}, to $\Psi$ as its
\emph{term variables context} and $\Xi$ as its \emph{context variables
  context}, and we will denote the collection of variables assigned
types by $\Psi$ and $\Xi$ by $\domain{\Psi}$ and $\domain{\Xi}$,
respectively. 
We will use a comma to denote set union in representing sequents, writing
$\setand{\Omega}{F_1,\ldots,F_n}$ to denote the set
$\Omega\cup\{F_1,\ldots,F_n\}$.
\end{definition}

We will need to consider substitutions for term variables in
sequents. 
We will require legitimate substitutions to not use the nominal
constants in the support set of the sequent; this restriction will be
part of a mechanism for controlling dependencies in context
declarations.  
We will further require substitutions to satisfy arity typing
constraints for their applications to be well-defined.
These considerations are formalized below in the notion of
substitution compatibility. 

\begin{definition}\label{def:seq-term-subst}
A pair $\langle \theta, \Psi' \rangle$ comprising a
substitution and a term variables context is said
to be \emph{substitution compatible} with a well-formed sequent
$\mathcal{S} = \seq[\mathbb{N}]{\Psi}{\Xi}{\Omega}{F}$ if
\begin{enumerate}
\item $\theta$ is arity type preserving with respect to the context
  $\noms \cup \Theta_0 \cup \Psi'$, 
  
\item $\supportof{\theta} \cap \mathbb{N} = \emptyset$, and 

\item for any variable $x$, if $x : \alpha \in \Psi$ and
  $x : \alpha' \in \aritysum{\context{\theta}}{\Psi'}$, then $\alpha =
  \alpha'$. 
\end{enumerate}
\end{definition}

The application of a substitution may introduce new nominal constants
into a sequent.
When this happens, substitutions for the term variables in the
resulting sequent must be permitted to contain these constants. 
We use the technique of raising to realize this
requirement~\cite{miller92jsc}. 
The following definition is useful in formalizing this idea.
\begin{definition}\label{def:raising-subst}
Let $\Psi$ be the term variables context $\{x_1:\alpha_1,\ldots,x_m:\alpha_m\}$,
let $n_1,\ldots,n_k$ be a listing of the elements of a finite 
collection of the nominal constants $\mathbb{N}$,
and let $\beta_1,\ldots,\beta_k$ be the arity types
associated with these constants.  
Then a version of $\Psi$ raised over $\mathbb{N}$ is a set
$\{y_1 : \gamma_1,\ldots, y_m : \gamma_m\}$ where, for $1 \leq i \leq
m$, $y_i$ is a distinct variable that is also different from the
variables in $\{x_1,\ldots,x_m\}$ and $\gamma_i$ is
$\beta_1 \atyarr \cdots \atyarr \beta_k \atyarr \alpha_i$.
Further, the raising substitution associated with this version is the set
$\{\langle x_i, (y_i \app n_1 \app \ldots \app n_k),
\alpha_i\rangle\ |\ 1 \leq i \leq m\}$.
\end{definition}

The basis for using raising in the manner described is the content of
the following theorem.
We say here and elsewhere that an arity context $\STLCGamma$ is
compatible with $\noms$ if the types that $\STLCGamma$ assigns to
nominal constants are identical to their assignments in $\noms$.

\begin{theorem}\label{th:raised-subs}
Let $\theta$ be a substitution that is arity type preserving with respect to
an arity context $\STLCGamma$ that is compatible with $\noms$.
Further, let $\Psi$ be a version of $\context{\theta}$ raised over
some listing of a collection
$\mathbb{N}$ of nominal constants and let $\theta_r$ be the associated 
raising substitution.
Then there is a substitution $\theta'$ with
$\supportof{\theta'} = \supportof{\theta}\setminus \mathbb{N}$
and $\context{\theta'} = \Psi$ that is
arity type preserving with respect to $\STLCGamma$
and such that for any $E$ for which
$\wftype{\aritysum{\context{\theta}}{\STLCGamma}}{E}$ or, for some 
arity type $\alpha$,
$\stlctyjudg{\aritysum{\context{\theta}}{\STLCGamma}}{E}{\alpha}$ has a  
derivation, it is the case that 
 $\hsubst{\theta'}{\hsubst{\theta_r}{E}} =
\hsubst{\theta}{E}$.  
\end{theorem}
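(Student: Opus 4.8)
The plan is to build $\theta'$ as a \emph{lowering} of $\theta$ that abstracts the constants of $\mathbb{N}$ out of its range terms, and then to reduce the required identity to a fact about substitution composition, so that Theorem~\ref{th:composition} can discharge the universally quantified ``for any $E$'' obligation in one stroke. Write $\theta = \{\langle x_1,M_1,\alpha_1\rangle,\ldots,\langle x_m,M_m,\alpha_m\rangle\}$, so that $\context{\theta} = \{x_1:\alpha_1,\ldots,x_m:\alpha_m\}$, and let $\Psi = \{y_1:\gamma_1,\ldots,y_m:\gamma_m\}$ with $\gamma_i = \beta_1 \atyarr \cdots \atyarr \beta_k \atyarr \alpha_i$ be the raised version fixed by Definition~\ref{def:raising-subst}. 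I would set $\theta' = \{\langle y_i,\ \lflam{z_1}{\cdots \lflam{z_k}{M_i'}},\ \gamma_i\rangle \mid 1 \leq i \leq m\}$, where $z_1,\ldots,z_k$ are fresh variables and $M_i'$ is $M_i$ with each occurrence of $n_j$ replaced by $z_j$. With this choice $\context{\theta'} = \Psi$ is immediate, and because the abstraction removes from each range term exactly the constants of $\mathbb{N}$, we get $\supportof{\theta'} = \supportof{\theta}\setminus\mathbb{N}$.

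The first substantive step is to check that $\theta'$ is arity type preserving with respect to $\STLCGamma$. Here the compatibility of $\STLCGamma$ with $\noms$ is essential: it guarantees that each $n_j$ carries its $\noms$-type $\beta_j$ in $\STLCGamma$, so that replacing the $n_j$ in $M_i$ by fresh variables $z_j$ of the same arity types sends the derivation of $\stlctyjudg{\STLCGamma}{M_i}{\alpha_i}$ to one of $\stlctyjudg{\aritysum{\{z_1:\beta_1,\ldots,z_k:\beta_k\}}{\STLCGamma}}{M_i'}{\alpha_i}$. I would record this constant-for-variable swap as a small auxiliary lemma, proved by induction on the arity-typing derivation. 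Applying the abstraction rule of Figure~\ref{fig:aritytyping} $k$ times then yields $\stlctyjudg{\STLCGamma}{\lflam{z_1}{\cdots \lflam{z_k}{M_i'}}}{\gamma_i}$, which is exactly what arity type preservation demands.

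The heart of the proof is the identity $\hsubst{\theta'}{\hsubst{\theta_r}{E}} = \hsubst{\theta}{E}$, which I would obtain through the composite $\comp{\theta'}{\theta_r}$ rather than by induction on $E$. First I would verify that $\theta_r$ and $\theta'$ are arity type compatible relative to $\STLCGamma$ in the sense of Definition~\ref{def:composition}: $\theta'$ is type preserving with respect to $\STLCGamma$ by the previous step, and $\theta_r$ is type preserving with respect to $\aritysum{\context{\theta'}}{\STLCGamma} = \aritysum{\Psi}{\STLCGamma}$, since there $y_i$ has type $\gamma_i$ and each $n_j$ has type $\beta_j$, making $(y_i \app n_1 \app \cdots \app n_k)$ well-typed at $\alpha_i$. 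A direct computation of the composition then shows $\comp{\theta'}{\theta_r} = \theta \cup \theta'$: the triple $\langle x_i, (y_i \app n_1 \app \cdots \app n_k),\alpha_i\rangle$ of $\theta_r$ contributes $\langle x_i, M_i, \alpha_i\rangle$, because hereditarily applying $\lflam{z_1}{\cdots \lflam{z_k}{M_i'}}$ to $n_1,\ldots,n_k$ reduces to $M_i'$ with each $z_j$ restored to $n_j$, namely $M_i$; and all of $\theta'$ survives because $\{y_1,\ldots,y_m\}$ is disjoint from $\domain{\theta_r}=\{x_1,\ldots,x_m\}$. Now for any $E$ meeting the hypotheses, $E$ is arity-well-formed relative to $\aritysum{\context{\theta}}{\STLCGamma}$, and hence, the added assignments being for the fresh $y_i$, also relative to $\aritysum{\context{\comp{\theta'}{\theta_r}}}{\STLCGamma}$. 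Theorems~\ref{th:aritysubs} and \ref{th:aritysubs-ty} make all three applications well-defined, Theorem~\ref{th:composition} gives $\hsubst{\comp{\theta'}{\theta_r}}{E} = \hsubst{\theta'}{\hsubst{\theta_r}{E}}$, and since the $y_i$ do not occur free in $E$ the $\theta'$-part of $\theta \cup \theta'$ acts vacuously, leaving $\hsubst{\comp{\theta'}{\theta_r}}{E} = \hsubst{\theta}{E}$.

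I expect the main obstacle to be bookkeeping rather than insight: keeping straight that the two inner applications and the composite are each well-defined on the stated class of expressions, and confirming the hypothesis of Theorem~\ref{th:composition} that $E$ respects the enlarged context $\aritysum{\context{\comp{\theta'}{\theta_r}}}{\STLCGamma}$. The two genuinely load-bearing technical facts are the constant-for-variable swap lemma used for arity preservation and the computation that hereditary application of $\lflam{z_1}{\cdots\lflam{z_k}{M_i'}}$ to $n_1,\ldots,n_k$ returns $M_i$; everything else is routine once the composition has been identified as $\theta \cup \theta'$.
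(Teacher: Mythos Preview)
Your proposal is correct and follows essentially the same approach as the paper: construct $\theta'$ by abstracting the constants of $\mathbb{N}$ out of each range term of $\theta$, verify the support and context claims directly, and then reduce the identity $\hsubst{\theta'}{\hsubst{\theta_r}{E}} = \hsubst{\theta}{E}$ to Theorem~\ref{th:composition} via the composite $\comp{\theta'}{\theta_r}$. Your treatment is in fact a bit more careful than the paper's terse proof, which writes ``$\theta = \theta' \circ \theta_r$'' where, as you correctly observe, the composite is really $\theta \cup \theta'$ with the $\theta'$-part acting vacuously on $E$ because the $y_i$ are fresh.
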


\begin{proof}
Each of the substitutions involved in the
expression
$\hsubst{\theta'}{\hsubst{\theta_r}{E}} = \hsubst{\theta}{E}$
will have a result under the conditions described, thereby justifying
the use of the notation introduced after Theorem~\ref{th:aritysubs}.
Now, let $n_1,\ldots,n_k$ be the listing of the constants in
$\mathbb{N}$ in the raising substitution, let
$\alpha_1,\ldots,\alpha_k$ be the respective types of these constants,
let $\langle x, t, \alpha \rangle$ be a tuple in $\theta$ and let
$\langle x, (y \app n_1 \app \cdots \app n_k), \alpha \rangle$ be the
tuple corresponding to $x$ in $\theta_r$. 
Let $x_1,\ldots,x_k$ be a listing of distinct variables that do not
appear in $t$ and let $t'$ be the result of replacing $n_i$ by $x_i$
in $t$, for $1 \leq i \leq k$.
We construct $\theta'$ by including in it the substitution
$\langle y, \lflam{x_1}{\ldots\lflam{x_k}{t'}},
         \alpha_1 \atyarr \cdots \atyarr \alpha_k \atyarr \alpha 
\rangle$
for each case of the kind considered.
It is easy to see that $\supportof{\theta'} = \supportof{\theta}\setminus
\mathbb{N}$ and that $\theta_r$ and $\theta'$ are arity type
compatible with respect to $\STLCGamma \cup \noms$.
The remaining part of the theorem follows from noting that
$\theta = \theta' \circ \theta_r$ and using
Theorem~\ref{th:composition}. 
\end{proof} 

The following definition formalizes the application of a term
substitution to a well-formed sequent when the conditions of
substitution compatibility are met.
We assume here and elsewhere that the application of a term
substitution  to a set of formulas distributes to each member of the
set, 
its application to a context variables context
distributes to each context variable type in the context and its
application to a context variable type
$\ctxty{\mathcal{C}}{\mathcal{G}}$ distributes to each context block
in $\mathcal{G}$. 
\begin{definition}\label{def:seq-term-subst-app}
Let $\mathcal{S}$ be the well-formed sequent
$\seq[\mathbb{N}]{\Psi}{\Xi}{\Omega}{F}$
and let $\langle \theta, \Psi'\rangle$
be substitution compatible with $\mathcal{S}$.
Further, let $\Psi''$ be a version of $(\Psi \setminus
\context{\theta}) \cup \Psi'$  raised over $\supportof{\theta}$ and
let $\theta_r$ be the corresponding raising substitution. 
Then the application of $\theta$ to $\mathcal{S}$ relative to $\Psi'$,
denoted by $\hsubstseq{\Psi'}{\theta}{\mathcal{S}}$, is the sequent
$\seq[\mathbb{N} \cup \supportof{\theta}]
     {\Psi''} 
     {\hsubst{\theta_r}{\hsubst{\theta}{\Xi}}}
     {\hsubst{\theta_r}{\hsubst{\theta}{\Omega}}}
     {\hsubst{\theta_r}{\hsubst{\theta}{F}}}$.
\end{definition}

The definition and notation above are obviously ambiguous since they 
depend on the particular choices of $\Psi''$ and $\theta_r$.
We shall mean $\hsubstseq{\Psi'}{\theta}{\mathcal{S}}$ to denote any
one of the sequents so determined, referring to $\Psi''$ and
$\theta_r$ as the raised context and the raising substitution
associated with the application of the substitution where
disambiguation is needed.
Note also that the definition assumes that the application of the
substitutions $\theta$ and $\theta_r$ to the relevant context variable
types and formulas is well-defined.
We show this to be the case in the theorem below. 
\begin{theorem}\label{th:seq-term-subs-ok}
Let $\langle \theta, \Psi' \rangle$ be substitution compatible with
the well-formed sequent $\mathcal{S}$. Then
$\hsubstseq{\Psi}{\theta}{\mathcal{S}}$ is well-defined and it yields
a well-formed sequent.
\end{theorem}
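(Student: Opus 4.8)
The plan is to unfold Definition~\ref{def:seq-term-subst-app} and check two things for the sequent $\hsubstseq{\Psi'}{\theta}{\mathcal{S}}$ that it prescribes: first, that each of the substitution instances $\hsubst{\theta_r}{\hsubst{\theta}{\Xi}}$, $\hsubst{\theta_r}{\hsubst{\theta}{\Omega}}$ and $\hsubst{\theta_r}{\hsubst{\theta}{F}}$ is well-defined; and second, that the result meets conditions (a) and (b) of Definition~\ref{def:sequent} relative to the new support set $\mathbb{N}\cup\supportof{\theta}$ and raised term-variables context $\Psi''$. I would treat the substitution as the two successive stages the definition prescribes, namely the application of $\theta$ followed by the application of the raising substitution $\theta_r$ that carries the remaining variables $(\Psi\setminus\context{\theta})\cup\Psi'$ to their raised counterparts in $\Psi''$. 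Since neither substitution touches context-variable names, $\ctxsanstype{\hsubst{\theta_r}{\hsubst{\theta}{\Xi}}}=\ctxsanstype{\Xi}$ throughout, so the context-variable collection parameterizing the formula wellformedness judgements never changes.

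For the goal and assumption formulas (condition (b)), the main tool is Theorem~\ref{th:subst-formula}(1), applied once for $\theta$ and once for $\theta_r$. Each $F'\in\Omega\cup\{F\}$ satisfies $\wfform{\mathbb{N}\cup\Psi\cup\STLCGamma_0}{\ctxsanstype{\Xi}}{F'}$ by wellformedness of $\mathcal{S}$. Using compatibility clause (3) to see that $\Psi$ and $\context{\theta}$ agree on shared variables, I would exhibit $\mathbb{N}\cup\Psi\cup\STLCGamma_0$ as a subset of $\aritysum{\context{\theta}}{\STLCGamma_1}$ with $\STLCGamma_1=(\mathbb{N}\cup\supportof{\theta})\cup((\Psi\setminus\context{\theta})\cup\Psi')\cup\STLCGamma_0$, weaken via Theorem~\ref{th:wfsupset}, and observe (after a routine monotonicity-of-arity-typing step) that $\theta$ is arity type preserving with respect to $\STLCGamma_1$, its range terms mentioning only constants in $\supportof{\theta}$ and variables in $\Psi'$. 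Theorem~\ref{th:subst-formula}(1) then yields a unique $\hsubst{\theta}{F'}$ well-formed with respect to $\STLCGamma_1$, and a second application, now with $\theta_r$ (which is arity type preserving with respect to $\noms\cup\STLCGamma_0\cup\Psi''$ and whose context is $(\Psi\setminus\context{\theta})\cup\Psi'$), delivers the well-defined $\hsubst{\theta_r}{\hsubst{\theta}{F'}}$ together with $\wfform{(\mathbb{N}\cup\supportof{\theta})\cup\Psi''\cup\STLCGamma_0}{\ctxsanstype{\Xi}}{\hsubst{\theta_r}{\hsubst{\theta}{F'}}}$, which is precisely condition (b) for the result.

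The context-variable types (condition (a)) are handled in the same two-stage fashion, with Theorem~\ref{th:ctxtyinst-hsubst}(1) replacing Theorem~\ref{th:subst-formula}(1) and Theorem~\ref{th:ctx-ty-wk}(1) supplying the weakening. For an entry $\ctxvarty{\Gamma}{\mathbb{N}_\Gamma}{\ctxty{\mathcal{C}}{\mathcal{G}}}$ of $\Xi$, wellformedness of $\mathcal{S}$ gives $\mathbb{N}_\Gamma\subseteq\mathbb{N}$ and $\wfctxvarty{\mathbb{N}\setminus\mathbb{N}_\Gamma}{\Psi}{\ctxty{\mathcal{C}}{\mathcal{G}}}$. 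After weakening the term-variables context and applying Theorem~\ref{th:ctxtyinst-hsubst}(1) first to $\theta$ and then to $\theta_r$ (noting $\supportof{\theta_r}=\supportof{\theta}$), I obtain $\wfctxvarty{(\mathbb{N}\setminus\mathbb{N}_\Gamma)\cup\supportof{\theta}}{\Psi''}{\ctxty{\mathcal{C}}{\hsubst{\theta_r}{\hsubst{\theta}{\mathcal{G}}}}}$. Here compatibility clause (2), $\supportof{\theta}\cap\mathbb{N}=\emptyset$, is what lets me rewrite $(\mathbb{N}\setminus\mathbb{N}_\Gamma)\cup\supportof{\theta}$ as $(\mathbb{N}\cup\supportof{\theta})\setminus\mathbb{N}_\Gamma$, so that keeping the annotation $\mathbb{N}_\Gamma$ unchanged satisfies condition (a) against the new support set, with $\mathbb{N}_\Gamma\subseteq\mathbb{N}\cup\supportof{\theta}$ immediate. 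Retaining $\mathbb{N}_\Gamma$ fixed is in fact forced: $\hsubst{\theta}{\mathcal{G}}$ may genuinely contain the freshly introduced constants of $\supportof{\theta}$, which must therefore remain available in the arity context used to check the blocks.

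I expect the principal difficulty to be the bookkeeping of arity contexts across the two stages, in particular presenting each intermediate judgement in the exact $\aritysum{\context{\cdot}}{\cdot}$ shape demanded by Theorems~\ref{th:subst-formula} and~\ref{th:ctxtyinst-hsubst}, and tracking how the nominal-constant set, the annotations $\mathbb{N}_\Gamma$, and the raised context $\Psi''$ evolve. The three substitution-compatibility clauses are exactly what make these alignments go through: clause (3) reconciles the arity types in $\Psi$ with those in $\context{\theta}$, clause (1) supplies the arity type preservation needed at each stage, and clause (2) governs the interaction of the fresh support $\supportof{\theta}$ with the context-variable annotations. The ambiguity in the choice of $\Psi''$ and $\theta_r$ remarked after Definition~\ref{def:seq-term-subst-app} is harmless, since the argument goes through verbatim for any admissible choice.
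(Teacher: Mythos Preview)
Your proposal is correct and follows essentially the same approach as the paper's proof: both treat the application as the two successive stages $\theta$ then $\theta_r$, invoke Theorem~\ref{th:ctxtyinst-hsubst} with Theorem~\ref{th:ctx-ty-wk} for the context-variable types and Theorem~\ref{th:subst-formula} with Theorem~\ref{th:wfsupset} for the formulas, and use the three compatibility clauses in exactly the ways you describe (including the rewrite $(\mathbb{N}\setminus\mathbb{N}_\Gamma)\cup\supportof{\theta}=(\mathbb{N}\cup\supportof{\theta})\setminus\mathbb{N}_\Gamma$ from clause~(2)). The only cosmetic difference is that the paper handles the context-variable types first and the formulas second, whereas you do the reverse.
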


\begin{proof}
As in Definition~\ref{def:seq-term-subst-app}, let
$S = \seq[\mathbb{N}]{\Psi}{\Xi}{\Omega}{F}$, let 
$\Psi''$ be the raised context and let $\theta_r$ be the corresponding 
raising substitution.
To prove the theorem, it suffices to show the following:
\begin{enumerate}
\item for each 
  $\ctxvarty{\Gamma}{\mathbb{N}_\Gamma}{\ctxty{\mathcal{C}}{\mathcal{G}}}$
  in $\Xi$ it is the case that $\mathbb{N}_\Gamma \subseteq \mathbb{N}
  \cup \supportof{\theta}$ and there is a derivation for the judgement
$\wfctxvarty{(\mathbb{N}\cup \supportof{\theta}) \setminus \mathbb{N}_\Gamma}
    {\Psi''}
    {\ctxty{\mathcal{C}}{\hsubst{\theta_r}{\hsubst{\theta}{\mathcal{G}}}}}$,
  and
\item for each formula $F'$ in $\{F\} \cup \Omega$ it is
the case that $\wfform{(\mathbb{N} \cup \supportof{\theta}) \cup
  \Psi'' \cup
  \Theta_0}{\ctxsanstype{\Xi}}{\hsubst{\theta_r}{\hsubst{\theta}{F'}}}$ is
derivable.
\end{enumerate}
Implicit here is the requirement that the relevant substititions are
well-defined.

\medskip
We first show (1).
Since $\mathcal{S}$ is well-formed, it must be the case that
$\mathbb{N}_\Gamma \subseteq \mathbb{N}$ so the first part of this
requirement obviously holds.
The wellformedness of $\mathcal{S}$ implies that 
$\wfctxvarty{\mathbb{N}\setminus\mathbb{N}_{\Gamma}}
            {\Psi}
            {\ctxty{\mathcal{C}}{\mathcal{G}}}$
has a derivation.
The substitution compatibility of $\seqsub{\theta}{\Psi'}$ with $\mathcal{S}$
implies that $\context{\theta}$ and $\Psi'\setminus\context{\theta}$ agree with
$\Psi$ on the type assignments to the variables that are common to
them, from which it follows that 
$\Psi \subseteq \aritysum{\context{\theta}}
                         {((\Psi\setminus\context{\theta})\cup\Psi')}$.
Theorem~\ref{th:ctx-ty-wk} now allows us to conclude that there must
be a derivation for 
$\wfctxvarty{(\mathbb{N}\setminus\mathbb{N}_{\Gamma}}
            {\aritysum{\context{\theta}}{((\Psi\setminus\context{\theta})\cup\Psi')}}
            {\ctxty{\mathcal{C}}{\mathcal{G}}}$.
From the substitution compatibility of $\seqsub{\theta}{\Psi'}$ with
$\mathcal{S}$ it also follows that $\theta$ must be type preserving
with respect to
$(\mathbb{N} \cup \supportof{\theta}) \cup
    \STLCGamma_0 \cup ((\Psi \setminus \context{\theta}) \cup \Psi')$
    and therefore with respect to
$\noms \cup \STLCGamma_0 \cup ((\Psi\setminus\context{\theta})\cup\Psi')$
and, because $\supportof{\theta}$ is disjoint from $\mathbb{N}$, that 
$(\mathbb{N}\setminus\mathbb{N}_{\Gamma}) \cup \supportof{\theta} = 
(\mathbb{N}\cup \supportof{\theta}) \setminus\mathbb{N}_{\Gamma}$.
Using Theorem~\ref{th:ctxtyinst-hsubst} we now determine that
$\wfctxvarty{(\mathbb{N}\cup \supportof{\theta})\setminus\mathbb{N}_{\Gamma}}
            {(\Psi\setminus\context{\theta})\cup\Psi'}
            {\ctxty{\mathcal{C}}{\hsubst{\theta}{\mathcal{G}}}}$
has a derivation.
From Definitions~\ref{def:raising-subst} and
\ref{def:seq-term-subst-app}, it follows that $\theta_r$ is type 
preserving with respect to $\noms\cup\STLCGamma_0\cup\Psi''$, that
$\context{\theta_r}=(\Psi\setminus\context{\theta})\cup\Psi'$, and
that $\supportof{\theta_r} \subseteq
(\mathbb{N}\cup \supportof{\theta})\setminus\mathbb{N}_{\Gamma}$.
Thus, using Theorem~\ref{th:ctxtyinst-hsubst} again, we determine that
$\wfctxvarty{(\mathbb{N}\cup \supportof{\theta})\setminus\mathbb{N}_{\Gamma}}
            {\Psi''}
            {\ctxty{\mathcal{C}}{\hsubst{\theta_r}{\hsubst{\theta}{\mathcal{G}}}}}$
has a derivation.

\medskip
We now consider requirement (2).
Let $F'$ be a formula in $\{F\} \cup \Omega$.
As we have seen already, $\theta$ is type preserving with respect to
the arity context 
$(\mathbb{N} \cup \supportof{\theta}) \cup
\STLCGamma_0 \cup ((\Psi \setminus \context{\theta}) \cup \Psi')$.
Since $S$ is well-formed, there must be a derivation for
$\wfform{\mathbb{N} \cup \Psi \cup \Theta_0}{\ctxsanstype{\Xi}}{F'}$.
Since $\context{\theta}$ and $\Psi$ agree on the type assignments to
common variables, we can conclude, using Theorem~\ref{th:wfsupset},
that there must be a derivation for
$\wfform{\aritysum{\context{\theta}}
                  {((\mathbb{N} \cup\supportof{\theta})
                   \cup ((\Psi \setminus \context{\theta}) \cup \Psi')
                   \cup \STLCGamma_0)}} 
        {\ctxsanstype{\Xi}}{F'}$.
From Theorem~\ref{th:subst-formula}, it then follows that
$\wfform{(\mathbb{N} \cup \supportof{\theta})
             \cup (\Psi \setminus \context{\theta}) \cup \Psi')
             \cup \STLCGamma_0}
        {\ctxsanstype{\Xi}}
        {\hsubst{\theta}{F'}}$
has a derivation.
This last judgement may be rewritten as
$\wfform{\aritysum{\context{\theta_r}}{((\mathbb{N} \cup
    \supportof{\theta}) \cup
    \STLCGamma_0})}{\ctxsanstype{\Xi}}{\hsubst{\theta}{F'}}$ from which, using
Theorem~\ref{th:wfsupset}, it follows that
$\wfform{\aritysum{\context{\theta_r}}
                  {((\mathbb{N} \cup \supportof{\theta})
                    \cup \Psi'' \cup \STLCGamma_0})}
        {\ctxsanstype{\Xi}}
        {\hsubst{\theta}{F'}}$
has a derivation.
Noting that $\theta_r$ is type preserving with respect to
$\supportof{\theta} \cup \Psi''$ and hence with respect to
$((\mathbb{N} \cup \supportof{\theta}) \cup \Psi'' \cup \STLCGamma_0$,
we may use Theorem~\ref{th:subst-formula} to conclude that there must
be a derivation for
$\wfform{(\mathbb{N} \cup \supportof{\theta})
            \cup \Psi'' \cup \STLCGamma_0}
        {\ctxsanstype{\Xi}}
        {\hsubst{\theta_r}{\hsubst{\theta}{F'}}}$.
\end{proof}

We will also need to consider the application of context variables
substitutions to sequents. 
We require such substitutions to respect the types associated with
the variables. 
In contrast to term variables substitutions, context variables
substitutions are not permitted to introduce new term variables into
the sequent and they may use nominal constants that are already
present in the sequent.
We formalize these ideas in the definition of appropriateness for such
subsitutions.
\begin{definition}\label{def:seq-ctxt-subst}
Let $\Xi$ be a context variables context and let $\sigma$ be the
context variables substitution $\{G_1/\Gamma_1,\ldots,G_n/\Gamma_n\}$.
We write $\ctxvarminus{\Xi}{\sigma}$ to denote the result of trimming
$\Xi$ to exclude associations for the variables in $\domain{\sigma}$. 
We say that $\sigma$ is appropriate for a context variables context $\Xi$
with respect to a term variables context $\Psi$ and a collection of nominal
constants $\mathbb{N}$ if, for $1 \leq i \leq n$, it is the case that
$\ctxvarty{\Gamma_i}{\mathbb{N}_i}{\ctxty{\mathcal{C}_i}{\mathcal{G}_i}}
\in \Xi$ and, further,
$\wfctxvarty{\mathbb{N}}{\Psi}{\ctxty{\mathcal{C}_i}{\mathcal{G}_i}}$
and 
$\ctxtyinst{\supportof{\sigma} \setminus \mathbb{N}_i}
           {\Psi}
           {\ctxvarminus{\Xi}{\sigma}}
           {\ctxty{\mathcal{C}_i}{\mathcal{G}_i}}{G_i}$
have derivations.
The substitution $\sigma$ is additionally said to cover $\Xi$ if
$\ctxvarminus{\Xi}{\sigma}= \emptyset$. 
We say that $\sigma$ is appropriate for a well-formed sequent
$\mathcal{S}=\seq[\mathbb{N}]{\Psi}{\Xi}{\Omega}{F}$ if it is
appropriate for $\Xi$ with respect to $\Psi$ and $\mathbb{N}$  and
that it covers $\mathcal{S}$ if it covers $\Xi$.
\end{definition}

Context types are unaffected by context variables substitutions.
Context expressions may be impacted by such substitutions but, for the
right kind of substitutions, they continue to be instances of relevant
context types.
This is made precise in the following theorem.

\begin{theorem}\label{th:ctxtyinst-subst}
Let $\sigma$ be a context variables substitution that is appropriate for 
$\Xi$ with respect to $\Psi$ and $\mathbb{N}$ and let 
$\supportof{\sigma} \subseteq \mathbb{N}$.
If 
$\ctxtyinst{\mathbb{N}}{\Psi}{\Xi}{\ctxty{\mathcal{C}}{\mathcal{G}}}{G}$
has a derivation, then 
$\ctxtyinst{\mathbb{N}}
           {\Psi}
           {\ctxvarminus{\Xi}{\sigma}}
           {\ctxty{\mathcal{C}}{\mathcal{G}}}
           {\subst{\sigma}{G}}$
also has a derivation.
\end{theorem}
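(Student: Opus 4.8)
The plan is to argue by induction on the derivation of $\ctxtyinst{\mathbb{N}}{\Psi}{\Xi}{\ctxty{\mathcal{C}}{\mathcal{G}}}{G}$, proceeding by cases on its last rule (Figure~\ref{fig:wfctxvar}). Three of the four rules are routine. For the base rule, $G$ is $\emptyce$ and $\mathcal{G}$ is $\emptycb$; since $\subst{\sigma}{\emptyce}$ is $\emptyce$, the same rule yields the conclusion over $\ctxvarminus{\Xi}{\sigma}$. For the two append rules, $G$ has the form $G_0, G'$ where $G'$ is either a one-step instantiation of $\mathcal{C}$ or an explicit block recorded in $\mathcal{G}$; in either case $G'$ consists solely of bindings of nominal constants and hence contains no context variable, so $\subst{\sigma}{G'}$ is $G'$ and $\subst{\sigma}{(G_0,G')}$ is $\subst{\sigma}{G_0}, G'$. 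Applying the induction hypothesis to the premise for $G_0$ and then reusing the same append rule (the side condition $\csinstone{\mathbb{N}}{\Psi}{\mathcal{C}}{G'}$ being unaffected by $\sigma$) gives the result.

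The interesting case is the context-variable rule, where $G$ is a variable $\Gamma$, $\mathcal{G}$ is $\emptycb$, and the premises supply $\ctxvarty{\Gamma}{\mathbb{N}_\Gamma}{\ctxty{\mathcal{C}}{\mathcal{G}_\Gamma}} \in \Xi$ together with $(\noms \setminus \mathbb{N}_\Gamma) \subseteq \mathbb{N}$. If $\Gamma \notin \domain{\sigma}$, then $\subst{\sigma}{\Gamma}$ is $\Gamma$ and the binding for $\Gamma$ survives in $\ctxvarminus{\Xi}{\sigma}$, so the very same rule applies. If $\Gamma \in \domain{\sigma}$, write $G_\Gamma$ for $\subst{\sigma}{\Gamma}$. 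The appropriateness of $\sigma$ (Definition~\ref{def:seq-ctxt-subst}) furnishes both a derivation of $\wfctxvarty{\mathbb{N}}{\Psi}{\ctxty{\mathcal{C}}{\mathcal{G}_\Gamma}}$ and a derivation of $\ctxtyinst{\supportof{\sigma}\setminus\mathbb{N}_\Gamma}{\Psi}{\ctxvarminus{\Xi}{\sigma}}{\ctxty{\mathcal{C}}{\mathcal{G}_\Gamma}}{G_\Gamma}$. Since $\supportof{\sigma} \subseteq \mathbb{N}$ by hypothesis, $\supportof{\sigma}\setminus\mathbb{N}_\Gamma \subseteq \mathbb{N}$, so Theorem~\ref{th:ctx-ty-wk} lets me raise the support set of this instance derivation to $\mathbb{N}$, keeping $\Psi$ and $\ctxvarminus{\Xi}{\sigma}$ fixed.

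At this point there remains a mismatch between the two block-sequence annotations: the goal requires $G_\Gamma$ to be an instance of $\ctxty{\mathcal{C}}{\emptycb}$, whereas what I have is an instance of $\ctxty{\mathcal{C}}{\mathcal{G}_\Gamma}$ with $\mathcal{G}_\Gamma$ possibly nonempty. Bridging this gap is the crux of the proof, and I would dispatch it with an auxiliary observation: whenever $\wfctxvarty{\mathbb{N}}{\Psi}{\ctxty{\mathcal{C}}{\mathcal{G}'}}$ holds, any instance of $\ctxty{\mathcal{C}}{\mathcal{G}'}$ relative to $\mathbb{N}, \Psi$ and some $\Xi'$ is also an instance of $\ctxty{\mathcal{C}}{\emptycb}$ relative to the same data. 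This holds because well-formedness of $\ctxty{\mathcal{C}}{\mathcal{G}'}$ guarantees, by inversion on the rules of Figure~\ref{fig:wfctxvar}, that every block $G'$ recorded in $\mathcal{G}'$ satisfies $\csinstone{\mathbb{N}}{\Psi}{\mathcal{C}}{G'}$; hence in a derivation of the instance each use of the explicit-block append rule can be replaced by the one-step-instantiation append rule, converting the recorded type from $\ctxty{\mathcal{C}}{\mathcal{G}'}$ to $\ctxty{\mathcal{C}}{\emptycb}$. This sublemma is itself proved by a short induction on the instance derivation, carrying the well-formedness of the type as a side hypothesis and appealing to its inverted premises at each explicit-block step.

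Applying this sublemma to the instance obtained above yields $\ctxtyinst{\mathbb{N}}{\Psi}{\ctxvarminus{\Xi}{\sigma}}{\ctxty{\mathcal{C}}{\emptycb}}{G_\Gamma}$, which is exactly the required conclusion since $\mathcal{G}$ is $\emptycb$ and $\subst{\sigma}{\Gamma}$ is $G_\Gamma$. The main obstacle is therefore the reconciliation of the two block-sequence annotations via this demotion sublemma; everything else is bookkeeping with the weakening theorem and with the observation that instantiation blocks and explicit blocks carry no context variables and so are left fixed by $\sigma$.
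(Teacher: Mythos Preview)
Your proposal is correct and follows essentially the same route as the paper: an induction on the instance derivation, with the only nontrivial case being a substituted context variable, handled via appropriateness, the weakening result (Theorem~\ref{th:ctx-ty-wk}), and then a short auxiliary induction showing that an instance of $\ctxty{\mathcal{C}}{\mathcal{G}'}$ is also an instance of $\ctxty{\mathcal{C}}{\emptycb}$ when the former type is well-formed. The paper compresses your ``demotion sublemma'' into a one-line remark (``an easy induction on the first derivation''), but the content is the same.
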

\begin{proof}
By induction on the derivation of 
$\ctxtyinst{\mathbb{N}}
           {\Psi}
           {\Xi}
           {\ctxty{\mathcal{C}}{\mathcal{G}}}
           {G}$.
The only case that perhaps needs explicit consideration is that where
$G$ is a context variable $\Gamma$ for which $\sigma$ substitutes the
context expression $G'$.
In this case $\mathcal{G}$ must be empty and there must be an
assignment of the form
$\ctxvarty{\Gamma}
          {\mathbb{N}_\Gamma}
          {\ctxty{\mathcal{C}}}
          {\mathcal{G}'}$
in $\Xi$ for some $\mathbb{N}_\Gamma$ and $\mathcal{G'}$.
Further, by the appropriateness of $\sigma$, there
must be a derivations for
$\wfctxvarty{\mathbb{N}}{\Psi}{\ctxty{\mathcal{C}}{\mathcal{G}'}}$
and
$\ctxtyinst{\supportof{\sigma}\setminus\mathbb{N}_\Gamma}
           {\Psi}
           {\ctxvarminus{\Xi}{\sigma}}
           {\ctxty{\mathcal{C}}{\mathcal{G}'}}
           {G_i}$.
Using Theorem~\ref{th:ctx-ty-wk} and the assumption that
$\supportof{\sigma} \subseteq \mathbb{N}$, it follows 
that 
$\ctxtyinst{\mathbb{N}}
           {\Psi}
           {\ctxvarminus{\Xi}{\sigma}}
           {\ctxty{\mathcal{C}}{\mathcal{G}'}}
           {G_i}$
has a derivation.
An easy induction on the first derivation suffices to show that if
$\ctxtyinst{\mathbb{N}}
           {\Psi}
           {\ctxvarminus{\Xi}{\sigma}}
           {\ctxty{\mathcal{C}}{\mathcal{G}'}}
           {G_i}$
and
$\wfctxvarty{\mathbb{N}}{\Psi}{\ctxty{\mathcal{C}}{\mathcal{G}'}}$
have derivations then there must be one for 
$\ctxtyinst{\mathbb{N}}
           {\Psi}
           {\ctxvarminus{\Xi}{\sigma}}
           {\ctxty{\mathcal{C}}{\cdot}}
           {G_i}$.
\end{proof}

We formalize the application of a context variables substitution to a well-formed 
sequent when the conditions of appropriateness are met in the
definition below.
Note that context substitutions can also introduce new nominal
constants, and we use the technique of raising once again to 
permit these constants to be used in substitutions for term variables
in the resulting sequent. 
We assume here and elsewhere that the application of a context
variables substitution to a set of  formulas distributes to each member
of the set. 

\begin{definition}\label{def:cvar-subst-app}
Let $\mathcal{S}$ be a well-formed sequent $\seq[\mathbb{N}]{\Psi}{\Xi}{\Omega}{F}$
and $\sigma$ be appropriate for $\mathcal{S}$.
Further let $\Psi'$ be a version of $\Psi$ raised over $\supportof{\sigma}$
and let $\theta_r$ be the corresponding raising substitution.
Then the application of $\sigma$ to $\mathcal{S}$, 
denoted by $\subst{\sigma}{\mathcal{S}}$, is
the sequent
\[\seq[\mathbb{N}\cup\supportof{\sigma}]
     {\Psi'}
     {\hsubst{\theta_r}{\ctxvarminus{\Xi}{\sigma}}}
     {\hsubst{\theta_r}{\subst{\sigma}{\Omega}}}
     {\hsubst{\theta_r}{\subst{\sigma}{F}}}.\]
\end{definition}

Once again, we must justify the use in the above definition of the
notation introduced after Theorem~\ref{th:aritysubs}.
This is part of the content of the following theorem.

\begin{theorem}\label{th:seq-ctx-var-subst-okay}
Let $\mathcal{S}$ be a well-formed sequent and let $\sigma$ be a
context variables substitution that is appropriate for
$\mathcal{S}$. Then $\subst{\sigma}{\mathcal{S}}$ is a well-defined
and it yields a well-formed sequent.
\end{theorem}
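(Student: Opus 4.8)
The plan is to follow the template established in the proof of Theorem~\ref{th:seq-term-subs-ok}, adapting it from a single term-variables substitution to the composite of a context-variables substitution $\sigma$ followed by the raising substitution $\theta_r$. Writing $\mathcal{S} = \seq[\mathbb{N}]{\Psi}{\Xi}{\Omega}{F}$ and letting $\Psi'$ and $\theta_r$ be the raised context and raising substitution supplied by Definition~\ref{def:cvar-subst-app}, I would discharge the two clauses of Definition~\ref{def:sequent} for the resulting sequent
\[\seq[\mathbb{N}\cup\supportof{\sigma}]{\Psi'}{\hsubst{\theta_r}{\ctxvarminus{\Xi}{\sigma}}}{\hsubst{\theta_r}{\subst{\sigma}{\Omega}}}{\hsubst{\theta_r}{\subst{\sigma}{F}}},\]
confirming along the way that every substitution application is well-defined. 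A useful preliminary observation is that, since context types are unaffected by a context-variables substitution, $\sigma$ leaves each context block in $\ctxvarminus{\Xi}{\sigma}$ untouched; hence the context-variable-type clause has to account only for the raising $\theta_r$.

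For clause~(1), I would fix an association $\ctxvarty{\Gamma}{\mathbb{N}_\Gamma}{\ctxty{\mathcal{C}}{\mathcal{G}}}$ in $\ctxvarminus{\Xi}{\sigma}$. Wellformedness of $\mathcal{S}$ supplies $\mathbb{N}_\Gamma \subseteq \mathbb{N}$ and a derivation of $\wfctxvarty{\mathbb{N}\setminus\mathbb{N}_\Gamma}{\Psi}{\ctxty{\mathcal{C}}{\mathcal{G}}}$. I would first use Theorem~\ref{th:ctx-ty-wk}(1) to enlarge the term-variables context to $\aritysum{\context{\theta_r}}{\Psi'}$ (recall $\context{\theta_r}=\Psi$), and then invoke Theorem~\ref{th:ctxtyinst-hsubst}(1) with $\theta=\theta_r$ to push the substitution through, obtaining $\wfctxvarty{(\mathbb{N}\setminus\mathbb{N}_\Gamma)\cup\supportof{\theta_r}}{\Psi'}{\ctxty{\mathcal{C}}{\hsubst{\theta_r}{\mathcal{G}}}}$. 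Since $\supportof{\theta_r}\subseteq\supportof{\sigma}$ and $\mathbb{N}_\Gamma\subseteq\mathbb{N}\subseteq\mathbb{N}\cup\supportof{\sigma}$, a final application of Theorem~\ref{th:ctx-ty-wk}(1) reaches the required $\wfctxvarty{(\mathbb{N}\cup\supportof{\sigma})\setminus\mathbb{N}_\Gamma}{\Psi'}{\ctxty{\mathcal{C}}{\hsubst{\theta_r}{\mathcal{G}}}}$.

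For clause~(2), I would fix $F'\in\{F\}\cup\Omega$, so that wellformedness of $\mathcal{S}$ gives $\wfform{\mathbb{N}\cup\Psi\cup\STLCGamma_0}{\ctxsanstype{\Xi}}{F'}$, and apply the substitution in two stages. First, appropriateness of $\sigma$ provides, for each $\Gamma_i\in\domain{\sigma}$, a derivation of $\wfctxvarty{\mathbb{N}}{\Psi}{\ctxty{\mathcal{C}_i}{\mathcal{G}_i}}$ together with an instance judgement for $G_i$; via Theorem~\ref{th:ctx-ty-wk} and Theorem~\ref{th:ctx-var-type-instance} these yield $\wfctx{(\mathbb{N}\cup\supportof{\sigma})\cup\Psi}{\ctxsanstype{\ctxvarminus{\Xi}{\sigma}}}{G_i}$, noting $\ctxsanstype{\Xi}\setminus\domain{\sigma}=\ctxsanstype{\ctxvarminus{\Xi}{\sigma}}$. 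After aligning arity contexts with Theorem~\ref{th:wfsupset}, Theorem~\ref{th:subst-formula}(2) then delivers $\wfform{(\mathbb{N}\cup\supportof{\sigma})\cup\Psi\cup\STLCGamma_0}{\ctxsanstype{\ctxvarminus{\Xi}{\sigma}}}{\subst{\sigma}{F'}}$. Second, since $\theta_r$ is arity type preserving with respect to $\noms\cup\STLCGamma_0\cup\Psi'$ with $\context{\theta_r}=\Psi$, rewriting the arity context in the form $\aritysum{\context{\theta_r}}{(\cdots)}$ and one further use of Theorem~\ref{th:wfsupset} lets me apply Theorem~\ref{th:subst-formula}(1) to conclude $\wfform{(\mathbb{N}\cup\supportof{\sigma})\cup\Psi'\cup\STLCGamma_0}{\ctxsanstype{\ctxvarminus{\Xi}{\sigma}}}{\hsubst{\theta_r}{\subst{\sigma}{F'}}}$, which simultaneously certifies that both substitution applications are defined.

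The main obstacle, I expect, is not any single step but the disciplined bookkeeping of the three nominal-constant collections $\mathbb{N}$, $\supportof{\sigma}$ and $\supportof{\theta_r}$ as they interact with the per-variable annotations $\mathbb{N}_\Gamma$. In particular, the passage in clause~(1) from $(\mathbb{N}\setminus\mathbb{N}_\Gamma)\cup\supportof{\theta_r}$ to $(\mathbb{N}\cup\supportof{\sigma})\setminus\mathbb{N}_\Gamma$ must exploit $\supportof{\theta_r}\subseteq\supportof{\sigma}$ together with the constraints built into appropriateness, and in clause~(2) the reconstruction of the hypotheses of Theorem~\ref{th:subst-formula}(2) from the instance judgements of the appropriateness definition is the step requiring the most care; everything else is a routine chaining of the cited structural theorems.
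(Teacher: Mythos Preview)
Your proposal is correct and follows essentially the same approach as the paper's proof. The paper is terser---it defers clause~(1) entirely to ``an argument similar to that in the proof of Theorem~\ref{th:seq-term-subs-ok}'' and sketches clause~(2) in two sentences---while you spell out the specific invocations of Theorems~\ref{th:ctx-ty-wk}, \ref{th:ctxtyinst-hsubst}, \ref{th:ctx-var-type-instance}, \ref{th:wfsupset} and \ref{th:subst-formula}, but the two-stage structure (apply $\sigma$, then apply the raising $\theta_r$) and the supporting lemmas are the same in both.
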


\begin{proof}
Let $\mathcal{S}$ be $\seq[\mathbb{N}]{\Psi}{\Xi}{\Omega}{F}$ and let
$\subst{\sigma}{\mathcal{S}}$ be
$\seq[\mathbb{N}\cup\supportof{\sigma}]
     {\Psi'}
     {\hsubst{\theta_r}{\ctxvarminus{\Xi}{\sigma}}}
     {\hsubst{\theta_r}{\subst{\sigma}{\Omega}}}
     {\hsubst{\theta_r}{\subst{\sigma}{F}}}$,
where $\theta_r$ and $\Psi'$ are the raising substitution and the
raised term variables context as per Definition~\ref{def:cvar-subst-app}. 
Using an argument similar to that in the proof of
Theorem~\ref{th:seq-term-subs-ok}, we can show that
$\wfctxvarty{(\mathbb{N} \cup \supportof{\sigma}) \setminus \mathbb{N}_\Gamma}
            {\Psi'}
            {\hsubst{\theta_r}{(\ctxty{\mathcal{C}}{\mathcal{G}})}}$
must have a derivation for each
$\ctxvarty{\Gamma}{\mathbb{N}_\Gamma}{\ctxty{\mathcal{C}}{\mathcal{G}}}$
in $\ctxvarminus{\Xi}{\sigma}$.
From the wellformedness of $\mathcal{S}$ and the appropriateness of
$\sigma$ for $\mathcal{S}$, it is easy to see that 
$\wfform{(\mathbb{N} \cup \supportof{\sigma})
            \cup \Psi \cup \STLCGamma_0}
        {\ctxsanstype{\ctxvarminus{\Xi}{\sigma}}}
        {\subst{\sigma}{F'}}$
must have a derivation for each $F' \in \Omega \cup \{F\}$.
Noting that the variables assigned types by $\Psi'$ are distinct from
those assigned types by $\Psi$ and that $\context{\theta_r} = \Psi$,
we may use Theorems~\ref{th:wfsupset} and \ref{th:subst-formula} to
conclude that 
$\hsubst{\theta_r}{\subst{\sigma}{F'}}$ is well-defined and that
$\wfform{(\mathbb{N} \cup \supportof{\sigma})
            \cup \Psi' \cup \STLCGamma_0}
        {\ctxsanstype{\ctxvarminus{\Xi}{\sigma}}}
        {\hsubst{\theta_r}{\subst{\sigma}{F'}}}$
has a derivation.
The theorem follows easily from these observations. 
\end{proof}

Term and context substitutions will often be used in tandem to produce
instances of sequents.
In this context, we shall say that \emph{$\theta$ and $\sigma$ are
proper for the sequent $\mathcal{S}$} if there is a term
variables context $\Psi$ such that $\langle \theta, \Psi \rangle$ is
substitution compatible with $\mathcal{S}$ and $\sigma$ is appropriate
for $\hsubstseq{\Psi}{\theta}{S}$. 

We now define validity for sequents.
For a sequent with empty term variables and context variables contexts,
we base the definition on the validity of closed formulas.
For a sequent that has formulas with free term and context variables,
this is done by considering all their relevant substitution instances.

\begin{definition}\label{def:seq-validity}
A well-formed sequent of the form
$\seq[\mathbb{N}]{\emptyset}{\emptyset}{\Omega}{F}$, referred to as a
\emph{closed} sequent, is valid if either some formula in $\Omega$ is
not valid or $F$ is valid.
A well-formed sequent $\mathcal{S}$ of the form
$\seq[\mathbb{N}]{\Psi}{\Xi}{\Omega}{F}$ is valid if
for every term substitution $\theta$ that is type preserving with
respect to $\noms \cup \Theta_0$ and such that
$\Psi = \context{\theta}$ and
$\langle \theta, \emptyset \rangle$ is 
substitution compatible with $\mathcal{S}$, and for every context
substitution $\sigma$ that is appropriate for and covers
$\hsubstseq{\emptyset}{\theta}{\mathcal{S}}$, it
is the case that
$\subst{\sigma}{\hsubstseq{\emptyset}{\theta}{\mathcal{S}}}$ is
valid. 
Note that each such
$\subst{\sigma}{\hsubstseq{\emptyset}{\theta}{\mathcal{S}}}$
will be a well-formed and closed sequent in these circumstances and we 
shall refer to it as the closed instance of $\mathcal{S}$ identified
by $\theta$ and $\sigma$. 
\end{definition}

The following theorem, whose proof is obvious, provides the basis for
using our proof system for determining the validity of formulas.

\begin{theorem}\label{seq-fmla-validity}
Let $F$ be a formula such that
$\wfform{\noms \cup \Theta_0}{\emptyset}{F}$ is derivable and let
$\mathbb{N}$ be the set of nominal constants that appear in $F$.
Then the sequent
$\seq[\mathbb{N}]{\emptyset}{\emptyset}{\emptyset}{F}$ is
well-formed.
Moreover, $F$ is valid if and only if
$\seq[\mathbb{N}]{\emptyset}{\emptyset}{\emptyset}{F}$ is.
\end{theorem}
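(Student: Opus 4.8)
The plan is to dispatch the two claims separately, treating well-formedness first and then the validity equivalence; both are essentially a matter of unwinding the relevant definitions.

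For well-formedness, I would appeal to Definition~\ref{def:sequent} with $\Psi = \emptyset$, $\Xi = \emptyset$ and $\Omega = \emptyset$. Condition (a) of that definition is then vacuous, since $\Xi$ contributes no context-variable assignments. Condition (b) reduces to the single requirement that $\wfform{\mathbb{N} \cup \STLCGamma_0}{\emptyset}{F}$ be derivable. This does not follow directly from the hypothesis $\wfform{\noms \cup \STLCGamma_0}{\emptyset}{F}$ via Theorem~\ref{th:wfsupset}, since that theorem only permits \emph{enlarging} the arity context, whereas here we must shrink $\noms \cup \STLCGamma_0$ to $\mathbb{N} \cup \STLCGamma_0$. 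Instead I would argue by a straightforward induction on the given wellformedness derivation, observing that its rules only ever consult arity-type assignments for nominal constants that actually occur in the context expressions and terms embedded in $F$; since by hypothesis $\mathbb{N}$ contains exactly those constants, every such lookup already succeeds in $\mathbb{N} \cup \STLCGamma_0$, and the same derivation transfers verbatim.

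For the validity equivalence, I would observe that $\seq[\mathbb{N}]{\emptyset}{\emptyset}{\emptyset}{F}$ is a \emph{closed} sequent in the sense of Definition~\ref{def:seq-validity}, its term- and context-variable contexts both being empty. Moreover, the wellformedness just established, with its empty variable contexts, shows that $F$ itself is a closed formula, so that its validity in the sense of Definition~\ref{def:semantics} is defined. Definition~\ref{def:seq-validity} declares such a closed sequent valid precisely when some assumption in $\Omega$ fails to be valid or the goal $F$ is valid. Here $\Omega = \emptyset$, so the first disjunct is vacuously false and the sequent is valid exactly when $F$ is valid, which is the desired biconditional.

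I do not anticipate a substantive obstacle: the content of the theorem is bookkeeping that aligns the notion of formula validity with that of sequent validity in the degenerate, assumption-free case. The one point requiring care is the arity-context restriction in the well-formedness argument, where Theorem~\ref{th:wfsupset} cannot be invoked in the direction needed and one must instead exploit that $\mathbb{N}$ is, by hypothesis, exactly the set of nominal constants appearing in $F$.
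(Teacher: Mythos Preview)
Your proposal is correct and matches what the paper intends: it declares the proof ``obvious'' and gives no further detail, and your argument is precisely the definitional unwinding one would expect. You are in fact more careful than the paper in flagging that condition~(b) of Definition~\ref{def:sequent} requires $\wfform{\mathbb{N}\cup\STLCGamma_0}{\emptyset}{F}$ rather than $\wfform{\noms\cup\STLCGamma_0}{\emptyset}{F}$, and that Theorem~\ref{th:wfsupset} goes the wrong way for this; your observation that the wellformedness derivation only consults assignments for nominal constants actually occurring in $F$ is the right fix, and the paper simply elides this point.
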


We will need in later discussions to consider the application of
permutations of the nominal constants to sequents.
We make this notion precise below. 

\begin{definition}\label{def:perm-seq}
We extend the application of a permutation $\pi$ of nominal constants
to a variety of expressions $E$ associated with sequents,
denoting the result as before by $\permute{\pi}{E}$.
The application of $\pi$ to a set $\mathbb{N}$ of
nominal constants yields the set
$\{n' |\ n\in\mathbb{N}\ \mbox{\rm and}\ \pi(n)=n'\}$.
The application of $\pi$ to a context variable type distributes to the
constituent block instances, its application to an association of the
form 
$\ctxvarty{\Gamma}{\mathbb{N}_\Gamma}{\ctxty{\mathcal{C}}{\mathcal{G}}}$
with context variables distributes to $\mathbb{N}_\Gamma$ and
$\ctxty{\mathcal{C}}{\mathcal{G}}$,
and its application to a collection of formulas or a context variables
context distributes to the members of the collection. 
Finally, the application of $\pi$ to a sequent
$\seq[\mathbb{N}]{\Psi}{\Xi}{\Omega}{F}$ 
yields the sequent
$\seq[\permute{\pi}{\mathbb{N}}]
     {\Psi}
     {\permute{\pi}{\Xi'}}
     {\permute{\pi}{\Omega'}}
     {\permute{\pi}{F'}}$. 
\end{definition}

The following theorem notes that wellformedness of sequents is
preserved under these permutations and that permutations commute in a
suitable sense with term and context variables substitutions.
The proof of the theorem is routine even if tedious; we omit the
details. 

\begin{theorem}\label{th:perm-subst}
Let $\pi$ be a permutation of the nominal constants.
If $\mathcal{S}$ is a well-formed sequent then so is
$\permute{\pi}{\mathcal{S}}$.
Further, if $\langle \theta, \Psi \rangle$ is 
substitution compatible with $\permute{\pi}{\mathcal{S}}$ then
$\langle \permute{\inv{\pi}}{\theta}, \Psi \rangle$ is substitution
compatible with $\mathcal{S}$ and
$\hsubstseq{\Psi}{\theta}{(\permute{\pi}{\mathcal{S}})} =
\permute{\pi}{(\hsubstseq{\Psi}{\permute{\inv{\pi}}{\theta}}{\mathcal{S}})}$.
Finally, if $\sigma$ is a context variables substitution that is
appropriate for $\permute{\pi}{\mathcal{S}}$ then
$\inv{\pi}{\sigma}$ is appropriate for $\mathcal{S}$ and
$\subst{\sigma}{(\permute{\pi}{\mathcal{S}})} =
\permute{\pi}{(\subst{\permute{\inv{\pi}}{\sigma}}{\mathcal{S}})}$.
\end{theorem}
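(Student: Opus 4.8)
The plan is to reduce all three assertions to a small family of permutation-invariance lemmas for the judgements on which sequent wellformedness and substitution are built, and then to obtain each assertion by unwinding the relevant definition. The common engine is the observation that a permutation $\pi$ acts on nominal constants exactly as a uniform, arity-type-preserving renaming; since every rule in the development treats nominal constants just like ordinary constants, $\pi$ commutes with all of them. I also record at the outset that $\pi$ fixes every schema $\mathcal{C}$, every term variable, the constant context $\STLCGamma_0$, and every context-variable name, so that $\ctxsanstype{\permute{\pi}{\Xi}} = \ctxsanstype{\Xi}$ and $\permute{\pi}{(\mathbb{N} \cup \Psi \cup \STLCGamma_0)} = \permute{\pi}{\mathbb{N}} \cup \Psi \cup \STLCGamma_0$.

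First I would establish the following lemmas, each by a routine induction on the cited derivation. (a) Permutation commutes with hereditary substitution: if $\hsub{\theta}{E}{E'}$ is derivable then so is $\hsub{\permute{\pi}{\theta}}{\permute{\pi}{E}}{\permute{\pi}{E'}}$, uniformly for $E$ a term, type, context or formula. (b) Permutation preserves the arity judgements: from derivations of $\stlctyjudg{\STLCGamma}{M}{\alpha}$, $\wftype{\STLCGamma}{A}$, $\wfctx{\STLCGamma}{\Xi}{G}$, or $\wfform{\STLCGamma}{\Xi}{F}$ one obtains derivations of the same judgements with $M,A,G,F$ and the nominal-constant part of $\STLCGamma$ replaced by their $\pi$-images, the arity annotations being untouched because $\pi$ preserves arity types. (c) Permutation preserves the instantiation judgements $\declinst{\mathbb{N}}{\Delta}{G}{\theta}$, $\bsinst{\mathbb{N}}{\Psi}{\mathcal{B}}{G}$, $\csinstone{\mathbb{N}}{\Psi}{\mathcal{C}}{G}$ and $\csinst{\mathbb{N}}{\Psi}{\mathcal{C}}{G}$, and hence the judgements $\wfctxvarty{\mathbb{N}}{\Psi}{\ctxty{\mathcal{C}}{\mathcal{G}}}$ and $\ctxtyinst{\mathbb{N}}{\Psi}{\Xi}{\ctxty{\mathcal{C}}{\mathcal{G}}}{G}$, in each case replacing $\mathbb{N}$ and the relevant expressions by their images under $\pi$; these follow from (a) and (b).

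The first assertion is then immediate from Definition~\ref{def:sequent}: condition (a) transfers because $\pi$ is a bijection, so $\mathbb{N}_\Gamma \subseteq \mathbb{N}$ yields $\permute{\pi}{\mathbb{N}_\Gamma} \subseteq \permute{\pi}{\mathbb{N}}$ and $\permute{\pi}{\mathbb{N}} \setminus \permute{\pi}{\mathbb{N}_\Gamma} = \permute{\pi}{(\mathbb{N} \setminus \mathbb{N}_\Gamma)}$, and lemma (c) supplies the needed context-variable-type derivation; condition (b) follows from lemma (b) applied to each formula in $\{F\} \cup \Omega$. For the second assertion, I would first verify the three clauses of Definition~\ref{def:seq-term-subst} for $\langle \permute{\inv{\pi}}{\theta}, \Psi \rangle$ against $\mathcal{S}$: arity type preservation is lemma (b); disjointness follows by applying $\inv{\pi}$ to $\supportof{\theta} \cap \permute{\pi}{\mathbb{N}} = \emptyset$ together with $\supportof{\permute{\inv{\pi}}{\theta}} = \permute{\inv{\pi}}{\supportof{\theta}}$; and the type-agreement clause is unchanged since $\context{\permute{\inv{\pi}}{\theta}} = \context{\theta}$. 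The claimed equality then comes from unwinding Definition~\ref{def:seq-term-subst-app} on both sides and applying lemma (a) componentwise to $\Xi$, $\Omega$ and $F$, once the raising data are chosen compatibly. The third assertion is argued in exactly the same pattern, now through Definitions~\ref{def:seq-ctxt-subst} and \ref{def:cvar-subst-app}, using lemma (c) to transfer appropriateness and the context-substitution analogue of lemma (a) for the equality.

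I expect the main obstacle to be the final equalities in the second and third assertions, specifically the reconciliation of the raising construction across the permutation. Because $\hsubstseq{\Psi}{\theta}{\mathcal{S}}$ and $\subst{\sigma}{\mathcal{S}}$ are determined only up to the choice of fresh raised variables and the associated raising substitution $\theta_r$, the genuine content is to show that these choices can be aligned so that $\pi$ carries one presentation exactly onto the other. This turns on the bookkeeping identity $\supportof{\permute{\inv{\pi}}{\theta}} = \permute{\inv{\pi}}{\supportof{\theta}}$ and on the observation that $\pi$ maps a raising over $\permute{\inv{\pi}}{\supportof{\theta}}$ to a raising over $\supportof{\theta}$; granting this, everything else is mechanical, which is why the argument is routine but tedious.
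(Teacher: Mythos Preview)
Your proposal is correct and is in fact more detailed than the paper's own treatment: the paper simply asserts that ``the proof of the theorem is routine even if tedious'' and omits all details. Your decomposition into the permutation-invariance lemmas (a)--(c) and the subsequent unwinding of Definitions~\ref{def:sequent}, \ref{def:seq-term-subst}, \ref{def:seq-term-subst-app}, \ref{def:seq-ctxt-subst} and \ref{def:cvar-subst-app} is exactly the natural line, and your identification of the raising-substitution alignment as the one genuinely delicate point is apt, particularly since the paper itself flags after Definition~\ref{def:seq-term-subst-app} that the notation $\hsubstseq{\Psi'}{\theta}{\mathcal{S}}$ is ambiguous up to the choice of raised context and raising substitution.
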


Theorem~\ref{th:perm-valid} provides a counterpart to 
Theorem~\ref{th:perm-form} at the level of sequents. 
Its proof, whose details we again omit, uses the earlier result and
the observation about substitutions contained in Theorem~\ref{th:perm-subst}.

\begin{theorem}\label{th:perm-valid}
If $\pi$ is a permutation of the nominal constants and $\mathcal{S}$
is a well-formed closed sequent that is valid, then
$\permute{\pi}{\mathcal{S}}$ is also a closed, valid sequent.
\end{theorem}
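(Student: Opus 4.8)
The plan is to reduce the claim directly to Theorem~\ref{th:perm-form}, which already establishes that validity of a closed formula is invariant under any permutation of nominal constants. First I would pin down the shape of the object in question: since $\mathcal{S}$ is a well-formed \emph{closed} sequent, it has the form $\seq[\mathbb{N}]{\emptyset}{\emptyset}{\Omega}{F}$ with empty term and context variables contexts. Consequently its wellformedness amounts to $\wfform{\mathbb{N}\cup\STLCGamma_0}{\emptyset}{F'}$ being derivable for every $F' \in \{F\}\cup\Omega$; by Theorem~\ref{th:wfsupset} these judgements lift to $\wfform{\noms\cup\STLCGamma_0}{\emptyset}{F'}$, so each such $F'$ is a closed formula in exactly the sense required by Definition~\ref{def:semantics} and hence by Theorem~\ref{th:perm-form}.

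Next I would invoke Theorem~\ref{th:perm-subst} to conclude that $\permute{\pi}{\mathcal{S}}$ is again well-formed. Since applying $\pi$ leaves the (empty) term and context variables contexts untouched and only renames the support set and the constituent formulas, $\permute{\pi}{\mathcal{S}}$ is precisely the closed sequent $\seq[\permute{\pi}{\mathbb{N}}]{\emptyset}{\emptyset}{\permute{\pi}{\Omega}}{\permute{\pi}{F}}$. It then remains only to verify its validity as a closed sequent.

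For the validity argument I would unfold Definition~\ref{def:seq-validity} for closed sequents and split on the two ways in which $\mathcal{S}$ can be valid. If $F$ is valid, then Theorem~\ref{th:perm-form} gives that $\permute{\pi}{F}$ is valid, so the goal formula of $\permute{\pi}{\mathcal{S}}$ is valid and therefore $\permute{\pi}{\mathcal{S}}$ is valid. Otherwise some $F_0 \in \Omega$ is not valid; then $\permute{\pi}{F_0}$ lies in $\permute{\pi}{\Omega}$, and using the biconditional of Theorem~\ref{th:perm-form} in contrapositive form (were $\permute{\pi}{F_0}$ valid, so would be $\permute{\inv{\pi}}{\permute{\pi}{F_0}} = F_0$, a contradiction) we conclude that $\permute{\pi}{F_0}$ is not valid, so again $\permute{\pi}{\mathcal{S}}$ is valid.

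Because the substantive work is carried by Theorem~\ref{th:perm-form} together with the wellformedness and permutation-commutation facts of Theorem~\ref{th:perm-subst}, I do not anticipate a genuine obstacle. The only point demanding care is the bookkeeping: confirming that the permuted assumption and goal formulas remain closed so that Theorem~\ref{th:perm-form} applies to each of them, and that $\pi$ acts on $\Omega$ merely by renaming its members, so that the non-validity of one assumption is faithfully transported to the non-validity of its image under $\pi$.
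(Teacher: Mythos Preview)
Your proposal is correct and follows essentially the same approach the paper sketches: reduce the closed-sequent case to Theorem~\ref{th:perm-form} for the constituent formulas, using Theorem~\ref{th:perm-subst} to ensure $\permute{\pi}{\mathcal{S}}$ remains well-formed. Your case split on whether validity of $\mathcal{S}$ comes from a valid goal or a non-valid assumption is exactly the unfolding of Definition~\ref{def:seq-validity} for closed sequents, and the contrapositive use of Theorem~\ref{th:perm-form} to transport non-validity is the right move.
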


\subsection{The Core Proof Rules}
\label{ssec:core-logic}

The base for our proof system is provided by a collection of rules that
embody the interpretation of the logical symbols and also certain
aspects of the meanings of sequents. 
We present these rules in this subsection: we first identify some
structural rules, we then consider axioms and the cut rule, and we
finally introduce rules for the logical symbols.

\subsubsection{Structural Rules}

This subcollection of rules is presented in
Figure~\ref{fig:rules-structural}.
These rules can be subcategorized into those
that allow for weakening and contracting the assumption set in a
sequent and those that permit the weakening and strengthening of the
support set, the term variables context, and the context variables context.
Rules of the second subcategory encode the fact that vacuous
(well-formed) extensions to the bindings manifest in a sequent will
not impact its validity.
The strengthening and weakening rules for contexts include premises
that force modifications to context variable types and the
satisfaction of typing judgements that are necessary to ensure the 
well-formedness of the sequents in any 
application of the rule. 

\begin{figure}

\begin{center}
\begin{tabular}{cc}    
\infer[\weakening]
      {\seq[\mathbb{N}]{\Psi}{\Xi}{\setand{\Omega}{F_2}}{F_1}}
      {\seq[\mathbb{N}]{\Psi}{\Xi}{\Omega}{F_1}}
\qquad &
\qquad
\infer[\contraction]
      {\seq[\mathbb{N}]{\Psi}{\Xi}{\setand{\Omega}{F_2}}{F_1}}
      {\seq[\mathbb{N}]{\Psi}{\Xi}{\setand{\Omega}{F_2, F_2}}{F_1}}
\end{tabular}
\end{center}

\begin{center}
\begin{tabular}{c}
\infer[\sstr]
      {\seq[\mathbb{N}]{\Psi}{\Xi}{\Omega}{F}}
      {\begin{array}{c}
         \Xi=\{
         \ctxvarty{\Gamma_i}
                  {(\mathbb{N}_i\setminus\mathbb{N}')}
                  {\ctxty{\mathcal{C}_i}{\mathcal{G}_i}}
               \ |\ 
               \ctxvarty{\Gamma_i}
                        {\mathbb{N}_i}
                        {\ctxty{\mathcal{C}_i}{\mathcal{G}_i}}\in\overline{\Xi}
             \}
         \\
         \left\{\wfctxvarty{(\mathbb{N},\mathbb{N}')\setminus\mathbb{N}_i}
                           {(\Psi,\Psi')}
                           {\ctxty{\mathcal{C}_i}{\mathcal{G}_i}}\ \middle|\ 
                 \ctxvarty{\Gamma_i}
                          {\mathbb{N}_i} 
                          {\ctxty{\mathcal{C}_i}{\mathcal{G}_i}}\in\Xi'\right\}
         \\
         \seq[\mathbb{N},\mathbb{N}']{\Psi,\Psi'}{\overline{\Xi},\Xi'}{\Omega}{F}
       \end{array}} 

\\[10pt]

\infer[\sweak]
      {\seq[\mathbb{N},\mathbb{N}']{\Psi,\Psi'}{\overline{\Xi},\Xi'}{\Omega}{F}}
      {\begin{array}{c}
         \Xi=\{
               \ctxvarty{\Gamma_i}{(\mathbb{N}_i\setminus\mathbb{N}')}{\ctxty{\mathcal{C}_i}{\mathcal{G}_i}}
               \ |\ 
               \ctxvarty{\Gamma_i}
                        {\mathbb{N}_i}
                        {\ctxty{\mathcal{C}_i}{\mathcal{G}_i}}\in\overline{\Xi}
             \}
         \\
         \left\{ \wfctxvarty{(\mathbb{N}\setminus\mathbb{N}_i)}
                            {\Psi}
                            {\ctxty{\mathcal{C}_i}{\mathcal{G}_i}}
                      \ \middle|\ 
                     \ctxvarty{\Gamma_i}
                              {\mathbb{N}_i}
                              {\ctxty{\mathcal{C}_i}{\mathcal{G}_i}}\in\Xi \right\}                            
         \\
         \left\{\wfform{\mathbb{N}\cup\STLCGamma_0\cup\Psi}
                       {\{\Gamma\ |\ \ctxvarty{\Gamma}
                                       {\mathbb{N}}
                                       {\ctxty{\mathcal{C}}
                                              {\mathcal{G}}}\in\Xi\}}
                       {F'}
                  \ \middle|\ 
                  F'\in\Omega\cup\{F\} \right\}
         \\
         \seq[\mathbb{N}]{\Psi}{\Xi}{\Omega}{F}
       \end{array}}
\end{tabular}
\end{center}

\caption{The Structural Rules}\label{fig:rules-structural}
\end{figure}

The following theorem shows that these rules require the proof of only
well-formed sequents in constructing a proof of a well-formed sequent.

\begin{theorem}\label{th:structural-wf}
The following property holds for each rule in
Figure~\ref{fig:rules-structural}: if the conclusion sequent is
well-formed, the premises expressing typing conditions have
derivations and the conditions expressed by the other, non-sequent
premises are satisfied, then all the sequent premises must
be well-formed.
\end{theorem}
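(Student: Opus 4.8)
The plan is to verify the stated property rule by rule, in each case reducing the well-formedness of the sequent premise to the two conditions (a) and (b) of Definition~\ref{def:sequent} and discharging them from the assumed well-formedness of the conclusion together with the stated non-sequent premises. At the outset I would make explicit the freshness convention implicit in the notation for the support and variable contexts, namely that in $\mathbb{N},\mathbb{N}'$ the set $\mathbb{N}'$ is disjoint from $\mathbb{N}$, and similarly that $\Psi'$ and $\Xi'$ contribute genuinely new variables; this convention is what makes the later set computations go through.

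First I would dispose of $\weakening$ and $\contraction$, which are immediate. In both rules the support set $\mathbb{N}$, the term variables context $\Psi$, and the context variables context $\Xi$ are common to premise and conclusion, so condition (a) of the premise is literally condition (a) of the conclusion. For condition (b) it suffices to observe that the premise's assumption set, viewed as a set of formulas, is contained in that of the conclusion: for $\weakening$ the premise simply drops $F_2$, while for $\contraction$ the premise's set, written $\Omega,F_2,F_2$, is as a set identical to the conclusion's $\Omega,F_2$. Hence every formula whose condition (b) is demanded of the premise already had that judgement established for the conclusion.

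The substantive cases are $\sstr$ and $\sweak$, and for these I would check conditions (a) and (b) separately. For $\sweak$, whose sequent premise is the smaller sequent $\seq[\mathbb{N}]{\Psi}{\Xi}{\Omega}{F}$, both conditions are essentially handed to us: the typing-condition premises are exactly the judgements $\wfctxvarty{\mathbb{N}\setminus\mathbb{N}_i}{\Psi}{\ctxty{\mathcal{C}_i}{\mathcal{G}_i}}$ required by condition (a), and the formula premises are exactly the judgements $\wfform{\mathbb{N}\cup\STLCGamma_0\cup\Psi}{\ctxsanstype{\Xi}}{F'}$ required by condition (b); the only residual obligation is the containment $\mathbb{N}_i\subseteq\mathbb{N}$ on each annotation of $\Xi$, which follows because the conclusion's well-formedness places the annotation of the corresponding entry of $\overline{\Xi}$ inside $\mathbb{N}\cup\mathbb{N}'$ and the $\Xi$ annotation is obtained by removing $\mathbb{N}'$. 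For $\sstr$, whose sequent premise is the larger sequent $\seq[\mathbb{N},\mathbb{N}']{\Psi,\Psi'}{\overline{\Xi},\Xi'}{\Omega}{F}$, condition (a) splits into two subcases: for the new context variables in $\Xi'$ the required judgements are supplied directly by the typing-condition premises, while for the variables inherited in $\overline{\Xi}$ I would start from the judgement $\wfctxvarty{\mathbb{N}\setminus(\mathbb{N}_i\setminus\mathbb{N}')}{\Psi}{\ctxty{\mathcal{C}_i}{\mathcal{G}_i}}$ that the conclusion's well-formedness provides for the corresponding entry of $\Xi$ and weaken it to $\wfctxvarty{(\mathbb{N}\cup\mathbb{N}')\setminus\mathbb{N}_i}{\Psi,\Psi'}{\ctxty{\mathcal{C}_i}{\mathcal{G}_i}}$ via Theorem~\ref{th:ctx-ty-wk}. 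Condition (b) for $\sstr$ follows from the conclusion's formula judgements by Theorem~\ref{th:wfsupset}, since enlarging the support set, the term variables context, and the context variable collection all preserve formula well-formedness, and the side condition guarantees $\ctxsanstype{\Xi}=\ctxsanstype{\overline{\Xi}}$.

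The step I expect to be delicate is the set bookkeeping on the nominal-constant annotations in the $\overline{\Xi}$ subcase of $\sstr$. Theorem~\ref{th:ctx-ty-wk} is a pure weakening result, so the passage above is legitimate only if $\mathbb{N}\setminus(\mathbb{N}_i\setminus\mathbb{N}')\subseteq(\mathbb{N}\cup\mathbb{N}')\setminus\mathbb{N}_i$. This inclusion is false in general, but it holds once the disjointness of $\mathbb{N}'$ from $\mathbb{N}$ is used, for then $\mathbb{N}\setminus(\mathbb{N}_i\setminus\mathbb{N}')$ collapses to $\mathbb{N}\setminus\mathbb{N}_i$, which is contained in $(\mathbb{N}\cup\mathbb{N}')\setminus\mathbb{N}_i$. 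Making the freshness assumption explicit at the start is therefore exactly what clears this obstacle, after which the remaining inclusions and the appeals to Theorems~\ref{th:ctx-ty-wk} and \ref{th:wfsupset} are routine.
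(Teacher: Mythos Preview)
Your proposal is correct and follows essentially the same approach as the paper's own proof: both treat $\weakening$, $\contraction$, and $\sweak$ as immediate, and for $\sstr$ both use Theorem~\ref{th:wfsupset} for the formula conditions, the explicit typing premise for the $\Xi'$ entries, and Theorem~\ref{th:ctx-ty-wk} together with the $\overline{\Xi}$--$\Xi$ correspondence for the inherited context variables. Your account is in fact more detailed than the paper's, which leaves the set bookkeeping on the nominal-constant annotations implicit; your observation that the disjointness of $\mathbb{N}'$ from $\mathbb{N}$ is what makes the inclusion $\mathbb{N}\setminus(\mathbb{N}_i\setminus\mathbb{N}')\subseteq(\mathbb{N}\cup\mathbb{N}')\setminus\mathbb{N}_i$ go through is a point the paper does not spell out.
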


\begin{proof}
The claim is obvious for the \weakening, \contraction, and \sweak\ rules.
This leaves only the \sstr\ rule.
The welformedness requirements as they pertain to the formulas in
$\Omega \cup \{F\}$ in the premise sequent in this rule follow
easily from Theorem~\ref{th:wfsupset}.
The second premise ensures that the context types in $\Xi'$ in the
rule meet the necessary requirements.
For the remaining context types, we use the relationship between
$\overline{\Xi}$ and $\Xi$ and Theorem~\ref{th:ctx-ty-wk} to derive
the needed property from the wellformedness of the conclusion
sequent.
\end{proof}

The following lemma will be useful in showing the soundness of the
structural rules.

\begin{lemma}\label{lem:seq-equiv}
Let $\mathcal{S} = \seq[\mathbb{N}]{\Psi}{\Xi}{\Omega}{F}$ and 
$\mathcal{S}' = \seq[\mathbb{N}']{\Psi'}{\Xi'}{\Omega}{F}$ be
well-formed sequents such that $\mathbb{N}'\subseteq\mathbb{N}$, $\Psi'\subseteq\Psi$, and there is some
subset $\overline{\Xi}$ for the context variables context $\Xi$ where
\[\Xi'=
  \left\{
    \ctxvarty{\Gamma_i}
             {\left(\mathbb{N}_i\setminus(\mathbb{N}\setminus\mathbb{N}')\right)}
             {\ctxty{\mathcal{C}_i}{\mathcal{G}_i}}
      \ \middle|\ 
    \ctxvarty{\Gamma_i}{\mathbb{N}_i}{\ctxty{\mathcal{C}_i}{\mathcal{G}_i}}\in\overline{\Xi}
  \right\}.\]
Then $\mathcal{S}$ is valid if and only if $\mathcal{S}'$ is valid. 
\end{lemma}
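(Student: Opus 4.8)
The plan is to prove the biconditional by relating the closed instances of $\mathcal{S}$ and $\mathcal{S}'$. The definition of sequent validity (Definition~\ref{def:seq-validity}) says that a sequent is valid exactly when every closed instance, identified by a substitution-compatible term substitution $\theta$ with $\context{\theta} = \Psi$ and an appropriate covering context substitution $\sigma$, is a valid closed sequent. Since $\mathcal{S}$ and $\mathcal{S}'$ share the same assumption set $\Omega$ and goal $F$, and differ only in their support sets, term variables contexts, and context variables contexts, the heart of the argument is to establish a correspondence between the instantiating pairs $\langle \theta, \sigma \rangle$ for the two sequents that yields \emph{the same} closed sequent (up to permutation of nominal constants, which Theorem~\ref{th:perm-valid} shows preserves validity). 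The key structural facts driving this are $\mathbb{N}' \subseteq \mathbb{N}$, $\Psi' \subseteq \Psi$, and the prescribed relationship between $\Xi'$ and the subset $\overline{\Xi}$ of $\Xi$, in which each nominal-constant annotation $\mathbb{N}_i$ is shrunk by removing exactly $\mathbb{N} \setminus \mathbb{N}'$.

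**The two directions.**

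First I would argue the direction from $\mathcal{S}'$ valid to $\mathcal{S}$ valid. Given an arbitrary instantiating pair $\langle \theta, \sigma \rangle$ for $\mathcal{S}$, I would restrict $\theta$ to the variables in $\Psi'$ (obtaining $\theta'$ with $\context{\theta'} = \Psi'$) and restrict $\sigma$ to the context variables appearing in $\Xi'$. The substitution-compatibility conditions of Definition~\ref{def:seq-term-subst} and the appropriateness conditions of Definition~\ref{def:seq-ctxt-subst} are downward-stable under these restrictions precisely because $\Psi' \subseteq \Psi$ and $\overline{\Xi} \subseteq \Xi$; the annotation-shrinking in $\Xi'$ relaxes the freshness constraints on context expressions, so any $\sigma$ appropriate for $\mathcal{S}$ restricts to one appropriate for $\mathcal{S}'$. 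Since the formulas $\Omega \cup \{F\}$ only mention variables and context variables tracked in the smaller contexts (this is forced by the wellformedness of $\mathcal{S}'$), the resulting closed instance of $\mathcal{S}'$ has the same $\Omega$ and $F$ as the chosen instance of $\mathcal{S}$, and the extra nominal constants in $\mathbb{N} \setminus \mathbb{N}'$ that are present only in $\mathcal{S}$ do not affect the validity of the closed sequent (validity of a closed sequent depends only on the validity of its formulas). Hence validity transfers. For the converse, from an instantiating pair $\langle \theta', \sigma' \rangle$ for $\mathcal{S}'$, I would extend $\theta'$ to act on the additional variables in $\Psi \setminus \Psi'$ by suitable closed terms of the correct arity type (guaranteed to exist since we quantify over all type-preserving substitutions), and similarly extend $\sigma'$ to cover the additional context variables in $\Xi \setminus \overline{\Xi}$; appropriateness of the extension is checkable because the widened annotations in $\Xi$ are only more restrictive, and the extra assignments do not appear free in $\Omega \cup \{F\}$.

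**The main obstacle.**

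The principal difficulty, and where I would concentrate the care, is the bookkeeping around the nominal-constant annotations on context variables and the interaction with the raising mechanism used when substitutions introduce fresh constants. The delicate point is to verify that an appropriate covering $\sigma$ for one sequent really does restrict (respectively extend) to an appropriate $\sigma$ for the other, given that the freshness side-conditions in $\ctxtyinst{\cdot}{\cdot}{\cdot}{\cdot}{\cdot}$ refer to $\supportof{\sigma} \setminus \mathbb{N}_i$, and the $\mathbb{N}_i$ differ between $\Xi$ and $\Xi'$ by exactly $\mathbb{N} \setminus \mathbb{N}'$. I expect to discharge this by invoking Theorem~\ref{th:ctx-ty-wk} to move context-type wellformedness and instantiation judgements between the smaller and larger $\mathbb{N}$, $\Psi$, $\Xi$ parameters, and by appealing to Theorem~\ref{th:perm-valid} to absorb any mismatch in the actual names of freshly introduced nominal constants into a harmless permutation. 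Once the correspondence between closed instances is established name-for-name up to such a permutation, the equivalence of validity follows directly from Definition~\ref{def:seq-validity}. The remaining steps are routine verifications of the compatibility and appropriateness clauses and I would not belabour them.
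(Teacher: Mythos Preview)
Your overall plan matches the paper's: reduce validity of $\mathcal{S}$ and $\mathcal{S}'$ to each other by setting up a correspondence between their closed instances, with the observation that a closed sequent's validity depends only on its formulas and not its support set. The restriction direction ($\mathcal{S}'$ valid $\Rightarrow$ $\mathcal{S}$ valid) is essentially what the paper does.

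There is, however, a genuine gap in your converse direction. You propose to take $\langle\theta',\sigma'\rangle$ identifying a closed instance of $\mathcal{S}'$ and simply \emph{extend} them to substitutions proper for $\mathcal{S}$. This need not work: substitution compatibility with $\mathcal{S}'$ only guarantees $\supportof{\theta'}\cap\mathbb{N}'=\emptyset$, so $\theta'$ may well contain nominal constants from $\mathbb{N}\setminus\mathbb{N}'$, and then any extension of $\theta'$ violates the requirement $\supportof{\theta}\cap\mathbb{N}=\emptyset$ for compatibility with $\mathcal{S}$. The same issue arises for $\sigma'$ against the larger annotations $\mathbb{N}_i$ in $\Xi$. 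Your remark that ``the widened annotations in $\Xi$ are only more restrictive'' is precisely the problem, not a reassurance: more restrictive means an instance good for $\Xi'$ can fail for $\Xi$.

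The paper handles this by first choosing a permutation $\pi$ that swaps the constants in $\mathbb{N}\setminus\mathbb{N}'$ with ones fresh to $\mathbb{N}$, $\theta'$, and $\sigma'$. Since $\pi$ fixes $\mathbb{N}'$, we have $\permute{\pi}{\mathcal{S}'}=\mathcal{S}'$, and $\permute{\pi}{\theta'}$, $\permute{\pi}{\sigma'}$ still identify a closed instance of $\mathcal{S}'$; but now their supports avoid $\mathbb{N}$ entirely, so they can be extended to substitutions proper for $\mathcal{S}$. One then uses Theorem~\ref{th:perm-subst} to rewrite the resulting closed instance as a permutation of the original one, and Theorem~\ref{th:perm-valid} to transfer validity. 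You mention permutation, but as a tool for ``freshly introduced'' constants rather than for this specific obstruction; the point is that the permutation must come \emph{before} the extension, not after.
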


\begin{proof}
We will make use in the proof of the easily confirmed fact that a
closed and well-formed sequent that differs from another such sequent
only in its support set is valid exactly when the other sequent is
valid. 
We note now that if $\theta$ and $\sigma$ are proper for $\mathcal{S}$
and $\sigma'$ is a restriction of $\sigma$ to $\domain{\Xi}$, then
$\theta$ and $\sigma'$ must be proper for $\mathcal{S}'$.
From this it follows easily that corresponding to each closed instance
of $\mathcal{S}$ there must be a closed instance of $\mathcal{S}'$
that differs at most in its support set.
Thus, if $\mathcal{S}'$ is valid, $\mathcal{S}$ must be too.

In the ``only if'' direction, we note first that a term substitution
$\theta$ and a context substitution $\sigma$ that identify a closed
instance of $\mathcal{S}'$ may not be proper for $\mathcal{S}$
because $\theta$ and $\sigma$ use nominal constants from
$\mathbb{N} \setminus \mathbb{N}'$ in the terms they substitute for variables.
To rectify this problem, we use a permutation $\pi$ that swaps the
nominal constants in $\mathbb{N} \setminus \mathbb{N}'$ with ones that
do not appear in $\mathbb{N}$, $\theta$ or $\sigma$, leaving
the other nominal constants unaffected.
It is easily seen that $\permute{\pi}{\theta}$ and
$\permute{\pi}{\sigma}$ identify a closed instance of $\mathcal{S'}$
and are proper with respect to $\mathcal{S}$.
These substitutions can now be extended to substitutions $\theta'$ and
$\sigma'$ that substitute closed expressions of the appropriate kinds
for the variables in $\Psi\setminus\Psi'$ and $\Xi\setminus\Xi'$,
respectively, and that thereby identify a closed instance of
$\mathcal{S}$. 
This closed instance is valid by assumption.
Further, it differs from
$\subst{\permute{\pi}{\sigma}}
       {\hsubstseq{\emptyset}
                  {\permute{\pi}{\theta}}
                  {\mathcal{S}'}}$
only in its support set.
Thus, the latter sequent must also be valid.
Noting that $\mathcal{S}' = \permute{\pi}{\mathcal{S}'}$ and using
Theorem~\ref{th:perm-subst}, we see that
$\subst{\permute{\pi}{\sigma}}
       {\hsubstseq{\emptyset}
                  {\permute{\pi}{\theta}}
                  {\mathcal{S}'}}$
is the same sequent as
$\permute{\pi}
         {\subst{\sigma}
                {\hsubstseq{\emptyset}
                           {\theta}
                           {\mathcal{S}'}}}$.
Using Theorem~\ref{th:perm-valid}, we now conclude that 
$\subst{\sigma}{\hsubstseq{\emptyset}{\theta}{\mathcal{S}'}}$ is
valid.
Since this argument is independent of the choice of $\theta$
and $\sigma$, it follows that $\mathcal{S}'$ must be valid if
$\mathcal{S}$ is.
\end{proof}

The soundness of the structural rules is the content of the following theorem.

\begin{theorem}\label{th:structural-sound}
The following property holds for every instance of each of the rules
in Figure~\ref{fig:rules-structural}: if the premises expressing
typing judgements are derivable, the conditions described in the other
non-sequent premises are satisfied and the premise sequent is valid,
then the conclusion sequent must also be valid. 
\end{theorem}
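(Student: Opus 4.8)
The plan is to treat the four rules in two groups. The rules $\weakening$ and $\contraction$ affect only the assumption set and can be dispatched directly from Definition~\ref{def:seq-validity}, while $\sstr$ and $\sweak$ adjust the support set together with the term and context variables contexts and are, up to a matching of notation, precisely the two directions of the equivalence already established in Lemma~\ref{lem:seq-equiv}. Throughout I would invoke Theorem~\ref{th:structural-wf} so that, whenever the conclusion sequent is well-formed and the stated typing and non-sequent premises hold, the premise sequents are also well-formed; this is what makes the notions of closed instance and of validity applicable on both sides of each rule.

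For $\contraction$ there is nothing to prove: an assumption set is a set, so $\setand{\Omega}{F_2,F_2}$ and $\setand{\Omega}{F_2}$ denote the same collection and the premise and conclusion are literally the same sequent. For $\weakening$ I note that the conclusion $\seq[\mathbb{N}]{\Psi}{\Xi}{\setand{\Omega}{F_2}}{F_1}$ and the premise $\seq[\mathbb{N}]{\Psi}{\Xi}{\Omega}{F_1}$ share the same $\mathbb{N}$, $\Psi$, and $\Xi$; since substitution compatibility and appropriateness depend only on these and not on the formulas, a pair $\theta,\sigma$ is proper for one exactly when it is proper for the other, and it determines corresponding closed instances that differ only in the extra instantiated assumption arising from $F_2$. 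Fixing such a pair, the premise's closed instance is valid, so either some member of the instantiated $\Omega$ is not valid, in which case that same member witnesses the invalidity of an assumption of the conclusion's closed instance, or the instantiated goal is valid; in either case the conclusion's closed instance is valid. As this holds for every proper $\theta,\sigma$, the conclusion sequent is valid.

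For $\sstr$ and $\sweak$ I would instantiate Lemma~\ref{lem:seq-equiv} by taking $\mathcal{S}$ to be the larger sequent $\seq[\mathbb{N},\mathbb{N}']{\Psi,\Psi'}{\overline{\Xi},\Xi'}{\Omega}{F}$ and $\mathcal{S}'$ the smaller sequent $\seq[\mathbb{N}]{\Psi}{\Xi}{\Omega}{F}$. The side conditions of the lemma are met: $\mathbb{N}\subseteq\mathbb{N},\mathbb{N}'$ and $\Psi\subseteq\Psi,\Psi'$ hold by inclusion, and the subset $\overline{\Xi}$ of $\overline{\Xi},\Xi'$ named in each rule serves as the witnessing subset of the lemma, since $(\mathbb{N},\mathbb{N}')\setminus\mathbb{N}=\mathbb{N}'$ and the rule defines each annotation of $\Xi$ as the restriction $\mathbb{N}_i\setminus\mathbb{N}'$ of the corresponding annotation $\mathbb{N}_i$ in $\overline{\Xi}$. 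The lemma then yields that the larger and smaller sequents are valid together. For $\sstr$ the premise is the larger sequent and the conclusion the smaller, so premise validity delivers conclusion validity; for $\sweak$ the roles are reversed and the converse direction of the same equivalence applies.

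The main obstacle is not in this theorem but in correctly lining up the two rules with the lemma: one must verify that the nominal-constant annotations attached to the context variables of $\Xi$ are exactly the restrictions $\mathbb{N}_i\setminus\mathbb{N}'$ of those in $\overline{\Xi}$, and that the rules' premises expressing well-formedness of the context variable types guarantee the hypotheses under which $\mathcal{S}$ and $\mathcal{S}'$ are well-formed, so that Lemma~\ref{lem:seq-equiv} genuinely applies. The substantive content---the permutation argument that relocates a closed instance of the smaller sequent onto nominal constants disjoint from $\mathbb{N}\setminus\mathbb{N}'$ and then extends it to a closed instance of the larger sequent, appealing to Theorems~\ref{th:perm-subst} and \ref{th:perm-valid}---has already been carried out inside the proof of that lemma, so here it need only be cited.
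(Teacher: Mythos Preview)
Your proposal is correct and follows essentially the same approach as the paper: the paper's proof simply declares the argument obvious for $\weakening$ and $\contraction$ and invokes Lemma~\ref{lem:seq-equiv} for $\sstr$ and $\sweak$, which is precisely what you do, with a helpful spelling-out of how the lemma is instantiated. One minor quibble: your identity $(\mathbb{N},\mathbb{N}')\setminus\mathbb{N}=\mathbb{N}'$ tacitly assumes $\mathbb{N}'\cap\mathbb{N}=\emptyset$; in general one gets $\mathbb{N}'\setminus\mathbb{N}$, but since the rule's condition on $\Xi$ uses $\mathbb{N}_i\setminus\mathbb{N}'$ and the $\mathbb{N}_i$ live inside $\mathbb{N}\cup\mathbb{N}'$, the match with the lemma's $\mathbb{N}_i\setminus(\mathbb{N}'\setminus\mathbb{N})$ still goes through after noting that any overlap $\mathbb{N}\cap\mathbb{N}'$ is harmless for the argument.
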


\begin{proof}
The argument is obvious for the \weakening\ and \contraction\ rules.
For the \sstr\ and \sweak\ rules, we make use of
Lemma~\ref{lem:seq-equiv}.
\end{proof}

\subsubsection{The Axiom and the Cut Rule}\label{sssec:initial-rules}

The \cut\ rule codifies the notion of lemmas: if we can show the
validity of a formula relative to a given assumption set, then this
formula can be included in the assumptions to simplify the
reasoning process.
The \id\ rule recognizes the validity of a sequent in which the
conclusion formula appears in the assumption set.
In its simplest form, the \id\ rule would require the conclusion
formula to be included as is in the assumption set. 
It is possible, and also pragmatically useful, to generalize this form
to allow also for a permutation of nominal constants in the formulas
in the process of matching.
However, this has to be done with care to ensure that identity under
the considered permutations continues to hold even after later
instantiations of term and context variables appearing in the
formulas.
The following definition presents a notion of equivalence for formulas
under permutations that encodes this care. 

\begin{definition}[Formula Equivalence]\label{def:form-eq}
The equivalence of two context expressions $G_1$ and $G_2$ with respect to a context
variables context $\Xi$ and a permutation $\pi$, written $\ctxeq{\Xi}{\pi}{G_2}{G_1}$,
is a relation defined by the following three clauses:
\begin{enumerate}
\item $\ctxeq{\Xi}{\pi}{\emptyctx}{\emptyctx}$ holds for any $\Xi$ and $\pi$.

\item $\ctxeq{\Xi}{\pi}{\Gamma}{\Gamma}$ holds if for some 
$\ctxvarty{\Gamma'}{\mathbb{N}}{\ctxty{\mathcal{C}}{\mathcal{G}}}\in\Xi$
  it is the case that $\Gamma = \Gamma'$ and $\supp{\pi}\subseteq\mathbb{N}$.

\item If $G_1=(G_1',n_1:A_1)$ and $G_2=(G_2',n_2:A_2)$ then 
$\ctxeq{\Xi}{\pi}{G_2}{G_1}$ holds if
$\permute{\pi}{n_2}$ is identical to $n_1$, $\permute{\pi}{A_2}$ is
  the same type as $A_1$, and $\ctxeq{\Xi}{\pi}{G'_2}{G'_1}$ holds.
\end{enumerate}
Two atomic formulas $\fatm{G'}{M':A'}$ and $\fatm{G}{M:A}$ are
considered equivalent with respect to $\Xi$ and $\pi$ if
$\ctxeq{\Xi}{\pi}{G'}{G}$ holds, $\permute{\pi}{M'}$ and $M$ are the
same terms, and $\permute{\pi}{A'}$ and $A$ are the same types.
Two arbitrary formulas $F'$ and $F$ are considered equivalent with
respect to $\Xi$ and $\pi$ if their component parts are so equivalent,
allowing, of course, for a renaming of variables bound by quantifiers.
We denote this equivalence by the expression $\formeq{\Xi}{\pi}{F'}{F}$.
\end{definition}

\begin{figure}
\[
\infer[\id]
      {\seq[\mathbb{N}]
           {\Psi}
           {\Xi}
           {\Omega}
           {F}}
      {F'\in\Omega 
         & 
       \pi\ \mbox{is a permutation of nominal constants such that}\ \supportof{\pi}\subseteq\mathbb{N}
         &
       \formeq{\Xi}
              {\pi}
              {F'}
              {F}}
\]
\smallskip
\[
\infer[\cut]{\seq[\mathbb{N}]{\Psi}{\Xi}{\Omega}{F_1}}
            {\seq[\mathbb{N}]{\Psi}{\Xi}{\Omega}{F_2} &
             \seq[\mathbb{N}]{\Psi}{\Xi}{\setand{\Omega}{F_2}}{F_1} &
             \wfform{\mathbb{N}\cup\STLCGamma_0\cup\Psi}{\dom{\Xi}}{F_2}}
\]
\caption{The Axiom and the Cut Rule}
\label{fig:rules-other}
\end{figure}

The \id\ and the \cut\ rules are presented in
Figure~\ref{fig:rules-other}.
The \id\ rule limits the permutations to be considered to 
ones that rename only nominal constants appearing in the support set
of the sequent.
The \cut\ rule includes a premise that ensures the wellformedness of
the cut formula.

\begin{theorem}\label{th:other-wf}
The following property holds of the \id\ and the \cut\ rule: if the
conclusion sequent is well-formed, the premises expressing typing
conditions have derivations and the conditions expressed by the other,
non-sequent premises are satisfied, then the premise sequents must be
well-formed.
\end{theorem}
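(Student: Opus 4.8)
The plan is to treat the two rules separately and to lean throughout on the decomposition of sequent wellformedness furnished by Definition~\ref{def:sequent}: a sequent $\seq[\mathbb{N}]{\Psi}{\Xi}{\Omega}{F}$ is well-formed exactly when (a)~each association $\ctxvarty{\Gamma}{\mathbb{N}_\Gamma}{\ctxty{\mathcal{C}}{\mathcal{G}}}$ in $\Xi$ satisfies $\mathbb{N}_\Gamma\subseteq\mathbb{N}$ and $\wfctxvarty{\mathbb{N}\setminus\mathbb{N}_\Gamma}{\Psi}{\ctxty{\mathcal{C}}{\mathcal{G}}}$ is derivable, and (b)~every formula in $\{F\}\cup\Omega$ is well-formed with respect to $\mathbb{N}\cup\Psi\cup\STLCGamma_0$ and $\ctxsanstype{\Xi}$. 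The key structural observation I would exploit is that condition (a) refers only to $\mathbb{N}$, $\Psi$ and $\Xi$, while condition (b) is checked formula-by-formula; neither rule alters $\mathbb{N}$, $\Psi$ or $\Xi$ in passing from conclusion to premise.

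First I would dispose of the \id\ rule. All three of its premises---$F'\in\Omega$, the existence of a permutation $\pi$ with $\supportof{\pi}\subseteq\mathbb{N}$, and the equivalence $\formeq{\Xi}{\pi}{F'}{F}$---are non-sequent premises, so the \id\ rule has no sequent premises at all. The assertion that every sequent premise is well-formed therefore holds vacuously, and there is nothing further to verify.

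The only content lies with the \cut\ rule, and here I would argue as follows. Since the conclusion $\seq[\mathbb{N}]{\Psi}{\Xi}{\Omega}{F_1}$ and both premises $\seq[\mathbb{N}]{\Psi}{\Xi}{\Omega}{F_2}$ and $\seq[\mathbb{N}]{\Psi}{\Xi}{\setand{\Omega}{F_2}}{F_1}$ share the same support set, term variables context, and context variables context, condition (a) is inherited verbatim from the wellformedness of the conclusion sequent to each premise. For condition (b) I would note that $\dom{\Xi}$ in the non-sequent premise coincides with the $\ctxsanstype{\Xi}$ appearing in Definition~\ref{def:sequent} and that $\STLCGamma_0$ and $\Theta_0$ name the same initial constant context, so the third premise $\wfform{\mathbb{N}\cup\STLCGamma_0\cup\Psi}{\dom{\Xi}}{F_2}$ is precisely the wellformedness judgement demanded of the cut formula $F_2$. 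For the left premise the relevant formula set is $\{F_2\}\cup\Omega$: the formulas of $\Omega$ are well-formed by the conclusion's wellformedness and $F_2$ by the third premise. For the right premise the set is $\{F_1\}\cup\Omega\cup\{F_2\}$, where $F_1$ and the formulas of $\Omega$ again come from the conclusion and $F_2$ from the third premise. This discharges condition (b) in both cases.

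I do not anticipate any genuine obstacle: the argument is a direct appeal to the definition of sequent wellformedness together with the fact that \cut\ leaves $\mathbb{N}$, $\Psi$ and $\Xi$ untouched and merely redistributes formulas already known to be well-formed. The one point meriting care is confirming that the cut formula $F_2$ is well-formed in exactly the context prescribed by Definition~\ref{def:sequent}---which is precisely the role played by the rule's third premise.
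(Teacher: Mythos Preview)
Your proposal is correct and follows the same approach as the paper, which disposes of \id\ vacuously and calls the \cut\ case ``obvious''; you have simply unpacked that obviousness by checking conditions (a) and (b) of Definition~\ref{def:sequent} directly.
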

\begin{proof}
The requirement is vacuously true for the \id\ rule and it has an
obvious proof for the \cut\ rule.
\end{proof}

In showing the soundness of the \id\ rule, we will need the
observation contained in the two lemmas below that the equivalence of
formulas modulo permutations is preserved under the kinds of
substitutions that have to be considered in determining the validity
of sequents. 

\begin{lemma}\label{lem:equiv-hsub}
Suppose that $\formeq{\Xi}{\pi}{F_2}{F_1}$
is holds for some formulas $F_1$ and $F_2$ and some $\Xi$ and $\pi$ of 
the right kinds.
If $\theta$ is a term substitution such that
$\supp{\theta}\cap\supp{\pi}=\emptyset$ and
both $\hsub{\theta}{F_2}{F_2'}$ and $\hsub{\theta}{F_1}{F_1'}$ 
have derivations for some $F_1'$ and $F_2'$, then
$\formeq{\hsubst{\theta}{\Xi}}{\pi}{F'_2}{F'_1}$
holds.
\end{lemma}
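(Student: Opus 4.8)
The plan is to proceed by induction on the structure of the derivation of $\formeq{\Xi}{\pi}{F_2}{F_1}$, equivalently on the common logical structure of $F_1$ and $F_2$. The propositional and quantifier cases will follow routinely from the induction hypothesis once the atomic case is settled, since both formula equivalence and hereditary substitution distribute over the logical connectives and respect the scopes of term and context quantifiers (renaming bound variables away from $\theta$ and $\pi$ as needed). The real content lies in the atomic case and, within it, in the treatment of terms, types and context expressions.

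The technical heart is the observation that, under the hypothesis $\supp{\theta}\cap\supp{\pi}=\emptyset$, hereditary substitution commutes with the permutation in the form $\permute{\pi}{(\hsubst{\theta}{E})}=\hsubst{\theta}{(\permute{\pi}{E})}$ for any term or type $E$ on which the substitution is defined. First I would record the general commutation $\permute{\pi}{(\hsubst{\theta}{E})}=\hsubst{\permute{\pi}{\theta}}{(\permute{\pi}{E})}$, which holds because $\pi$ is merely a bijective renaming of nominal constants and hereditary substitution treats nominal constants like ordinary constants; this is exactly the pattern already used in the proof of Theorem~\ref{th:perm-form}. Then I would note that the disjointness of supports forces $\permute{\pi}{\theta}=\theta$: every nominal constant occurring in a term in $\range{\theta}$ lies in $\supp{\theta}$ and is therefore fixed by $\pi$, so each triple of $\theta$ is left unchanged. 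Combining these two facts yields the desired commutation.

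With this in hand, the atomic case is direct. For $F_1=\fatm{G_1}{M_1:A_1}$ and $F_2=\fatm{G_2}{M_2:A_2}$ with $\permute{\pi}{M_2}=M_1$ and $\permute{\pi}{A_2}=A_1$, I compute $\permute{\pi}{(\hsubst{\theta}{M_2})}=\hsubst{\theta}{(\permute{\pi}{M_2})}=\hsubst{\theta}{M_1}$ and likewise for the types, establishing the term and type requirements (the substitutions exist by hypothesis). For the context components I would carry out a subsidiary induction on the derivation of $\ctxeq{\Xi}{\pi}{G_2}{G_1}$: the empty case is immediate; in the cons case the nominal-constant heads are unaffected by $\theta$, the commutation argument handles the attached types, and the induction hypothesis handles the remainder; in the context-variable case I use that $\theta$ leaves context variables untouched and leaves the nominal-constant annotation $\mathbb{N}_\Gamma$ of $\Gamma$ in $\hsubst{\theta}{\Xi}$ unchanged, so that the side condition $\supp{\pi}\subseteq\mathbb{N}_\Gamma$ of clause~(2) continues to hold.

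I expect the main obstacle to be bookkeeping rather than conceptual: ensuring that the commutation fact is invoked only where the relevant hereditary substitutions are known to be defined, and checking in the context-variable clause that the action of $\theta$ on $\Xi$ preserves both the context variable itself and its nominal-constant annotation, so that clause~(2) of the context-equivalence definition remains applicable under $\hsubst{\theta}{\Xi}$.
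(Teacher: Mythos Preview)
Your proposal is correct and follows essentially the same line as the paper: both establish the commutation $\permute{\pi}{(\hsubst{\theta}{E})}=\hsubst{\theta}{(\permute{\pi}{E})}$ from the support-disjointness hypothesis, then argue by induction on the formula structure, treating the atomic case via a subsidiary induction on the context-equivalence derivation. Your treatment is, if anything, slightly more explicit than the paper's about why the context-variable clause survives under $\hsubst{\theta}{\Xi}$.
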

\begin{proof}
Let $E$ be a type or term on which the application of the substitution
$\theta$ is defined and let $\pi$ be a permutation of the nominal
constants. 
We observe then that 
$\permute{\pi}{(\hsubst{\theta}{E})}$ is the same type or term as 
$\hsubst{\permute{\pi}{\theta}}{(\permute{\pi}{E})}$.
This observation, which was implicit in Theorem~\ref{th:perm-subst},
can be established by induction on the derivation of
$\hsub{\theta}{E}{E'}$ for the relevant $E'$. 
If $\supp{\theta}$ and $\supp{\pi}$ are disjoint as is  
assumed in this lemma, it further follows that
$\permute{\pi}{(\hsubst{\theta}{E})}$ is the same term or type as 
$\hsubst{\theta}{(\permute{\pi}{E})}$.

The lemma itself is proved by induction on the structure of $F_1$ or,
equivalently, $F_2$. 
Most of the cases are straightforward, the only one perhaps needing
explicit consideration being that when $F_1$ and $F_2$ are the atomic
formulas $\fatm{G_1}{M_1:A_1}$ and $\fatm{G_2}{M_2:A_2}$, respectively.
In this case, by assumption, $\permute{\pi}{M_2}$ and $M_1$ are the
same terms, $\permute{\pi}{A_2}$ and $A_1$ are the same types and 
$\ctxeq{\Xi}{\pi}{G_2}{G_1}$ holds.
Using the observation at the beginning of the proof, we see 
that  $\permute{\pi}{(\hsubst{\theta}{M_2})}$ is the same term as
$\hsubst{\theta}{M_1}$ and $\permute{\pi}{(\hsubst{\theta}{A_2})}$ is
the same type as $\hsubst{\theta}{A_1}$. 
Thus, the lemma would follow if we can show that 
$\ctxeq{\Xi}{\pi}{\hsubst{\theta}{G_2}}{\hsubst{\theta}{G_1}}$ holds.
However, this is easily done by an induction on the structure of $G_1$.
The argument is obvious when $G_1$ is $\emptyctx$ or a context
variable.
If $G_1$ is of the form $(G_1',n_1:A_1)$, then it must be the case that 
$G_2$ has the form $(G_2',n_2:A_2)$, where $\permute{\pi}{n_2}$ is the
same nominal constant as $n_1$, $\permute{\pi}{A_2}$ is equal to $A_1$
up to renaming of bound variables, and $\ctxeq{\Xi}{\pi}{G_2'}{G_1'}$ holds.
Drawing again the observation at the beginning of the proof, we may
conclude that $\permute{\pi}{\hsubst{\theta}{A_2}}$ is the same type
as $\hsubst{\theta}{A_1}$.
From the induction hypothesis, we know that
$\ctxeq{\Xi}{\pi}{\hsubst{\theta}{G_2'}}{\hsubst{\theta}{G_1'}}$
holds.
It is now evident that 
$\ctxeq{\Xi}{\pi}{\hsubst{\theta}{G_2}}{\hsubst{\theta}{G_1}}$ must hold.
\end{proof}

\begin{lemma}\label{lem:equiv-sub}
Let $\sigma$ be an appropriate substitution for $\Xi$ with respect
to some term variables context $\Psi$ and set of nominal constants
$\mathbb{N}$, and let $\formeq{\Xi}{\pi}{F'}{F}$ hold.  
Then 
$\formeq{\ctxvarminus{\Xi}{\sigma}}{\pi}{\subst{\sigma}{F'}}{\subst{\sigma}{F}}$
must also hold.
\end{lemma}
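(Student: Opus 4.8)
The plan is to mirror the structure of Lemma~\ref{lem:equiv-hsub}, proceeding by induction on the structure of $F$ (equivalently $F'$, since $\formeq{\Xi}{\pi}{F'}{F}$ forces the two formulas to share their top-level shape up to renaming of bound variables). For the propositional constants the claim is immediate, and in the cases where $F$ is built with $\supset$, $\wedge$, $\vee$, a term quantifier, or a context quantifier, I would observe that a context variables substitution distributes to the immediate subformulas---renaming bound term and context variables away from $\domain{\sigma}$ as needed, and leaving the annotating arity types and context schemas untouched---so that the result follows directly from the induction hypothesis. The only case demanding real work is the atomic one, where $F = \fatm{G}{M:A}$ and $F' = \fatm{G'}{M':A'}$.

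In the atomic case I would first reduce the problem to a statement purely about context expressions. A context variables substitution replaces only context variables, which occur solely within the context-expression component of an atomic formula; hence $\subst{\sigma}{M} = M$, $\subst{\sigma}{A} = A$, and likewise for $M'$ and $A'$. Since $\formeq{\Xi}{\pi}{F'}{F}$ already supplies that $\permute{\pi}{M'}$ equals $M$ and $\permute{\pi}{A'}$ equals $A$, the term and type requirements of $\formeq{\ctxvarminus{\Xi}{\sigma}}{\pi}{\subst{\sigma}{F'}}{\subst{\sigma}{F}}$ hold trivially. What remains is to show that $\ctxeq{\Xi}{\pi}{G'}{G}$ implies $\ctxeq{\ctxvarminus{\Xi}{\sigma}}{\pi}{\subst{\sigma}{G'}}{\subst{\sigma}{G}}$, which I would establish by a secondary induction on the derivation of $\ctxeq{\Xi}{\pi}{G'}{G}$. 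The empty-context clause is trivial, and the clause for an explicit binding $n:A$ distributes cleanly: $\sigma$ fixes both the nominal constant $n$ and the type $A$, so that clause together with the induction hypothesis on the prefixes closes this case.

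The substantive case is the clause where $G$ and $G'$ are both a context variable $\Gamma$ and the equivalence relies on an assignment $\ctxvarty{\Gamma}{\mathbb{N}_\Gamma}{\ctxty{\mathcal{C}}{\mathcal{G}}}$ in $\Xi$ with $\supp{\pi} \subseteq \mathbb{N}_\Gamma$. If $\Gamma \notin \domain{\sigma}$, the same assignment persists in $\ctxvarminus{\Xi}{\sigma}$ and the clause applies unchanged. If instead $\sigma$ substitutes a context expression $G_\Gamma$ for $\Gamma$, I must prove the reflexivity-style fact $\ctxeq{\ctxvarminus{\Xi}{\sigma}}{\pi}{G_\Gamma}{G_\Gamma}$. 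For this I would appeal to the appropriateness of $\sigma$, which supplies a derivation of $\ctxtyinst{\supportof{\sigma} \setminus \mathbb{N}_\Gamma}{\Psi}{\ctxvarminus{\Xi}{\sigma}}{\ctxty{\mathcal{C}}{\mathcal{G}}}{G_\Gamma}$, and argue by induction on this derivation that every nominal constant occurring explicitly in $G_\Gamma$ lies outside $\supp{\pi}$ while any leading context variable of $G_\Gamma$ carries an annotation containing $\supp{\pi}$.

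I expect this last reflexivity argument to be the main obstacle, since it forces a careful accounting of where the nominal constants of $G_\Gamma$ can come from. The bindings produced by fresh instantiations of $\mathcal{C}$ draw their constants, via $\csinstone{\supportof{\sigma} \setminus \mathbb{N}_\Gamma}{\Psi}{\mathcal{C}}{\cdot}$ and the rules of Figure~\ref{fig:ctx-schema}, from the index set $\supportof{\sigma} \setminus \mathbb{N}_\Gamma$, which is disjoint from $\mathbb{N}_\Gamma \supseteq \supp{\pi}$; the literal context blocks copied from $\mathcal{G}$ avoid $\mathbb{N}_\Gamma$ because the type of $\Gamma$ recorded in $\Xi$ is required to be well-formed against a constant set excluding $\mathbb{N}_\Gamma$ (Definition~\ref{def:sequent}); and for a leading context variable $\Gamma'$ with annotation $\mathbb{N}_{\Gamma'}$ the side condition $(\noms \setminus \mathbb{N}_{\Gamma'}) \subseteq \supportof{\sigma} \setminus \mathbb{N}_\Gamma$ of the $\ctxtyinst$ rule yields $\mathbb{N}_\Gamma \subseteq \mathbb{N}_{\Gamma'}$ by complementation, whence $\supp{\pi} \subseteq \mathbb{N}_{\Gamma'}$ and the context-variable clause again applies. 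Assembling these observations with the empty-context and explicit-binding clauses gives $\ctxeq{\ctxvarminus{\Xi}{\sigma}}{\pi}{G_\Gamma}{G_\Gamma}$ and thereby completes the atomic case, and hence the induction.
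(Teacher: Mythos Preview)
Your proposal is correct and follows essentially the same approach as the paper: both proceed by induction on the formula structure, reduce the atomic case to a statement about context expressions, and then handle the context-variable subcase via an induction on the $\ctxtyinst$ derivation supplied by appropriateness. You give considerably more detail on the reflexivity step than the paper's ``obvious inductive argument''; one small point is that your treatment of the literal blocks from $\mathcal{G}$ appeals to sequent well-formedness (Definition~\ref{def:sequent}), which is not strictly among the lemma's hypotheses, though the paper's proof elides this subcase entirely and the lemma is only ever invoked in well-formed sequent contexts.
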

\begin{proof}
The key observation underlying the proof is that if $G_1$ and $G_2$
are two context expressions such that $\ctxeq{\Xi}{\pi}{G_1}{G_2}$
holds and $\sigma$ is a context substitution that is appropriate for
$\Xi$ with respect to $\Psi$ and $\mathbb{N}$, then
$\ctxeq{\ctxvarminus{\Xi}{\sigma}}
       {\pi}
       {\subst{\sigma}{G_1}}
       {\subst{\sigma}{G_2}}$
must hold.
This fact can be verified by an induction on the structure of $G_1$.
The only complex case is that where $G_1$ is a context variable
$\Gamma$ for which $\sigma$ substitutes the context expression $G'$.
Here, $\Xi$ must contain an association of the form
$\ctxvarty{\Gamma}
          {\mathbb{N}'}
          {\ctxty{\mathcal{C}}
          {\mathcal{G}}}$,
with $\supportof{\pi} \subseteq \mathbb{N}'$
and with there being a derivation for
$\ctxtyinst{\supportof{\sigma} \setminus \mathbb{N}'}
           {\Psi}
           {\ctxvarminus{\Xi}{\sigma}}
           {\ctxty{\mathcal{C}}{\mathcal{G}}}{G'}$.
By an obvious inductive argument on the derivation, we can show that
$\ctxeq{\ctxvarminus{\Xi}{\sigma}}{\pi}{G'}{G'}$
must hold, as we need to do in this case.

The lemma itself is proved by an obvious induction on the structure of
$F$.
In the case where $F$ is the atomic formula $\fatm{G}{M:A}$, we make
use of the above observation. 
\end{proof}

We can now show the soundness of the \id\ and \cut\ rules.

\begin{theorem}\label{th:other-sound}
The following property holds for every instance of the \id\ and \cut\ rules:
if the premises expressing typing judgements are derivable, the
conditions described in the other non-sequent premises are satisfied
and all the premise sequents are valid, 
then the conclusion sequent must also be valid. 
\end{theorem}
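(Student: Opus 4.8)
The plan is to unfold Definition~\ref{def:seq-validity} and reduce each rule to reasoning about an arbitrary closed instance. For a fixed term substitution $\theta$ and context substitution $\sigma$ that are proper for the conclusion sequent $\mathcal{S}$ and together identify the closed instance $\subst{\sigma}{\hsubstseq{\emptyset}{\theta}{\mathcal{S}}}$, I will show that this instance is valid by appealing to the closed-sequent clause of the definition: either some image of an assumption is not valid, or the image of the goal is valid.

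For the \cut\ rule the argument is the familiar transitivity of entailment. The conclusion and the two premise sequents share the same $\mathbb{N}$, $\Psi$, and $\Xi$, so substitution compatibility and appropriateness depend only on data common to all three; hence the same $\theta$ and $\sigma$ that identify a closed instance of the conclusion also identify closed instances of both premises. Writing $\hat{\Omega}$ for the image of $\Omega$ and $\hat{F_1},\hat{F_2}$ for the images of $F_1,F_2$, I may assume every formula in $\hat{\Omega}$ is valid (otherwise the instance is immediately valid). Validity of the first premise then forces $\hat{F_2}$ to be valid, and validity of the second premise, whose assumption image is $\hat{\Omega}\cup\{\hat{F_2}\}$, forces $\hat{F_1}$ to be valid. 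The wellformedness premise $\wfform{\mathbb{N}\cup\STLCGamma_0\cup\Psi}{\dom{\Xi}}{F_2}$ serves only to guarantee that the substitutions act on $F_2$ so that $\hat{F_2}$ exists; it plays no further role here.

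The substantive case is the \id\ rule, which has no sequent premises and so must be shown valid directly. The key is to propagate the equivalence $\formeq{\Xi}{\pi}{F'}{F}$ through the substitutions that build a closed instance. First I push it through $\theta$ and its associated raising substitution $\theta_r$ using Lemma~\ref{lem:equiv-hsub}, whose hypothesis $\supp{\theta}\cap\supp{\pi}=\emptyset$ holds because $\supportof{\pi}\subseteq\mathbb{N}$ by the side condition of the rule while $\supportof{\theta}\cap\mathbb{N}=\emptyset$ by substitution compatibility, with $\supp{\theta_r}\subseteq\supportof{\theta}$ inheriting the disjointness. I then transport the resulting equivalence through $\sigma$ using Lemma~\ref{lem:equiv-sub}; since $\sigma$ covers the sequent, the context variables context collapses to the empty set, leaving $\formeq{\emptyset}{\pi}{\hat{F'}}{\hat{F}}$ for the closed images $\hat{F'}$ of $F'$ and $\hat{F}$ of $F$, which for closed formulas amounts to $\hat{F}=\permute{\pi}{\hat{F'}}$.

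Finally I invoke Theorem~\ref{th:perm-form} to conclude that $\hat{F'}$ is valid if and only if $\hat{F}=\permute{\pi}{\hat{F'}}$ is valid, and split on this. If $\hat{F'}$ is valid then so is the goal $\hat{F}$; if $\hat{F'}$ is not valid then, since $F'\in\Omega$ places $\hat{F'}$ in the image of the assumption set, the instance again satisfies the validity clause. I expect the main obstacle to be exactly the bookkeeping in this rule: confirming that the disjointness condition for Lemma~\ref{lem:equiv-hsub} survives the raising step and that Lemma~\ref{lem:equiv-sub} applies to the covering substitution $\sigma$, so that the permutation relating $F'$ and $F$ is carried intact to their closed images, where Theorem~\ref{th:perm-form} delivers the conclusion.
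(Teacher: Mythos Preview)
Your proposal is correct and follows essentially the same approach as the paper: for \cut\ you use that the same $\theta,\sigma$ identify closed instances of all three sequents and argue transitivity, and for \id\ you push the equivalence $\formeq{\Xi}{\pi}{F'}{F}$ through $\theta$ via Lemma~\ref{lem:equiv-hsub} (using $\supportof{\pi}\subseteq\mathbb{N}$ and $\supportof{\theta}\cap\mathbb{N}=\emptyset$) and then through $\sigma$ via Lemma~\ref{lem:equiv-sub}, obtaining $\hat{F}=\permute{\pi}{\hat{F'}}$ for the closed images and finishing with Theorem~\ref{th:perm-form}. The only cosmetic difference is that you explicitly track the raising substitution $\theta_r$, which in this setting is empty (since $\context{\theta}=\Psi$ and the auxiliary context is $\emptyset$), so that extra bookkeeping is harmless but unnecessary.
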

\begin{proof}
Let $\theta$ and $\sigma$ identify a closed instance of the conclusion
of the \id\ rule.
The requirements that $\theta$ and $\sigma$ must satisfy and the
assumption that the premise conditions $\supportof{\pi} \subseteq 
\mathbb{N}$ and  $\formeq{\Xi}{\pi}{F'}{F}$ hold allow us to use
Lemmas~\ref{lem:equiv-hsub} 
and \ref{lem:equiv-sub} to conclude that 
$\formeq{\emptyset}
        {\pi}
        {\subst{\sigma}{\hsubst{\theta}{F'}}}
        {\subst{\sigma}{\hsubst{\theta}{F}}}$
holds.
Now, for any context variables context $\Xi'$, permutation $\pi'$, and
closed formulas $G$ and $H$ it is easily seen that if
$\formeq{\Xi'}{\pi'}{G}{H}$ holds then $H$ must be the same formula as 
$\permute{\pi'}{G}$.
We may therefore conclude that the closed instance of the conclusion
of the \id\ rule identified by $\theta$ and $\sigma$ is valid.
Since the argument was independent of the choice of $\theta$ and
$\sigma$, we conclude that the theorem must be true for this case.

For the \cut\ rule, we note that, given the wellformedness premise
pertaining to the formula $F_2$, any substitutions $\theta$ and
$\sigma$ that identify a closed instance of the conclusion sequent
must also identify closed instances of the two premise sequents.
It is easy to see now that the validity of the latter two sequents
must entail the validity of the former.
We conclude that the theorem holds for this rule as well by noting that
the argument is again independent of the choice of $\theta$ and $\sigma$.
\end{proof}

\subsubsection{Rules for the Logical Symbols}

Figure~\ref{fig:rules-base} presents proof rules that facilitate
reasoning based on the meanings of the logical symbols that are
permitted in formulas. 
That we may focus attention only on well-formed sequents in the
context of these rules is the content of the following theorem.
\begin{theorem}\label{th:core-wf}
The following property holds of the rules in Figure~\ref{fig:rules-base}: if the
conclusion sequent is well-formed, the premises expressing typing
conditions have derivations and the conditions expressed by the other,
non-sequent premises are satisfied, then the premise sequents must be
well-formed.
\end{theorem}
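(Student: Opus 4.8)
The plan is to argue by cases on which rule in Figure~\ref{fig:rules-base} is applied, in each case deriving condition~(a) (wellformedness of the context variable types relative to the support set) and condition~(b) (formula wellformedness) of Definition~\ref{def:sequent} for the premise sequents from the corresponding facts about the conclusion sequent. The rules divide naturally into three groups, and the first group is essentially bookkeeping. For the propositional rules $\topR$, $\botL$, $\impR$, $\impL$, $\andR$, $\andL$, $\orR$, and $\orL$, the support set $\mathbb{N}$, the term variables context $\Psi$, and the context variables context $\Xi$ are all unchanged from conclusion to premises, so condition~(a) transfers verbatim. It then remains only to check condition~(b) for the formulas newly appearing in the premises; since those formulas are immediate subformulas of the principal formula of the conclusion, their wellformedness follows by inverting the derivation of $\wfform{\mathbb{N}\cup\Psi\cup\STLCGamma_0}{\ctxsanstype{\Xi}}{F'}$ using the rules in Figure~\ref{fig:wfform} (the $\supset$/$\wedge$/$\vee$ clause in each case).

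The second group consists of the rules that enlarge the variable contexts of the sequent, namely $\allR$ and $\existsL$ (which move the bound term variable into $\Psi$) and $\ctxR$ (which adds a fresh context variable to $\Xi$). For $\allR$ on a goal $\fall{x:\alpha}{F}$ and $\existsL$ on an assumption $\fexists{x:\alpha}{F}$, inverting the $\genericq$ clause of Figure~\ref{fig:wfform} yields $\wfform{\aritysum{\{x:\alpha\}}{(\mathbb{N}\cup\Psi\cup\STLCGamma_0)}}{\ctxsanstype{\Xi}}{F}$, which is exactly the instance of condition~(b) required of the premise once $x{:}\alpha$ is added to $\Psi$; the arity annotation on the quantifier is precisely what makes the augmented context match. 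Condition~(a) for the existing context variable types then follows from Theorem~\ref{th:ctx-ty-wk}, since $\Psi$ has only grown. For $\ctxR$ on $\fctx{\Gamma}{\mathcal{C}}{F}$, inversion gives both $\acstyping{\mathcal{C}}$ and $\wfform{\mathbb{N}\cup\Psi\cup\STLCGamma_0}{\ctxsanstype{\Xi}\cup\{\Gamma\}}{F}$; condition~(b) is thereby met, and condition~(a) for the new association $\ctxvarty{\Gamma}{\mathbb{N}_\Gamma}{\ctxty{\mathcal{C}}{\cdot}}$ holds because $\ctxty{\mathcal{C}}{\cdot}$ is wellformed by the base rule of Figure~\ref{fig:wfctxvar}, while the remaining associations are handled again by Theorem~\ref{th:ctx-ty-wk}.

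The third group comprises the instantiation rules $\allL$, $\existsR$, and $\ctxL$, and these are where the real work lies. For $\allL$ and $\existsR$, the premise replaces the quantified formula $\fall{x:\alpha}{F}$ or $\fexists{x:\alpha}{F}$ by $\hsubst{\{\langle x,t,\alpha\rangle\}}{F}$ for an instantiating term $t$; the accompanying typing premise supplies $\stlctyjudg{\mathbb{N}\cup\Psi\cup\STLCGamma_0}{t}{\alpha}$, and feeding this together with the inverted wellformedness of the body into Theorem~\ref{th:subst-formula}(1) yields both that the substitution is well defined and that the result is wellformed, giving condition~(b); condition~(a) is untouched. The most delicate case is $\ctxL$ on an assumption $\fctx{\Gamma}{\mathcal{C}}{F}$, where the premise introduces $\subst{\{G/\Gamma\}}{F}$ for a context expression $G$ instantiating the schema $\mathcal{C}$. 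Here I expect the main obstacle to be coordinating two results: first Theorem~\ref{th:schemainst}, to certify that $G$ is a wellformed context expression over the appropriate arity and context variable contexts, and then Theorem~\ref{th:subst-formula}(2), to conclude wellformedness of the substituted formula; care is needed in tracking how the support set and $\ctxsanstype{\Xi}$ evolve so that the hypotheses of these theorems are genuinely met, using Theorem~\ref{th:wfsupset} to reconcile any differences in the ambient contexts. With these cases dispatched, the theorem follows.
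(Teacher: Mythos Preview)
Your handling of the propositional rules and of $\allL$, $\existsR$ is fine and matches the paper. But your treatment of $\allR$ and $\existsL$ rests on a misreading of the rules in Figure~\ref{fig:rules-base}. These rules do \emph{not} simply add $x{:}\alpha$ to $\Psi$ and leave the body $F$ intact. Instead, they introduce a fresh variable $y$ of the \emph{raised} type $\alpha_1\atyarr\cdots\atyarr\alpha_m\atyarr\alpha$ (where $n_1{:}\alpha_1,\ldots,n_m{:}\alpha_m$ enumerate $\mathbb{N}$) and replace $F$ by $F'$ satisfying $\hsub{\{\langle x,\,y\app n_1\ldots n_m,\,\alpha\rangle\}}{F}{F'}$. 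So the claim that the inverted wellformedness judgement for $F$ ``is exactly the instance of condition~(b) required of the premise once $x{:}\alpha$ is added to $\Psi$'' is simply false: the premise context contains $y$, not $x$, and the premise formula is $F'$, not $F$. To close this gap you must argue that the substitution $\{\langle x,\,y\app n_1\ldots n_m,\,\alpha\rangle\}$ is arity type preserving with respect to $\mathbb{N}\cup\STLCGamma_0\cup(\Psi\cup\{y{:}\alpha_1\atyarr\cdots\atyarr\alpha_m\atyarr\alpha\})$ and then invoke Theorem~\ref{th:subst-formula}(1) to obtain $\wfform{\mathbb{N}\cup\STLCGamma_0\cup\Psi\cup\{y{:}\ldots\}}{\ctxsanstype{\Xi}}{F'}$. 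This is exactly what the paper does.

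A smaller issue: for $\ctxL$, the side premise is $\ctxtyinst{\mathbb{N}}{\Psi}{\Xi}{\ctxty{\mathcal{C}}{\emptycb}}{G}$, and the instantiating expression $G$ may itself contain a context variable drawn from $\Xi$. Theorem~\ref{th:schemainst} concerns the relation $\csinst{\mathbb{N}}{\Psi}{\mathcal{C}}{G}$ and yields wellformedness only relative to an empty context-variable set; it does not apply here. The correct lemma is Theorem~\ref{th:ctx-var-type-instance}, which takes the $\ctxtyinst{\cdot}{\cdot}{\Xi}{\cdot}{\cdot}$ judgement as hypothesis and delivers $\wfctx{\mathbb{N}\cup\Psi}{\ctxsanstype{\Xi}}{G}$; from there Theorem~\ref{th:subst-formula}(2) finishes the case as you intend.
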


\begin{proof}
The argument pertaining to $\ftrue$, $\ffalse$ and all the
propositional connectives is straightforward, leaving us only to argue
for the quantifier rules.

\begin{figure}[htb]

\begin{center}
\begin{tabular}{c}    
\infer[\topR]{\seq[\mathbb{N}]{\Psi}{\Xi}{\Omega}{\ftrue}}{}

\qquad 

\infer[\botL]{\seq[\mathbb{N}]{\Psi}{\Xi}{\setand{\Omega}{\ffalse}}{F}}{}

\\[10pt]

\infer[\andR]{\seq[\mathbb{N}]{\Psi}{\Xi}{\Omega}{\fand{F_1}{F_2}}}
             {\seq[\mathbb{N}]{\Psi}{\Xi}{\Omega}{F_1} &
               \seq[\mathbb{N}]{\Psi}{\Xi}{\Omega}{F_2}}

\qquad 

\infer[\andL]{\seq[\mathbb{N}]{\Psi}{\Xi}{\setand{\Omega}{\fand{F_1}{F_2}}}{F}}
             {\seq[\mathbb{N}]{\Psi}{\Xi}{\setand{\Omega}{F_i}}{F} &
              i\in\{1,2\}}

\\[10pt]
             
\infer[\orR]{\seq[\mathbb{N}]{\Psi}{\Xi}{\Omega}{\for{F_1}{F_2}}}
            {\seq[\mathbb{N}]{\Psi}{\Xi}{\Omega}{F_i} &
              i\in\{1,2\}}

\qquad
            
\infer[\orL]{\seq[\mathbb{N}]{\Psi}{\Xi}{\setand{\Omega}{\for{F_1}{F_2}}}{F}}
            {\seq[\mathbb{N}]{\Psi}{\Xi}{\setand{\Omega}{F_1}}{F} &
              \seq[\mathbb{N}]{\Psi}{\Xi}{\setand{\Omega}{F_2}}{F}}

\\[10pt]

\infer[\impR]{\seq[\mathbb{N}]{\Psi}{\Xi}{\Omega}{\fimp{F_1}{F_2}}}
      {\seq[\mathbb{N}]{\Psi}{\Xi}{\setand{\Omega}{F_1}}{F_2}}

\qquad
      
\infer[\impL]{\seq[\mathbb{N}]{\Psi}{\Xi}{\setand{\Omega}{\fimp{F_1}{F_2}}}{F}}
             {\seq[\mathbb{N}]{\Psi}{\Xi}{\Omega}{F_1} &
              \seq[\mathbb{N}]{\Psi}{\Xi}{\setand{\Omega}{F_2}}{F}}
\end{tabular}

\medskip\smallskip

\begin{tabular}{c}
\infer[\allR]
      {\seq[\mathbb{N}]{\Psi}{\Xi}{\Omega}{\fall{x:\alpha}{F}}}
      {\begin{array}{c}
           \mathbb{N}=
                  \{n_1:\alpha_1,\ldots,n_m:\alpha_m\}
              \qquad
           y\not\in\dom{\Psi}
              \qquad
           \hsub{\{\langle x, y\app n_1\ldots n_m,\alpha\rangle\}}{F}{F'}
           \\[2pt]
           \seq[\mathbb{N}]
               {\Psi\cup\{y:(\arr{\alpha_1}{\arr{\ldots}{\arr{\alpha_m}{\alpha}}})\}}
               {\Xi}
               {\Omega}
               {F'}
      \end{array}}

\\[10pt]

\infer[\allL]{\seq[\mathbb{N}]{\Psi}{\Xi}{\setand{\Omega}{\fall{x:\alpha}{F_1}}}{F_2}}
             {\stlctyjudg{\mathbb{N}\cup\STLCGamma_0\cup\Psi}{t}{\alpha}
               &
              \hsub{\{\langle x, t, \alpha\rangle\}}{F_1}{F_1'}
               &
               \seq[\mathbb{N}]{\Psi}{\Xi}{\setand{\Omega}{F_1'}}{F_2}
             }

\\[10pt]
             
\infer[\existsR]{\seq[\mathbb{N}]{\Psi}{\Xi}{\Omega}{\fexists{x:\alpha}{F}}}
                {\stlctyjudg{\mathbb{N}\cup\STLCGamma_0\cup\Psi}{t}{\alpha}
                  &
                 \hsub{\{\langle x, t, \alpha\rangle\}}{F}{F'}
                  &
                 \seq[\mathbb{N}]{\Psi}{\Xi}{\Omega}{F'}
                }

\\[10pt]

\infer[\existsL]
      {\seq[\mathbb{N}]{\Psi}{\Xi}{\setand{\Omega}{\fexists{x:\alpha}{F_1}}}{F_2}}
      {\begin{array}{c}
          \mathbb{N}=\{n_1:\alpha_1,\ldots,n_m:\alpha_m\}
           \qquad
           y\not\in\dom{\Psi}
           \qquad
           \hsub{\{\langle x, y\app n_1\ldots n_m, \alpha\rangle\}}{F_1}{F_1'}
         \\[2pt]
         \seq[\mathbb{N}]
             {\Psi\cup\{y:(\arr{\alpha_1}{\arr{\ldots}{\arr{\alpha_m}{\alpha}}})\}}
             {\Xi}
             {\setand{\Omega}{F_1'}}
             {F_2} 
      \end{array}}

\\[10pt]

\infer[\ctxR]{\seq[\mathbb{N}]{\Psi}{\Xi}{\Omega}{\fctx{\Gamma}{\mathcal{C}}{F}}}
             {\seq[\mathbb{N}]
                  {\Psi}
                  {\Xi,\ctxvarty{\Gamma'}
                                {\emptyset}
                                {\ctxty{\mathcal{C}}{\cdot}}}
                  {\Omega}
                  {\subst{\{\Gamma'/\Gamma\}}{F}} 
                &
              \Gamma'\not\in\dom{\Xi}}

\\[10pt]

\infer[\ctxL]{\seq[\mathbb{N}]
                  {\Psi}
                  {\Xi}
                  {\setand{\Omega}{\fctx{\Gamma}{\mathcal{C}}{F_1}}}
                  {F_2}}
             {\ctxtyinst{\mathbb{N}}{\Psi}{\Xi}{\ctxty{\mathcal{C}}{\emptycb}}{G}
               &
              \seq[\mathbb{N}]
                  {\Psi}
                  {\Xi}
                  {\setand{\Omega}{\subst{\{G/\Gamma\}}{F_1}}}
                  {F_2}
             }
\end{tabular}
\end{center}
\caption{The Logical Rules}
\label{fig:rules-base}
\end{figure}

The considerations for the rules \allR\ and \existsL\ are similar and
we therefore discuss only the case for the former rule in detail.
In showing that the context variables context for the premise sequent
satisfies the wellformedness conditions, we observe that
the corresponding context for the conclusion sequent does by assumption
and we then use Theorem~\ref{th:ctx-ty-wk}.
In showing that the assumption formulas in the premise sequent satisfy
the required conditions, we again use the fact that these formulas in
the conclusion sequent do and then invoke Theorem~\ref{th:wfsupset}.
Thus, it only remains to show that $F'$ satisfies the conditions
required of it.
From the wellformedness of $\fall{x:\alpha}{F}$ relative to the
conclusion sequent, we see that 
$\wfform{\mathbb{N}\cup\STLCGamma_0\cup\Psi,x:\alpha}{\Xi^-}{F}$
has a derivation.
The substitution $\{\langle x,y\app n_1\ldots
n_m,\alpha\rangle\}$ is obviously type preserving with respect to the
arity typing context 
$\mathbb{N}\cup\STLCGamma_0\cup(\Psi,y:\arr{\alpha_1}{\arr{\ldots}{\arr{\alpha_m}{\alpha}}})$.
We may therefore invoke Theorem~\ref{th:subst-formula} to conclude that 
the judgement
$\wfform{\mathbb{N}\cup\STLCGamma_0\cup
             \Psi,y:\arr{\alpha_1}{\arr{\ldots}{\arr{\alpha_m}{\alpha}}}}
        {\Xi^-}
        {F'}$
has a derivation.

The considerations for the rules $\allL$ and $\existsR$ are also
similar and so we discuss only the case for $\allL$ in detail.
Most of the wellformedness requirements for the premise sequent in
fact follow trivially from the fact they hold for the conclusion
sequent.
The only perhaps nontrivial part is showing that $F'_1$ is
well-formed.
From the wellformedness of the conclusion sequent, we know that
$\wfform{\mathbb{N}\cup\STLCGamma_0\cup\Psi \cup \{x:\alpha\}}
        {\Xi^-}
        {F_1}$ 
has a derivation.
Since
$\stlctyjudg{\mathbb{N}\cup\STLCGamma_0\cup\Psi}
            {t}
            {\alpha}$
has a derivation, $\{\langle x, t, \alpha\rangle\}$ is an arity type
preserving substitution with respect to
$\mathbb{N}\cup\STLCGamma_0\cup\Psi$. 
We can now invoke Theorem~\ref{th:subst-formula} to conclude that
there must be a derivation for
$\wfform{\mathbb{N}\cup\STLCGamma_0\cup\Psi}
        {\Xi^-}
        {F'_1}$.

We consider the case for \ctxR\ next.
Here we must show that the addition
$\ctxvarty{\Gamma'}{\emptyset}{\ctxty{\mathcal{C}}{\cdot}}$
to the context variables context in the premise sequent satisfies the
wellformedness requirements.
That it does follows easily from the observations that
$\acstyping{\mathcal{C}}$ must be derivable because the formula
$\fctx{\Gamma}{\mathcal{C}}{F}$ is well-formed (relative to the 
conclusion sequent) and that 
$\wfctxvarty{\mathbb{N}}{\Psi}{\ctxty{\mathcal{C}}{\cdot}}$
has a trivial derivation.
The wellformedness of all the other associations in the context
variables context follows immediately from the wellformedness of the
conclusion sequent and the wellformedness of the formulas in $\Omega$
can be argued similarly, additionally using Theorem~\ref{th:wfsupset}.
Finally, using the fact that 
$\wfform{\mathbb{N}\cup\STLCGamma_0\cup\Psi}
        {\Xi^-}
        {\fctx{\Gamma}{\mathcal{C}}{F}}$
has a derivation, we can easily show that there must be a derivation
for 
$\wfform{\mathbb{N}\cup\STLCGamma_0\cup\Psi}
        {\Xi^-\cup\{\Gamma'\}}
        {\subst{\{\Gamma'/\Gamma\}}{F}}$.

The only remaining case is that of \ctxL.
Moreover, the only requirement whose validation is nontrivial in
this case is that there is a derivation for
$\wfform{\mathbb{N}\cup\STLCGamma\cup\Psi}{\Xi^-}{\subst{\{G/\Gamma\}}{F_1}}$.
There is a trivial derivation for
$\wfctxvarty{\mathbb{N}}{\Psi}{\ctxty{\mathcal{C}}{\cdot}}$ and, from
one of the premises, we know that there is a derivation for
$\ctxtyinst{\mathbb{N}}{\Psi}{\Xi}{\ctxty{\mathcal{C}}{\cdot}}{G}$.
But then, from Theorem~\ref{th:ctx-var-type-instance}, it follows that
$\wfctx{\mathbb{N}}{\Psi}{G}$ has a derivation.
By the well-formedness of the conclusion sequent, 
$\wfform{\mathbb{N}\cup\STLCGamma\cup\Psi}
        {\Xi^-}
        {\fctx{\Gamma}{\mathcal{C}}{F_1}}$
has a derivation, from which it follows that 
$\wfform{\mathbb{N}\cup\STLCGamma\cup\Psi}{\Xi^-\cup\{\Gamma\}}{F_1}$
must also have one.
We now use Theorem~\ref{th:subst-formula} to reach the desired conclusion.
\end{proof}

The proof of soundness of the rules $\ctxL$, $\allL$ and $\existsR$
will use inversions in the order of application of substitutions that
are justified by the three lemmas below.

\begin{lemma}\label{lem:form-hsubperm}
Let $\theta_1$ and $\theta_2$ be term variables substitutions that are arity
type preserving with respect to $\Theta$ and
$\aritysum{\context{\theta_1}}{\Theta}$,
respectively, and such that the variables in $\domain{\theta_2}$ are
distinct from those in $\domain{\theta_1}$ and do not appear free in
the terms in $\range{\theta_1}$. 
Further, let $\theta_2'$ be the substitution
\[\{\langle x,M',\alpha\rangle\ |\
         \langle x,M,\alpha\rangle\in\theta_2\mbox{ and }
         \hsub{\theta_1}{M}{M'}\}.\]
If $F$ is a formula such that
$\wfform{\aritysum{\context{\theta_2}}{(\aritysum{\context{\theta_1}}{\Theta}})}
        {\ctxsanstype{\Xi}}
        {F}$ 
is derivable for some context variables context $\Xi$, then
both $\hsubst{\theta_1}{\hsubst{\theta_2}{F}}$ and
$\hsubst{\theta_2'}{\hsubst{\theta_1}{F}}$ are defined and
are in fact the same formulas.
\end{lemma}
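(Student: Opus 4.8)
The plan is to lift the term-level permutation property of Theorem~\ref{th:subspermute} to the level of formulas, organizing the argument into two phases: first establishing that all four substitution applications named in the statement are defined and well-formed, and then proving the equality by induction on the structure of $F$. For the definedness phase, I would first note that $\context{\theta_2'} = \context{\theta_2}$ and that $\theta_2'$ is arity type preserving with respect to $\Theta$: each range term $M' = \hsubst{\theta_1}{M}$ arises by applying $\theta_1$ to a term $M$ that has arity type $\alpha$ over $\aritysum{\context{\theta_1}}{\Theta}$ (since $\theta_2$ is type preserving there), so Theorem~\ref{th:aritysubs} guarantees that $M'$ exists and retains arity type $\alpha$ over $\Theta$. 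Using the hypotheses that $\domain{\theta_2}$ is disjoint from $\domain{\theta_1}$ and that the variables of $\domain{\theta_2}$ do not occur in $\range{\theta_1}$, I would then argue that $\aritysum{\context{\theta_2}}{(\aritysum{\context{\theta_1}}{\Theta})}$ may be freely reassociated as $\aritysum{\context{\theta_1}}{(\aritysum{\context{\theta_2}}{\Theta})}$ and that $\theta_1$ remains arity type preserving with respect to $\aritysum{\context{\theta_2}}{\Theta}$.

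With these observations, two applications of the first clause of Theorem~\ref{th:subst-formula} down each path show that $\hsubst{\theta_2}{F}$ (well-formed over $\aritysum{\context{\theta_1}}{\Theta}$), $\hsubst{\theta_1}{\hsubst{\theta_2}{F}}$, $\hsubst{\theta_1}{F}$ (well-formed over $\aritysum{\context{\theta_2}}{\Theta}$), and $\hsubst{\theta_2'}{\hsubst{\theta_1}{F}}$ are all defined, the two composite applications both being well-formed over $\Theta$. I would then prove the equality by structural induction on $F$, reading the statement as universally quantified over the base arity context $\Theta$ and the context variable set $\ctxsanstype{\Xi}$, so that the induction hypothesis is available for subformulas checked against augmented contexts. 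The cases for $\ftrue$, $\ffalse$ and the propositional connectives are immediate, since the substitutions distribute and the hypothesis applies to the immediate subformulas. For a term quantifier $\fall{x:\alpha}{F_0}$ or $\fexists{x:\alpha}{F_0}$, I would first rename $x$ so that it is distinct from $\domain{\theta_1}\cup\domain{\theta_2}$ and absent from their ranges; the substitutions then pass under the quantifier, $F_0$ is well-formed over the context augmented by $\{x:\alpha\}$, and the induction hypothesis applies with $\Theta$ replaced by $\aritysum{\{x:\alpha\}}{\Theta}$. The context quantifier $\fctx{\Gamma}{\mathcal{C}}{F_0}$ is analogous: after renaming $\Gamma$ afresh, the term substitutions leave the context variable untouched and distribute to $F_0$, and the induction hypothesis applies with $\ctxsanstype{\Xi}$ enlarged by $\Gamma$.

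The substantive case is the atomic formula $\fatm{G}{M:A}$. Here I would use that a term substitution distributes to $M$, to $A$, and to the types decorating the explicit nominal-constant bindings of $G$, while leaving any leading context variable unchanged. The desired identity therefore reduces to $\hsubst{\theta_1}{\hsubst{\theta_2}{M}} = \hsubst{\theta_2'}{\hsubst{\theta_1}{M}}$, the corresponding identity for $A$, and one such identity for each type appearing in $G$. Since $\theta_2'$ is precisely the substitution $\theta_3$ of Theorem~\ref{th:subspermute} (that theorem's $\theta_2$ with $\theta_1$ applied to its range) and the freshness hypotheses match verbatim, each of these identities follows directly from Theorem~\ref{th:subspermute}; the uniqueness of substitution results granted by Theorem~\ref{th:uniqueness} then lets me conclude that the two composite formulas are literally identical, the context variable matching trivially because it is fixed by both substitutions.

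I expect the main obstacle to lie in the bookkeeping of the first phase rather than in the induction. The two substitution orders are typed against arity contexts that differ only in how $\context{\theta_1}$ and $\context{\theta_2}$ are interleaved, and the delicate point is that $\hsubst{\theta_1}{F}$ is defined at all: this hinges on reassociating $\aritysum{\context{\theta_2}}{(\aritysum{\context{\theta_1}}{\Theta})}$ so as to expose $\theta_1$'s domain, which is legitimate only because $\domain{\theta_2}$ and $\domain{\theta_1}$ are disjoint and $\domain{\theta_2}$ is absent from $\range{\theta_1}$. Once this reconciliation of arity contexts is set up cleanly, every remaining step is either routine distribution of the substitution over logical structure or a direct appeal to Theorem~\ref{th:subspermute}.
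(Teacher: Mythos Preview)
Your proposal is correct and follows essentially the same approach as the paper: the paper's proof merely notes that $\theta_2'$ is well-defined by Theorem~\ref{th:aritysubs} and then proceeds by induction on the structure of formulas, eventually invoking Theorem~\ref{th:subspermute}. You have simply fleshed out the bookkeeping (the definedness phase via Theorem~\ref{th:subst-formula}, the arity-context reassociation, and the case split in the induction) that the paper leaves implicit.
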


\begin{proof}
We note first that the substitution $\theta'_2$ is well-defined by
Theorem~\ref{th:aritysubs}.
The lemma is proved by an induction on the structure of formulas, 
making use eventually of Theorem~\ref{th:subspermute}.
\end{proof}

\begin{lemma}\label{lem:form-subperm}
Let $\sigma_1$ and $\sigma_2$ be context variables substitutions
such that the variables substituted for by $\sigma_2$ are distinct
from those substituted for by $\sigma_1$ and they also do not
appear in the context expressions in the range of $\sigma_1$.
Further, let $\sigma_2'$ be the context variables substitution
\[\{G'/\Gamma\ |\
    G/\Gamma\in\sigma_2\mbox{ and }\subst{\sigma_1}{G}=G'\}.\]
Then, for any formula $F$, $\subst{\sigma_1}{\subst{\sigma_2}{F}}$ and
$\subst{\sigma_2'}{\subst{\sigma_1}{F}}$ represent identical formulas.
\end{lemma}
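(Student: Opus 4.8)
The plan is to mirror the structure of the proof of Lemma~\ref{lem:form-hsubperm} by proceeding by induction on the structure of $F$, but to observe that, since context variables substitutions involve no hereditary normalization, the appeal to Theorem~\ref{th:subspermute} there is here replaced by a direct case analysis. Because a context variable can occur in a formula only inside the context expression $G$ of an atomic subformula $\fatm{G}{M:A}$ — the term $M$ and the type $A$ contain no context variables and are therefore left untouched by each of $\sigma_1$, $\sigma_2$, and $\sigma_2'$ — the whole content of the lemma reduces to the corresponding identity for context expressions. Accordingly, the first step is to prove, as an auxiliary claim, that for every context expression $G$ we have $\subst{\sigma_1}{\subst{\sigma_2}{G}} = \subst{\sigma_2'}{\subst{\sigma_1}{G}}$.

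This auxiliary claim would be established by induction on the structure of $G$. The case $G = \emptyce$ is immediate, and the case $G = (G', n:A)$ follows from the induction hypothesis for $G'$ once we note that each substitution carries the explicit binding $n:A$ through unchanged. The only case that uses the hypotheses of the lemma is that in which $G$ is a context variable $\Gamma$, which splits into three subcases according to the domain membership of $\Gamma$. If $H/\Gamma \in \sigma_2$, then, because $\domain{\sigma_1}$ and $\domain{\sigma_2}$ are disjoint, the left-hand side is $\subst{\sigma_1}{H}$, while the right-hand side is $\subst{\sigma_2'}{\Gamma}$, which by the definition of $\sigma_2'$ is exactly $\subst{\sigma_1}{H}$. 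If $H/\Gamma \in \sigma_1$, then $\Gamma$ lies outside $\domain{\sigma_2} = \domain{\sigma_2'}$, so the left-hand side is $H$, and the right-hand side is $\subst{\sigma_2'}{H}$, which equals $H$ precisely because the variables in $\domain{\sigma_2}$ do not occur in the context expressions of $\range{\sigma_1}$. If $\Gamma$ belongs to neither domain, both sides reduce to $\Gamma$. This is the only genuinely substantive part of the argument, and it is where the two hypotheses of the lemma — disjointness of the domains and freshness of $\domain{\sigma_2}$ for $\range{\sigma_1}$ — are consumed.

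With the auxiliary claim in hand, the lemma itself follows by a routine induction on the structure of $F$. The propositional cases ($\ftrue$, $\ffalse$, and the binary connectives) and the term-quantifier cases $\fall{x:\alpha}{F'}$ and $\fexists{x:\alpha}{F'}$ are dispatched by observing that each of $\sigma_1$, $\sigma_2$, and $\sigma_2'$ distributes over the connective or quantifier and then invoking the induction hypothesis; term quantifiers bind no context variables and so cause no interference. The atomic case $\fatm{G}{M:A}$ is discharged directly by the auxiliary claim, together with the remark that $M$ and $A$ are inert under all the substitutions in question.

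The step I expect to require the most care is the context-quantifier case $F = \fctx{\Gamma}{\mathcal{C}}{F'}$. Here I would invoke the standing convention that bound context variables are renamed away from the context variables occurring in the ranges of the substitutions being applied, so that the bound $\Gamma$ may be taken to be distinct from every variable in $\domain{\sigma_1} \cup \domain{\sigma_2}$ and absent from every context expression in $\range{\sigma_1} \cup \range{\sigma_2}$. Under this choice each substitution commutes past the context binder without capture, and the two sides reduce to the induction hypothesis for $F'$. The principal obstacle is therefore not any deep computation but the bookkeeping needed to confirm that the renaming conventions render the two sides \emph{syntactically identical} rather than merely equal up to renaming of bound context variables; this is exactly the concern that motivates stating the freshness hypotheses on $\sigma_1$ and $\sigma_2$ in the form given.
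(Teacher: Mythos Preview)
Your proposal is correct and follows essentially the same approach as the paper's own proof: first establish the identity for context expressions by induction on their structure, then extend it to formulas by induction on formula structure. The paper's proof is stated in a single terse sentence, while you have spelled out the case analysis in detail, including the treatment of the context-quantifier binder, but the underlying strategy is identical.
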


\begin{proof}
Using an inductive argument, we first show the property for context
expressions and then extend it to formulas.
\end{proof}

\begin{lemma}\label{lem:form-perm}
Let $\sigma$ be a context variables substitution appropriate for $\Xi$
with respect to $\context{\theta}\cup\Psi$ and $\mathbb{N}$, let
$\theta$ be an arity type preserving substitution with respect to  
$\mathbb{N}\cup\STLCGamma_0\cup\Psi$, and let $F$ be a formula such
that
$\wfform{\mathbb{N}\cup\STLCGamma_0\cup
                    (\aritysum{\context{\theta}}{\Psi})}
        {\ctxsanstype{\Xi}}
        {F}$ is derivable.
Then, if $\sigma'$ be the substitution
$\{G'/\Gamma\ |\ G/\Gamma\in\sigma\mbox{ and }\hsub{\theta}{G}{G'}\}$,
$\hsubst{\theta}{\subst{\sigma}{F}}$ and
$\subst{\sigma'}{\hsubst{\theta}{F}}$ denote the same formulas.
\end{lemma}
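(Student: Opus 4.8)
The plan is to prove the identity by a structural induction on $F$, after isolating its essential content as a commutation statement about context expressions. The point is that $\sigma$ rewrites only the context-variable head of a context expression whereas $\theta$ rewrites only the term- and type-level occurrences of the variables in $\domain{\theta}$; these act on disjoint syntactic material, so the two applications should commute, the sole bookkeeping being the rewriting of the range of $\sigma$ that is recorded in the definition of $\sigma'$. Before the induction I would record well-definedness: since $\theta$ is arity type preserving and $\sigma$ is appropriate, each context expression in the range of $\sigma$ is a well-formed context whose component types respect the relevant arity context, so by Theorem~\ref{th:aritysubs} every application $\hsub{\theta}{G}{G'}$ used in forming $\sigma'$ has a result and $\sigma'$ is total on $\domain{\sigma}$. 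Together with Theorem~\ref{th:subst-formula}, this guarantees that all of $\subst{\sigma}{F}$, $\hsubst{\theta}{\subst{\sigma}{F}}$, $\hsubst{\theta}{F}$ and $\subst{\sigma'}{\hsubst{\theta}{F}}$ are defined.

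The crux is the following claim, which I would prove by an inner induction on the structure of a context expression $G$ occurring in $F$: $\hsubst{\theta}{\subst{\sigma}{G}} = \subst{\sigma'}{\hsubst{\theta}{G}}$. The empty context is immediate. When $G$ is a context variable $\Gamma$, the case $\Gamma\not\in\domain{\sigma}$ is trivial, since $\theta$ leaves context variables fixed and $\domain{\sigma'}=\domain{\sigma}$; the case $\Gamma\in\domain{\sigma}$ is exactly where $\sigma'$ is designed to make the square commute, because $\subst{\sigma'}{\hsubst{\theta}{\Gamma}}=\subst{\sigma'}{\Gamma}$ is, by the very definition of $\sigma'$, the term $\hsubst{\theta}{\subst{\sigma}{\Gamma}}$. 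When $G$ has the form $G_1,n:A$, I observe that $\sigma$ leaves the explicit binding $n:A$ untouched (context variables do not occur in types) while $\theta$ distributes to $A$; both sides therefore reduce to a context of the form $(\cdots),n:\hsubst{\theta}{A}$ whose heads agree by the inner induction hypothesis applied to $G_1$.

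With this in hand the outer induction on $F$ is routine. The propositional constants and the connectives distribute over both substitutions and appeal directly to the induction hypothesis. For an atomic formula $\fatm{G}{M:A}$, the term $M$ and the type $A$ contain no context variables and so are unaffected by $\sigma$ and $\sigma'$, while $\theta$ distributes to them identically on the two sides; the context component is handled by the claim of the previous paragraph, so both sides equal $\fatm{\subst{\sigma'}{\hsubst{\theta}{G}}}{\hsubst{\theta}{M}:\hsubst{\theta}{A}}$. In the quantifier cases I would first rename the bound variable to be fresh for $\domain{\theta}$, $\range{\theta}$, $\domain{\sigma}$ and $\supportof{\sigma}$, so that both substitutions descend under the binder; here one uses that $\theta$ leaves the schema $\mathcal{C}$ annotating a context quantifier unchanged (the types in its body mention only the constants of $\STLCGamma_0$ and the schema's own header variables, none of which belongs to $\domain{\theta}$), and that the well-formedness and appropriateness hypotheses transfer to the body under the correspondingly extended arity or context-variable contexts, so that the induction hypothesis applies.

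I expect the conceptual heart of the argument to be the context-variable base case of the inner induction: it is precisely the point at which the definition of $\sigma'$ forces the diagram to commute. The remaining work --- checking the freshness conditions at the binders and verifying that appropriateness and well-formedness carry over to subformulas --- is entirely mechanical, and mirrors the handling of the two preceding commutation lemmas.
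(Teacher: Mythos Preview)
Your proposal is correct and follows essentially the same approach as the paper: establish well-definedness of $\sigma'$ and of all four substituted formulas, then argue by structural induction on $F$, with the key observation being that term substitutions leave context variables unchanged. The paper's proof is considerably terser than yours---it dispatches the induction in a single sentence---but your explicit inner induction on context expressions and your case analysis for the quantifiers simply unpack what the paper leaves implicit.
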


\begin{proof}
Note first that for any $G/\Gamma \in \sigma$ it must be the case
that
$\wfctx{\mathbb{N}\cup\STLCGamma_0\cup(\aritysum{\context{\theta}}{\Psi})}
       {\ctxvarminus{\Xi}{\sigma}}
       {G}$
is derivable; this follows from the appropriateness property for
$\sigma$, using Theorem~\ref{th:ctx-var-type-instance}.
It is easily seen from this that $\hsubst{\theta}{G}$ must be defined
and, hence, that the substitution $\sigma'$ is well-defined. 
Using Theorem~\ref{th:subst-formula}, it also follows that 
$\wfform{\mathbb{N}\cup\STLCGamma_0\cup
                    (\aritysum{\context{\theta}}{\Psi})}
        {\ctxsanstype{\Xi}}
        {\subst{\sigma}{F}}$ is derivable.
Using Theorem~\ref{th:subst-formula} again, it follows that
$\hsubst{\theta}{\subst{\sigma}{F}}$ must be defined.
Finally, by Theorem~\ref{th:aritysubs}, $\hsubst{\theta}{F}$ must be
defined.
This establishes the coherence of the lemma statement.
Its proof itself is based on an induction on the structure of the
formula $F$, noting that term substitutions leave context variables
unchanged. 
\end{proof}

The soundness of the logical rules is the content of the following
theorem.

\begin{theorem}\label{th:core-sound}
The following property holds for every instance of each of the rules
in Figure~\ref{fig:rules-base}:
if the premises expressing typing judgements are derivable, the
conditions described in the other non-sequent premises are satisfied
and all the premise sequents are valid, 
then the conclusion sequent must also be valid. 
\end{theorem}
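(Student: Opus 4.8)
The plan is to fix an arbitrary rule from Figure~\ref{fig:rules-base} together with an arbitrary closed instance of its conclusion sequent, identified by a term substitution $\theta$ and a context substitution $\sigma$ in the sense of Definition~\ref{def:seq-validity}, and to show that this closed instance is valid on the assumption that every premise sequent is valid. Since validity of a closed sequent amounts, by Definition~\ref{def:seq-validity}, to the assertion that some assumption formula is invalid or the goal is valid, and since the meaning of each logical symbol is fixed by Definition~\ref{def:semantics}, the entire argument reduces to matching the semantic clause for the principal symbol of the rule against the validity of the corresponding premise closed instances. I would organize the proof by cases on the rule, treating the propositional rules first, then the instantiating quantifier and context rules, and finally the generalizing rules.

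For the propositional rules ($\topR$, $\botL$, $\andR$, $\andL$, $\orR$, $\orL$, $\impR$, $\impL$) the matching is immediate. These rules leave $\mathbb{N}$, $\Psi$ and $\Xi$ unchanged, so the very same $\theta$ and $\sigma$ that identify a closed instance of the conclusion also identify closed instances of each premise; because the two substitutions distribute over the logical symbols and over the assumption set, the implication between the closed instances is precisely the relevant clause of Definition~\ref{def:semantics}. For $\impR$, for example, assuming every formula in $\subst{\sigma}{\hsubst{\theta}{\Omega}}$ is valid, the validity of the premise closed instance says that $\subst{\sigma}{\hsubst{\theta}{F_2}}$ is valid whenever $\subst{\sigma}{\hsubst{\theta}{F_1}}$ is, which is exactly the condition for $\subst{\sigma}{\hsubst{\theta}{(\fimp{F_1}{F_2})}}$ to be valid.

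The rules that instantiate ($\allL$, $\existsR$, $\ctxL$) require the commutation lemmas. Given a closed instance of the conclusion via $\theta$ and $\sigma$, I would read off the witnessing object from the instantiating datum of the rule: for $\allL$ and $\existsR$ the witness is the closed term $\hsubst{\theta}{t}$, which has arity type $\alpha$ over $\noms\cup\STLCGamma_0$ by Theorem~\ref{th:aritysubs} and is left untouched by $\sigma$ since $\sigma$ acts only on context variables; for $\ctxL$ the witness is the closed instance of $G$, which remains an instance of $\mathcal{C}$ over $\noms$ by Theorems~\ref{th:ctxtyinst-subst} and \ref{th:ctxtyinst-hsubst}. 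The essential step is the identity that $\subst{\sigma}{\hsubst{\theta}{(\hsubst{\{\langle x,t,\alpha\rangle\}}{F_1})}}$ is the same formula as $\hsubst{\{\langle x,\hsubst{\theta}{t},\alpha\rangle\}}{(\subst{\sigma}{\hsubst{\theta}{F_1}})}$; here Lemma~\ref{lem:form-hsubperm} commutes the term substitution $\theta$ past the instantiation and Lemma~\ref{lem:form-perm} commutes $\sigma$ past the term substitution, with Lemma~\ref{lem:form-subperm} playing the analogous role for the context instantiation in $\ctxL$. Once this identity is in hand, the semantic clause for the universal (respectively existential, context) quantifier applied to the closed conclusion supplies exactly the additional valid assumption needed to invoke the validity of the premise closed instance.

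For the generalizing rules ($\allR$, $\existsL$, $\ctxR$) the direction is reversed: I would take an arbitrary witness permitted by the semantic clause of the conclusion — an arbitrary closed term $M$ of arity $\alpha$ for $\allR$ and $\existsL$, or an arbitrary instance $G$ of $\mathcal{C}$ for $\ctxR$ — and manufacture from it a closing substitution for the fresh variable introduced in the premise. For $\allR$ and $\existsL$ the premise introduces a variable $y$ raised over the support set $\mathbb{N}=\{n_1,\ldots,n_m\}$, and the point, which is essentially the content of Theorem~\ref{th:raised-subs}, is that abstracting the $n_i$ out of $M$ yields a term that, applied to $n_1\,\cdots\,n_m$, recovers $M$; extending $\theta$ by this binding for $y$ and again appealing to Lemma~\ref{lem:form-hsubperm} shows that the resulting closed instance of the premise coincides with the $M$-instance demanded by the conclusion's semantic clause. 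For $\ctxR$ one extends $\sigma$ to send the fresh context variable $\Gamma'$ to $G$ and invokes Lemma~\ref{lem:form-subperm}. The hard part will be the bookkeeping in precisely these generalizing cases: ensuring that the fresh variable's closing substitution is type-correct and support-disjoint so that the extended pair genuinely identifies a closed instance of the premise in the sense of Definition~\ref{def:seq-validity}, and reconciling the raising machinery built into $\hsubstseq{\emptyset}{\theta}{\mathcal{S}}$ and $\subst{\sigma}{\mathcal{S}}$ (Definitions~\ref{def:seq-term-subst-app} and \ref{def:cvar-subst-app}) with the free choice of semantic witness. The three commutation lemmas are designed to absorb exactly this reconciliation, so once the witness-to-substitution translation is set up correctly, the remaining verifications are routine.
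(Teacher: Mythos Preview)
Your proposal is correct and follows essentially the same approach as the paper: the propositional cases are dismissed as immediate, the instantiating rules ($\allL$, $\existsR$, $\ctxL$) are handled by pushing $\theta$ and $\sigma$ through the instantiation using Lemmas~\ref{lem:form-hsubperm}, \ref{lem:form-perm}, and \ref{lem:form-subperm} together with Theorems~\ref{th:aritysubs}, \ref{th:ctxtyinst-hsubst}, and \ref{th:ctxtyinst-subst}, and the generalizing rules ($\allR$, $\existsL$, $\ctxR$) are handled by extending the closing substitution to cover the fresh variable, with Theorem~\ref{th:raised-subs} supplying the support-disjoint witness for the raised term variable. The only stylistic difference is that the paper phrases the generalizing-rule arguments as proofs by contradiction (assume a non-valid closed instance of the conclusion and exhibit a non-valid closed instance of the premise), whereas you argue directly; the two are contrapositives of one another and use the same lemmas in the same places.
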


\begin{proof}
The argument is straightforward for the rules pertaining to the
logical constants and propositional connectives.
We therefore focus on the quantifier rules in the rest of the proof.
The wellformedness of sequents ensures that the applications of
substitutions to formulas and terms must be defined in actual
instances of these rules. 
In light of this, we will freely use the notation introduced after 
Theorems~\ref{th:aritysubs} in the discussions below. 

The arguments for the \allR\ and \existsL\ rules are similar, so we 
present only the argument for the \allR\ rule.
Suppose that the claim is false in this case.
Then there must be substitutions $\theta$ and $\sigma$ that identify a
closed instance of the conclusion sequent that is not valid.
We may assume without loss of generality that $y \notin
\domain{\theta}$; if this is not true at the outset, since $y
\notin \domain{\Psi}$, we may drop the tuple pertaining to $y$ from
$\theta$ to get another term substitution on which to base the
argument. 
Now, in the situation under consideration, all the formulas in
$\subst{\sigma}{\hsubst{\theta}{\Omega}}$ must be valid and
$\subst{\sigma}{\hsubst{\theta}{(\fall{x:\alpha}{F})}}$ must not be
valid.
The latter means that there is some term $t$ for which 
$\stlctyjudg{\noms\cup\STLCGamma_0}{t}{\alpha}$ has a derivation that
is such that 
$\hsubst{\{\langle x,t,\alpha\rangle\}}
        {\subst{\sigma}{\hsubst{\theta}{F}}}$
is not valid; note that $x$ is a variable that does not 
appear in $\theta$ or $\sigma$.
Using Lemma~\ref{lem:form-perm}, we see that this is the same formula
as
$\subst{\sigma}
       {\hsubst{\{\langle x,t,\alpha \rangle\}}
               {\hsubst{\theta}{F}}}$.
Using Theorem~\ref{th:raised-subs}, we can conclude that there must be
a term $t'$ that is devoid of nominal constants from $\mathbb{N}$ for
which the typing judgement 
$\stlctyjudg{\noms\cup\STLCGamma_0}
            {t'}
            {\arr{\arr{\alpha_1}{\arr{\ldots}{\alpha_m}}}{\alpha}}$
is derivable and that is such that 
$\subst{\sigma}
       {\hsubst{\{\langle x,t,\alpha \rangle\}}
               {\hsubst{\theta}{F}}}$
is the same formula as
$\hsubst{\{\langle y,t',\alpha\rangle\}}
        {\hsubst{\{\langle x, y\app n_1\ldots n_m,\alpha\rangle\}}
                {\hsubst{\theta}{F}}}$.
Since $x$ does not appear in $\theta$ and $y$ is not in
$\domain{\theta}$, we can conclude using Lemma~\ref{lem:form-hsubperm}
that the latter formula is the same as 
$\hsubst{\{\langle y,t',
             \arr{\alpha_1}{\arr{\ldots}{\arr{\alpha_m}{\alpha}}} \rangle\}}
        {\hsubst{\theta}
                {\hsubst{\{\langle x, y\app n_1\ldots n_m,\alpha\rangle\}}{F}}}$.
Using Theorem~\ref{th:composition}, we see further that this formula
is the same as
$\hsubst{\theta'}
        {\hsubst{\{\langle x, y\app n_1\ldots n_m,\alpha\rangle\}}{F}}$,
where $\theta'$ is the substitution 
$\theta \cup \{\langle y, t',
                 \arr{\alpha_1}{\arr{\ldots}{\arr{\alpha_m}{\alpha}}} \rangle\}$.
We observe at this point that $\theta'$ and $\sigma$ are substitutions
that identify a closed instance of the premise sequent:
since $t'$ is a closed term of the right type that is devoid of the
nominal constants in $\mathbb{N}$, it follows easily that $\theta'$
and $\sigma$ must be ``proper'' for the premise sequent if
$\theta$ and $\sigma$ are proper for the conclusion
sequent, the formulas in $\subst{\sigma}{\hsubst{\theta'}{\Omega}}$
are identical to the ones in $\subst{\sigma}{\hsubst{\theta}{\Omega}}$
and hence all closed, and the formula
$\subst{\sigma}
       {\hsubst{\theta'}
               {\hsubst{\{\langle x, y\app n_1\ldots n_m,\alpha\rangle\}}
                       {F}}}$
is identical to
$\hsubst{\{\langle x,t,\alpha\rangle\}}
        {\subst{\sigma}{\hsubst{\theta}{F}}}$
and hence also closed.
However, we now have a contradiction since this close instance of the
premise sequent must also be one that is not valid.

The arguments for the \allL\ and \existsR\ rules are also similar, so
we present only the one for the former rule.
Let $\theta$ and $\sigma$ be substitutions that identify a closed
instance of the conclusion sequent. 
Now, if any formula in
$\subst{\sigma}{\hsubst{\theta}{(\Omega \cup \fall{x:\alpha}{F_1})}}$
is not valid, then the conclusion sequent must be valid.
Let us therefore assume that all these formulas are valid.
Our task is to show that then
$\subst{\sigma}{\hsubst{\theta}{F_2}}$ must also be valid.
Clearly, $\theta$ and $\sigma$ also identify a closed instance of the
premise sequent.
If we can show that the assumption formula
$\subst{\sigma}
        {\hsubst{\theta}{\hsubst{\{\langle x,t,\alpha\rangle\}}{F_1}}}$ 
in this sequent is valid, then all its assumption formulas would be
valid, thereby enabling us to conclude that 
$\subst{\sigma}{\hsubst{\theta}{F_2}}$ is also valid, as desired.
Using Lemmas~\ref{lem:form-hsubperm} and \ref{lem:form-perm}, we see
that the assumption formula in question is the same as 
$\hsubst{\{\langle x,\hsubst{\theta}{t},\alpha\rangle\}}
        {\subst{\sigma}{\hsubst{\theta}{F_1}}}$.
Since $\stlctyjudg{\mathbb{N}\cup\STLCGamma_0\cup\Psi}{t}{\alpha}$ has
a derivation, by Theorem~\ref{th:aritysubs} there must be one for 
$\stlctyjudg{\mathbb{N}\cup\STLCGamma_0}{\hsubst{\theta}{t}}{\alpha}$.
But this means that
$\hsubst{\{\langle x,\hsubst{\theta}{t},\alpha\rangle\}}
        {\subst{\sigma}{\hsubst{\theta}{F_1}}}$
is a closed instance of
$\subst{\sigma}{\hsubst{\theta}{\fall{x:\alpha}{F_1}}}$ and must
therefore be valid by assumption.

Let us consider the case for \ctxR\ next.
If the claim is false, then there must be substitutions $\theta$ and
$\sigma$ that identify a closed instance of the conclusion sequent
that is not valid.
For this to happen, it must be the case that all the formulas in
$\subst{\sigma}{\hsubst{\theta}{\Omega}}$ are valid and the formula
$\subst{\sigma}{\hsubst{\theta}{\fctx{\Gamma}{\mathcal{C}}{F}}}$ is
not valid.
From the latter, it follows that for some context expression $G$ it is
the case that $\csinst{\noms}{\emptyset}{\mathcal{C}}{G}$ is derivable
and
$\subst{\{G/\Gamma\}}{\subst{\sigma}{\hsubst{\theta}{F}}}$
is not valid.
But now consider the substitutions $\theta$ and $\sigma'$ where
$\sigma' = \sigma \cup \{G/\Gamma'\}$.
It is easily checked that these are proper for the premise sequent.
Further, $\subst{\sigma'}{\hsubst{\theta}{F}}$ is the same formula as
$\subst{\{G/\Gamma\}}{\subst{\sigma}{\hsubst{\theta}{F}}}$ and, since
$\Gamma' \notin \domain{\Xi}$, $\subst{\sigma'}{\hsubst{\theta}{\Omega}}$
is the same collection of formulas as
$\subst{\sigma}{\hsubst{\theta}{\Omega}}$.
Thus $\theta$ and $\sigma'$ identify a closed instance of the premise
sequent that is not valid, thereby contradicting the assumption of
falsity of the theorem. 

The only remaining case is that of \ctxL.
Let $\theta$ and $\sigma$ be substitutions that identify a closed
instance of the conclusion sequent.
If any formula in 
$\subst{\sigma}
       {\hsubst{\theta}
               {(\Omega \cup \{\fctx{\Gamma}{\mathcal{C}}{F}\})}}$
is not valid, then the conclusion sequent must be valid.
Let us therefore assume that all these formulas are valid.
We must show that $\subst{\sigma}{\hsubst{\theta}{F_2}}$ is then also
valid. 
By assumption, there is a derivation for
$\ctxtyinst{\mathbb{N}}{\Psi}{\Xi}{\ctxty{\mathcal{C}}{\emptycb}}{G}$.
Using Theorem~\ref{th:ctx-ty-wk}, it follows that there must be a
derivation for 
$\ctxtyinst{\noms}{\Psi}{\Xi}{\ctxty{\mathcal{C}}{\emptycb}}{G}$.
Using Theorems~\ref{th:ctxtyinst-hsubst} and~\ref{th:ctxtyinst-subst}
we can show from this that 
$\ctxtyinst{\noms}
           {\emptyset}
           {\emptyset}
           {\ctxty{\mathcal{C}}{\emptycb}}
           {\subst{\sigma}{\hsubst{\theta}{G}}}$
is derivable; to invoke Theorem~\ref{th:ctxtyinst-subst}, we would
need $\sigma$ to be appropriate for $\hsubst{\theta}{\Xi}$ with respect
to $\emptyset$ and $\noms$ but this is ensured by the properness of
$\theta$ and $\sigma$ with respect to the conclusion sequent.
By Theorem~\ref{th:ctx-var-type-instance}, there must be a derivation for
$\csinst{\noms}{\emptyset}{\mathcal{C}}{\subst{\sigma}{\hsubst{\theta}{G}}}$.
Clearly, $\theta$ and $\sigma$ must identify a closed instance of the
premise sequent.
Moreover, all the assumption formulas in this instance must be valid:
we have assumed this to be the case for the formulas in
$\subst{\sigma}{\hsubst{\theta}{\Omega}}$ and
$\subst{\sigma}{\hsubst{\theta}{\subst{G/\Gamma}{F_1}}}$ must be valid
because 
$\subst{\sigma}{\hsubst{\theta}{\fctx{\Gamma}{\mathcal{C}}{F}}}$ is
valid and
$\csinst{\noms}{\emptyset}{\mathcal{C}}{\subst{\sigma}{\hsubst{\theta}{G}}}$
is derivable.
But then, by the assumption of validity of the premise sequent, it
must be the case that $\subst{\sigma}{\hsubst{\theta}{F_2}}$ is valid.
\end{proof}

\subsection{Proof Rules that Interpret Atomic Formulas}
\label{ssec:atomic}

We now present rules that build in an analysis of atomic formulas
based on the understanding that they represent LF typing judgements. 
When an atomic formula appears as the conclusion of a sequent, the
analysis takes an obvious form: the derivability of the sequent
can be based on that of a sequent in which the conclusion judgement
has been unfolded using an LF rule.
The treatment of an atomic assumption formula is more complex: we must
consider all the ways in which this formula could be valid in
assessing the validity of the sequent.
This ``case analysis'' can be driven by the structure of the type in
the formula.
When the type is of the form $\typedpi{x}{A}{B}$, the term must be an
abstraction and there is exactly one way in which a purported typing
derivation could have concluded.
When the type is atomic, Theorem~\ref{th:atomictype}
provides us information about the different cases that need to be
considered.
However, a complicating factor is that the formula may contain term
and context variables in it and we must consider also all the possible
instantiations of these variables that could made the judgement true
in the analysis. 

The analysis of atomic assumption formulas is obviously somewhat
intricate and we devote the first part of this subsection to its
discussion.
We then use the resulting understanding to develop proof rules for
atomic formulas. 

\subsubsection{Analyzing an Atomic Assumption Formula with an Atomic Type}
\label{ssec:cases}

Consider an atomic formula $\fatm{G}{R:P}$ where, obviously, $P$ has
the form  $(a\app M_1\app \ldots\app M_n)$.
Let us suppose initially that this formula is closed.
In this case, for the formula to be valid, $R$ would need to have as a
head a constant declared in $\Sigma$ or a nominal constant assigned a
type in $G$. 
If the arguments of $R$ do not satisfy the constraints imposed by the
type associated with the head, then the typing judgement will not be
derivable and hence we can conclude that the sequent is in fact
valid.
On the other hand, if the arguments of $R$ do satisfy the required
constraints, then Theorem~\ref{th:atomictype} gives us a means for
decomposing the given typing judgement into ones pertaining to
$M_1,\ldots, M_n$.
The validity of the given sequent can therefore be reduced to the
validity of a sequent that results from replacing the atomic formula
under consideration by ones that represent the mentioned typing
judgements.

In the general case, the formula $\fatm{G}{R:P}$ may not be closed.
This could happen in two different ways.
First, the context expression may have a part that is yet to be
determined, i.e., $G$ may be of the form
$\Gamma, n_1 : A_1,\ldots, n_m : A_m$ where $\Gamma$ has a set of
names $\mathbb{N}$ and a context
variable type of the form $\ctxty{\mathcal{C}}{G_1;\ldots;G_\ell}$
associated with it in the sequent.
Second, the expressions in the atomic formula and the context variable
type may contain variables in them that are bound in the term variables
context.
To articulate a proof rule around the atomic formula in this
situation, it is necessary to develop a means for analyzing the
formula in a way that pays attention to the validity of the sequent
under all acceptable instantiations of the context and term variables.

The analysis that we describe proceeds in two steps.
We first describe a finite way to consider elaborations of the context
variable that make explicit all the heads that need to be considered
for the term in an analysis of the closed instances of the atomic
formula. 
This process yields a finite collection of pairs
comprising a sequent in which the context variable may have been
partially instantiated, and a specifically identified possibility for
the head that is either drawn from the signature or that appears
explicitly in the context; the intent here, which is verified in
Lemma~\ref{lem:heads-cover}, is that considering just
the second components of these pairs as the heads of the term in the
typing judgement will suffice for a complete analysis based on
Theorem~\ref{th:atomictype}. 
The second step actually carries out the analysis in each of these
cases, using the idea of unification in the application of
Theorem~\ref{th:atomictype} to accommodate all possible closed
instantiations of the term variables in the sequent.

\paragraph{Elaborating Context Variables and Identifying Head Possibilities}

We first note that context expressions have may have implicit and
explicit parts, the former being subject to elaboration via context
substitutions. 

\begin{definition}\label{def:implicit-and-explicit}
Let $\mathcal{S} = \seq[\mathbb{N}]{\Psi}{\Xi}{\Omega}{F}$ be a
well-formed sequent.
If $G$ is a context expression appearing in $\mathcal{S}$ then
it must be of either the form $n_1 : A_1,\ldots, n_m:A_m$ or of the form
$\Gamma, n_1 : A_1,\ldots, n_m:A_m$ where $\Gamma$ is a context 
variable with an associated declaration 
$\ctxvarty{\Gamma}{\mathbb{N}_{\Gamma}}{\ctxty{\mathcal{C}}{G_1; \ldots; G_n}}$ 
in $\Xi$.
In the latter case, we say that $G$ has an implicit part relative to
$\mathcal{S}$ that is given by
$\ctxvarty{\Gamma}{\mathbb{N}_{\Gamma}}{\ctxty{\mathcal{C}}{G_1; \ldots; G_n}}$. 
Further, we refer to $n_1 : A_1,\ldots, n_m:A_m$ in the former case
and to the sequence formed by listing the bindings in $G_1,\ldots,G_n$
followed by $n_1 : A_1,\ldots, n_m:A_m$ in the latter case as the
explicit bindings in $G$ relative to $\mathcal{S}$.
\end{definition}

Let $\Gamma$ be a context variable that has the type
$\ctxty{\mathcal{C}}{G_1;\ldots;G_\ell}$.
Closed instances of $\Gamma$ are then generated by interspersing
$G_1,\ldots,G_\ell$ with blocks of declarations generated from the
block schema comprising $\mathcal{C}$.
In determining possibilities for the head of $R$ from the implicit
part of $G$ in an atomic formula of the form $\fatm{G}{R:P}$, we need
to consider an elaboration of $G$ with only one such block; of course,
for a complete analysis, we will need to consider all the
possibilities for such an elaboration.
The function $\addblksans$ defined below formalizes such an
elaboration, returning a modified sequent and a potential head for the
term in the typing judgement.
Note that in an elaboration based on a block schema 
of the form
$\{x_1:\alpha_1,\ldots,x_n:\alpha_n\}y_1 : A_1, \ldots, y_k : A_k$,
it would be necessary to consider a choice of nominal constants for
the schematic variables $y_1,\ldots,y_k$.
The function is parameterized by such a choice.
We must also accommodate all possible instantiations for the variables
$x_1,\ldots,x_n$, subject to the proviso that these instantiations do
not use nominal constants that appear in a later part of the context
expression.
This is done by introducing new term
variables for $x_1,\ldots,x_n$ and by raising such variables over the
nominal constants that are not prohibited from appearing in the
instantiations; to support the latter requirement, the function is
parameterized by a collection of nominal constants.  
Finally, we observe that the elaboration process may introduce new
nominal constants into the sequent, necessitating a raising of the
term variables over the new constants. 

\begin{definition}\label{def:addblock}
Let $\mathcal{S}$ be the the well-formed sequent 
$\seq[\mathbb{N}]{\Psi}{\Xi}{\Omega}{F}$, let
$\ctxvarty{\Gamma}
          {\mathbb{N}_{\Gamma}}
          {\ctxty{\mathcal{C}}{G_1;\ldots; G_n}}$
be an assignment in $\Xi$, and let $\mathcal{B} = \{x_1:\alpha_1,\ldots,
x_n:\alpha_n\}y_1 : A_1, \ldots, y_k : A_k$ be one of the block
schemas comprising $\mathcal{C}$.
Further, let $\mathbb{N'} \subseteq (\mathbb{N}\setminus\mathbb{N}_{\Gamma})$
be a collection of nominal
constants, let $ns$ be a list $n_1,\ldots,n_k$ of distinct nominal
constants that are also different from the constants in
$\mathbb{N'}$ and that are such that, for $1 \leq i \leq k$,
$n_i : \erase{A_i} \in (\noms\setminus\mathbb{N}_{\Gamma})$. 
Finally, for $0 \leq j \leq n$, let $\mathbb{N}_j$ be the collection
of nominal constants assigned types in $G_1,\ldots,G_j$.
Then, letting
\begin{enumerate}
  \item $\Psi'_j$ be a version of $\Psi$ raised over
    $\{n_1,\ldots,n_k\} \setminus \mathbb{N}$ and $\theta'_j$ be the
    associated raising substitution, 
  \item $A_1',\ldots,A_k'$ be the types $A_1,\ldots,A_k$ with the schematic variables
  $y_1,\ldots,y_k$ replaced with the names $n_1,\ldots,n_k$,
  \item  $\Psi''_j$ be a version of $\{x_1:\alpha_1,\ldots,x_n:\alpha_n\}$
    raised over $\mathbb{N'} \cup \mathbb{N}_j \cup
    (\{n_1,\ldots,n_k\} \setminus \mathbb{N})$ with the new variables 
    chosen to be distinct from those in $\Psi'_j$, $\theta''_j$ be the
    associated raising substitution, and $G$ be the context expression
    $n_1 : \hsubst{\theta''_j}{A'_1},\ldots, n_k : \hsubst{\theta''_j}{A'_k}$, and
  \item $\Xi'_j$ be the context variables context
   \[\hsubst{\theta'_j}
           {(\Xi \setminus \{\ctxvarty{\Gamma}{\mathbb{N}_{\Gamma}}{\ctxty{\mathcal{C}}{G_1; \ldots; G_n}}\})}
   \cup
   \{\ctxvarty{\Gamma}{\mathbb{N}_{\Gamma}}{\ctxty{\mathcal{C}}
                    {\hsubst{\theta'_j}{G_1};\ldots;\hsubst{\theta'_j}{G_j};
                     G;\hsubst{\theta'_j}{G_{j+1}};\ldots;\hsubst{\theta'_j}{G_n}}}\},\]
\end{enumerate}
for $0 \leq j \leq n$ and $1 \leq i \leq k$,
$\addblk{\mathcal{S}}{\ctxvarty{\Gamma}{\mathbb{N}_{\Gamma}}{\ctxty{\mathcal{C}}{G_1; \ldots; G_n}}}{\mathcal{B}}{ns}{\mathbb{N'}}{j}{i}$ is defined to be the tuple
\[\langle \seq[\mathbb{N} \cup ns]
             {\Psi'_j \cup \Psi''_j}
             {\Xi'_j}
             {\hsubst{\theta'_j}{\Omega}}
             {\hsubst{\theta'_j}{F}},
         n_i : \hsubst{\theta''_j}{A'_i}
\rangle.\]
Note that the conditions in the definition ensure that all the
substitutions involved in it will have a result, thereby permitting us
to use the notation introduced after Theorem~\ref{th:aritysubs}.           
\end{definition}

The elaboration just described is parameterized by the choice of
nominal constants for the variables assigned types in the block
schema.
In identifying the choices that have to be considered, it is useful to
partition the members of $(\noms\setminus\mathbb{N}_{\Gamma})$ 
into two sets: those that appear in
the support set of the sequent whose elaboration is being considered
and those that do not.
It is necessary to consider all possible assignments that satisfy
arity typing constraints from the first category.
From the second category, as we shall soon see, it suffices to
consider exactly one representative assignment.
Note also that we may insist that the nominal constant in each
assignment of the block be disinct; if this is not the case, the
sequent is easily seen to be valid.
The function $\namessans$ defined below embodies these ideas.
The function is parameterized be a sequence of arity types
corresponding to the declarations in the block schema, a collection of
``known'' nominal constants that are available for use in an
elaboration of the block schema and a collection of nominal constants
that are already bound in the context expressions and hence must not
be used again. 

\begin{definition}\label{def:nameslists}
Let $tys$ be a sequence of arity types and let $\mathbb{N}_o$ and
$\mathbb{N}_b$ be finite sets of nominal constants. 
Further, let $\emptyseq$ denote an empty sequence and 
$\consseq{x}{xs}$ denote a sequence that starts with $x$ and
continues with the sequence $xs$.
Then the collection of name choices for $tys$ relative to $\mathbb{N}_o$
and away from $\mathbb{N}_b$ is denoted by
$\names{tys}{\mathbb{N}_o}{\mathbb{N}_b}$ and defined by recursion
on $tys$ as follows:
\begin{center}
\begin{minipage}{5in}
\begin{tabbing}
\=\qquad\qquad\=\kill
\>
$\names{tys}{\mathbb{N}_o}{\mathbb{N}_b)} =$\\
\>\>
     $\begin{cases}
        \{\emptyseq\} & \mbox{\rm if}\ tys = \emptyseq\\[10pt]
        \{\consseq{n}{nl}\ |\ 
                  n : \alpha \in \noms,
                  n \in \mathbb{N}_o\setminus \mathbb{N}_b,
                  \ \mbox{\rm and}\\
          \qquad\qquad
               nl \in \names{tys'}{\mathbb{N}_o}{\mathbb{N}_b \cup
                 \{n\}}\}\ \cup\\                     
        \{\consseq{n}{nl}\ |\ 
               n\ \mbox{\rm is the first nominal
                            constant}& \mbox{\rm if}\ tys = \consseq{\alpha}{tys'} \\ 
          \qquad\qquad  \mbox{\rm such that}\ n : \alpha \in \noms\
                    \mbox{\rm and}\
                  n \not\in \mathbb{N}_o \cup \mathbb{N}_b,\\
          \qquad\qquad  \mbox{\rm and}\ nl \in
                    \names{tys'}{\mathbb{N}_o}{\mathbb{N}_b \cup \{n\}} \}
      \end{cases}$
\end{tabbing}
\end{minipage}
\end{center}
We assume in this definition the existence of an ordering on the
nominal constants that allows us to select the first of these
constants that satisfies a criterion of interest.
\end{definition}  

We can now identify a finite collection of elaborations of the
implicit part of a context expression that must be considered in the
analysis of an assumption formula of the form $\fatm{G}{R:P}$ that
appears in a sequent $\mathcal{S}$.
We do this below through the definition of the function
$\implheadssans$. 

\begin{definition}\label{def:implicitheads}
Let $\mathcal{S}$ be the well-formed sequent 
$\seq[\mathbb{N}]{\Psi}{\Xi}{\Omega}{F}$, let $G$ be a context expression appearing in a
formula in $\mathcal{S}$ that has an implicit part relative to
$\mathcal{S}$ that is given by 
$\ctxvarty{\Gamma}{\mathbb{N}_{\Gamma}}{\ctxty{\mathcal{C}}{G_1; \ldots; G_n}}$, 
and let
$\mathcal{B} = \{x_1:\alpha_1,\ldots, x_n:\alpha_n\}
                      y_1 : A_1, \ldots, y_k : \alpha_k$
be one of the block schemas comprising $\mathcal{C}$.
Further, let $\mathbb{N}_b$ be the collection of nominal constants
assigned types by the explicit bindings of $G$ relative to
$\mathcal{S}$ and let 
$\mathbb{N}_o = (\mathbb{N}\setminus\mathbb{N}_{\Gamma})\setminus\mathbb{N}_b$.
Finally, let $\allblks{\mathcal{S}}
              {\ctxvarty{\Gamma}{\mathbb{N}_{\Gamma}}{\ctxty{\mathcal{C}}{G_1; \ldots; G_n}}}
              {\mathcal{B}}$ denote the set
\begin{tabbing}
\qquad\=\qquad\qquad\=\kill
\> $\{ \addblk{\mathcal{S}}
             {\ctxvarty{\Gamma}{\mathbb{N}_{\Gamma}}{\ctxty{\mathcal{C}}{G_1; \ldots; G_n}}}
             {\mathcal{B}}
             {ns}
             {\mathbb{N}_o}
             {j}
             {i} \ \vert$ \\
\>\>  $0 \leq j \leq n,
       1 \leq i \leq k,
       ns \in \names{(\erase{A_1},\ldots,\erase{A_k})}
                    {\mathbb{N}_o}
                    {\mathbb{N}_{\Gamma}\cup\mathbb{N}_b}\}$.
\end{tabbing}
If $\{\mathcal{B}_1,\ldots,\mathcal{B}_m \}$ is the collection of 
of block schemas comprising $\mathcal{C}$, then the implicit 
heads in $G$ relative to $\mathcal{S}$ is defined to be the set
\[ \bigcup \{ \allblks{\mathcal{S}}
                      {\ctxvarty{\Gamma}{\mathbb{N}_{\Gamma}}{\ctxty{\mathcal{C}}{G_1; \ldots; G_n}}}
                      {\mathcal{B}} \ \vert\
                           \mathcal{B} \in
                           \{\mathcal{B}_1,\ldots,\mathcal{B}_m \}\}. \]
This set is denoted by $\implheads{\mathcal{S}}{G}$.
\end{definition}

The complete set of heads and corresponding (elaborated) sequents that
must be considered in the analysis of an atomic formula of the form
$\fatm{G}{R:P}$ is identified through the function \headssans\ that is
defined below.

\begin{definition}\label{def:heads}
Let $\mathcal{S}$ be a well-formed sequent and let $G$ be a context
expression appearing in a formula in $\mathcal{S}$.
Let $\mbox{\sl NewHds}$ be the set $\implheads{\mathcal{S}}{G}$ if $G$ has an
implicit part relative to $\mathcal{S}$ and the empty set otherwise.
Then the heads in $G$ relative to $\mathcal{S}$ is defined to be the
set
\begin{tabbing}
\ \=\kill
\> $\{ \langle \mathcal{S}, c : A\rangle\ |\ c : A \in \Sigma\} \cup
\{ \langle \mathcal{S}, n : A\rangle\ |\ n : A\ 
               \mbox{\rm is an explicit binding in G relative to}\ \mathcal{S} \} \cup \mbox{\sl
        NewHds}$.
\end{tabbing}
This set is denoted by $\hds{\mathcal{S}}{G}$.
\end{definition}

The first property that we observe of the elaboration process
described is that it requires us to consider only well-formed
sequents. 

\begin{lemma}
\label{lem:heads-wf}
Let $\mathcal{S}=\seq[\mathbb{N}]{\Psi}{\Xi}{\Omega}{F}$ be a well-formed sequent
and let $\fatm{G}{R:P}$ be an atomic formula in $\Omega$.
Then for each $(\mathcal{S}',h:A)\in \hds{\mathcal{S}}{G}$ it must be the
case that $\mathcal{S}'$ is a well-formed sequent. 
Further, if $\mathcal{S}'$ is
$\seq[\mathbb{N'}]{\Psi'}{\Xi'}{\Omega'}{F'}$, it must be the
case that $\wftype{(\mathbb{N}'\cup\STLCGamma_0\cup\Psi')}{A}$ is derivable.
\end{lemma}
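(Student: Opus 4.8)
```latex
The plan is to prove Lemma~\ref{lem:heads-wf} by analyzing the three
disjuncts that make up $\hds{\mathcal{S}}{G}$ according to
Definition~\ref{def:heads}, since each kind of head $(\mathcal{S}',h:A)$
has a different origin. For the first disjunct, where $c : A \in \Sigma$
and $\mathcal{S}' = \mathcal{S}$, well-formedness of $\mathcal{S}'$ is
immediate from the hypothesis, and $\wftype{(\mathbb{N}\cup\STLCGamma_0\cup\Psi)}{A}$
follows because $\Sigma$ is a well-formed signature: by \sigterm\ we have
$\lftype{\emptyctx}{A}$, whence $A$ respects $\STLCGamma_0$ by
Theorem~\ref{th:arityapprox}, and a short argument relating LF
well-typedness to arity kinding (plus Theorem~\ref{th:wfsupset}) lifts
this to $\wftype{(\mathbb{N}\cup\STLCGamma_0\cup\Psi)}{A}$. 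For the second
disjunct, where $n : A$ is an explicit binding of $G$ relative to
$\mathcal{S}$ and again $\mathcal{S}'=\mathcal{S}$, the sequent is
well-formed by hypothesis; the kinding judgement for $A$ follows from the
well-formedness of the atomic formula $\fatm{G}{R:P}$ in $\Omega$, which
by the rules in Figure~\ref{fig:wfform} forces
$\wfctx{(\mathbb{N}\cup\STLCGamma_0\cup\Psi)}{\ctxsanstype{\Xi}}{G}$ to be
derivable, and inspecting the third rule of that judgement gives
$\wftype{(\mathbb{N}\cup\STLCGamma_0\cup\Psi)}{A}$ for each explicit
binding $n:A$ in $G$.

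The substantive work lies in the third disjunct, $\mbox{\sl NewHds} =
\implheads{\mathcal{S}}{G}$, which contributes the genuinely new
(elaborated) sequents produced by $\addblksans$. Here I would unfold
Definitions~\ref{def:implicitheads}, \ref{def:addblock} and
\ref{def:nameslists} and verify, for each tuple
$\langle \seq[\mathbb{N}\cup ns]{\Psi'_j\cup\Psi''_j}{\Xi'_j}
  {\hsubst{\theta'_j}{\Omega}}{\hsubst{\theta'_j}{F}},\,
  n_i:\hsubst{\theta''_j}{A'_i}\rangle$,
both claims. The elaborated sequent is obtained from $\mathcal{S}$ by two
raising substitutions $\theta'_j$ (over the freshly introduced nominal
constants $ns$) and $\theta''_j$ (over the names prohibited for the
schematic header variables), together with the insertion of a new block
$G = n_1:\hsubst{\theta''_j}{A'_1},\ldots,n_k:\hsubst{\theta''_j}{A'_k}$
into the context variable type for $\Gamma$. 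The key tool is
Theorem~\ref{th:seq-term-subs-ok}, which already guarantees that applying
a compatible raising substitution to a well-formed sequent yields a
well-formed sequent; so I would first check that $\langle\theta'_j,
\Psi'_j\cup\Psi''_j\rangle$ is substitution compatible with $\mathcal{S}$
(using that the constants in $ns$ are fresh, hence
$\supportof{\theta'_j}\cap\mathbb{N}=\emptyset$, and that the raising
construction of Definition~\ref{def:raising-subst} is arity type
preserving by Theorem~\ref{th:raised-subs}).

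The one part not covered directly by Theorem~\ref{th:seq-term-subs-ok} is
that the \emph{inserted} block $G$ keeps $\Xi'_j$ well-formed, i.e.\ that
the augmented context variable type
$\ctxty{\mathcal{C}}{\ldots;G;\ldots}$ still satisfies
$\wfctxvarty{\cdot}{\cdot}{\cdot}$. By the rules in
Figure~\ref{fig:wfctxvar} this reduces to showing
$\csinstone{\mathbb{N}\cup ns}{\Psi'_j\cup\Psi''_j}{\mathcal{C}}{G}$,
which I would establish by exhibiting $\mathcal{B}$ as the relevant block
schema and checking the premises of the $\bsinst{}{}{}{}$ rule: the
nominal constants $n_1,\ldots,n_k$ have the right erased types by the
side-conditions in Definition~\ref{def:addblock}, and the raising
substitution $\theta''_j$ supplies well-typed instantiations for the
header variables $x_1,\ldots,x_n$ because it is raised precisely over the
permitted constants. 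This is where the careful bookkeeping in the
definitions pays off, and it is the step I expect to be the main
obstacle: matching the arity-typing side conditions of $\bsinst{}{}{}{}$
and $\declinst{}{}{}{}$ against the freshness and disjointness conditions
built into $\addblksans$, while simultaneously tracking how $\theta'_j$
transports the existing blocks $G_1,\ldots,G_n$. Finally, for the kinding
claim $\wftype{(\mathbb{N}'\cup\STLCGamma_0\cup\Psi')}{A}$ with
$A = \hsubst{\theta''_j}{A'_i}$, I would use Lemma~\ref{lem:arityrespecting}
to see that $A_i$ (and hence $A'_i$) respects the appropriate arity
context, then apply Theorem~\ref{th:aritysubs-ty} to the raising
substitution $\theta''_j$ to conclude that its image is arity-kinded over
the support and variable contexts of $\mathcal{S}'$.
```
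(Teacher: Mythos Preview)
Your overall decomposition into the three disjuncts of $\hds{\mathcal{S}}{G}$ and your focus on the $\implheadssans$ case matches the paper's proof, which is extremely terse: it simply says the other two cases are immediate and that the $\addblksans$ case follows from the single observation that if $\Psi_2$ is a version of $\Psi_1$ raised over $\mathbb{N}_2$ with raising substitution $\theta$, and $\wftype{\STLCGamma\cup\Psi_1}{A'}$ holds for $\STLCGamma$ disjoint from $\Psi_1,\Psi_2$, then $\wftype{\STLCGamma\cup\Psi_2\cup\mathbb{N}_2}{\hsubst{\theta}{A'}}$ holds. Your plan is a more explicit version of the same idea, routed through Theorem~\ref{th:aritysubs-ty}.

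Two small corrections are in order. First, in the second disjunct you appeal only to $\wfctx{\cdot}{\cdot}{G}$, but the ``explicit bindings in $G$ relative to $\mathcal{S}$'' (Definition~\ref{def:implicit-and-explicit}) also include the bindings inside the blocks $G_1,\ldots,G_\ell$ of the context variable type, and those are not covered by the $\wfctx{}{}{}$ rules for $G$; you need the well-formedness of the sequent's context variable type ($\wfctxvarty{}{}{}$, hence $\csinstone{}{}{}{}$ for each $G_i$) together with the argument behind Theorem~\ref{th:schemainst} to get $\wftype{}{A}$ for those bindings. Second, in the $\addblksans$ case your final appeal to Lemma~\ref{lem:arityrespecting} is misplaced: that lemma concerns types drawn from $\Sigma$ or a well-formed LF context, not block-schema bodies. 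The kinding of the $A_i$ in a block schema comes instead from $\abstyping{\mathcal{B}}$ via the $\wfdecls{}{}{}$ judgement in Figure~\ref{fig:schematyping}, and then Theorem~\ref{th:aritysubs-ty} (applied to the raising substitution $\theta''_j$) gives the kinding of $\hsubst{\theta''_j}{A'_i}$ that you want. Also, Theorem~\ref{th:seq-term-subs-ok} does not apply directly, since the $\addblksans$ sequent is not literally of the form $\hsubstseq{\Psi'}{\theta}{\mathcal{S}}$; the paper's direct observation about raising substitutions is both simpler and sufficient.
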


\begin{proof}
The claim is not immediately obvious only 
when $(\mathcal{S}',h:A)\in \implheads{\mathcal{S}}{G}$.
For these cases, it suffices to show that every pair generated by
\addblksans\ satisfies the requirements of the lemma.
However, this is easily argued.
The main observation--that gets used twice---is that if $\Psi_2$ is a
version of $\Psi_1$ raised over some collection of nominal constants
$\mathbb{N}_2$ with $\theta$ being the associated raising
substitution, and 
$\wftype{\STLCGamma \cup \Psi_1}{A'}$ holds for some arity context
$\STLCGamma$ that is disjoint from $\Psi_1$ and $\Psi_2$, then 
$\wftype{\STLCGamma \cup \Psi_2 \cup
  \mathbb{N}_2}{\hsubst{\theta}{A'}}$ also holds.
\end{proof}

We want next to show that the collection of pairs of sequents and
heads identified by the elaboration process are
sufficient for the analysis of validity for a sequent with an
assumption formula of the form $\fatm{G}{R:P}$.
The following lemma provides the basis for this observation.
We obviously do to identify all possible elaborations because we consider
only a single representative for a ``new name'' for a binding in a
block instance.
However, we show that we cover all elaborations up to a permutation;
Theorem~\ref{th:perm-valid} then ensures that this is sufficient. 

\begin{lemma}
\label{lem:heads-cover}
Let $\mathcal{S}=\seq[\mathbb{N}]{\Psi}{\Xi}{\Omega}{F}$ be a
well-formed sequent and let $\fatm{G}{R:P}$ be a formula in $\Omega$.
Further, let $\theta$ and $\sigma$ be term and context
variables substitutions that identify a closed instance of $\mathcal{S}$ and
that are such that $\subst{\sigma}{\hsubst{\theta}{\fatm{G}{R:P}}}$ is
valid.
If the term $\hsubst{\theta}{R}=(h\app M_1\ldots M_n)$, then 
there is a pair $\langle \mathcal{S}', h':A' \rangle$ in $\hds{\mathcal{S}}{G}$
such that
\begin{enumerate}
\item there is a formula $\fatm{G'}{R':P'}$ amongst the assumption
formulas of $\mathcal{S}'$ with $h':A'$ appearing in either $\Sigma$
or in the explicit bindings in $G'$ relative to $\mathcal{S'}$, and 

\item there is a closed instance of $\mathcal{S}'$ identified by 
  term and context variables substitutions $\theta'$ and $\sigma'$ and a
  permutation $\pi$ such that $\permute{\pi}{h'} = h$,
$\permute{\pi}{\subst{\sigma'}{\hsubst{\theta'}{\fatm{G'}{R':P'}}}}=
  \subst{\sigma}{\hsubst{\theta}{\fatm{G}{R:P}}}$, and
$\permute{\pi}{\subst{\sigma'}{\hsubstseq{\emptyset}{\theta'}{\mathcal{S}'}}}= 
    \subst{\sigma}{\hsubstseq{\emptyset}{\theta}{\mathcal{S}}}$.  
\end{enumerate}
\end{lemma}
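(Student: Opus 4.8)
The plan is to prove Lemma~\ref{lem:heads-cover} by tracing carefully through the definitions of \headssans, \implheadssans, \allblkssans, and \addblksans, and splitting on where the head $h$ of $\hsubst{\theta}{R}$ comes from in the closed, valid instance $\subst{\sigma}{\hsubst{\theta}{\fatm{G}{R:P}}}$. Since this instance is valid, $\lfctx{\subst{\sigma}{\hsubst{\theta}{G}}}$ and $\lfchecktype{\subst{\sigma}{\hsubst{\theta}{G}}}{\subst{\sigma}{\hsubst{\theta}{R}}}{\subst{\sigma}{\hsubst{\theta}{P}}}$ are derivable, so by Theorem~\ref{th:atomictype} the head $h$ must be a constant declared in $\Sigma$ or a nominal constant assigned a type in the fully instantiated context $\subst{\sigma}{\hsubst{\theta}{G}}$. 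This gives three mutually exhaustive cases: $h$ is a signature constant $c:A\in\Sigma$; $h$ is a nominal constant occurring in the \emph{explicit} bindings of $G$ relative to $\mathcal{S}$ (i.e.\ surviving $\theta$ and $\sigma$ unchanged as a name); or $h$ is a nominal constant that arises from the \emph{implicit} part of $G$, namely from a block instance generated when $\sigma$ elaborates the context variable $\Gamma$.

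First I would dispatch the two easy cases. If $h$ is a signature constant $c:A$, then $\langle\mathcal{S},c:A\rangle\in\hds{\mathcal{S}}{G}$ directly, and I take $\mathcal{S}'=\mathcal{S}$, $\theta'=\theta$, $\sigma'=\sigma$, and $\pi$ the identity; clause (1) holds because $\fatm{G}{R:P}$ itself is the required assumption formula and $c:A\in\Sigma$, and clause (2) is immediate. If $h$ is a nominal constant appearing explicitly in $G$ relative to $\mathcal{S}$, the pair $\langle\mathcal{S},h:A\rangle$ is again in $\hds{\mathcal{S}}{G}$ by the middle clause of Definition~\ref{def:heads}, and the same choice of identity substitutions and permutation works, since explicit bindings are unaffected by $\sigma$ and carry their types through (modulo the substitution already applied to the type, which matches $A$ by construction).

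The substantive case, which I expect to be the main obstacle, is when $h$ is a nominal constant $n$ introduced by $\sigma$'s instantiation of the context variable $\Gamma$ via some block schema $\mathcal{B}=\{x_1:\alpha_1,\ldots,x_n:\alpha_n\}y_1:A_1,\ldots,y_k:A_k$ of $\mathcal{C}$. Here I must locate, within the closed instance produced by $\sigma$, the particular block and the particular binding $y_i$ whose instantiated name is $n$, together with the position $j$ at which that block was interspersed among $G_1,\ldots,G_n$. The matching element of $\implheads{\mathcal{S}}{G}$ is then $\addblk{\mathcal{S}}{\ctxvarty{\Gamma}{\mathbb{N}_\Gamma}{\ctxty{\mathcal{C}}{G_1;\ldots;G_n}}}{\mathcal{B}}{ns}{\mathbb{N}_o}{j}{i}$ for an appropriate name-list $ns$ and index $i$. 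The delicate point is that \namessans\ does not enumerate every possible choice of fresh nominal constants: for names \emph{not} already in the support set it records only a single canonical representative. Consequently the actual names chosen by $\sigma$ need not coincide with those in the generated pair, and I must reconcile them by a permutation $\pi$. I would construct $\pi$ to swap the canonical representatives fixed by \namessans\ with the names actually used by $\sigma$ (and to fix everything in the support set $\mathbb{N}$), verify via Definition~\ref{def:nameslists} that such a choice is indeed covered up to this renaming, and then define $\theta'$ and $\sigma'$ as the residual substitutions that instantiate the freshly introduced raised term variables (from $\theta'_j$, $\theta''_j$) and the remaining context structure so as to reproduce the original closed instance after applying $\pi$.

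Finally I would verify the two required identities. Clause (1) is established by exhibiting $\fatm{G'}{R':P'}$ as the image of $\fatm{G}{R:P}$ under the elaboration recorded in the \addblksans\ tuple, noting that $h':A'=n_i:\hsubst{\theta''_j}{A'_i}$ is, by construction, an explicit binding of $G'$ relative to $\mathcal{S}'$. For clause (2), I would push the equalities $\permute{\pi}{h'}=h$, $\permute{\pi}{\subst{\sigma'}{\hsubst{\theta'}{\fatm{G'}{R':P'}}}}=\subst{\sigma}{\hsubst{\theta}{\fatm{G}{R:P}}}$, and the analogous equality for the whole sequent through the raising machinery, relying on Theorem~\ref{th:raised-subs} to undo the raising substitutions $\theta'_j,\theta''_j$ and on Theorem~\ref{th:perm-subst} to commute $\pi$ past the term and context substitutions. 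The bookkeeping linking the support-set bound names $\mathbb{N}_b$, the ``old'' names $\mathbb{N}_o$, and the freshly generated $ns$ is where the argument is most error-prone, so I would set up the permutation and the residual substitutions explicitly at the start of this case and check each identity componentwise on the context, the term, and the type.
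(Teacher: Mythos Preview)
Your proposal is correct and follows essentially the same approach as the paper's proof: the same three-way case split on the origin of $h$ (signature, explicit bindings, or implicit part via $\sigma$), the same trivial handling of the first two cases with identity substitutions and permutation, and in the implicit case the same use of \addblksans\ to locate the elaborated sequent, Theorem~\ref{th:raised-subs} to invert the raising and build the residual substitutions, and a permutation (identity on $\mathbb{N}$) to reconcile the canonical fresh names from \namessans\ with those actually chosen by $\sigma$. The paper carries out the implicit case in slightly more detail---splitting $\theta''$ into a part $\theta_r$ inverting the raising of $\Psi$ and a part $\theta^h_r$ inverting the raising of the schema header variables---but this is exactly the ``residual substitutions'' you describe.
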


\begin{proof}
Since $\subst{\sigma}{\hsubst{\theta}{\fatm{G}{R:P}}}$ is valid, it
must be the case that there are LF derivations for 
$\lfctx{\subst{\sigma}{\hsubst{\theta}{G}}}$,
$\lftype{\subst{\sigma}{\hsubst{\theta}{G}}}
        {\hsubst{\theta}{P}}$, and
$\lfchecktype{\subst{\sigma}{\hsubst{\theta}{G}}}
             {\hsubst{\theta}{R}}
             {\hsubst{\theta}{P}}$.
Using Theorem~\ref{th:atomictype} together with the fact that
$\hsubst{\theta}{R}=(h\app M_1\ldots M_n)$, we see that, for an
appropriate $A$, $h: A$ must be a member of $\Sigma$ or it must appear
in $\subst{\sigma}{\hsubst{\theta}{G}}$.
Our argument distinguishes two ways that this could happen: it could
be because $h:A$ is a member of $\Sigma$ or it is an instance of a
declaration in the explicit part of $G$ or because it is introduced
into the context $\subst{\sigma}{\hsubst{\theta}{G}}$ by the
substitution $\sigma$.

The first collection of cases is easily dealt with: essentially, we
pick $h'$, $\mathcal{S}'$, $\theta'$ and $\sigma'$ to be identical to
$h$, $\mathcal{S}$, $\theta$ and $\sigma$, respectively, and we let
$\pi$ be the identity permutation.
The requirements of the lemma then follow easily from the definition
of the \headssans\ function.

In the cases that remain, $G$ must have the form $\Gamma, n^G_1
: A^G_1,\ldots, n^G_p : A^G_p$ for some context variable $\Gamma$ that
has the set of names $\mathbb{N}_{\Gamma}$ and the type 
$\ctxty{\mathcal{C}}{G_1;\ldots; G_\ell}$ assigned to it
in $\Xi$ and $h$ must be introduced by the substitution that $\sigma$
makes for $\Gamma$ as the $i^{th}$ binding, for
some $i$, in a block of declarations resulting from instantiating
one of the block schemas constituting $\mathcal{C}$. 
Let us suppose  the relevant block schema is $\mathcal{B}$ and
it has the form $\{x_1:\alpha_1,\ldots,x_n:\alpha_n\}
(y_1:B_1,\ldots, y_k:B_k)$.
Moreover, let us suppose that this block of declarations appears in
$\subst{\sigma}{\hsubst{\theta}{G}}$ somewhere 
between the instances of $G_j$ and $G_{j+1}$, for some $j$ between
$0$ and $\ell$.
We may, without loss of generality, assume $x_1,\ldots,x_n$ to be
distinct from the variables assigned types by $\Psi$.
We can then visualize the block introducing $h$ as
$(n_1:\hsubst{\theta^h}{B_1'},\ldots,n_k:\hsubst{\theta^h}{B_k'})$ for some
$n_1,\ldots,n_k$ of the requisite types, for some types $B_1',\ldots,B_k'$ 
which are the types $B_1,\ldots,B_k$ with the schematic variables of the 
schema replaced by these names, and for a closed substitution
$\theta^h$ whose domain is $x_1,\ldots,x_n$ and, since
$\lfctx{\subst{\sigma}{\hsubst{\theta}{G}}}$ is derivable, whose
support does not contain the nominal constants in $\mathbb{N}_{\Gamma}$, 
$n^G_1,\ldots,n^G_p$, or those that are assigned a type in
$G_{j+1},\ldots,G_\ell$.
It follows from this that if we can associate the type
$\ctxty{\mathcal{C}}
       {\hsubst{\theta}{G_1};\ldots;\hsubst{\theta}{G_j};
        n_1:\hsubst{\theta^h}{B_1'},\ldots,n_k:\hsubst{\theta^h}{B_k'};
        \hsubst{\theta}{G_{j+1}};\ldots; \hsubst{\theta}{G_\ell}}$
with $\Gamma$, then the context expression that $\sigma$ substitutes
for $\Gamma$ can still be generated from the changed type.
The key to our showing that the requirements of the lemma are met in
these cases will be to establish that $\hds{\mathcal{S}}{G}$
contains a sequent and head pair such that the type of $\Gamma$ is
elaborated to a form from which the above type can be obtained, up to
a permutation of nominal constants, by a well-behaved substitution and
the head is identified as the $i^{th}$ item in the introduced block of
declarations.  

Towards this end, let us consider the tuple $\langle \mathcal{S''},
h'': A'' \rangle$ 
that is generated by
\[\addblk{\mathcal{S}}{\ctxvarty{\Gamma}{\mathbb{N}_{\Gamma}}{\ctxty{\mathcal{C}}{G_1; \ldots; G_\ell}}}
  {\mathcal{B}}{(n_1,\ldots,n_k)}{\mathbb{N}_o}{j}{i},\] 
where $\mathbb{N}_o$ is the collection of nominal constants obtained
by leaving out of $\mathbb{N}$ the constants in $\mathbb{N}_{\Gamma}$ and 
the constants that appear amongst the
explicit bindings of $G$ relative to $\mathcal{S}$.
In this case, $\mathcal{S''}$ will have the form
$\seq[\mathbb{N}'']{\Psi''}{\Xi''}{{\Omega''}}{F''}$ with the following
properties.
First, $\mathbb{N}''$ will be identical to
$\mathbb{N} \cup \{n_1,\ldots,n_k \}$. 
Second, $\Psi''$ will comprise two disjoint parts $\Psi^\mathcal{S}_r$
and $\Psi^\mathcal{B}_r$, where $\Psi^\mathcal{S}_r$ is a 
version of $\Psi$ raised over the nominal constants in
$\{n_1,\ldots,n_k\}$ that are not members of $\mathbb{N}$ with a
corresponding raising substitution $\theta^\Psi_r$, and
$\Psi^\mathcal{B}_r$ is a version of $\{x_1,\ldots,x_n\}$ raised over
all the nominal constants in $\mathbb{N} \cup \{n_1,\ldots,n_k \}$
except the ones that are assigned a type in $G_{j+1},\ldots,G_\ell$ or
that appear in $n^G_1,\ldots,n^G_p$ with the corresponding raising
substitution $\theta^{\mathcal{B}}_r$.
Third, $\Xi''$ will be
\[  \hat{\Xi} \cup 
   \{ \ctxvarty{\Gamma}
               {\mathbb{N}_{\Gamma}}
               {\ctxty{\mathcal{C}}
                     {\hsubst{\theta^\Psi_r}{G_1};\ldots,\hsubst{\theta^\Psi_r}{G_j};
                      n_1 : \hsubst{\theta^\mathcal{B}_r}{B_1'},
                      \ldots, n_k :\hsubst{\theta^\mathcal{B}_r}{B_k'};
                     \hsubst{\theta^\Psi_r}{G_{j+1}},\ldots\hsubst{\theta^\Psi_r}{G_\ell}}}
   \}
\]
where $\hat{\Xi} =
       \hsubst{\theta^\Psi_r}{(\Xi \setminus \{\ctxvarty{\Gamma}{\mathbb{N}_{\Gamma}}{\ctxty{\mathcal{C}}{G_1; \ldots; G_\ell}}\})}$.
Finally, each formula in $\Omega'' \cup \{F''\}$ is obtained by
applying the raising substitution $\theta^\Psi_r$ to a corresponding
one in $\Omega \cup \{F\}$. 
Using Theorem~\ref{th:raised-subs} we observe that, because
$\supportof{\theta}$ is disjoint from the set $\mathbb{N}$, there is a 
(closed) raising substitution $\theta_r$  with $\context{\theta_r} =
\Psi^\mathcal{S}_r$ whose support is
disjoint from the set $\mathbb{N} \cup \{n_1,\ldots,n_k\}$ and which is such
that $\hsubst{\theta_r}{\Omega''} = \hsubst{\theta}{\Omega}$,
$\hsubst{\theta_r}{F''} = \hsubst{\theta}{F}$,
$\hsubst{\theta_r}{\hat{\Xi}}$ is equal to
$\hsubst{\theta}{(\Xi \setminus \{\ctxvarty{\Gamma}{\mathbb{N}_{\Gamma}}{\ctxty{\mathcal{C}}{G_1; \ldots; G_\ell}}\})}$,
and, for each $q$, $1 \leq q \leq \ell$,
$\hsubst{\theta_r}{\hsubst{\theta^\Psi_r}{G_q}}=\hsubst{\theta}{G_q}$.
Using Theorem~\ref{th:raised-subs} again, we see that there is a
(closed) substitution $\theta^h_r$ with
$\context{\theta^h_r} = \Psi^\mathcal{B}_r$ whose support is
disjoint from $\mathbb{N} \cup \{n_1,\ldots,n_k\}$ and that is such
that, for $1 \leq q \leq k$, it is the case that
$\hsubst{\theta^h_r}{\hsubst{\theta^\mathcal{B}_q}{B_q'}} =
\hsubst{\theta^h}{B_q'}$.  
Based on all these observations, it is easy to see that if we let
$\theta''= \theta_r \cup \theta^h_r$, then $\langle
\theta'',\emptyset\rangle$ is substitution compatible with
$\mathcal{S}''$ and $\hsubstseq{\emptyset}{\theta''}{\mathcal{S}''}$ is
identical to $\hsubstseq{\emptyset}{\theta}{\mathcal{S}}$ except for
the fact that the type associated with $\Gamma$ in its context
variables context is
$\ctxty{\mathcal{C}}
       {\hsubst{\theta}{G_1};\ldots;\hsubst{\theta}{G_j};
        n_1:\hsubst{\theta^h}{B_1'},\ldots,n_k:\hsubst{\theta^h}{B_k'};
        \hsubst{\theta}{G_{j+1}};\ldots; \hsubst{\theta}{G_\ell}}$.
By the earlier observation, $\sigma$ is appropriate for 
$\hsubstseq{\emptyset}{\theta''}{\mathcal{S}''}$ and, in fact
$\subst{\sigma}{\hsubstseq{\emptyset}{\theta''}{\mathcal{S}''}} =
 \subst{\sigma}{\hsubstseq{\emptyset}{\theta}{\mathcal{S}}}$.
 Noting also that $h'':A''$ must, by the definition of \addblksans, be
 $n_i:\hsubst{\theta^\mathcal{B}_r}{B_i'}$, if 
 $\hds{\mathcal{S}}{G}$ includes in it a pair obtained by this
 particular call to \addblksans, then we can pick $\mathcal{S}'$ to be
 $\mathcal{S}''$, $h'$ to be $h''$, $A'$ to be $A''$, $\theta'$ to be
 $\theta''$, $\sigma'$ to be $\sigma$ and $\pi$ to be the identity
 permutation to satisfy the requirements of the lemma.

We are, of course, not assured that there will be a pair in
$\hds{\mathcal{S}}{G}$ corresponding to the use of \addblksans\ with 
exactly the arguments considered above.
Specifically, the sequences of nominal constants that are considered
for the block instance may not include $n_1,\ldots, n_k$.
However, we know that some sequence $n'_1,\ldots, n'_k$
will be considered that is identical to $n_1,\ldots,n_k$ except for
constants in identical locations in the two sequences that are not
drawn from $\mathbb{N}$.
Since the constants in any sequence must be distinct, it follows
easily that we can describe a permutation $\pi'$ on the nominal
constants that is the identity map on $\mathbb{N}$ and that maps
$n'_1,\ldots,n'_k$ to $n_1,\ldots,n_k$. 
It can also be seen then that $\hds{\mathcal{S}}{G}$ will include a
tuple $\langle \mathcal{S}''', h''' : A''' \rangle$ such that
$\permute{\pi'}{\mathcal{S}'''} = \mathcal{S}''$, $\permute{\pi'}{h'''} = h''$, and
$\permute{\pi'}{A'''} = A''$.
Picking $\mathcal{S}'$ to be $\mathcal{S}'''$, $h'$ to be $h'''$, $A'$ to be $A'''$,
$\theta'$ to be $\permute{\inv{\pi'}}{\theta''}$, $\sigma'$ to be
$\permute{\inv{\pi'}}{\sigma''}$, $\pi$ to be $\pi'$ and using
Theorem~\ref{th:perm-subst}, we can once
again see that the requirements of the lemma are met.
\end{proof}

\paragraph{Generating a Covering Set of Premise Sequents} 

We have, at this stage, a way for identifying all the possible heads
to consider for $R$ in analyzing an atomic assumption formula
$\fatm{G}{R:P}$ in a sequent $\mathcal{S}$
However, we are still need a systematic approach for considering all
the term and context substitutions that yield closed instances of
$\mathcal{S}$ in which the term component of $F$ has the relevant
head. 
We now turn to this task.
Rather than identifying the closed instances immediately, we will
think of taking a step in this direction that also allows us to reduce
the typing judgement represented by $F$ based on the typing rule for
the LF judgement it represents. 
The first step in this direction will be to determine a substitution
that makes the head of $R$ identical to the one it needs to be in its
closed form.
We introduce a particular form of unification towards this end. 
The next step will be to reduce the sequent based on the observations
in Theorem~\ref{th:atomictype}.
The eventual proof rule will then combine the identification of
relevant heads using $\hds{\mathcal{S}}{G}$, the solving of a
unification problem based on each such head, and the 
reduction of the sequent.

We begin this development by describing the relevant notion of
unification.

\begin{definition}\label{def:unification}
A \emph{unification problem} $\mathcal{U}$ is a tuple
$\unif{\mathbb{N}}{\Psi}{\mathcal{E}}$
in which $\mathbb{N}$ is a collection of nominal constants, $\Psi$ is
a term variables context, and $\mathcal{E}$ is a set of equations
$\{\eqn{E_1}{E_1'},\ldots,\eqn{E_n}{E_n'}\}$ where, for
$1 \leq i \leq n$, either
$\wftype{\mathbb{N}\cup\Psi\cup\STLCGamma_0}{E_i}$ and 
$\wftype{\mathbb{N}\cup\Psi\cup\STLCGamma_0}{E'_i}$ have derivations or
there is an arity type $\alpha$ such that 
$\stlctyjudg{\mathbb{N}\cup\Psi\cup\STLCGamma_0}{E_i}{\alpha}$ and 
$\stlctyjudg{\mathbb{N}\cup\Psi\cup\STLCGamma_0}{E_i'}{\alpha}$ 
have derivations.
A \emph{solution} to such a unification problem is a pair
$\solun{\theta}{\Psi'}$ of a substitution and a term variables context
such that
\begin{enumerate}
\item $\theta$ is type preserving with respect to $\noms\cup\STLCGamma_0\cup\Psi'$,

\item $\supportof{\theta} \cap \mathbb{N} = \emptyset$,

\item for any $x$ if
$x : \alpha \in \Psi$ and $x : \alpha' \in \aritysum{\context{\theta}}{\Psi'}$ then
  $\alpha = \alpha'$, and

\item for each $i$, $1\leq i\leq n$, expressions $E_i$ and $E_i'$ from 
$\eqn{E_i}{E_i'}$ are such that $\hsubst{\theta}{E_i}=\hsubst{\theta}{E_i'}$.
\end{enumerate}
Note that the typing contraints validate the use of the notation
$\hsubst{\theta}{E_i}$ and $\hsubst{\theta}{E'_i}$.
\end{definition}

Given an atomic term $R$, we can determine its instances that have a
particular head $h$ through the unification of $R$ with $h$ applied to
a sequence of fresh variables.
We will use this idea to narrow down the set of instances of a sequent
that must be considered once we have determined what the head of the
term in an atomic goal of the form $\fatm{G}{R:P}$ must be.
However, we must first build into our notion of a fresh variable the
ability to instantiate it with nominal constants appearing in the
sequent.
We do this below by using the mechanism of raising.

\begin{definition}\label{def:generalized-variable}
Let $\Psi$ be a term variables context and let $\mathbb{N}$ be a finite
subset of $\noms$. 
Further, let $n_1,\ldots,n_k$ be a listing of the nominal constants in
$\mathbb{N}$ and let $\alpha_1,\ldots,\alpha_k$ be the respective
types of these constants.
Then, for any variable $z$ that does not appear in $\Psi$, $z :
\alpha_1 \atyarr \cdots \atyarr \alpha_k \atyarr \beta$ is 
said to be a variable of arity type $\beta$ away from $\Psi$ and
raised over $\mathbb{N}$.
Moreover, $(z \app n_1\app \ldots \app n_k)$ is said to be the
generalized variable term corresponding to $z$.
\end{definition}

The following lemma now formalizes the described refinement of the sequent.

\begin{lemma}\label{lem:inst-solun}
Let $\mathcal{S} = \seq[\mathbb{N}]{\Psi}{\Xi}{\setand{\Omega}{\fatm{G}{R : P}}}{F'}$ be a
well-formed sequent.
Further, let $\theta$ be a term substitution that together with a
context substitution $\sigma$ identifies a closed instance of
$\mathcal{S}$ and is such that
$\hsubst{\theta}{R} = (h\app M_1\app \ldots\app M_n)$ and
$\hsubst{\theta}{P} = \hsubst{\{\langle x_1,M_1,\erase{A_1}\rangle,
  \ldots, \langle x_n,M_n,\erase{A_n}\rangle \}}{P'}$
for a head $h$ that $\Sigma$ or
$\subst{\sigma}{\hsubst{\theta}{G}}$ assigns the
type $\typedpi{x_1}{A_1}{\ldots\typedpi{x_n}{A_n}{P'}}$.
Finally for $1 \leq i \leq n$ let $z_i:\alpha_i'$ be a distinct
variable of type $\erase{A_i}$ away from $\Psi$ and raised over
$\mathbb{N}$, and let $t_i$ be the generalized variable term
corresponding to $z_i$. 
Then $\solun{\theta}{\emptyset}$ is a
solution to the unification problem
\begin{tabbing}
\qquad\quad\=\quad\qquad\qquad\=\kill
\>
$\langle \mathbb{N},
         \Psi \cup \{z_1:\alpha_1',\ldots, z_n:\alpha_n'\},$\\
\>\>
$\left\{P = \hsubst{\{\langle x_1,t_1,\erase{A_1}\rangle,
            \ldots, \langle x_n,t_n,\erase{A_n}\rangle \}}{P'}, 
         R = (h \app t_1 \app \ldots\app t_n)\right\}\rangle$.
\end{tabbing}
\end{lemma}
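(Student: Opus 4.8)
The plan is to verify directly that $\solun{\theta^*}{\emptyset}$ meets the four requirements of a solution set out in Definition~\ref{def:unification}, where $\theta^*$ is the extension of the given closed-instance substitution obtained by raising; I read the $\theta$ named in the solution of the statement as this extension, since the equations mention the fresh variables $z_1,\dots,z_n$ and the original $\theta$, having $\context{\theta}=\Psi$, does not act on them. Because $\theta$ and $\sigma$ identify a closed instance of $\mathcal{S}$, the pair $\langle\theta,\emptyset\rangle$ is substitution compatible with $\mathcal{S}$, so $\theta$ is arity type preserving with respect to $\noms\cup\STLCGamma_0$ and $\supportof{\theta}\cap\mathbb{N}=\emptyset$. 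From the wellformedness of the atomic formula together with Theorems~\ref{th:arityapprox} and \ref{th:aritysubs}, each argument $M_i$ of $\hsubst{\theta}{R}=(h\app M_1\app\ldots\app M_n)$ carries arity type $\erase{A_i}$, so the auxiliary substitution $\vartheta=\{\langle x_i,M_i,\erase{A_i}\rangle\ |\ 1\le i\le n\}$ is arity type preserving with respect to $\noms\cup\STLCGamma_0$.

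I would then apply Theorem~\ref{th:raised-subs} to $\vartheta$ raised over the listing $n_1,\dots,n_k$ of $\mathbb{N}$, taking the raised variables to be precisely $z_1,\dots,z_n$; the associated raising substitution $\theta_r$ then sends $x_i$ to the generalized variable term $t_i=(z_i\app n_1\app\ldots\app n_k)$. The theorem supplies a substitution $\theta'$ with $\context{\theta'}=\{z_i:\alpha_i'\}$ and $\supportof{\theta'}=\supportof{\vartheta}\setminus\mathbb{N}$ that is arity type preserving with respect to $\noms\cup\STLCGamma_0$ and satisfies $\hsubst{\theta'}{t_i}=\hsubst{\theta'}{\hsubst{\theta_r}{x_i}}=\hsubst{\vartheta}{x_i}=M_i$. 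Setting $\theta^*=\theta\cup\theta'$ (their domains $\Psi$ and $\{z_i\}$ being disjoint) gives the solution substitution. Conditions (1) and (3) of Definition~\ref{def:unification} are inherited from the corresponding properties of $\theta$ and the construction of $\theta'$, and the arity types $\alpha_i'$ produced by the raising coincide with those fixed in Definition~\ref{def:generalized-variable}. Condition (2) holds because the nominal constants occurring in the images of $\theta'$ are those of the $M_i$ with all of $\mathbb{N}$ abstracted away, leaving $\supportof{\theta^*}$ disjoint from $\mathbb{N}$.

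The substance of the argument is condition (4), namely the two equations. For $R=(h\app t_1\app\ldots\app t_n)$, the $z_i$ do not occur in $R$, so $\hsubst{\theta^*}{R}=\hsubst{\theta}{R}=(h\app M_1\app\ldots\app M_n)$, whereas applying $\theta^*$ to the right-hand side gives $(h\app\hsubst{\theta^*}{t_1}\app\ldots\app\hsubst{\theta^*}{t_n})=(h\app M_1\app\ldots\app M_n)$ by the defining property of $\theta'$; the two sides agree. For the equation on $P$, I would note that $P'$ contains no free term variables in the domain of $\theta^*$ and is therefore left fixed (Theorem~\ref{th:vacuoussubs}), and then use the composition property of Theorem~\ref{th:composition} to commute the two substitution applications, so that $\hsubst{\theta^*}{\hsubst{\{\langle x_i,t_i,\erase{A_i}\rangle\}}{P'}}$ equals $\hsubst{\{\langle x_i,\hsubst{\theta^*}{t_i},\erase{A_i}\rangle\}}{P'}=\hsubst{\{\langle x_i,M_i,\erase{A_i}\rangle\}}{P'}$, which by hypothesis is exactly $\hsubst{\theta^*}{P}=\hsubst{\theta}{P}$.

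The step I expect to be the main obstacle is the bookkeeping around raising: ensuring that the raised variables introduced by Theorem~\ref{th:raised-subs} can be taken to be the $z_i$ of the statement with matching arity types, and—most delicately—that the support of the extension stays disjoint from $\mathbb{N}$ so that condition (2) is not violated, which is precisely the $\supportof{\theta'}=\supportof{\vartheta}\setminus\mathbb{N}$ clause. The equational reasoning itself is routine once the composition and permutation results (Theorems~\ref{th:composition} and \ref{th:subspermute}) are invoked, and the well-definedness of every hereditary substitution appearing in the equations is guaranteed by the arity typing constraints in Definition~\ref{def:unification} together with Theorems~\ref{th:uniqueness} and \ref{th:aritysubs}.
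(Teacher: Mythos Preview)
Your proposal is correct and follows the same overall strategy as the paper---verify the four clauses of Definition~\ref{def:unification}---but you are considerably more careful on a point the paper glosses over.

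The paper's proof is three sentences: it observes that substitution compatibility of $\langle\theta,\emptyset\rangle$ with $\mathcal{S}$ gives clauses (1)--(3), and then asserts that the two hypotheses $\hsubst{\theta}{R}=(h\app M_1\app\ldots\app M_n)$ and $\hsubst{\theta}{P}=\hsubst{\{\langle x_i,M_i,\erase{A_i}\rangle\}}{P'}$ ``ensure that the final clause is also satisfied.'' It takes $\theta$ literally as the closed-instance substitution, with no extension.

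You have put your finger on the gap in that reading: the closed-instance substitution has $\context{\theta}=\Psi$, so it does not act on the fresh variables $z_1,\ldots,z_n$, and hence $\hsubst{\theta}{(h\app t_1\app\ldots\app t_n)}=(h\app t_1\app\ldots\app t_n)$ still contains the $z_i$'s free, whereas $\hsubst{\theta}{R}$ is closed. Clause (4) therefore fails for the raw $\theta$. Your construction of $\theta^*=\theta\cup\theta'$ via Theorem~\ref{th:raised-subs}, with $\theta'$ sending each $z_i$ to a term realizing $M_i$, is exactly what is needed to make the equations hold, and your check that $\supportof{\theta'}=\supportof{\vartheta}\setminus\mathbb{N}$ keeps clause (2) intact is the right bookkeeping. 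The equational verification you outline for clause (4) using Theorems~\ref{th:vacuoussubs} and \ref{th:composition} is correct.

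In short: the paper's proof is the terse version, and either intends the extension implicitly or is simply imprecise here; your argument is the honest version of the same idea and fills in the step the paper skips.
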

\begin{proof}
As $\theta$ and $\sigma$ identify a closed instance of $\mathcal{S}$
is must be that $\seqsub{\theta}{\emptyset}$ is substitution compatible with
$\mathcal{S}$.
But then, $\seqsub{\theta}{\emptyset}$ satisfies the first three
clauses of the definition for a solution. 
The assumptions $\hsubst{\theta}{R} = (h\app M_1\app \ldots\app M_n)$ and
$\hsubst{\theta}{P} = \hsubst{\{\langle x_1,M_1,\erase{A_1}\rangle,
  \ldots, \langle x_n,M_n,\erase{A_n}\rangle \}}{P'}$ 
ensure that the final clause is also satisfied, implying that 
$\solun{\theta}{\emptyset}$ solves the given unification problem.
\end{proof}

The reduction of a sequent based on an atomic assumption formula is 
the content of the next definition.
Note that the type and the term in the formula are required to be
atomic for this reduction to be applicable and the type of the head of
the term determines the reduction.
\begin{definition}\label{def:decompseq}
Let
$\mathcal{S}=\seq[\mathbb{N}]{\Psi}{\Xi}{\setand{\Omega}{F}}{F'}$ be a
well-formed sequent, where $F$ is the formula $\fatm{G}{h\app
  M_1\ldots M_n:P}$ for some $h$ that is assigned the type 
$A=\typedpi{x_1}{A_1}{\ldots\typedpi{x_n}{A_n}{P'}}$ in $\Sigma$ or
the explicit bindings in $G$ relative to $\mathcal{S}$.
The sequent obtained by decomposing the assumption formula $F$ based on the
type $A$ is 
$\seq[\mathbb{N}]
     {\Psi}
     {\Xi}
     {\setand{\Omega}{\fatm{G}{M_1:A_1'},\ldots,\fatm{G}{M_n:A_n'}}}
     {F'}$
where, for $1 \leq i \leq n$, $A_i'$ is
$\hsubst{\{\langle x_1,M_1,\erase{A_1}\rangle,\ldots,
               \langle x_{i-1}, M_{i-1},\erase{A_{i-1}}\rangle\}}{A_i}$.
This sequent is denoted by $\decompseq{F}{\mathcal{S}}$.
Note that the well-formedness of $\mathcal{S}$ justifies the use of
the notation
$\hsubst{\{\langle x_1,M_1,\erase{A_1}\rangle,\ldots,
           \langle x_{i-1}, M_{i-1},\erase{A_{i-1}}\rangle\}}
        {A_i}$.
\end{definition}

The following lemma expresses the soundness of the idea of reducing
a sequent. Additionally, it identifies a measure with atomic
formulas that diminishes with the replacements effected by
a reduction step; this property will be useful in showing soundness
for an induction rule that we present in Section~\ref{ssec:induction}.

\begin{lemma}\label{lem:decomp-decr}
Let $\mathcal{S} = \seq[\mathbb{N}]{\Psi}{\Xi}{\setand{\Omega}{F}}{F'}$
be a well-formed sequent with $F$ an atomic formula of the form
$\fatm{G}{R:P}$.
Further, let $h$ be assigned the type
$\typedpi{x_1}{A_1}{\ldots\typedpi{x_n}{A_n}{P'}}$ in either $\Sigma$
or the explicit bindings in $G$ relative to $\mathcal{S}$ and 
for each $i$, $1 \leq i \leq n$, let $z_i:\alpha_i'$ be a distinct
variable of type $\erase{A_i}$ away from $\Psi$ and raised over
$\mathbb{N}$, and let $t_i$ be the generalized variable term
corresponding to $z_i$. 
Finally, let $\mathcal{U}$ be the unification problem 
\begin{tabbing}
\qquad\quad\=\quad\qquad\qquad\=\kill
\>
$\langle \mathbb{N}, 
         \Psi \cup \{z_1:\alpha_1',\ldots, z_n:\alpha_n'\},$\\
\>\>
$\left\{P = \hsubst{\{\langle x_1,t_1,\erase{A_1}\rangle,
            \ldots, \langle x_n,t_n,\erase{A_n}\rangle \}}{P'}, 
         R = (h \app t_1 \app \ldots\app t_n)\right\}\rangle$.
\end{tabbing}
Then any solution to $\mathcal{U}$ is substitution compatible with
$\mathcal{S}$.
Further, for any $\solun{\theta}{\Psi_\theta}$ that is a solution to
$\mathcal{U}$ and $\theta_r$ that is a raising substitution associated
with the application of $\theta$ to $\mathcal{S}$ relative to
$\Psi_\theta$, there must be terms $M_1, \ldots, M_n$ such that the
following hold:
\begin{enumerate}
\item $\hsubst{\theta_r}{\hsubst{\theta}{R}}$ is $(h\app M_1 \app \ldots \app M_n)$.
\item For any $\theta'$ and $\sigma'$ identifying a closed instance
  of $\hsubstseq{\Psi_\theta}{\theta}{\mathcal{S}}$, if 
$\subst{\sigma'}{\hsubst{\theta'}{\hsubst{\theta_r}{\hsubst{\theta}{\fatm{G}{R:P}}}}}$
is valid and there is a derivation for 
$\subst{\sigma'}{\hsubst{\theta'}{\hsubst{\theta_r}{\hsubst{\theta}{(\lfchecktype{G}{R}{P})}}}}$
of height $k$, then for each $i$, $1 \leq i \leq n$, letting
$A'_i = \hsubst{\{\langle x_1,M_1,\erase{A_1}\rangle,\ldots,
                                       \langle x_{i-1}, M_{i-1},\erase{A_{i-1}}\rangle\}}
               {A_i}$, it must be
the case that 
$\subst{\sigma'}{\hsubst{\theta'}
                       {(\fatm{\hsubst{\theta_r}{\hsubst{\theta}{G}}}
                              {M_i: A'_i})}}$  
is valid and that there is a derivation of height less than $k$ for 
$\subst{\sigma'}{\hsubst{\theta'}{(\lfchecktype{\hsubst{\theta_r}{\hsubst{\theta}{G}}}
                                {M_i}{A'_i})}}.$
\item There is an $\mathcal{S'}$ such that 
$\mathcal{S}' = 
    \decompseq{\hsubst{\theta_r}{\hsubst{\theta}{F}}}
              {\hsubstseq{\Psi_\theta}{\theta}{\mathcal{S}}}$
and $\mathcal{S}'$ is valid only if
$\hsubstseq{\Psi_\theta}{\theta}{\mathcal{S}}$  is. 
\end{enumerate}
\end{lemma}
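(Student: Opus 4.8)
The plan is to dispatch the first claim and clause (1) of the second by direct unfolding of the relevant definitions, to reduce clause (2)---the substantive part---to Theorem~\ref{th:atomictype}, and then to obtain clause (3) as a structural consequence of clause (2). For the first claim I would simply compare the conditions defining a solution to $\mathcal{U}$ in Definition~\ref{def:unification} with those defining substitution compatibility in Definition~\ref{def:seq-term-subst}: the type-preservation and disjointness requirements coincide verbatim, and the agreement condition for a solution ranges over $\Psi \cup \{z_1:\alpha'_1,\ldots,z_n:\alpha'_n\}$, a superset of $\Psi$, so it subsumes the condition needed for compatibility. Together with Theorem~\ref{th:seq-term-subs-ok} this also justifies that $\hsubstseq{\Psi_\theta}{\theta}{\mathcal{S}}$ is a well-formed sequent, validating the notation used throughout. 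For clause (1), since $\solun{\theta}{\Psi_\theta}$ solves $\mathcal{U}$, the equation $R = (h\app t_1\app\ldots\app t_n)$ gives $\hsubst{\theta}{R} = (h\app \hsubst{\theta}{t_1}\app\ldots\app\hsubst{\theta}{t_n})$, because $h$, being a constant or nominal constant, is fixed by $\theta$; applying $\theta_r$ then yields $(h\app M_1\app\ldots\app M_n)$ with $M_i = \hsubst{\theta_r}{\hsubst{\theta}{t_i}}$, and these are exactly the terms named in clauses (2) and (3).

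The heart of the argument is clause (2). Fixing $\theta'$ and $\sigma'$ that identify a closed instance of $\hsubstseq{\Psi_\theta}{\theta}{\mathcal{S}}$ for which the atomic formula is valid, I would first unfold validity (Definition~\ref{def:semantics}) to extract LF derivations for the context, for the type, and for $\lfchecktype{\cdot}{R^\ast}{P^\ast}$, where $R^\ast$ and $P^\ast$ denote the fully instantiated term and type and $P^\ast$ is atomic. By the rule \canontermatom\ this gives a derivation for $\lfsynthtype{\cdot}{R^\ast}{P^\ast}$ whose height is bounded by $k$. Since $R^\ast$ has head $h$, to which $\Sigma$ or the instantiated context assigns the instantiation of $\typedpi{x_1}{A_1}{\ldots\typedpi{x_n}{A_n}{P'}}$, Theorem~\ref{th:atomictype}, clause 2, applies and produces a sequence of types together with derivations of height strictly less than $k$ for the typing judgements of the arguments of $R^\ast$. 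Those arguments are forced to be $\subst{\sigma'}{\hsubst{\theta'}{M_i}}$ by the unique decomposition of an atomic term into head and arguments, and the decomposition types furnished by Theorem~\ref{th:atomictype} must agree with the instantiations of the $A'_i$ in the statement. Establishing this match is where I would invoke Theorem~\ref{th:uniqueness} (uniqueness of hereditary substitution), Theorem~\ref{th:composition} (to commute the sequent-level substitutions with the argument substitution that defines each $A'_i$), and Theorem~\ref{th:erasure} (so that the erased indices guiding these substitutions are preserved). The LF well-formedness of each $A'_i$ that validity additionally requires is available from the coherence observation at the start of the proof of Theorem~\ref{th:atomictype}. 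Reassembling, each decomposed formula is valid and carries an LF derivation of height below $k$.

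Clause (3) then follows structurally. By clause (1), $\hsubst{\theta_r}{\hsubst{\theta}{F}}$ has an atomic term with head $h$ and an atomic type, so $\mathcal{S}' = \decompseq{\hsubst{\theta_r}{\hsubst{\theta}{F}}}{\hsubstseq{\Psi_\theta}{\theta}{\mathcal{S}}}$ is well-defined by Definition~\ref{def:decompseq} and shares its support set, term and context variable contexts, residual assumptions $\Omega$, and goal with $\hsubstseq{\Psi_\theta}{\theta}{\mathcal{S}}$. To show the latter is valid whenever $\mathcal{S}'$ is, I would take an arbitrary closed instance of $\hsubstseq{\Psi_\theta}{\theta}{\mathcal{S}}$: if any formula in $\Omega$ or the formula $\fatm{G}{R:P}$ is invalid there, the instance is vacuously valid, and otherwise clause (2) makes every decomposed formula valid, so the matching closed instance of $\mathcal{S}'$ under the same substitutions has all of its assumptions valid and hence, by the validity of $\mathcal{S}'$, a valid goal---which is literally the same goal.

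The main obstacle I anticipate is precisely the bookkeeping in clause (2): aligning the terms and dependent types produced by the unification-and-raising construction with those that Theorem~\ref{th:atomictype} extracts from the closed LF derivation, and carrying this out while faithfully tracking derivation heights through the passage between checking and synthesis judgements. The remaining steps are either definitional comparisons or routine appeals to the substitution meta-theory already established earlier in the paper.
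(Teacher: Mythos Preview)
Your proposal is correct and follows essentially the same approach as the paper: direct definitional comparison for substitution compatibility, unfolding the solved equation for clause (1), invoking Theorem~\ref{th:atomictype}(2) for clause (2), and the structural argument via shared closed instances for clause (3). The only minor divergence is that the paper cites Theorem~\ref{th:subspermute} rather than Theorem~\ref{th:composition} for the substitution commutation in clause (2); the permutation lemma is the more directly applicable one here since the task is to swap the order of the argument substitution defining $A'_i$ with the closing substitutions $\theta'$ and $\sigma'$, but your identification of the bookkeeping obstacle and the needed ingredients is accurate.
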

\begin{proof}
A straightforward examination of Definitions~\ref{def:seq-term-subst}
and~\ref{def:unification} suffices to verify that solutions to
$\mathcal{U}$ must be substitution compatible with $\mathcal{S}$.
Any solution $\solun{\theta}{\Psi_{\theta}}$ to the unification
problem $\mathcal{U}$ must be such that
$\hsubst{\theta}{R}=\hsubst{\theta}{(h\app t_1\ldots t_n)}$.
From this it follows that
$\hsubst{\theta_r}{\hsubst{\theta}{R}}=
\hsubst{\theta_r}{\hsubst{\theta}{(h\app t_1\ldots t_n)}}$.
Since $h$ is unaffected by substitutions, it is easy to see that
$\hsubst{\theta_r}{\hsubst{\theta}{(h\app t_1\ldots t_n)}} = (h \app
(\hsubst{\theta_r}{\hsubst{\theta}{t_1}})\app \ldots \app
  (\hsubst{\theta_r}{\hsubst{\theta}{t_n}}))$.
Picking  $M_i$ to be the term $\hsubst{\theta_r}{\hsubst{\theta}{t_i}}$
for each $i$, $1 \leq i \leq n$, we see that clause (1) in the lemma
is satisfied.

For the second clause we note first that the typing judgements in question must 
be closed and hence the consideration is meaningful.
Consider an arbitrary closed instance of $\hsubstseq{\Psi_{\theta}}{\theta}{\mathcal{S}}$
identified by $\theta'$ and $\sigma$.
If $\subst{\sigma}{\hsubst{\theta'}{\hsubst{\theta_r}{\hsubst{\theta}{F}}}}$
is a valid formula then using the definition of validity as well as
clause (1) in the lemma we can extract a derivation for
$\lfctx{\subst{\sigma}{\hsubst{\theta'}{\hsubst{\theta_r}{\hsubst{\theta}{G}}}}}$
and a derivation of height $k$ for
$\lfchecktype{\subst{\sigma}{\hsubst{\theta'}{\hsubst{\theta_r}{\hsubst{\theta}{G}}}}}
             {\hsubst{\theta'}{(h\app M_1\ldots M_n)}}
             {\hsubst{\theta'}{\hsubst{\theta_r}{\hsubst{\theta}{P}}}}$.
Application of Theorems~\ref{th:atomictype} and~\ref{th:subspermute} 
are then sufficient to conclude
that there is a derivation of height less than $k$ for
$\subst{\sigma}{\hsubst{\theta'}
       {(\lfchecktype{\hsubst{\theta_r}{\hsubst{\theta}{G}}}
                     {M_i}{A'_i})}}.$
For us to be able to conclude that, for each $i$, $1 \leq i \leq n$,
the formula
$\subst{\sigma}{\hsubst{\theta'}
                       {(\fatm{\hsubst{\theta_r}{\hsubst{\theta}{G}}}
                              {M_i: A'_i})}}$  
is valid, it only remains to show that there is a derivation for
$\lftype{\subst{\sigma}{\hsubst{\theta'}{\hsubst{\theta_r}{\hsubst{\theta}{G}}}}}
        {\hsubst{\theta'}{A_i'}}$.
However, this has been done in the proof of Theorem~\ref{th:atomictype}.

We finish by proving the third clause.
From clause (1) we know that 
$\hsubst{\theta_r}{\hsubst{\theta}{R}}$ will be of the form $(h\app M_1\ldots M_n)$,
thus by the definition of \decompseqsans\ there must exist an 
$\mathcal{S'}$ such that 
$\mathcal{S}' = 
    \decompseq{\hsubst{\theta_r}{\hsubst{\theta}{F}}}
              {\hsubstseq{\Psi_\theta}{\theta}{\mathcal{S}}}$.
Suppose $\mathcal{S}'$ is valid.
Consider an arbitrary closed instance of $\hsubstseq{\Psi_{\theta}}{\theta}{\mathcal{S}}$ 
identified by $\theta'$ and $\sigma$.
These same $\theta'$ and $\sigma$ also identify a closed instance of $\mathcal{S}'$
as these sequents only differ in that the assumption formula
$\hsubst{\theta_r}{\hsubst{\theta}{F}}$
has been replaced with the collection of reduced formulas
$\left\{
  \fatm{\hsubst{\theta_r}{\hsubst{\theta}{G}}}{M_i: A'_i}
\ \middle|\ 
  1\leq i\leq n\right\}$.
If any formula in the set of assumption formulas of 
$\subst{\sigma}{\hsubstseq{\emptyset}{\theta'}{\hsubstseq{\Psi_{\theta}}{\theta}{\mathcal{S}}}}$
were not valid then this instance would be vacuously valid, so suppose all
such formulas are valid.
Then in particular, 
$\subst{\sigma}{\hsubst{\theta'}{\hsubst{\theta_r}{\hsubst{\theta}{F}}}}$
must be valid and thus by clause (2), for each $i$, $1\leq i\leq n$, the formula
$\subst{\sigma}{\hsubst{\theta'}{\fatm{\hsubst{\theta_r}{\hsubst{\theta}{G}}}
      {M_i: A'_i}}}$
will be valid.
But then all of the assumption formulas of $\subst{\sigma}{\hsubst{\theta'}{\mathcal{S}'}}$
must be valid and since this is a closed instance of a valid sequent
we can conclude that the goal formula
$\subst{\sigma}{\hsubst{\theta'}{\hsubst{\theta_r}{\hsubst{\theta}{F'}}}}$ 
is valid.
Therefore any closed instance of 
$\hsubstseq{\Psi_{\theta}}{\theta}{\mathcal{S}}$ will be valid, 
and clause (3) in the lemma is satisfied.
\end{proof}

Lemmas~\ref{lem:inst-solun} and \ref{lem:decomp-decr} yield the
following possibility for analyzing the derivability of a sequent with
$\fatm{G}{R:P}$ as an atomic assumption formula relative to an
identified head for $R$:
use the unification problem identified in
Lemma~\ref{lem:inst-solun} to generate a collection of term
substitutions and analyze the derivability of the
reduced sequent under these substitutions.
Unfortunately, this would not be a very effective strategy if we have
to treat each unifier separately.
Towards dealing with this issue, we introduce the idea of a covering
set of solutions to a unification problem.
The next three definitions culminate in a formulation of this notion. 

\begin{definition}\label{def:substitution-restriction}
The restriction of a substitution $\theta$ to the term variables context $\Psi$ is
the substitution
$
\left\{
  \langle x,M,\alpha\rangle\ \middle|\ 
  \langle x,M,\alpha\rangle\in\theta\mbox{ and } x:\alpha'\in\Psi
\right\}.
$
This substitution is denoted by $\restrict{\theta}{\Psi}$.
\end{definition}

\begin{definition}\label{def:covering-subst}
Let $\Psi$, $\Psi_1$ and $\Psi_2$ be term variables contexts,
and let $\theta_1$ and $\theta_2$ be substitutions that are
arity type preserving with respect to
$\noms \cup \STLCGamma_0 \cup \Psi_1$ and
$\noms \cup \STLCGamma_0 \cup \Psi_2$, respectively.
Then $\solun{\theta_2}{\Psi_2}$ is said to cover
$\solun{\theta_1}{\Psi_1}$ relative to $\Psi$ if there exists a pair 
$\solun{\theta_3}{\Psi_3}$ of a substitution and a term variables
context such that
\begin{enumerate}
\item $\theta_3$ is type preserving with respect to 
$\noms\cup\STLCGamma_0\cup\Psi_3$,
\item for any
$x : \alpha \in \Psi_2$, if $x : \alpha' \in \aritysum{\context{\theta_3}}{\Psi_3}$ then
  $\alpha = \alpha'$, and 
\item The substitutions $\restrict{\theta_1}{\Psi}$ and 
$\restrict{(\comp{\theta_3}{\theta_2})}{\Psi}$ are identical.
\end{enumerate} 
Note that the second condition ensures that $\noms \cup \STLCGamma_0
\cup ((\Psi_2\setminus\context{\theta_3}))\cup \Psi_3)$ determines a
valid arity context and that $\theta_2$ and $\theta_3$ are arity type 
compatible with respect to this context.
Thus, the composition of $\theta_2$ and
$\theta_3$ in the third condition is well-defined.
\end{definition}

\begin{definition}\label{def:covering-solns}
A collection $S$ of solutions to a unification problem
$\mathcal{U} = \unif{\mathbb{N}}{\Psi}{\mathcal{E}}$ 
is said to be \emph{covering set of solutions for $\mathcal{U}$} if every
solution to $\mathcal{U}$ is covered by some solution in $S$ relative
to $\Psi$.
\end{definition}

The following lemma provides the basis for using the reduced forms
generated by just a covering set of solutions for the relevant
unification problem in analyzing the derivability of the sequent.

\begin{lemma}\label{lem:covers-seq}
Let $\mathcal{S}$ be the well-formed sequent
$\seq[\mathbb{N}]{\Psi}{\Xi}{\Omega}{F'}$ and let $\theta_1$ and
$\sigma$ identify a closed instance $\mathcal{S}'$ of $\mathcal{S}$. 
Further, let $\solun{\theta_2}{\Psi_2}$ be substitution
compatible with $\mathcal{S}$ and such that it covers 
$\solun{\theta_1}{\emptyset}$ relative to $\Psi$.
Then there is a term substitution $\theta$ that together with $\sigma$
identifies a closed instance of
$\hsubstseq{\Psi_2}{\theta_2}{\mathcal{S}}$ that is valid if and only
if $\mathcal{S}'$ is.
\end{lemma}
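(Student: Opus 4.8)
The plan is to unpack the definition of a covering solution and use the composition machinery to build the required substitution $\theta$. By hypothesis $\solun{\theta_2}{\Psi_2}$ covers $\solun{\theta_1}{\emptyset}$ relative to $\Psi$, so Definition~\ref{def:covering-subst} supplies a pair $\solun{\theta_3}{\Psi_3}$ with $\theta_3$ type preserving with respect to $\noms\cup\STLCGamma_0\cup\Psi_3$, agreeing appropriately on $\Psi_2$, and such that $\restrict{\theta_1}{\Psi}$ and $\restrict{(\comp{\theta_3}{\theta_2})}{\Psi}$ are identical. The natural candidate for the witnessing term substitution is $\theta = \theta_3$, applied to the already-reduced sequent $\hsubstseq{\Psi_2}{\theta_2}{\mathcal{S}}$. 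First I would record that $\hsubstseq{\Psi_2}{\theta_2}{\mathcal{S}}$ is itself a well-formed sequent by Theorem~\ref{th:seq-term-subs-ok}, since $\solun{\theta_2}{\Psi_2}$ is substitution compatible with $\mathcal{S}$; this justifies speaking of closed instances of it.

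The core of the argument is to show that $\theta_3$ (suitably combined with the raising substitution $\theta_r$ that Definition~\ref{def:seq-term-subst-app} introduces when forming $\hsubstseq{\Psi_2}{\theta_2}{\mathcal{S}}$) together with $\sigma$ identifies a closed instance of $\hsubstseq{\Psi_2}{\theta_2}{\mathcal{S}}$, and that this closed instance coincides with $\mathcal{S}'$. The key identity to establish is that applying $\theta_2$ and then $\theta_3$ to the formulas of $\mathcal{S}$ yields the same closed formulas as applying $\theta_1$ directly. This is exactly where condition (3) of Definition~\ref{def:covering-subst} and Theorem~\ref{th:composition} come into play: on the variables in $\Psi$, which are the only free term variables appearing in the formulas of $\mathcal{S}$, the substitutions $\restrict{\theta_1}{\Psi}$ and $\restrict{(\comp{\theta_3}{\theta_2})}{\Psi}$ agree, and Theorem~\ref{th:composition} lets me replace the two-step application $\hsubst{\theta_3}{\hsubst{\theta_2}{\cdot}}$ by the single application $\hsubst{\comp{\theta_3}{\theta_2}}{\cdot}$. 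After accounting for the raising substitution via Theorem~\ref{th:raised-subs}, I would conclude that the resulting closed sequent is identical (up to its support set, which does not affect validity) to $\mathcal{S}'$.

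Concretely, I would proceed as follows. First, verify substitution compatibility and well-definedness so that all the notation $\hsubst{\cdot}{\cdot}$ is meaningful. Second, combine $\theta_3$ with the raising substitution associated with the application of $\theta_2$ to get a genuine term substitution $\theta$ on $\hsubstseq{\Psi_2}{\theta_2}{\mathcal{S}}$, and check that $\seqsub{\theta}{\emptyset}$ is substitution compatible with it so that $\langle \theta, \sigma \rangle$ identifies a closed instance. Third, use condition (3) of the covering definition together with Theorems~\ref{th:composition} and \ref{th:raised-subs} to show that the formula components of this closed instance equal those of $\mathcal{S}'$. Fourth, since closed sequents differing only in their support sets are valid exactly together (a fact already used in Lemma~\ref{lem:seq-equiv}), conclude that this closed instance is valid if and only if $\mathcal{S}'$ is.

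The main obstacle I expect is the careful bookkeeping around raising. Because $\hsubstseq{\Psi_2}{\theta_2}{\mathcal{S}}$ is formed with an associated raising substitution $\theta_r$ over $\supportof{\theta_2}$, the substitution that actually acts on its formulas is $\hsubst{\theta_r}{\hsubst{\theta_2}{\cdot}}$ rather than $\hsubst{\theta_2}{\cdot}$ directly, so when I apply $\theta_3$ I must first undo the raising in the sense guaranteed by Theorem~\ref{th:raised-subs}, namely that $\hsubst{\theta'}{\hsubst{\theta_r}{E}} = \hsubst{\theta_2}{E}$ for the de-raised $\theta'$. Threading this identity through so that the composite substitution genuinely realizes $\comp{\theta_3}{\theta_2}$ on $\Psi$ — and confirming that the nominal constants introduced by $\theta_2$ are handled consistently with those in the range of $\theta_1$ — is the delicate part. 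Everything else reduces to routine appeals to the composition, erasure, and well-formedness theorems already established.
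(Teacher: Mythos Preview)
Your approach is essentially the paper's: obtain $\solun{\theta_3}{\Psi_3}$ from the covering definition, push it through the raising layer associated with $\hsubstseq{\Psi_2}{\theta_2}{\mathcal{S}}$, and then use Theorem~\ref{th:composition} together with condition~(3) of Definition~\ref{def:covering-subst} to identify the resulting closed instance with $\mathcal{S}'$ up to support set. You also correctly flag the raising bookkeeping as the delicate part.

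There is, however, one genuine step you have not accounted for, and your plan would stall at ``check that $\seqsub{\theta}{\emptyset}$ is substitution compatible'' without it. Definition~\ref{def:covering-subst} imposes \emph{no} constraint on $\supportof{\theta_3}$: the $\theta_3$ it hands you may well contain nominal constants from $\mathbb{N}$ (or from $\supportof{\theta_2}$), in which case substitution compatibility with $\hsubstseq{\Psi_2}{\theta_2}{\mathcal{S}}$ fails outright. The paper deals with this by first replacing the $\theta'_3$ obtained from the definition with $\permute{\pi}{\theta'_3}$ for a permutation $\pi$ that swaps the offending constants in $\supportof{\theta'_3}\cap\mathbb{N}$ with fresh ones outside $\supportof{\theta_1}\cup\supportof{\theta_2}\cup\mathbb{N}$. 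One must then argue that $\permute{\pi}{\theta'_3}$ still satisfies the three clauses of Definition~\ref{def:covering-subst}; this uses that $\supportof{\theta_1}$ and $\supportof{\theta_2}$ are already disjoint from $\mathbb{N}$ (by substitution compatibility), so $\pi$ fixes every term in $\theta_1$ and $\theta_2$, and hence $\restrict{(\comp{\permute{\pi}{\theta'_3}}{\theta_2})}{\Psi}$ still equals $\restrict{\theta_1}{\Psi}$. Only after this permutation (and a routine normalization of $\context{\theta'_3}$ to exactly $(\Psi\setminus\context{\theta_2})\cup\Psi_2$) can you invoke Theorem~\ref{th:raised-subs} to produce the raised $\theta_3$ whose support is disjoint from $\mathbb{N}\cup\supportof{\theta_2}$ and whose context matches the raised term-variables context of $\hsubstseq{\Psi_2}{\theta_2}{\mathcal{S}}$. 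With that in hand, the rest of your outline goes through as you describe.
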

\begin{proof}
We argue below that, under the assumptions of the lemma, there is a
substitution $\theta_3$ and a term 
variables context $\Psi_3$ such that $\solun{\theta_3}{\Psi_3}$ is
substitution compatible with
$\hsubstseq{\Psi_2}{\theta_2}{\mathcal{S}}$ and for any term $M$ 
such that $\stlctyjudg{\mathbb{N}\cup\STLCGamma_0\cup\Psi}{M}{\alpha}$
is derivable it is the case that 
$\hsubst{\theta_3}{\hsubst{\theta_{2r}}{\hsubst{\theta_2}{M}}} =
\hsubst{\theta_1}{M}$, where $\theta_{2r}$ is the raising substitution
associated with the application of $\theta_2$ to $\mathcal{S}$ relative to $\Psi_2$.
It follows from this that the formulas and context types appearing in
$\hsubstseq{\Psi_3}{\theta_3}{\hsubstseq{\Psi_2}{\theta_2}{S}}$ must be
identical to the ones in $\hsubstseq{\emptyset}{\theta_1}{S}$.
It is then easily seen that $\theta_3$ can be extended into a
substitution $\theta$ that together with $\sigma$ identifies a closed
instance of $\hsubstseq{\Psi_2}{\theta_2}{\mathcal{S}}$ whose
formulas are identical to those of $\mathcal{S}'$.
The lemma is an immediate consequence.

Since $\solun{\theta_2}{\Psi_2}$ covers $\solun{\theta_1}{\emptyset}$ relative to $\Psi$
we know that there exists a pair $\solun{\theta'_3}{\Psi_3}$ satisfying the conditions of
Definition~\ref{def:covering-subst}.
We claim that we may further assume of $\theta'_3$ that
(1)~$\context{\theta'_3}=(\Psi\setminus\context{\theta_2})\cup\Psi_2$,
and (2)~$\supportof{\theta'_3}\cap\mathbb{N}=\emptyset$.
Condition (1) may initially be violated because
there are tuples in $\theta'_3$ for variables that are not assigned
types by $(\Psi\setminus\context{\theta_2})\cup\Psi_2$ or because 
$\context{\theta'_3}$ does not span all of $\Psi_2$.
The first is easily fixed by dropping such tuples and the second by  
adding tuples of the form $\langle x,x,\alpha\rangle$ to
$\theta'_3$ and, if needed, the type assignment $x:\alpha$ to
$\Psi_3$. 
As for the second condition, suppose that it is violated at first.
Consider then a permutation $\pi$ of nominal constants that swaps the
constants in $\supportof{\theta'_3} \cap \mathbb{N}$ with ones that
are not contained in
$\supportof{\theta_1} \cup \supportof{\theta_2} \cup \mathbb{N}$ and
is the identity map on all other constants.  
By the choice of $\pi$ and the fact that $\langle
\theta_1,\emptyset \rangle$ and $\langle \theta_2, \Psi_2\rangle$ are
both substitution compatible with $\mathcal{S}$, it is the case that
$\permute{\pi}{M}$ is identical to $M$ for any $M$ such that $\langle
x, M, \alpha \rangle$ is a member of $\theta_1$ or $\theta_2$.
Now, we could replace $\theta'_3$ with $\permute{\pi}{\theta'_3}$
if the latter substitution also satisfies the conditions needed of it
by Definition~\ref{def:covering-subst}. 
It is easy to check that $\permute{\pi}{\theta'_3}$ satisfies the
first two of these conditions.
From the fact that
$\restrict{\theta'_3}{\Psi\setminus\context{\theta_2}} =
\restrict{\theta_1}{\Psi\setminus\context{\theta_2}}$, it follows that
  the tuples in $\restrict{\theta'_3}{\Psi\setminus\context{\theta_2}}$
  will be unaffected by the permutation and hence
$\restrict{\permute{\pi}{\theta'_3}}{\Psi\setminus\context{\theta_2}} =
  \restrict{\theta_1}{\Psi\setminus\context{\theta_2}}$.
Thus, to conclude that
$\restrict{\comp{\permute{\pi}{\theta'_3}}{\theta_2}}{\Psi} =
\restrict{\theta_1}{\Psi}$, it suffices to show that for any tuple
of the form $\langle x, M, \alpha \rangle$ that belongs to $\theta_2$,
if it is the case that
$\langle x, \hsubst{\theta'_3}{M}, \alpha \rangle$ belongs to
$\theta_1$, then
$\hsubst{\theta'_3}{M} = \hsubst{\permute{\pi}{\theta'_3}}{M}$.
A simple inductive argument establishes the fact that
$\permute{\pi}{\hsubst{\theta'_3}{M}} =
\hsubst{\permute{\pi}{\theta'_3}}{(\permute{\pi}{M})}$.
The desired conclusion follows by utilizing the observation that
$\hsubst{\theta'_3}{M}$ and $M$ are unaffected by the permutation.

We now use Theorem~\ref{th:raised-subs} to obtain a ``raised'' version
of $\theta'_3$ that together with $\Psi_3$ will constitute the
pair $\langle \theta_3,\Psi_3 \rangle$ that we desired at the outset.
Specifically, the theorem allows us to conclude that there is a 
substitution $\theta_3$ satisfying the following properties:
\begin{enumerate}
\item $\supportof{\theta_3}$ is disjoint from $\mathbb{N} \cup
  \supportof{\theta_2}$,

\item $\context{\theta_3}$ is identical to the
raised version of $(\Psi\setminus\context{\theta_2})\cup\Psi_2$
corresponding to the raising substitution $\theta_{2r}$,

\item $\theta_3$ is arity type preserving with respect to
  $\noms\cup\STLCGamma\cup\Psi_3$, and

\item for every term $M$ such that
$\stlctyjudg{\mathbb{N}\cup\STLCGamma_0\cup\Psi}{M}{\alpha}$,
$\hsubst{\theta'_3}{\hsubst{\theta_2}{M}} =
\hsubst{\theta_3}{\hsubst{\theta_{2r}}{\hsubst{\theta_2}{M}}}$.
\end{enumerate}
The argument for the first three of these properties is obvious.
For the last property, we observe, using Theorem~\ref{th:aritysubs},
that
$\stlctyjudg{(\mathbb{N}\cup\supportof{\theta_2})
              \cup\STLCGamma_0\cup((\Psi\setminus\context{\theta_2})\cup
\Psi_2)}{\hsubst{\theta_2}{M}}{\alpha}$ has a derivation under the
condition described; 
Theorem~\ref{th:raised-subs} can then be invoked in an obvious way.
It follows immediately from the first three properties that $\langle
\theta_3,\Psi_3 \rangle$ is substitution compatible with
$\hsubstseq{\Psi_2}{\theta_2}{\mathcal{S}}$.
It therefore only remains to show that for every $M$ such that
$\stlctyjudg{\mathbb{N}\cup\STLCGamma_0\cup\Psi}{M}{\alpha}$, it is
the case that $\hsubst{\theta_1}{M} =
\hsubst{\theta_3}{\hsubst{\theta_{2r}}{\hsubst{\theta_2}{M}}}$.
An easy inductive argument shows that for any $M$ of the kind
described and any $\theta$, if $\hsubst{\theta}{M}=M'$ is derivable
exactly when $\hsubst{\restrict{\theta}{\Psi}}{M}=M'$ is derivable.
It follows from this that
$\hsubst{\theta_1}{M} = \hsubst{\theta'_3}{\hsubst{\theta_2}{M}}$.
Property (4) then yields the desired result.

\end{proof}

We now use the observations in Lemmas~\ref{lem:heads-cover},
\ref{lem:inst-solun}, \ref{lem:decomp-decr} and \ref{lem:covers-seq}
to describe a complete analysis of the derivability of a sequent
around an atomic assumption formula.

\begin{definition}\label{def:def-cases}
Let $\mathcal{S}$ be the well-formed sequent
$\seq[\mathbb{N}]{\Psi}{\Xi}{\Omega}{F'}$, let $F = \fatm{G}{R:P}$ be
a formula in $\Omega$ and let
$h:\typedpi{x_1}{A_1}{\ldots\typedpi{x_n}{A_n}}{P'}$ be a type
assignment in $\Sigma$ or in the explicit bindings in $G$. 
Further, for $1 \leq i \leq n$, let $z_i :\alpha_i$ be a
distinct variable of type $\erase{A_i}$ away from $\Psi$ and raised
over $\mathbb{N}$, and let $t_i$ be the generalized variable term
corresponding to $z_i$. 
Finally, let
$\mathcal{U}$ be the unification problem 
\begin{tabbing}
\qquad\quad\=\quad\qquad\qquad\=\kill
\>
$\langle \mathbb{N}, 
         \Psi \cup \{z_1:\alpha_1,\ldots, z_n:\alpha_n\},$\\
\>\>
$\left\{P = \hsubst{\{\langle x_1,t_1,\erase{A_1}\rangle,
            \ldots, \langle x_n,t_n,\erase{A_n}\rangle \}}{P'}, 
         R = (h \app t_1 \app \ldots\app h_n)\right\}\rangle$
\end{tabbing}
and let $C$ be a covering set of solutions for $\mathcal{U}$.
Then the \emph{analysis of $\mathcal{S}$ based on $F$ and $h$} is
denoted by 
$\makecases[F]{\mathcal{S}}{h:A}$ and is given by the set of sequents
\begin{tabbing}
\qquad\=\kill
\> $\left\{ \decompseq{F'} 
                      {\hsubstseq{\Psi_{\theta}}{\theta}{\mathcal{S}}}
      \ \middle| \ 
      \solun{\theta}{\Psi_{\theta}}\in C\ 
           \mbox{and}\ F'\ \mbox{is the formula in}
           \ \hsubstseq{\Psi_{\theta}}{\theta}{\mathcal{S}}
           \ \mbox{resulting from}\ F \right\}$.
\end{tabbing}
If $\mathcal{S}$ is a well-formed sequent and $F=\fatm{G}{R:P}$ is an
assumption formula in $\mathcal{S}$, then the \emph{complete analysis
  of $\mathcal{S}$ based on $F$} is the set of sequents
\begin{tabbing}
\qquad\qquad\=$\bigcup \{ \makecases[F]{\mathcal{S}'}{h:A}\ \vert\ $\=\kill
\>$\bigcup \{ \makecases[F']{\mathcal{S}'}{h:A}
                \ \vert
                \ (\mathcal{S}';\ h:A)\in\hds{\mathcal{S}}{G}$\\
\>\>\ $\mbox{and}\ F'\ \mbox{is the formula in}\ \mathcal{S'}\ \mbox{resulting
  from}\ F \}.$
\end{tabbing} 
This collection is denoted by $\casesfn[F]{\mathcal{S}}$.
Note that the notations $\makecases[F]{\mathcal{S}}{h:A}$ and
$\casesfn[F]{\mathcal{S}}$ are both ambiguous---for instance, the
first notation leaves out mention of the covering set of solutions
that plays a role in generating the set it denotes.
We will assume them to denote any of the set of sequents that can be
generated in the respective ways described in this definition.
\end{definition}

The following lemmas will be useful in showing that the proof rule
that we will present using \casessans\ is well-defined and sound.

\begin{lemma}\label{lem:cases-seq-ok}
If $\mathcal{S}$ is a well-formed sequent and $F$ is an atomic
assumption formula of the form $\fatm{G}{R:P}$ that appears in
$\mathcal{S}$, then every sequent in $\casesfn[F]{\mathcal{S}}$ is well-formed.
\end{lemma}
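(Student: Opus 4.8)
The plan is to prove Lemma~\ref{lem:cases-seq-ok} by peeling apart the two layers of the definition of $\casesfn[F]{\mathcal{S}}$: the outer union ranges over pairs $(\mathcal{S}';h:A) \in \hds{\mathcal{S}}{G}$, and for each such pair the inner set $\makecases[F']{\mathcal{S}'}{h:A}$ is built from a covering set of solutions to a unification problem followed by a reduction step. So the first thing I would do is reduce the claim to showing that each individual sequent of the form $\decompseq{F'}{\hsubstseq{\Psi_\theta}{\theta}{\mathcal{S}'}}$ is well-formed, for $(\mathcal{S}';h:A)\in\hds{\mathcal{S}}{G}$, $\solun{\theta}{\Psi_\theta}$ a solution in the covering set, and $F'$ the residue of $F$ in $\hsubstseq{\Psi_\theta}{\theta}{\mathcal{S}'}$.

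The argument then proceeds through three well-formedness-preservation facts, composed in sequence. First, by Lemma~\ref{lem:heads-wf}, every $\mathcal{S}'$ arising as the sequent component of a pair in $\hds{\mathcal{S}}{G}$ is already well-formed, and moreover the head type $A$ satisfies $\wftype{(\mathbb{N}'\cup\STLCGamma_0\cup\Psi')}{A}$ where $\mathcal{S}'=\seq[\mathbb{N}']{\Psi'}{\Xi'}{\Omega'}{F''}$; this last fact is what guarantees that the decomposition in Definition~\ref{def:decompseq} is applicable to a formula of atomic type whose head carries a $\Pi$-type. Second, I would appeal to Theorem~\ref{th:seq-term-subs-ok}: since every $\solun{\theta}{\Psi_\theta}$ in the covering set is (by the definition of a solution, Definition~\ref{def:unification}, together with substitution compatibility as noted in the proof of Lemma~\ref{lem:decomp-decr}) substitution compatible with $\mathcal{S}'$, the application $\hsubstseq{\Psi_\theta}{\theta}{\mathcal{S}'}$ is well-defined and yields a well-formed sequent. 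Third, I would show that the reduction operator $\decompseqsans$ preserves well-formedness. This is the step that needs a small dedicated argument: one must check that each replacement formula $\fatm{G}{M_i:A_i'}$, with $A_i'=\hsubst{\{\langle x_1,M_1,\erase{A_1}\rangle,\ldots,\langle x_{i-1},M_{i-1},\erase{A_{i-1}}\rangle\}}{A_i}$, is itself well-formed relative to the arity context and context-variable collection of the reduced sequent — that is, that $G$ remains a well-formed context expression, that $A_i'$ is an arity-kindable type, and that $M_i$ can be assigned the arity type $\erase{A_i'}$. The context component $G$ and the context-variable context are untouched by the reduction, so their well-formedness is inherited directly; the work is in the new types and terms.

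The main obstacle, and where I would concentrate effort, is the verification that the decomposed types $A_i'$ and the argument terms $M_i$ satisfy the arity-typing premises of the $\wfform{\cdot}{\cdot}{\fatm{G}{M_i:A_i'}}$ rule. Here the crucial inputs are already available from the machinery developed earlier: Lemma~\ref{lem:arityrespecting} supplies that each $A_i$ respects the appropriate incremental arity context $\aritysum{\{y_1:\erase{A_1},\ldots,y_{i-1}:\erase{A_{i-1}}\}}{\STLCGamma}$ (here instantiated using the head type of $h$), Theorem~\ref{th:aritysubs} (and its type-level counterpart Theorem~\ref{th:aritysubs-ty}) guarantees that the defining hereditary substitution producing $A_i'$ exists and preserves arity kinding, and Theorem~\ref{th:erasure} ensures that $\erase{A_i'}=\erase{A_i}$ so that the arity type demanded of $M_i$ matches what the canonicity check of the well-formed conclusion sequent provides. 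Since the residual formula $F'$ has atomic term and type by hypothesis, the wellformedness premise of $\mathcal{S}'$ guarantees the necessary arity-typing judgements $\stlctyjudg{\cdot}{M_i}{\erase{A_i}}$ for each argument $M_i$, which is exactly what is needed. Assembling these observations along the chain $\hds{\mathcal{S}}{G}\to\hsubstseq{\Psi_\theta}{\theta}{\cdot}\to\decompseqsans$ yields the claim; the bookkeeping of which arity context each judgement lives in, and confirming it is correctly enlarged by the raising inherent in $\hsubstseq{\Psi_\theta}{\theta}{\cdot}$, is the only genuinely delicate part.
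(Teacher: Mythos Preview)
Your proposal is correct and follows essentially the same three-stage decomposition as the paper's proof: invoke Lemma~\ref{lem:heads-wf} for the pairs in $\hds{\mathcal{S}}{G}$, then Theorem~\ref{th:seq-term-subs-ok} for the application of each solution in the covering set, and finally argue that $\decompseqsans$ preserves wellformedness of the new assumption formulas. The paper compresses the third step into a single ``it is easily argued'' remark, whereas you spell out the ingredients; one minor caution is that Lemma~\ref{lem:arityrespecting} is stated for LF-wellformed contexts rather than for explicit bindings in a sequent-level context expression, so the cleaner justification that each $A_i$ respects the needed arity context is to read it off directly from the judgement $\wftype{(\mathbb{N}'\cup\STLCGamma_0\cup\Psi')}{A}$ supplied by Lemma~\ref{lem:heads-wf}.
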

\begin{proof}
By Lemma~\ref{lem:heads-wf} we know that every 
$\langle\mathcal{S}',h:A\rangle\in\hds{\mathcal{S}}{G}$
is such that $\mathcal{S}'$ is a well-formed sequent.
Further, if $\mathcal{S}'$ is the sequent
$\seq[\mathbb{N}']{\Psi'}{\Xi'}{\Omega'}{F'}$, then
$\wftype{\mathbb{N}'\cup\STLCGamma_0\cup\Psi'}{A}$ must have a
derivation. 
Let $F''$ be the formula from $\mathcal{S}'$ corresponding to $F$
and let us denote the unification problem that is considered by 
$\makecases[F']{\mathcal{S}'}{h:A}$ by $\mathcal{U}$. 
Then, by Theorem~\ref{th:seq-term-subs-ok}, the application of any
solution for $\mathcal{U}$ to $\mathcal{S}'$ must be well-formed.
Now, if $\fatm{G'}{h\app M_1\ldots M_n:P'}$ is a formula that is
well-formed with respect to $\mathbb{N}\cup\STLCGamma_0\cup\Psi$ and  
$\ctxsanstype{\Xi}$ where
$h:\typedpi{x_1}{A_1}{\ldots\typedpi{x_n}{A_n}{P''}}$ appears 
in $\Sigma$ or the explicit bindings of $G'$, then it is easily argued
that, for $1\leq i\leq n$,
\[\fatm{G}
      {M_i:\hsubst{\{\langle x_1,M_1,\erase{A_1}\rangle,\ldots,
                     \langle x_{i-1},M_{i-1},\erase{A_{i-1}}\rangle\}}
                  {A_i}}\]
is also a well-formed formula with respect to 
$\mathbb{N}\cup\STLCGamma_0\cup\Psi$ and $\ctxsanstype{\Xi}$.
It follows from this that each sequent in
$\makecases[F']{\mathcal{S}'}{h:A}$ must be well-formed.
Since this is true regardless of the pair
$\langle\mathcal{S}',h:A\rangle$ that is considered, it must be the
case that all the sequents in $\casesfn[F]{\mathcal{S}}$ are well-formed.
\end{proof}

\begin{lemma}\label{lem:cases-cover}
Let $\mathcal{S}$ be a well-formed sequent and let $F$ be an atomic
assumption formula of the form $\fatm{G}{R:P}$ in $\mathcal{S}$.
If all the sequents in $\casesfn[F]{\mathcal{S}}$ are valid then
$\mathcal{S}$ must be valid.
\end{lemma}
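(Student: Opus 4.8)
The plan is to prove the contrapositive: assuming $\mathcal{S}$ is \emph{not} valid, I will exhibit a sequent in $\casesfn[F]{\mathcal{S}}$ that is also not valid. Since $\mathcal{S}$ is not valid, there must be term and context variables substitutions $\theta$ and $\sigma$ that identify a closed instance $\mathcal{S}'' = \subst{\sigma}{\hsubstseq{\emptyset}{\theta}{\mathcal{S}}}$ that is not valid. This means all the assumption formulas of $\mathcal{S}''$ are valid while its goal is not. In particular, the assumption formula corresponding to $F$, namely $\subst{\sigma}{\hsubst{\theta}{\fatm{G}{R:P}}}$, is valid. Writing $\hsubst{\theta}{R} = (h \app M_1 \app \ldots \app M_n)$ for some head $h$, I can now bring the machinery of the preceding lemmas to bear.

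First I would apply Lemma~\ref{lem:heads-cover} to this closed instance. That lemma guarantees a pair $\langle \mathcal{S}', h':A' \rangle \in \hds{\mathcal{S}}{G}$, a formula $\fatm{G'}{R':P'}$ among the assumptions of $\mathcal{S}'$ with $h':A'$ appearing in $\Sigma$ or in the explicit bindings of $G'$, and term and context substitutions $\theta'$, $\sigma'$ together with a permutation $\pi$ such that $\permute{\pi}{h'} = h$ and the relevant closed instances of $\mathcal{S}'$ and $\mathcal{S}$ agree up to $\pi$. By Theorem~\ref{th:perm-valid}, validity is invariant under $\pi$, so $\subst{\sigma'}{\hsubstseq{\emptyset}{\theta'}{\mathcal{S}'}}$ is a closed instance of $\mathcal{S}'$ that is also not valid, and within it the image of $\fatm{G'}{R':P'}$ is valid with head matching the identified $h'$.

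Next I would invoke Lemma~\ref{lem:inst-solun} to show that $\solun{\theta'}{\emptyset}$ (suitably restricted) solves the unification problem $\mathcal{U}$ associated with $\mathcal{S}'$, $F'$, and $h':A'$ in Definition~\ref{def:def-cases}. Since $C$ is a \emph{covering} set of solutions for $\mathcal{U}$, some $\solun{\theta_2}{\Psi_2} \in C$ covers this solution relative to the term variables context of $\mathcal{S}'$. Lemma~\ref{lem:covers-seq} then produces a substitution $\theta$ that together with $\sigma'$ identifies a closed instance of $\hsubstseq{\Psi_2}{\theta_2}{\mathcal{S}'}$ that is valid if and only if our not-valid closed instance of $\mathcal{S}'$ is—hence it is not valid. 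Finally, Lemma~\ref{lem:decomp-decr}, clause~(3), tells us that the decomposed sequent $\decompseq{F''}{\hsubstseq{\Psi_2}{\theta_2}{\mathcal{S}'}}$—which is precisely the element of $\makecases[F']{\mathcal{S}'}{h':A'} \subseteq \casesfn[F]{\mathcal{S}}$ arising from $\solun{\theta_2}{\Psi_2}$—is valid only if $\hsubstseq{\Psi_2}{\theta_2}{\mathcal{S}'}$ is. Since the latter has a non-valid closed instance and is therefore not valid, this decomposed sequent cannot be valid, contradicting the hypothesis that every sequent in $\casesfn[F]{\mathcal{S}}$ is valid.

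The main obstacle I anticipate is bookkeeping rather than conceptual: carefully threading the permutation $\pi$ from Lemma~\ref{lem:heads-cover} through the covering-solution argument so that the head identified by $\hds{\mathcal{S}}{G}$ lines up with the head used in the unification problem of Definition~\ref{def:def-cases}, and ensuring the closed instance surviving from $\mathcal{S}$ to the decomposed sequent remains genuinely the \emph{same} non-valid instance at each stage. The four lemmas are designed to compose exactly in this chain, so the work is to verify that the substitutions and contexts match at each hand-off—particularly that the solution extracted in Lemma~\ref{lem:inst-solun} is the one covered, and that $\sigma'$ remains appropriate after the term substitution is replaced by its covering counterpart via Lemma~\ref{lem:covers-seq}.
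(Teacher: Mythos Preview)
Your proposal is correct and takes essentially the same approach as the paper: both chain Lemma~\ref{lem:heads-cover} with Theorem~\ref{th:perm-valid}, then Lemmas~\ref{lem:inst-solun}, \ref{lem:covers-seq}, and \ref{lem:decomp-decr} to link an arbitrary closed instance of $\mathcal{S}$ to a sequent in $\casesfn[F]{\mathcal{S}}$. The only difference is that you argue by contrapositive (a non-valid instance yields a non-valid case sequent) whereas the paper argues directly (validity of the cases forces validity of each closed instance); your version is in fact slightly more explicit about why the $F$-instance can be assumed valid, which the paper leaves implicit.
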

\begin{proof}
Consider an arbitrary closed instance of the sequent $\mathcal{S}$ identified
by $\theta$ and $\sigma$.
By Lemma~\ref{lem:heads-cover} and Theorem~\ref{th:perm-valid}, there
is a pair $\langle\mathcal{S}',h':A'\rangle$ in $\hds{\mathcal{S}}{G}$
that has in it an atomic assumption formula $F'$ of the form
$\fatm{G'}{R':P'}$ and 
that is such that if every closed instance of $\mathcal{S}'$ identified by
substitutions $\theta'$ and $\sigma'$ in which the head of
$\hsubst{\theta'}{R'}$ is identical to $h'$ is valid then
$\subst{\sigma}{\hsubstseq{\emptyset}{\theta}{\mathcal{S}}}$ must be
valid.
By Lemmas~\ref{lem:inst-solun}, \ref{lem:decomp-decr}, and
\ref{lem:covers-seq}, there is a sequent $\mathcal{S}''$ in
$\makecases[F']{\mathcal{S'}}{h':A'}$ that is such that if
$\mathcal{S}''$ is valid then the mentioned instance of $\mathcal{S}'$
must be valid.
Finally, since $\casesfn[F]{\mathcal{S}}$ collects all these sequents
for each pair in $\hds{\mathcal{S}}{G}$, it follows that the validity
of the sequents in it ensures the validity of every closed instance of
$\mathcal{S}$ and, hence, of $\mathcal{S}$.
\end{proof}

\subsubsection{Proof Rules that Introduce Atomic Formulas}\label{ssec:atomic-rules}

Figure~\ref{fig:rules-atom} presents rules for introducing atomic
formulas that internalize their understanding as encodings of typing
derivations in LF. 
The rule for introducing an atomic goal formula with an atomic type
has a proviso on the context expression.
This proviso ensures that the corresponding LF context is well-formed
if the assumption formulas of the sequent instance are valid.
This rule can be complemented by others, such as ones that check
(closed) contexts for wellformedness or that check closed goal
formulas for LF derivability.
We do not discuss these alternatives explicitly here, but their
soundness should be easy to verify.
The treatment of an atomic assumption and goal formula with a type
corresponding to an abstraction is identical:
a nominal constant of the right type that does not appear in the
support set of the sequent is picked and use for the abstracted
variable as per the LF rule for typing abstractions, changing the
association with any relevant context variable to ensure that only
those of its instantiations are considered that do not use the chosen
nominal constant.

\begin{figure}[tbh]

\begin{center}
\begin{tabular}{c}

\infer[\appL]
      {\seq[\mathbb{N}]{\Psi}
           {\Xi}
           {\setand{\Omega}{\fatm{G}{R:P}}}
           {F}}
      {\casesfn{\seq[\mathbb{N}]{\Psi}
                    {\Xi}
                    {\setand{\Omega}{\fatm{G}{R:P}}}
                    {F}}}
\\[12pt] 

\infer[\appR]
      {\seq[\mathbb{N}]{\Psi}
           {\Xi}
           {\Omega}
           {\fatm{G}{h\app M_1\ldots M_n:P'}}}
      {\begin{array}{c}
           h:\typedpi{x_1}{A_1}{\ldots \typedpi{x_n}{A_n}{P}}
                      \in\Sigma\mbox{ or the explicit bindings in }G
             \\
           G\ \mbox{appears as the context of some atomic formula in}\ \Omega\\
           \hsub{\{\langle x_1, M_1, \erase{A_1}\rangle,\ldots,\langle x_n, M_n, \erase{A_n}\rangle\}}
                {P}
                {P'}
         \\
         \begin{array}{l}
         \left\{\seq[\mathbb{N}]
               {\Psi}
               {\Xi}
               {\Omega}
               {}\right.\\
               \qquad
               \left.\fatm{G}
                     {M_i:
                       \hsubst{\{\langle x_1, M_1, \erase{A_1}\rangle,\ldots,
                                 \langle x_{i-1}, M_{i-1}, \erase{A_{i-1}}\rangle\}}
                              {A_i}}
               \ \mid\ 1\leq i\leq n \right\}
         \end{array}
       \end{array} 
      } 

\\[12pt]

\infer[\absL]
      {\seq[\mathbb{N}]{\Psi}{\Xi}{\setand{\Omega}{\fatm{G}{\lflam{x}{M}:\typedpi{x}{A_1}{A_2}}}}{F}}
      {\begin{array}{c}
           n\ \mbox{is new to}\ \mathbb{N} \mbox{ and has type } \erase{A_1}
         \\
         \Xi' =
           \begin{cases}
             \left(\Xi \setminus
             \left\{\ctxvarty{\Gamma}
                             {\mathbb{N}_{\Gamma}}
                             {\ctxty{\mathcal{C}}
                             {\mathcal{G}}}\right\}\right)
             \cup 
             \left\{\ctxvarty{\Gamma}
                             {(\mathbb{N}_{\Gamma},n:\erase{A_1})}
                             {\ctxty{\mathcal{C}}{\mathcal{G}}}\right\}
               & \mbox{if }\Gamma\mbox{ appears in }G \\
             \Xi & \mbox{otherwise}
           \end{cases}
         \\
         \seq[\mathbb{N},n:\erase{A_1}]
             {\Psi}
             {\Xi'}
             {\setand{\Omega}
                     {\fatm{G,n:A_1}
                           {\hsubst{\{\langle x,n,\erase{A_1}\rangle\}}{M}:
                              \hsubst{\{\langle x,n,\erase{A_1}\rangle\}}{A_2}}}}
             {F}
       \end{array}}

\\[12pt]

\infer[\absR]
      {\seq[\mathbb{N}]{\Psi}{\Xi}{\Omega}{\fatm{G}{\lflam{x}{M}:\typedpi{x}{A_1}{A_2}}}}
      {\begin{array}{c}
           n\ \mbox{is new to}\ \mathbb{N} \mbox{ and has type } \erase{A_1}
         \\
         \Xi' =
           \begin{cases}
             \left(\Xi \setminus
             \left\{\ctxvarty{\Gamma}
                             {\mathbb{N}_{\Gamma}}
                             {\ctxty{\mathcal{C}}
                             {\mathcal{G}}}\right\}\right)
             \cup 
             \left\{\ctxvarty{\Gamma}
                             {(\mathbb{N}_{\Gamma},n:\erase{A_1})}
                             {\ctxty{\mathcal{C}}{\mathcal{G}}}\right\}
               & \mbox{if }\Gamma\mbox{ appears in }G \\
             \Xi & \mbox{otherwise}
           \end{cases}
         \\
           \seq[\mathbb{N},n:\erase{A_1}]
               {\Psi}
               {\Xi'}
               {\Omega}
               {\fatm{G,n:A_1}
                     {\hsubst{\{\langle x,n,\erase{A_1}\rangle\}}{M}:
                          \hsubst{\{\langle x,n,\erase{A_1}\rangle\}}{A_2}}}
       \end{array}}
\end{tabular}
\end{center}
\caption{ Rules that Introduce Atomic Formulas}
\label{fig:rules-atom}
\end{figure}

As usual, we show that these rules preserve the wellformedness of
sequents and are also sound.

\begin{theorem}\label{th:atom-wf}
The following property holds for each rule in
Figure~\ref{fig:rules-atom}: if the conclusion sequent is
well-formed, the premises expressing typing conditions have
derivations and the conditions expressed by the other, non-sequent
premises are satisfied, then all the sequent premises must
be well-formed.
\end{theorem}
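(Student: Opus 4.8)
The plan is to treat the four rules of Figure~\ref{fig:rules-atom} by two genuinely different arguments, noting at the outset that \absL\ and \absR\ are handled identically, since they differ only in whether the rewritten abstraction formula sits in the assumption set or is the goal. The \appL\ case is immediate: its sequent premises are precisely the members of $\casesfn{\seq[\mathbb{N}]{\Psi}{\Xi}{\setand{\Omega}{\fatm{G}{R:P}}}{F}}$, and since the conclusion is well-formed with $\fatm{G}{R:P}$ among its atomic assumptions, Lemma~\ref{lem:cases-seq-ok} already states that every such sequent is well-formed. Nothing more is required here, as the work has been front-loaded into that lemma.

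For \appR\ I would first observe that the premise sequents keep the support set $\mathbb{N}$, the term variables context $\Psi$, the context variables context $\Xi$ and the assumption set $\Omega$ of the conclusion unchanged, so the only thing to check is the wellformedness of each new goal $\fatm{G}{M_i:A_i'}$, where $A_i'=\hsubst{\{\langle x_1,M_1,\erase{A_1}\rangle,\ldots,\langle x_{i-1},M_{i-1},\erase{A_{i-1}}\rangle\}}{A_i}$. This repeats the subargument already used at the end of the proof of Lemma~\ref{lem:cases-seq-ok}: the side condition places $h:\typedpi{x_1}{A_1}{\ldots\typedpi{x_n}{A_n}{P}}$ in $\Sigma$ or in the explicit bindings of $G$, making that type arity-kindable, so repeatedly inverting the $\Pi$ rule of Figure~\ref{fig:arity-kinding} yields $\wftype{\aritysum{\{x_1:\erase{A_1},\ldots,x_{i-1}:\erase{A_{i-1}}\}}{(\mathbb{N}\cup\STLCGamma_0\cup\Psi)}}{A_i}$; the wellformedness of the conclusion goal $\fatm{G}{h\app M_1\ldots M_n:P'}$ supplies, by inverting the application rule of Figure~\ref{fig:aritytyping}, the arity typings $\stlctyjudg{\mathbb{N}\cup\STLCGamma_0\cup\Psi}{M_j}{\erase{A_j}}$, so the displacing substitution is arity type preserving. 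Theorem~\ref{th:aritysubs-ty} then gives $\wftype{\mathbb{N}\cup\STLCGamma_0\cup\Psi}{A_i'}$, Theorem~\ref{th:erasure} gives $\erase{A_i'}=\erase{A_i}$, and hence $\stlctyjudg{\mathbb{N}\cup\STLCGamma_0\cup\Psi}{M_i}{\erase{A_i'}}$; finally $G$ is well-formed because the side condition guarantees it is the context of some atomic formula already in $\Omega$. These are exactly the hypotheses required by the wellformedness rule for atomic formulas in Figure~\ref{fig:wfform}.

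For \absR\ (and identically \absL), the support set grows to $\mathbb{N},n:\erase{A_1}$ and $\Xi$ becomes $\Xi'$, which only enlarges by $n$ the set of constants barred from the context variable $\Gamma$ occurring in $G$. For the context-variable associations this is harmless: the touched association has barred set $\mathbb{N}_\Gamma\cup\{n\}$, and since $n\notin\mathbb{N}$ we have $(\mathbb{N}\cup\{n\})\setminus(\mathbb{N}_\Gamma\cup\{n\})=\mathbb{N}\setminus\mathbb{N}_\Gamma$, reducing its requirement to the one holding for the conclusion, while the untouched associations are lifted from $\mathbb{N}$ to $\mathbb{N}\cup\{n\}$ by Theorem~\ref{th:ctx-ty-wk}; the formulas of $\Omega$, and the carried-over goal $F$ in the \absL\ case, stay well-formed by Theorem~\ref{th:wfsupset}. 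The one substantive obligation, and the step I expect to be the main obstacle, is the wellformedness of the rewritten formula $\fatm{G,n:A_1}{\hsubst{\theta}{M}:\hsubst{\theta}{A_2}}$ with $\theta=\{\langle x,n,\erase{A_1}\rangle\}$. Writing $\STLCGamma'=(\mathbb{N}\cup\{n\})\cup\STLCGamma_0\cup\Psi$, the conclusion's wellformedness gives $\wftype{\mathbb{N}\cup\STLCGamma_0\cup\Psi}{\typedpi{x}{A_1}{A_2}}$ and $\stlctyjudg{\mathbb{N}\cup\STLCGamma_0\cup\Psi}{\lflam{x}{M}}{\erase{A_1}\atyarr\erase{A_2}}$; inverting the $\Pi$ and abstraction rules produces $\wftype{\aritysum{\{x:\erase{A_1}\}}{(\mathbb{N}\cup\STLCGamma_0\cup\Psi)}}{A_2}$ and $\stlctyjudg{\aritysum{\{x:\erase{A_1}\}}{(\mathbb{N}\cup\STLCGamma_0\cup\Psi)}}{M}{\erase{A_2}}$. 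Because $n:\erase{A_1}\in\STLCGamma'$, the substitution $\theta$ is arity type preserving with respect to $\STLCGamma'$, so, after the evident monotonicity of arity kinding and arity typing under context extension, Theorems~\ref{th:aritysubs-ty} and~\ref{th:aritysubs} yield $\wftype{\STLCGamma'}{\hsubst{\theta}{A_2}}$ and $\stlctyjudg{\STLCGamma'}{\hsubst{\theta}{M}}{\erase{A_2}}$, and Theorem~\ref{th:erasure} identifies $\erase{\hsubst{\theta}{A_2}}$ with $\erase{A_2}$, matching the arity type just established for $\hsubst{\theta}{M}$. The remaining context check $\wfctx{\STLCGamma'}{\ctxsanstype{\Xi'}}{G,n:A_1}$ follows from the wellformedness of $G$ (inherited via Theorem~\ref{th:wfsupset}), the membership $n:\erase{A_1}\in\STLCGamma'$, and $\wftype{\STLCGamma'}{A_1}$ obtained by lifting the first component of the inverted $\Pi$ rule. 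The delicacy throughout is purely bookkeeping, namely keeping the arity contexts aligned across the freshly added constant $n$ and ensuring erasure is preserved so that the substituted term and type continue to match, all of which is underwritten by Theorems~\ref{th:aritysubs}, \ref{th:aritysubs-ty}, and~\ref{th:erasure}.
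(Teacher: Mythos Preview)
Your proposal is correct and follows essentially the same approach as the paper's proof: \appL\ is dispatched by Lemma~\ref{lem:cases-seq-ok}, \appR\ by inverting the arity kinding of $h$'s type together with the arity typing of $h\app M_1\ldots M_n$ and then invoking Theorems~\ref{th:aritysubs-ty} and~\ref{th:erasure}, and \absL/\absR\ are treated together via the same inversions on the $\Pi$-type and $\lambda$-term followed by the same pair of theorems. Your treatment is, if anything, slightly more explicit than the paper's in checking the $\Xi'$ side of the \absL/\absR\ premise (the paper dismisses that step as ``easily shown''), but the structure and the key ingredients are identical.
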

\begin{proof}
Lemma~\ref{lem:cases-seq-ok} ensures that the property holds for the
\appL\ rule.
For the \appR\ rule, we note first that the
wellformedness of the conclusion sequent immediately assures us that
the context variables contexts and the assumption formulas of each of
the $n$ premise sequents satisfy the conditions required of them for
the wellformedness of the corresponding sequent. 
This leaves us needing to show only that
\[\wfform{\mathbb{N} \cup \STLCGamma_0 \cup \Psi}
        {\Xi}
        {\fatm{G}
              {M_i:\hsubst{\{\langle x_1, M_1, \erase{A_1}\rangle,\ldots,
                                  \langle x_{i-1}, M_{i-1}, \erase{A_{i-1}}\rangle\}}
              {A_i}}}\]
has a derivation for $1 \leq i \leq n$.
Without loss of generality, we may assume that the variables
$x_1,\ldots,x_n$ are chosen to be distinct and also different from 
the ones in $\Psi$.
For $1\leq i \leq (n+1)$, let $\theta_i$ denote the term
substitution
$\{\langle x_1, M_1, \erase{A_1}\rangle,\ldots,
\langle x_{i-1}, M_{i-1}, \erase{A_{i-1}}\rangle\}$.
From the wellformedness of the conclusion sequent and the premise
conditions for the rule we easily see that there must be derivations
for
$\wftype{\mathbb{N}\cup\STLCGamma_0\cup\Psi \cup \context{\theta_i}}
        {A_i}$ for $1 \leq i \leq n$ 
and
$\stlctyjudg{\mathbb{N}\cup\STLCGamma_0\cup\Psi}
            {h\app M_1\ldots M_n}
            {\erase{\hsubst{\theta_{n+1}}{{P}}}}$,
where
$h : \erase{\typedpi{x_1}{A_1}{\ldots \typedpi{x_n}{A_n}{P}}}$ is
in $\mathbb{N} \cup \STLCGamma_0 \cup \Psi$.
Using an induction on $i$ and invoking Theorems~\ref{th:erasure} and
\ref{th:aritysubs-ty}, we can show from this that, for
$1 \leq i \leq n$, it must be the case that $\theta_i$ is type
preserving with respect to $\mathbb{N}\cup\STLCGamma_0\cup\Psi$, that
$\hsubst{\theta_i}{A_i}$ is well-defined, and that 
$\wftype{\mathbb{N}\cup\STLCGamma_0\cup\Psi}{\hsubst{\theta_i}{A_i}}$
and 
$\stlctyjudg{\mathbb{N}\cup\STLCGamma_0\cup\Psi}
            {M_i}
            {\erase{\hsubst{\theta_i}{A_i}}}$
have derivations.
The desired conclusion is an easy consequence of this.

We consider the two remaining cases, for \absL\ and \absR,
simultaneously.
Using the newness of $n$ and the wellformedness of the conclusion
sequent, it is easily shown that the context variables context, all
the formulas in $\Omega$ and, in the case of \absL, the formula $F$
all satisfy the conditions required of them for the premise sequent to
be well-formed.
This leaves it only to be shown that
\[\wfform{(\mathbb{N} \cup \{n:\erase{A_1}\}) \cup \STLCGamma_0 \cup \Psi}
        {\ctxsanstype{\Xi'}}
        {\fatm{G, n:A_1}
              {\hsubst{\{\langle x, n, \erase{A_1}\}\rangle}{M} :
                  \hsubst{\{\langle x, n, \erase{A_1}\rangle\}}{A_2}}} \]
has a derivation.
Noting that $\ctxsanstype{\Xi'} = \ctxsanstype{\Xi}$ and the
wellformedness of the conclusion sequent, we can easily see that
$\wfctx{(\mathbb{N} \cup \{n:\erase{A_1}\}) \cup \STLCGamma_0 \cup \Psi}
       {\ctxsanstype{\Xi'}}
       {G}$
must have a derivation.
What remains to be shown, then, is that 
$\wftype{(\mathbb{N} \cup \{n:\erase{A_1}\})\cup\STLCGamma_0\cup\Psi}
        {\hsubst{\{\langle x, n, \erase{A_1}\rangle\}}{A_2}}$
and
$\stlctyjudg{(\mathbb{N}\cup \{n:\erase{A_1}\}) \cup\STLCGamma_0\cup\Psi}
            {\hsubst{\{\langle x, n, \erase{A_1}\}\rangle}{M}}
            {\erase{\hsubst{\{\langle x, n, \erase{A_1}\rangle\}}{A_2}}}$
have derivations. 
However, this is also easily argued for using the facts that, by the
wellformedness of the conclusion sequent, 
$\wftype{\mathbb{N}\cup\STLCGamma_0\cup\Psi}
        {\typedpi{x}{A_1}{A_2}}$
and
$\stlctyjudg{\mathbb{N} \cup\STLCGamma_0\cup\Psi}
            {\lflam{x}{M}}
            {\erase{\typedpi{x}{A_1}{A_2}}}$
must have derivations and that $\{\langle x,n,\erase{A_1}\rangle\}$ is
an arity type preserving substitution; this argument would invoke
Theorems~\ref{th:erasure} and \ref{th:aritysubs-ty} at
relevant points.
\end{proof}

\begin{theorem}\label{th:atom-sound}
The following property holds for every instance of each of the rules
in Figure~\ref{fig:rules-atom}: if the premises expressing
typing judgements are derivable, the conditions described in the other
non-sequent premises are satisfied and the premise sequents are valid,
then the conclusion sequent must also be valid. 
\end{theorem}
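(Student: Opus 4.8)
The four rules are handled independently, and the soundness of the \appL\ rule requires almost no new work. Its premises are by construction exactly the sequents comprising $\casesfn{\seq[\mathbb{N}]{\Psi}{\Xi}{\setand{\Omega}{\fatm{G}{R:P}}}{F}}$, so the hypothesis that all premise sequents are valid combines directly with Lemma~\ref{lem:cases-cover} to give the validity of the conclusion. I would dispose of this case in a single sentence.

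For the \appR\ rule, the plan is to fix an arbitrary closed instance of the conclusion sequent identified by substitutions $\theta$ and $\sigma$, writing $E^*$ for the result of applying these substitutions to an expression $E$. If some assumption formula in the instance is not valid the instance is valid vacuously, so I may assume every formula in $\subst{\sigma}{\hsubst{\theta}{\Omega}}$ is valid. The proviso that $G$ appears as the context of some atomic assumption formula then supplies, through Definition~\ref{def:semantics}, a derivation for $\lfctx{G^*}$. The validity of the $n$ premise sequents under $\theta$ and $\sigma$ yields derivations for $\lfchecktype{G^*}{M_i^*}{(A_i')^*}$, where $A_i'$ is the substituted type in the $i$th premise. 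I would then invoke clause~(1) of Theorem~\ref{th:atomictype} to reassemble these into derivations for $\lfsynthtype{G^*}{(h\app M_1 \ldots M_n)^*}{(P')^*}$ and $\lftype{G^*}{(P')^*}$, using Theorem~\ref{th:subspermute} together with the uniqueness statement of Theorem~\ref{th:uniqueness} to identify the substituted types with those demanded by the theorem and Theorem~\ref{th:erasure} to match the erased indices carried by the hereditary substitutions. A final application of the \canontermatom\ rule converts the synthesis judgement into $\lfchecktype{G^*}{(h\app M_1 \ldots M_n)^*}{(P')^*}$, so the closed instance of the goal is valid by Definition~\ref{def:semantics}; as the argument does not depend on $\theta$ and $\sigma$, the conclusion sequent is valid.

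The \absL\ and \absR\ rules I would treat together, since their handling of the abstraction is identical. The organizing observation is that, under the recasting of the \canontermlam\ rule described in Definition~\ref{def:semantics}, the validity of a closed atomic formula $\fatm{G}{\lflam{x}{M}:\typedpi{x}{A_1}{A_2}}$ is interchangeable with that of $\fatm{G, n:A_1}{\hsubst{\{\langle x,n,\erase{A_1}\rangle\}}{M}:\hsubst{\{\langle x,n,\erase{A_1}\rangle\}}{A_2}}$ for a suitably fresh nominal constant $n$, the context and type well-formedness conditions transferring across this correspondence by inversion on the \ctxterm\ and \canonfampi\ formation rules. For \absR\ I would take a closed instance of the conclusion, assume its assumptions valid, pass to the corresponding closed instance of the premise whose goal is the reduced formula, read off the validity of that reduced formula from the validity of the premise sequent, and then reassemble a derivation of the abstraction goal through \canontermlam\ and the inversions just mentioned; for \absL\ I would run the same correspondence in the opposite direction, using the validity of the abstraction assumption to produce a valid reduced assumption in a closed instance of the premise and then reading off the common goal $F$.

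The principal obstacle throughout the \absL\ and \absR\ cases is the freshness bookkeeping for the introduced constant. The rule only requires $n$ to be \emph{new to} $\mathbb{N}$, whereas the constant introduced by the modified \canontermlam\ rule in any particular LF derivation is fresh only for that derivation; moreover, the substitutions identifying closed instances must avoid the support set, so the conclusion's instance (which may legitimately use $n$, since $n \notin \mathbb{N}$) must be realigned to an instance of the premise, whose support set already contains $n$. I would reconcile these by choosing a permutation $\pi$ of nominal constants that maps the two constants to one another and is the identity elsewhere, transporting validity across it with Theorem~\ref{th:perm-form} at the level of formulas and Theorem~\ref{th:perm-valid} at the level of sequents. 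A companion point needing care is that the context substitution $\sigma$ must remain appropriate for the modified context variables context $\Xi'$, in which the relevant context variable is forbidden from using $n$; this is exactly guaranteed by the freshness of the constant in the LF derivation, which ensures that the context expression substituted for that variable does not mention $n$ once $\pi$ has been applied. With these alignments established, the residual steps are routine unfoldings of Definition~\ref{def:semantics} and inversions on the LF formation rules.
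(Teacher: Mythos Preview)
Your proposal is correct and follows essentially the same route as the paper: the \appL\ case is discharged by Lemma~\ref{lem:cases-cover}, the \appR\ case by Theorem~\ref{th:atomictype}(1) together with Theorem~\ref{th:subspermute}, and the two abstraction cases by a permutation argument combined with Theorem~\ref{th:perm-valid}.

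The one place where the paper is more economical is the freshness bookkeeping for \absL\ and \absR. You frame the obstacle as reconciling \emph{two} constants---the $n$ chosen by the proof rule and the nominal constant picked internally by the modified \canontermlam\ rule in some LF derivation---and propose a permutation swapping them. The paper does not need this: since the modified \canontermlam\ rule may introduce \emph{any} fresh nominal constant, one simply chooses it to be $n$ once $n$ has been arranged to avoid $\supportof{\theta}\cup\supportof{\sigma}$. The paper's permutation is therefore used only for the second issue you raise---ensuring that $\theta$ and $\sigma$, which are proper for the conclusion sequent and hence may use $n$, can be replaced by variants that are proper for the premise sequent (whose support set contains $n$). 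Your approach is not wrong, just slightly over-engineered on this point.
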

\begin{proof}
Lemma~\ref{lem:cases-cover} verifies that the property holds for the
\appL\ rule.
Consider a closed instance of the conclusion sequent of the
\appR\ rule that is identified by the substitutions $\theta$ and
$\sigma$.
If any formula in $\subst{\sigma}{\hsubst{\theta}{\Omega}}$ is not
valid, then this instance would be valid.
We may therefore assume that all these formulas are valid and our task
is then to show that the formula
$\subst{\sigma}{\hsubst{\theta}{\fatm{G}{h\app M_1\ldots M_n:P'}}}$ 
  must also be valid under the assumptions of the theorem.
Since $G$ appears as the context in one of the formulas in $\Omega$,
$\lfctx{\subst{\sigma}{\hsubst{\theta}{G}}}$
must have a derivation.
Clearly, the substitutions $\theta$ and $\sigma$ also identify closed
instances of the premise sequents all of whose assumption formulas
are valid.
From the validity of these instances, the definition of substitution
application and Theorem~\ref{th:subspermute}, it follows that, for 
$1 \leq i \leq n$, the formulas
$\fatm{\subst{\sigma}{\hsubst{\theta}{G}}}
      {\hsubst{\theta}{M_i}:
            \hsubst{\{\langle x_1, \hsubst{\theta}{M_1}, \erase{A_1}\rangle,\ldots,
                       \langle x_{i-1}, \hsubst{\theta}{M_{i-1}}, \erase{A_{i-1}}\rangle\}}
                   {\hsubst{\theta}{A_i}}}$
must be valid.
Their validity means that, for $1 \leq i \leq\ n$, the LF judgements
\[\lfchecktype{\subst{\sigma}{\hsubst{\theta}{G}}}
             {\hsubst{\theta}{M_i}}
             {\hsubst{\{\langle x_1, \hsubst{\theta}{M_1}, \erase{A_1}\rangle,\ldots,
                        \langle x_{i-1}, \hsubst{\theta}{M_{i-1}}, \erase{A_{i-1}}\rangle\}}
                     {\hsubst{\theta}{A_i}}}\]
must have derivations; these judgements are coherent because 
the validity of the atomic formula also assures us of the
wellformedness of the corresponding types. 
From the assumption concerning $h$ and the definition of substitution
application, it follows that 
$h:\typedpi{x_1}
           {\hsubst{\theta}{A_1}}
           {\ldots \typedpi{x_n}{\hsubst{\theta}{A_n}}{\hsubst{\theta}{P}}}$
is a member of $\Sigma$ or $\subst{\sigma}{\hsubst{\theta}{G}}$.
We may now invoke Theorem~\ref{th:atomictype}, part (1), to conclude
that there must be derivations for 
$\lftype{\subst{\sigma}{\hsubst{\theta}{G}}}
        {\hsubst{\{\langle x_1, \hsubst{\theta}{M_1}, \erase{A_1}\rangle,\ldots,
                     \langle x_{n}, \hsubst{\theta}{M_{n}}, \erase{A_{n}}\rangle\}}
                {\hsubst{\theta}{P}}}$
and
\[\lfchecktype{\subst{\sigma}{\hsubst{\theta}{G}}}
              {h\app \hsubst{\theta}{M_1}\app \ldots \app \hsubst{\theta}{M_n}}
              {\hsubst{\{\langle x_1, \hsubst{\theta}{M_1}, \erase{A_1}\rangle,\ldots,
                         \langle x_{i-1}, \hsubst{\theta}{M_{i-1}}, \erase{A_{i-1}}\rangle\}}
                      {\hsubst{\theta}{A_i}}};\]
the wellformedness of the type is established in the proof.
Using Theorem~\ref{th:subspermute}, we see that the type 
$\hsubst{\{\langle x_1, \hsubst{\theta}{M_1}, \erase{A_1}\rangle,\ldots,
                     \langle x_{n}, \hsubst{\theta}{M_{n}}, \erase{A_{n}}\rangle\}}
        {\hsubst{\theta}{P'}}$
is identical to $\hsubst{\theta}{P'}$.
By the definition of substitution 
$(h\app \hsubst{\theta}{M_1}\app \ldots \app \hsubst{\theta}{M_n})$
is the same term as $\hsubst{\theta}{(h \app M_1 \app \ldots \app M_n)}$.
Thus, there are derivations for all three judgements determining the
validity of
$\subst{\sigma}{\hsubst{\theta}{\fatm{G}{h\app M_1\ldots M_n:P'}}}$,
implying that it must be valid. 

We consider next the case for the \absL\ rule.
Let $\theta$ and $\sigma$ identify a closed instance of the conclusion
sequent.
We must show that this instance is valid provided the assumptions of
the theorem are satisfied.
We may assume without loss of generality that $\supportof{\theta}$ and
$\supportof{\sigma}$ \emph{does not} include the nominal constant $n$.
If this condition is initially violated, we may consider instead
variants of these substitutions under a permutation that swaps $n$
with a nominal constant that does not appear in
$\supportof{\theta} \cup \supportof{\sigma} \cup \mathbb{N}$,
show the validity of the closed instance under these variants and then
invoke Theorem~\ref{th:perm-valid}.
If any formula in  
$\subst{\sigma}
       {\hsubst{\theta}
       {(\setand{\Omega}{\fatm{G}{\lflam{x}{M}:\typedpi{x}{A_1}{A_2}}})}}$ 
is not valid then the conclusion sequent must be valid.
Thus it only remains for us to show the validity of
$\subst{\sigma}{\hsubst{\theta}{F}}$ when all these formulas are
valid. 
From the assumption about $\theta$ and $\sigma$, it is easy to
see that they must also determine a closed instance of the premise
sequent.
Using the definition of validity and the assumption of newness for
$n$, it is easy to see that the validity of
$\subst{\sigma}{\hsubst{\theta}{(\lflam{x}{M}:\typedpi{x}{A_1}{A_2})}}$
entails that there must be derivations for all three judgements that
determine the validity of 
$\subst{\sigma}
       {\hsubst{\theta}
               {(\fatm{G,n:A_1}
                     {\hsubst{\{\langle x,n,\erase{A_1}\rangle\}}{M}:
                        \hsubst{\{\langle x,n,\erase{A_1}\rangle\}}{A_2}})}}$.
From this it follows that all the assumption formulas in the instance of
the premise sequent under consideration must be valid.
From the assumed validity of this sequent, it then follows that
$\subst{\sigma}{\hsubst{\theta}{F}}$ must be valid, as we needed to show.

The only case left to be considered is that of the \absR\ rule.
Let $\theta$ and $\sigma$ identify a closed instance of the conclusion
sequent.
Our task is to show that the instance they identify is valid under the
assumptions of the theorem.
As before, we may assume that $\supportof{\theta}$ and
$\supportof{\sigma}$ do not contain $n$.
These substitutions must then identify a closed instance of the
premise sequent.
The conclusion sequent would be trivially valid if any formula in
$\subst{\sigma}{\hsubst{\theta}{\Omega}}$ is not valid.
Thus, we only need to show that
$\subst{\sigma}
       {\hsubst{\theta}{\lflam{x}{M}:\typedpi{x}{A_1}{A_2}}}$
when all these formulas are valid.
Noting that the assumption formulas in the relevant instance of the
premise sequent are exactly $\subst{\sigma}{\hsubst{\theta}{\Omega}}$,
its validity implies the validity of
$\subst{\sigma}
       {\hsubst{\theta}
               {\fatm{G,n:A_1}
                     {\hsubst{\{\langle x,n,\erase{A_1}\rangle\}}{M}:
                          \hsubst{\{\langle x,n,\erase{A_1}\rangle\}}{A_2}}}}$. 
Using the definitions of validity and LF derivability, it is easy to
see that the validity of this formula implies the derivability of all
three judgements that determine the validity of 
$\subst{\sigma}{\hsubst{\theta}{\Omega}}$.
Thus, this formula must be valid, thereby concluding the proof.
\end{proof}

\subsection{Proof Rules for Induction on the Heights of LF Derivations}
\label{ssec:induction}

The idea we use to build in a means for reasoning on the heights of LF
derivations is borrowed from the Abella proof
assistant~\cite{baelde14jfr,gacek09phd}. 
This idea is based on an annotation scheme that serves to determine
when an atomic formula represents a typing derivation in LF that has a
height less than that of the corresponding LF derivation represented by 
an atomic formula that appears in a formula being proved and, hence,
when a property in which this atomic formula appears negatively can be
assumed to hold in an inductive argument.
The first part of this subsection describes these annotations and defines a
refined semantics for sequents that pays attention to their intended
meaning.  
The second part introduces an induction rule which uses
annotations.
The last part presents auxiliary forms of the
rules for atomic formulas as well as the \id\ rule that are applicable
to annotated formulas; these forms are needed to make effective use of
the induction rule. 

\subsubsection{Extending Formula Syntax with Annotations}
We use a denumerable collection of annotations that go in pairs: $@$
and $*$, $@@$ and $**$, and so on.
We will write $@^n$ (resp. $*^n$) to denote a sequence in which the
character $@$ (resp. $*$) is repeated $n$ times.
We use $\eqannaux{i}{F}$ on an atomic formula $F$ to indicate that it
has a certain height and $\ltannaux{i}{F}$ to indicate that it has a 
strictly smaller height; we will explain what a height means shortly.
This height annotation is decreased whenever we decompose a derivation into 
sub-derivations based on its structure, as is done in the $\appL$ rule
of the previous section.

To understand the meaning of the annotations recall that an 
atomic formula $\fatm{G}{M:A}$, is valid if then there are LF derivations for
$\lfctx{G}$, $\lftype{G}{A}$ and $\lfchecktype{G}{M}{A}$.
It is the height of the typing judgement $\lfchecktype{G}{M}{A}$ that
is the basis for our scheme for inductive reasoning.
Thus, when we talk of the height of an atomic formula, we mean the
height of the derivation of this typing judgement. 
In particular, the valid closed instances of the annotated atomic
formula $\eqannaux{i}{\fatm{G}{M:A}}$ are the ones for which the
corresponding instances of $\lfchecktype{G}{M}{A}$ have derivations of
height up to some particular size $m$, while the closed instances of the relatedly
annotated formula $\ltannaux{i}{\fatm{G'}{M':A'}}$ will be valid only
if the corresponding instances of $\lfchecktype{G'}{M'}{A'}$ have 
derivations of a height strictly smaller than $m$.
Having available a denumerable collection of pairs of
such annotations allows us to simultaneously relate the heights of
different pairs of atomic formulas in this manner. 
We may of course also want to consider atomic formulas without any
annotations. 
We use the notation $\fatmann{G}{M:A}{Ann}$ to denote a formula which
may be unannotated ($\fatm{G}{M:A}$) or have an annotation 
($\fatmann{G}{M:A}{@^n}$ or $\fatmann{G}{M:A}{*^n}$).
Note that only the syntax of atomic formulas is extended with these annotations.

\begin{definition}\label{def:ann-wf}
A formula $F$ containing annotations is well-formed with respect to
$\Theta$ and $\Xi$ if the formula $F'$
obtained by erasing all annotations in $F$ is such that
$\wfform{\Theta}{\Xi}{F'}$ holds.
We overload notation by denoting this property also by
$\wfform{\Theta}{\Xi}{F}$.
A sequent $\mathcal{S}$ containing annotations is well-formed if the sequent
$\mathcal{S}'$ obtained from $\mathcal{S}$ by erasing all annotations
is well-formed by virtue of Definition~\ref{def:sequent}.
\end{definition}

We now refine the semantics of sequents to take into account the
annotations on formulas. 
The basis for doing so is provided by an association between
annotations and actual heights.

\begin{definition}\label{def:ann-and-heights}
A \emph{height assignment} $\Upsilon$ maps each annotation $@^i$
to a natural number $m_i$. 
Given a particular $\Upsilon$, a height assignment that is identical
to $\Upsilon$ except that $@^i$ is mapped to $m$ is denoted by
$\assn{\Upsilon}{@^i}{m}$. 
\end{definition}

\begin{definition}\label{def:ann_form_valid}
We define validity only for closed annotated formulas that are
well-formed, i.e, for formulas $F$ such that 
$\wfform{\noms \cup \STLCGamma_0}{\emptyset}{F}$ is derivable.
For such formulas, we first define formula validity with respect to a
height assignment $\Upsilon$, written $\fvalid{\Upsilon}{F}$, as
follows: 
\begin{itemize}
\item $\fvalid{\Upsilon}{\fatmann{G}{M:A}{Ann}}$ holds if $\lfctx{G}$
  and $\lftype{G}{A}$ are derivable, and
  \begin{itemize}
  \item $\lfchecktype{G}{M}{A}$ has a derivation, if $Ann$ is the empty annotation,

  \item $\lfchecktype{G}{M}{A}$ has a derivation of  height less than or equal to
  $\Upsilon(@^i)$, if $Ann$ is $@^i$, and
  
  \item $\lfchecktype{G}{M}{A}$ has a derivation of height less than
  $\Upsilon(@^i)$, if $Ann$ is $*^i$. 
  \end{itemize}
  
\item $\fvalid{\Upsilon}{\ftrue}$ holds.

\item $\fvalid{\Upsilon}{\ffalse}$ does not hold.

\item $\fvalid{\Upsilon}{\fimp{F_1}{F_2}}$ holds if 
$\fvalid{\Upsilon}{F_2}$ holds in the case that $\fvalid{\Upsilon}{F_1}$ holds.

\item $\fvalid{\Upsilon}{\fand{F_1}{F_2}}$ holds if both
$\fvalid{\Upsilon}{F_1}$ and $\fvalid{\Upsilon}{F_2}$ hold.

\item $\fvalid{\Upsilon}{\for{F_1}{F_2}}$ holds if either 
$\fvalid{\Upsilon}{F_1}$ or $\fvalid{\Upsilon}{F_2}$ holds.

\item $\fvalid{\Upsilon}{\fctx{\Gamma}{\mathcal{C}}{F}}$ holds if
$\fvalid{\Upsilon}{\subst{\{G/\Gamma\}}{F}}$ holds for every context expression
$G$ such that
$\csinst{\noms}{\emptyset}{\mathcal{C}}{G}$ is derivable.

\item $\fvalid{\Upsilon}{\fall{x:\alpha}{F}}$ holds if 
$\fvalid{\Upsilon}{\hsubst{\{\langle x,M,\alpha\rangle\}}{F}}$ holds for every 
$M$ such that \\$\stlctyjudg{\noms\cup\STLCGamma_0}{M}{\alpha}$ is derivable.

\item $\fvalid{\Upsilon}{\fexists{x:\alpha}{F}}$ holds if
$\fvalid{\Upsilon}{\hsubst{\{\langle x,M,\alpha\rangle\}}{F}}$ holds for some
$M$ such that \\$\stlctyjudg{\noms\cup\STLCGamma_0}{M}{\alpha}$ is derivable.
\end{itemize}
As with Definition~\ref{def:semantics}, the coherence of
this definition is assured by Theorems~\ref{th:schemainst} and
\ref{th:subst-formula}. 
\end{definition}

The validity of a sequent containing annotations corresponds
to the validity of each of its closed instances relative to every
height assignment. 
In formalizing this idea, we assume the adaptation of the
notions of the compatibility of term substitutions
(Definition~\ref{def:seq-term-subst}), the appropriateness of context substitutions
(Definition~\ref{def:seq-ctxt-subst}) and the applications of these substitutions
(Definitions~\ref{def:seq-term-subst-app} and
\ref{def:cvar-subst-app}) to annotated sequents that is 
obtained by ignoring the annotations on formulas.
Further, we refer to every instance of an annotated well-formed
sequent that is determined by term and context substitutions as in
Definition~\ref{def:seq-validity} as one of its closed instances; the
wellformedness of these closed instances follows easily from the
results of Section~\ref{ssec:sequents}.
\begin{definition}\label{def:ann_seq_valid}
A well-formed sequent of the form $\seqsans[\mathbb{N}]{\Omega}{F}$ is
valid with respect to a height 
assignment $\Upsilon$ if $\fvalid{\Upsilon}{F}$ holds whenever
$\fvalid{\Upsilon}{F'}$ holds for every $F'\in\Omega$.
A well-formed sequent $\mathcal{S}$ is valid with respect to
$\Upsilon$ if every closed 
instance of $\mathcal{S}$ is valid with respect to $\Upsilon$ and it
is valid, without qualifications, under the extended semantics if
it is valid with respect to every height assignment.
\end{definition}

While height assignments associate heights with every annotation,
the assignments to only a finite subset of annotations matter in
determining the validity of a formula or sequent.
This observation is an immediate consequence of the following lemma.

\begin{lemma}\label{lem:ann_ind}
If $F$ is a formula in which the annotations $@^i$ and $*^i$ do not
occur, then $\fvalid{\Upsilon}{F}$ holds for a height assignment 
$\Upsilon$ if and only if $\fvalid{\assn{\Upsilon}{@^i}{m}}{F}$ holds
for every choice of $m$.
Similarly, if $\mathcal{S}$ is a sequent in which the annotations $@^i$
and $*^i$ do not occur, then $\mathcal{S}$ is valid with respect to
$\Upsilon$ if and only if $\mathcal{S}$ is valid with respect to
$\assn{\Upsilon}{@^i}{m}$ for every choice of $m$.
\end{lemma}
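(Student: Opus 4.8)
The plan is to prove both halves of Lemma~\ref{lem:ann_ind} by mutual reliance: the formula statement is the primary claim, and the sequent statement follows from it. For the formula part, I would fix a formula $F$ in which the annotations $@^i$ and $*^i$ do not occur, and proceed by structural induction on $F$. The governing intuition is that since neither $@^i$ nor $*^i$ appears in $F$, the only place the height assignment $\Upsilon$ consults its value at $@^i$ is in evaluating an annotated atomic subformula carrying exactly annotation $@^i$ or $*^i$; because no such subformula exists, reassigning $@^i$ to any value $m$ cannot change whether $\fvalid{\Upsilon}{F}$ holds. The "for every choice of $m$" phrasing is simply the clean way to state invariance: since $\Upsilon$ and $\assn{\Upsilon}{@^i}{m}$ differ only at $@^i$, showing the biconditional holds for all $m$ is the same as showing $@^i$'s value is irrelevant.

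The key step is the atomic base case. Here $F$ is $\fatmann{G}{M:A}{Ann}$ where, by hypothesis, $Ann$ is not $@^i$ and not $*^i$; that is, $Ann$ is either the empty annotation, or $@^j$ or $*^j$ for some $j \neq i$. In each subcase, Definition~\ref{def:ann_form_valid} determines $\fvalid{\Upsilon}{F}$ using only the derivability of $\lfctx{G}$ and $\lftype{G}{A}$ (which do not depend on $\Upsilon$ at all) together with a height condition on $\lfchecktype{G}{M}{A}$ that refers to $\Upsilon(@^j)$ for $j \neq i$ or to nothing at all. Since $\assn{\Upsilon}{@^i}{m}$ agrees with $\Upsilon$ on every annotation except $@^i$, and in particular at $@^j$ for $j \neq i$, the height condition is evaluated identically under both assignments. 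Hence $\fvalid{\Upsilon}{F}$ holds iff $\fvalid{\assn{\Upsilon}{@^i}{m}}{F}$ holds, for every $m$. The inductive cases for $\ftrue$, $\ffalse$, and the propositional connectives are immediate from the induction hypothesis applied to the immediate subformulas, since validity at a connective is defined purely in terms of the validity of those subformulas under the same height assignment, and the subformulas inherit the property of omitting $@^i$ and $*^i$.

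The quantifier cases require a small additional observation but no real difficulty. For $\fctx{\Gamma}{\mathcal{C}}{F'}$, $\fall{x:\alpha}{F'}$, and $\fexists{x:\alpha}{F'}$, validity is defined by quantifying over instances $\subst{\{G/\Gamma\}}{F'}$ or $\hsubst{\{\langle x,M,\alpha\rangle\}}{F'}$, where the range of instances (those $G$ with $\csinst{\noms}{\emptyset}{\mathcal{C}}{G}$ derivable, or those $M$ arity-typed appropriately) is itself independent of $\Upsilon$. The one point to note is that substitution into a formula does not introduce annotations: the application of a term or context substitution distributes over the logical structure and, on atomic formulas, touches only $G$, $M$, and $A$, leaving any annotation $Ann$ untouched. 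Thus each instance of $F'$ still omits $@^i$ and $*^i$, so the induction hypothesis applies to it, and the biconditional for $F$ follows by quantifying the biconditionals for its instances. I expect this closure-under-substitution remark to be the only subtle point, and it is essentially a restatement of how substitution on annotated formulas was defined.

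Finally, the sequent statement follows from the formula statement together with Definition~\ref{def:ann_seq_valid}. Validity of $\mathcal{S}$ with respect to $\Upsilon$ is defined through the validity with respect to $\Upsilon$ of each of its closed instances, and each such closed instance $\seqsans[\mathbb{N}']{\Omega'}{F'}$ is valid with respect to $\Upsilon$ exactly when $\fvalid{\Upsilon}{F'}$ holds whenever $\fvalid{\Upsilon}{F''}$ holds for every $F'' \in \Omega'$. Since $\mathcal{S}$ omits $@^i$ and $*^i$, so does every formula in every closed instance (substitution preserves this, as just noted), and so the formula-level result lets me replace each occurrence of $\fvalid{\Upsilon}{\cdot}$ by $\fvalid{\assn{\Upsilon}{@^i}{m}}{\cdot}$ without altering any of these conditions. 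Consequently the closed instance is valid with respect to $\Upsilon$ iff it is valid with respect to $\assn{\Upsilon}{@^i}{m}$, for every $m$, and taking the conjunction over all closed instances yields the claim for $\mathcal{S}$.
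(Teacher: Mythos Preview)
Your proposal is correct and follows essentially the same approach as the paper: structural induction on $F$ with the base case resting on the fact that an atomic formula not carrying $@^i$ or $*^i$ has its validity determined without reference to $\Upsilon(@^i)$, and the sequent clause derived from the formula clause via Definition~\ref{def:ann_seq_valid}. The paper's proof is a two-sentence sketch of exactly this; your additional remarks about substitution preserving the absence of $@^i$ and $*^i$ in the quantifier cases simply make explicit what the paper leaves implicit.
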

\begin{proof}
The first clause is shown by a straightforward induction on the
formation of $F$ where in the base case we know that no atomic formula
is annotated by $@^i$ or $*^i$ and therefore the $m_i$ assigned to
that annotation cannot play a role in determining its validity.
The second clause follows from the first using the definition of validity
for sequents of Definition~\ref{def:ann_seq_valid}.
\end{proof}

Our ultimate interest is in determining that formulas devoid of
annotations are valid in the sense articulated in
Definition~\ref{def:semantics}.
This also means that we are eventually interested in the validity of
sequents devoid of annotations in the sense described in
Definition~\ref{def:seq-validity}.
However, in building in capabilities for inductive reasoning, we will
consider rules that will introduce annotated formulas into sequents.
Establishing the soundness of the resulting proof system requires us
to refine the definition of validity for sequents to the one presented
in Definition~\ref{def:ann_seq_valid}.
That this is acceptable from the perspective of our eventual goal is
the content of the following theorem.

\begin{theorem}\label{th:valid__valid}
A well-formed sequent that is devoid of annotations is valid 
in the sense of Definition~\ref{def:ann_seq_valid} if and only if it 
is valid in the sense of Definition~\ref{def:seq-validity}.
\end{theorem}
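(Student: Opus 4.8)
The plan is to reduce the theorem to a single observation at the level of closed formulas: for a closed, well-formed formula $F$ that is devoid of annotations, and for \emph{any} height assignment $\Upsilon$, the relation $\fvalid{\Upsilon}{F}$ of Definition~\ref{def:ann_form_valid} holds if and only if $F$ is valid in the sense of Definition~\ref{def:semantics}; moreover, whether $\fvalid{\Upsilon}{F}$ holds does not depend on the choice of $\Upsilon$. This last point is in fact already implicit in Lemma~\ref{lem:ann_ind}, since an annotation-free formula mentions no annotation $@^i$ or $*^i$. I would establish the equivalence by induction on the structural complexity of $F$, i.e. the number of logical connectives and quantifiers occurring in it.

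For the base case, $F$ is an atomic formula. Since $F$ carries no annotation, the $Ann$ in $\fatmann{G}{M:A}{Ann}$ is the empty annotation, and the corresponding clause of Definition~\ref{def:ann_form_valid} requires exactly that $\lfctx{G}$, $\lftype{G}{A}$ and $\lfchecktype{G}{M}{A}$ be derivable. These are precisely the conditions imposed by the atomic clause of Definition~\ref{def:semantics}, and neither clause refers to $\Upsilon$; hence the two notions agree and are independent of $\Upsilon$ in this case. For the inductive step, the clauses of Definition~\ref{def:ann_form_valid} governing $\ftrue$, $\ffalse$, $\fimp{F_1}{F_2}$, $\fand{F_1}{F_2}$, $\for{F_1}{F_2}$, $\fctx{\Gamma}{\mathcal{C}}{F'}$, $\fall{x:\alpha}{F'}$ and $\fexists{x:\alpha}{F'}$ mirror the corresponding clauses of Definition~\ref{def:semantics} line for line. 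In each case the subformulas to which validity is recursively applied are the immediate components or their substitution instances, all of which remain annotation-free and, by Theorems~\ref{th:subst-formula} and~\ref{th:schemainst}, closed and well-formed; they also have strictly smaller structural complexity, so the induction hypothesis applies and yields the desired equivalence together with $\Upsilon$-independence.

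With the formula-level equivalence in hand, I would lift it to sequents. For a closed sequent $\seqsans[\mathbb{N}]{\Omega}{F}$ devoid of annotations, Definition~\ref{def:ann_seq_valid} declares it valid with respect to $\Upsilon$ exactly when $\fvalid{\Upsilon}{F}$ holds whenever all assumptions in $\Omega$ are $\Upsilon$-valid, while Definition~\ref{def:seq-validity} declares it valid exactly when some assumption fails to be valid or $F$ is valid; by the base equivalence these two conditions coincide for every $\Upsilon$. For a general well-formed sequent, both definitions quantify over the very same collection of closed instances, since the notions of substitution compatibility and appropriateness and the substitution-application operations are, by stipulation, applied to annotated sequents simply by ignoring annotations, of which there are none here. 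Consequently $\mathcal{S}$ is valid with respect to a given $\Upsilon$ (Definition~\ref{def:ann_seq_valid}) precisely when each of its closed instances is valid in the sense of Definition~\ref{def:seq-validity}, which is exactly the condition defining validity of $\mathcal{S}$ there. Finally, because the formula-level relation is independent of $\Upsilon$, validity of $\mathcal{S}$ with respect to every height assignment is equivalent to its validity with respect to any one, and the unqualified validity of Definition~\ref{def:ann_seq_valid} thus coincides with validity in the sense of Definition~\ref{def:seq-validity}.

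The argument is essentially bookkeeping once the atomic base case is observed, so there is no serious obstacle. The only point demanding a little care is the quantifier step of the induction, where I must confirm that the substitution instances $\hsubst{\{\langle x,M,\alpha\rangle\}}{F'}$ and $\subst{\{G/\Gamma\}}{F'}$ appearing in the recursive clauses are annotation-free, closed and well-formed, so that the induction hypothesis is applicable; this is guaranteed by Theorems~\ref{th:subst-formula} and~\ref{th:schemainst}, exactly as in the coherence remarks accompanying Definitions~\ref{def:semantics} and~\ref{def:ann_form_valid}.
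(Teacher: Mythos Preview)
Your proposal is correct and follows essentially the same approach as the paper: both reduce the theorem to showing that, for closed annotation-free formulas, $\fvalid{\Upsilon}{F}$ coincides with validity in the sense of Definition~\ref{def:semantics} independently of $\Upsilon$, via a structural induction on $F$ in the style of Lemma~\ref{lem:ann_ind}, and then lift this to sequents by noting that both notions of sequent validity quantify over the same closed instances. The paper's proof is a two-line sketch pointing to Lemma~\ref{lem:ann_ind}; you have spelled out the details that sketch gestures at.
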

\begin{proof}
Using an argument similar to that for Lemma~\ref{lem:ann_ind}, we show
that a closed instance of a sequent devoid of annotations is
valid in the sense of Definition~\ref{def:seq-validity} if and only if
it is valid by virtue of Definition~\ref{def:ann_form_valid} with
respect to any height assignment. 
The theorem is an immediate consequence.
\end{proof}

Our attention henceforth will be on sequents that potentially contain
annotated formulas.
For this reason, absent qualifications, we shall interpret 
``wellformedness'' by virtue of Definition~\ref{def:ann-wf} and
``validity'' by virtue of  Definitions~\ref{def:ann_form_valid} and
\ref{def:ann_seq_valid}. 
As we show below, the proof rules that we have discussed up to this
point that apply to non-atomic formulas and that are different from
the \id\ rule lift in an obvious way to the situation where formulas
carry annotations. 
Similarly, the \id\ rule still applies as before when the formula that
is the focus of the rule is unannotated, and no change is needed to the
rules for atomic formulas when these formulas are unannotated.
The extension of the \id\ rule and the rules for atomic formulas to
the situation when the focus formula is annotated needs some care.
We discuss this matter after the presentation of the induction rule.

\begin{theorem}\label{th:nonatom_rules_valid}
The following properties hold for the lifted forms of the proof rules
in Figure~\ref{fig:rules-structural}, Figure~\ref{fig:rules-base},
and the \cut\ proof rule from Figure~\ref{fig:rules-other}, and to the
\id\ rule and the rules from Figure~\ref{fig:rules-atom} when the formula
they pertain to are unannotated:
\begin{enumerate}
\item If the conclusion sequent is well-formed, the premises expressing typing
  conditions have derivations and the conditions expressed by the
  other, non-sequent premises are satisfied, then the premise sequents
  must be well-formed.

\item If the premises expressing typing judgements are derivable, the
  conditions described in the other non-sequent premises are satisfied
  and all the premise sequents are valid, then the conclusion sequent must 
  also be valid.
\end{enumerate} 
\end{theorem}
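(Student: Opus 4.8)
The plan is to prove Theorem~\ref{th:nonatom_rules_valid} by reducing it to the unannotated results already established, exploiting the fact that validity of an annotated sequent is defined (Definition~\ref{def:ann_seq_valid}) as validity of every closed instance with respect to \emph{every} height assignment $\Upsilon$. The key observation is that fixing an arbitrary but unspecified $\Upsilon$ converts the refined semantics $\fvalid{\Upsilon}{\cdot}$ into something that behaves, for the purpose of the logical and structural connectives, exactly like the plain validity of Definition~\ref{def:semantics}. First I would note that part~(1), concerning wellformedness, is immediate: by Definition~\ref{def:ann-wf} wellformedness of an annotated sequent is \emph{defined} to be wellformedness of its annotation-erased counterpart, and the rules act on the annotation-erased skeleton in precisely the way covered by Theorems~\ref{th:structural-wf}, \ref{th:other-wf}, \ref{th:core-wf}, and \ref{th:atom-wf}. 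So part~(1) follows by citing those theorems applied to the erased sequents, together with the hypothesis (in the atomic and \id\ cases) that the focus formula is unannotated, so that erasure does not disturb the structure the rule matches against.

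The substance is in part~(2). Here I would fix an arbitrary height assignment $\Upsilon$ and show that if every premise sequent is valid with respect to $\Upsilon$ then so is the conclusion. The crucial lemma-style observation, which I would state and verify once at the outset, is that the clauses of Definition~\ref{def:ann_form_valid} for $\ftrue$, $\ffalse$, $\fimp{F_1}{F_2}$, $\fand{F_1}{F_2}$, $\for{F_1}{F_2}$, $\fctx{\Gamma}{\mathcal{C}}{F}$, $\fall{x:\alpha}{F}$ and $\fexists{x:\alpha}{F}$ are \emph{structurally identical} to the corresponding clauses of Definition~\ref{def:semantics}, once $\Upsilon$ is held fixed and the atomic base case is read off $\Upsilon$. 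In particular $\Upsilon$ is simply carried unchanged through every recursive clause; it is never modified by any connective. Consequently every soundness argument given in Theorems~\ref{th:structural-sound}, \ref{th:other-sound}, \ref{th:core-sound} and \ref{th:atom-sound}---each of which proceeds purely by manipulating closed instances through the connective clauses and never inspects the internal structure of an atomic formula beyond its being valid or not---transports verbatim, with ``valid'' everywhere replaced by ``$\fvalid{\Upsilon}{\cdot}$''. I would therefore organize the proof as a case analysis over the rule groups, and in each case cite the corresponding earlier soundness theorem, remarking that the only facts it used about atomic validity were preserved because $\Upsilon$ is threaded inertly.

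For the rules that genuinely touch atomic formulas, namely the \appR, \appL, \absL, \absR\ rules of Figure~\ref{fig:rules-atom} and the \id\ rule, I would invoke the hypothesis that the focus formula is unannotated. When the relevant atomic formula carries no annotation, its $\fvalid{\Upsilon}{\cdot}$ clause coincides exactly with the plain validity clause of Definition~\ref{def:semantics} (the ``empty annotation'' case of Definition~\ref{def:ann_form_valid} asks for a derivation of $\lfchecktype{G}{M}{A}$ with no height bound), so the height assignment plays no role at all and the earlier arguments---Lemmas~\ref{lem:cases-cover} and its supporting lemmas for \appL, the Theorem~\ref{th:atomictype}-based reasoning for \appR, and the nominal-constant freshness arguments for \absL\ and \absR---apply unchanged. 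The main obstacle, and the point needing the most care, is making rigorous the claim that an unannotated atomic formula's refined validity is genuinely independent of $\Upsilon$ even after the substitutions that define closed instances are applied; here I would appeal to the fact that term and context substitutions distribute over annotations trivially (annotations are syntactic decorations on atomic formulas that substitution neither creates nor removes), so an unannotated formula remains unannotated in every closed instance, keeping us in the $\Upsilon$-insensitive base case throughout. Having shown validity with respect to the arbitrary $\Upsilon$, I conclude validity under the extended semantics by the universal quantification over height assignments in Definition~\ref{def:ann_seq_valid}.
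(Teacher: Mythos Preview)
Your proposal is correct and follows essentially the same approach as the paper: part~(1) is dispatched by the erasure-based definition of wellformedness together with the earlier wellformedness theorems, and part~(2) is handled by fixing an arbitrary height assignment $\Upsilon$, observing that the connective clauses of Definition~\ref{def:ann_form_valid} carry $\Upsilon$ inertly so the earlier soundness arguments replay verbatim, and using the unannotated-focus hypothesis for the \id\ and atomic rules so that $\Upsilon$ plays no role there. Your write-up is more explicit than the paper's brief proof about \emph{why} the transport succeeds (the structural identity of the connective clauses and the preservation of unannotatedness under substitution), but the underlying strategy is the same.
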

\begin{proof}
The definition of wellformedness for annotated sequents is based on
the original notion via the erasure of annotations.
Hence, the first claim follows immediately from the earlier results
for unannotated sequents.
The second claim can be proved relative to an arbitrary height
assignment; this then generalizes to all possible height assignments.
For the proof rules not relating to atomic formulas, the heights which may
be assigned to annotations can play no role in the soundness argument
as there are no atomic formulas being interpreted within the proof.
For the atomic proof rules the claim is restricted to the case where the 
formulas being analyzed in the rules are not annotated, and thus again
the height assignment has no impact on the previously presented argument for
soundness. 
\end{proof}

\subsubsection{The Induction Proof Rule}

\begin{figure}
\[
\infer[\ind]
      {\seq[\mathbb{N}]{\Psi}
           {\Xi}
           {\Omega}
           {\mathcal{Q}_1.
             \fimp{F_1}{\fimp{\ldots}
                       {\mathcal{Q}_{k-1}.
                         \fimp{F_{k-1}}
                              {\mathcal{Q}_k.
                                  \fimp{\fatm{G}{M:A}}
                                        {\mathcal{Q}_{k+1}.
                                           \fimp{F_{k+1}}
                                                {\fimp{\ldots}{F_n}}}}}}}}
      {\deduce{\mathcal{Q}_1.
                 \fimp{F_1}{\fimp{\ldots}
                           {\mathcal{Q}_{k-1}.
                             \fimp{F_{k-1}}
                                  {\mathcal{Q}_k.
                                      \fimp{\fatmann{G}{M:A}{@^i}}
                                           {\mathcal{Q}_{k+1}.
                                               \fimp{F_{k+1}}
                                                    {\fimp{\ldots}{F_n}}}}}}}
              {\seq[\mathbb{N}]{\Psi}
                   {\Xi}
                   {\setand{\Omega}
                           {\mathcal{Q}_1.
                              \fimp{F_1}{\fimp{\ldots}
                                        {\mathcal{Q}_{k-1}.
                                          \fimp{F_{k-1}}
                                               {\mathcal{Q}_k.
                                                   \fimp{\fatmann{G}{M:A}{*^i}}
                                                        {\mathcal{Q}_{k+1}.
                                                           \fimp{F_{k+1}}
                                                                {\fimp{\ldots}{F_n}}}}}}}}
                   {}}}
\]
\caption{The Induction Proof Rule}
\label{fig:rule-ind}
\end{figure}

The induction rule is presented in Figure~\ref{fig:rule-ind}.
In this rule, $\mathcal{Q}_j$,  for $1 \leq j < n$, represents a
sequence of context and universal term quantifiers. 
There is a proviso on the rule: the annotations $@^i$ and $*^i$
must be fresh, i.e., they must not already appear in the sequent that is the
conclusion of the rule.

Given the definition of well-formedness for formulas and sequents that
contain annotations, the following theorem has an obvious proof. 

\begin{theorem}\label{th:ind-wf}
If the conclusion sequent of an instance of the \ind\ rule is
well-formed, then the premise sequent must be well-formed.
\end{theorem}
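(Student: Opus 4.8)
The plan is to reduce everything to the annotation-erasure convention of Definition~\ref{def:ann-wf}: an annotated sequent is well-formed exactly when the sequent obtained by deleting all its annotations is well-formed in the sense of Definition~\ref{def:sequent}. So the entire argument will consist of computing the erasures of the two sequents in the \ind\ rule and comparing them.

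Let $F$ denote the unannotated goal formula of the conclusion sequent, i.e. the quantified implication whose $k$-th antecedent is the atomic formula $\fatm{G}{M:A}$. Inspecting the rule, the premise sequent differs from the conclusion in only two respects: its goal carries the annotation $@^i$ on that embedded atomic formula, and its assumption set is the conclusion's assumption set $\Omega$ augmented by a copy of the goal carrying $*^i$ on the same atomic formula. Since annotations are attached only to atomic formulas and are discarded under erasure, first I would observe that both the premise goal and the added inductive hypothesis erase precisely to $F$. Hence the erasure of the premise sequent is $\seq[\mathbb{N}]{\Psi}{\Xi}{\setand{\Omega}{F}}{F}$, which has the same support set, term variables context, and context variables context as the (already annotation-free) conclusion.

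From the assumed wellformedness of the conclusion sequent, Definition~\ref{def:sequent} supplies exactly what is needed: the associations in $\Xi$ meet the context-variable typing conditions relative to $\mathbb{N}$ and $\Psi$, and $\wfform{\mathbb{N}\cup\Psi\cup\STLCGamma_0}{\ctxsanstype{\Xi}}{F'}$ is derivable for every $F'\in\Omega\cup\{F\}$. The erased premise sequent carries the identical $\mathbb{N}$, $\Psi$, and $\Xi$, so the conditions on $\Xi$ transfer verbatim; and its formula set $\Omega\cup\{F\}$ --- the goal $F$ together with the assumptions, the duplicate copy of $F$ collapsing in the set --- consists entirely of formulas already certified well-formed. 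Both clauses of Definition~\ref{def:sequent} therefore hold for the erasure, and so the annotated premise sequent is well-formed.

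I expect no genuine obstacle here; the theorem is essentially immediate, as the paper's remark anticipates. The only point deserving care is the bookkeeping just described --- confirming that erasing the $@^i$ on the goal and the $*^i$ on the new assumption both recover exactly the conclusion's goal $F$, so that no formula outside the already-well-formed set $\Omega\cup\{F\}$ is ever introduced. I would also note explicitly that the freshness proviso on $@^i$ and $*^i$ plays no role in this argument --- it is relevant only to soundness --- and hence need not be invoked.
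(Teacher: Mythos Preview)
Your proposal is correct and is exactly the argument the paper has in mind: the paper merely remarks that the theorem has an obvious proof given Definition~\ref{def:ann-wf}, and your erasure computation spells that out faithfully. Your observation that the freshness proviso on $@^i$ and $*^i$ is irrelevant here is also accurate.
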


A derivation of the premise sequent of the induction rule 
provides a schema for constructing a concrete, valid derivation for any
closed instance of the conclusion sequent based on the height $m$ of the
assumption derivation in LF.
Since this proof will be general with respect to the choice of $m$ it ensures
that the property holds for all natural numbers, and so the sequent without an annotated 
atomic formula will be valid.
The soundness of this rule is shown by formalizing these ideas as a
meta-level argument using induction on natural numbers.
The following two lemmas are useful towards this end.

\begin{lemma}\label{lem:all_eq__lt}
Let $F_1,\ldots, F_{k-1},F_{k+1},\ldots F_n$ be formulas in which the
annotations $@^i$ and $*^i$ do not appear, let $F^{@^i}$ be 
a closed well-formed formula of the form
\[
\mathcal{Q}_1.
   \fimp{F_1}{\fimp{\ldots}
             {\mathcal{Q}_{k-1}.
               \fimp{F_{k-1}}
                    {\mathcal{Q}_k.
                        \fimp{\eqannaux{i}{\fatm{G}{M:A}}}
                             {\mathcal{Q}_{k+1}.
                                 \fimp{F_{k+1}}
                                      {\fimp{\ldots}{F_n}}}}}}
\]
in which, for $1 \leq j < n$, $\mathcal{Q}_j$ denotes a sequence of
context or universal term quantifiers, and let $F^{*^i}$ be the
(obviously closed and well-formed) formula
\[
\mathcal{Q}_1.
   \fimp{F_1}{\fimp{\ldots}
             {\mathcal{Q}_{k-1}.
               \fimp{F_{k-1}}
                    {\mathcal{Q}_k.
                        \fimp{\ltannaux{i}{\fatm{G}{M:A}}}
                             {\mathcal{Q}_{k+1}.
                                 \fimp{F_{k+1}}
                                      {\fimp{\ldots}{F_n}}}}}}
\]
For any height assignment $\Upsilon$ and natural number $m$, if it
is the case that $\fvalid{\assn{\Upsilon}{@^i}{l}}{F^{@^i}}$ holds for
every $l < m$, then $\fvalid{\assn{\Upsilon}{@^i}{m}}{F^{*^i}}$ must
also hold.
\end{lemma}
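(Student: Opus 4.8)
The plan is to prove the statement by induction on the prefix of quantifiers and implication antecedents standing in front of the annotated atomic formula, with the genuine work concentrated in a base case that compares the two annotations directly. Write $R$ for the common tail $\mathcal{Q}_{k+1}.\,\fimp{F_{k+1}}{\fimp{\ldots}{F_n}}$ and regard the core as $\fimp{[\,\cdot\,]}{R}$, the hole being filled by either $\eqannaux{i}{\fatm{G}{M:A}}$ or $\ltannaux{i}{\fatm{G}{M:A}}$; the full formulas $F^{@^i}$ and $F^{*^i}$ are then obtained by wrapping this core in the layers $\mathcal{Q}_1,\ldots,\mathcal{Q}_k$ and the antecedents $F_1,\ldots,F_{k-1}$. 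The essential observation I would record at the outset is that none of $R$ and $F_1,\ldots,F_{k-1}$ contains the annotations $@^i$ or $*^i$, so by Lemma~\ref{lem:ann_ind} their validity is insensitive to the value a height assignment gives to $@^i$; this is what lets me freely move such subformulas between $\assn{\Upsilon}{@^i}{l}$ and $\assn{\Upsilon}{@^i}{m}$.

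For the base case I would fix a closed instantiation and show that if $\fvalid{\assn{\Upsilon}{@^i}{l}}{(\fimp{\eqannaux{i}{\fatm{G}{M:A}}}{R})}$ holds for every $l<m$, then $\fvalid{\assn{\Upsilon}{@^i}{m}}{(\fimp{\ltannaux{i}{\fatm{G}{M:A}}}{R})}$ holds. Assuming the antecedent $\ltannaux{i}{\fatm{G}{M:A}}$ is valid under $\assn{\Upsilon}{@^i}{m}$, Definition~\ref{def:ann_form_valid} gives that $\lfctx{G}$ and $\lftype{G}{A}$ are derivable and that $\lfchecktype{G}{M}{A}$ has a derivation of some height $h<m$. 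Taking $l=h$, the very same derivation witnesses the validity of $\eqannaux{i}{\fatm{G}{M:A}}$ under $\assn{\Upsilon}{@^i}{l}$ (its height is $\leq l$), and since $l<m$ the hypothesis yields $\fvalid{\assn{\Upsilon}{@^i}{l}}{R}$; Lemma~\ref{lem:ann_ind} then transports this to $\fvalid{\assn{\Upsilon}{@^i}{m}}{R}$, as required. The point is that the annotation $*^i$ at level $m$ names exactly the derivations of height $h<m$, each of which is precisely an $@^i$-witness at the level $l=h$ over which the hypothesis already ranges.

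I would then propagate this through the prefix by induction on its structure, treating one layer at a time. For an implication layer $\fimp{F_j}{\mathcal{C}'[\,\cdot\,]}$, I assume $\fvalid{\assn{\Upsilon}{@^i}{m}}{F_j}$, use Lemma~\ref{lem:ann_ind} to conclude that $F_j$ is valid at every level $l<m$ as well, discharge the antecedent in the supplied hypothesis to obtain the premise of the induction hypothesis for $\mathcal{C}'$, and apply it. For a term-quantifier layer $\fall{x:\alpha}{\mathcal{C}'[\,\cdot\,]}$ I fix a witness $M_0$ with $\stlctyjudg{\noms\cup\STLCGamma_0}{M_0}{\alpha}$ and pass to $\hsubst{\{\langle x,M_0,\alpha\rangle\}}{\mathcal{C}'[\,\cdot\,]}$; since hereditary substitution distributes over the logical structure and leaves annotations untouched, this substituted formula is again of the required shape (a prefix-context over a core $\fimp{\text{annotated atom}}{R'}$ with $R'$ still free of $@^i,*^i$), so the induction hypothesis applies. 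The context-quantifier layer $\fctx{\Gamma}{\mathcal{C}_0}{\mathcal{C}'[\,\cdot\,]}$ is handled identically using a context substitution $\{G_0/\Gamma\}$ with $\csinst{\noms}{\emptyset}{\mathcal{C}_0}{G_0}$ derivable. Wellformedness of all the instantiated formulas is guaranteed by Theorem~\ref{th:subst-formula}, so validity is always defined where I invoke it. Stacking the base case under these layers gives $\fvalid{\assn{\Upsilon}{@^i}{m}}{F^{*^i}}$.

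I expect the hard part to be essentially bookkeeping rather than conceptual: I must verify that each substitution commutes with the annotation bracket and preserves the shape (a quantifier/implication prefix over a single annotated atom followed by an annotation-free tail) that the induction hypothesis demands, and I must apply Lemma~\ref{lem:ann_ind} consistently to slide the annotation-free subformulas across height assignments. The one genuinely load-bearing move is the base-case instantiation of the universally available level $l$ at the actual derivation height $h$, which is exactly where the strict annotation $*^i$ is converted into the non-strict annotation $@^i$.
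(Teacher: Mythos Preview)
Your proposal is correct and follows essentially the same approach as the paper: induction on the number of quantifiers and implications preceding the annotated atom, with the base case converting a $*^i$-witness of height $h<m$ into an $@^i$-witness at level $l=h$ and then using Lemma~\ref{lem:ann_ind} to transport the annotation-free tail $R$ back to level $m$, and the inductive cases peeling off one implication or quantifier layer at a time using Lemma~\ref{lem:ann_ind} on the annotation-free antecedents. The paper's proof is organized identically, so there is nothing substantive to add.
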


\begin{proof}
We will prove the lemma by an induction on the number of top-level
quantifiers and implications in $F^{@^i}$ prior to the occurrence of
the formula $\eqannaux{i}{\fatm{G}{M:A}}$ in it; the argument will
implicitly make use of the closed and well-formed nature of the
formulas under consideration.

In the base case, $F^{@^i}$ is a formula of the form
$\fimp{\eqannaux{i}{\fatm{G}{M:A}}}{F''}$, where $F''$ is such that
the annotations $@^{i}$ and $*^i$ do not appear in it. 
The formula $F^{*^i}$ correspondingly has the form
$\fimp{\ltannaux{i}{\fatm{G}{M:A}}}{F''}$.
We may assume here that
$\fvalid{\assn{\Upsilon}{@^i}{m}}{\ltannaux{i}{\fatm{G}{M:A}}}$ holds
because the lemma is obviously true if it does not. 
From this, it easily follows that there is an $l$ less than $m$ for which
$\fvalid{\assn{\Upsilon}{@^i}{l}}{\eqannaux{i}{\fatm{G}{M:A}}}$ 
holds.
From the assumption about $F^{@^i}$, it follows then that
$\fvalid{\assn{\Upsilon}{@^i}{l}}{F''}$
holds.
But then, by Lemma~\ref{lem:ann_ind}, it follows that
$\fvalid{\assn{\Upsilon}{@^i}{m}}{F''}$
must also hold, from which we conclude easily that 
$\fvalid{\assn{\Upsilon}{@^i}{m}}{F^{*^i}}$ holds.

The cases in which $F^{@^i}$ begins with a context and universal term 
quantifier are similar, so we consider only the latter explicitly.
In this case, $F^{@^i}$ is a formula of the form $\fall{x:\alpha}{F'^{@^i}}$
and $F^{*^i}$ is correspondingly a formula of the form
$\fall{x:\alpha}{F'^{*^i}}$, where $F'^{*^i}$ is a formula that is
identical to $F'^{@^i}$ except that the annotation on the formula
$\fatm{G}{M:A}$ within it is changed to $*^i$.
Note that $F'^{@^i}$ and $F'^{*^i}$ must be formulas such that any of
their substitution instances taken in pairs meet the 
descriptions in the lemma statement and all the substitution instances
of $F'^{@^i}$ must be less complex than $F^{@^i}$.
Now consider a term $t$ such that $\stlctyjudg{\noms\cup\STLCGamma_0}{t}{\alpha}$.
By the definition of validity and the assumption in the lemma, 
$\fvalid{\assn{\Upsilon}{@^i}{l}}{\hsubst{\{\langle x,t,\alpha\rangle\}}{\eqannaux{i}{F'}}}$
must hold for any $l<m$.
But then, by the induction hypothesis, 
$\fvalid{\assn{\Upsilon}{@^i}{m}}{\hsubst{\{\langle x,t,\alpha\rangle\}}{\ltannaux{i}{F'}}}$
must also hold. 
Since this is true regardless of the choice of $t$, it follows from
the definition of validity that 
$\fvalid{\assn{\Upsilon}{@^i}{m}}{\fall{x:\alpha}{\ltannaux{i}{F'}}}$,
\ie $\fvalid{\assn{\Upsilon}{@^i}{m}}{\ltannaux{i}{F}}$,
must hold. 

Finally, suppose that $F^{@^i}$ is a formula of the form
$F'' \supset F'^{@^i}$ and that $F^{*^i}$ is correspondingly a formula
of the form $F'' \supset F'^{*^i}$, where $F'^{*^i}$ differs from
$F'^{@^i}$ only in that the annotation on the occurrence of the
formula $\fatm{G}{M:A}$ within it is changed to $*^i$.
Note that $F'^{@^i}$ and $F'^{*^i}$ must be formulas that, as a pair, meet the
descriptions in the lemma statement and $F'^{@^i}$ must also be less
complex than $F^{@^i}$.
We may assume in this case that $\fvalid{\assn{\Upsilon}{@^i}{m}}{F''}$
holds because the lemma has a trivial proof if it does not.
Using Lemma~\ref{lem:ann_ind}, we then see that
$\fvalid{\assn{\Upsilon}{@^i}{l}}{F''}$ must hold for any $l < m$.
From the assumption in the lemma, it then easily follows that it
must be the case that 
$\fvalid{\assn{\Upsilon}{@^i}{l}}{F'^{@^i}}$
holds for any $l < m$.
Using the induction hypothesis, we may now conclude that 
$\fvalid{\assn{\Upsilon}{@^i}{m}}{F'^{*^i}}$ holds, from which it
follows easily that $\fvalid{\assn{\Upsilon}{@^i}{m}}{F^{*^i}}$ must
hold.
\end{proof}

\begin{lemma}\label{lem:ann__unann}
Let $F$ be a closed formula of the form
\[
\mathcal{Q}_1.
       \fimp{F_1}{\fimp{\ldots}
                       {\mathcal{Q}_{k-1}.
                         \fimp{F_{k-1}}
                              {\mathcal{Q}_k.
                                   \fimp{\fatm{G}{M:A}}
                                        {\mathcal{Q}_{k+1}.
                                            \fimp{F_{k+1}}{\fimp{\ldots}{F_n}}}}}}
\]
in which the annotations $@^i$ or $*^i$ do not occur and let $F^{@^i}$
correspondingly be the formula
\[
\mathcal{Q}_1.
       \fimp{F_1}{\fimp{\ldots}
                       {\mathcal{Q}_{k-1}.
                         \fimp{F_{k-1}}
                              {\mathcal{Q}_k.
                                   \fimp{\eqannaux{i}{\fatm{G}{M:A}}}
                                        {\mathcal{Q}_{k+1}.
                                            \fimp{F_{k+1}}{\fimp{\ldots}{F_n}}}}}}
\]
If $\fvalid{\assn{\Upsilon}{@^i}{m}}{F^{@^i}}$ holds for every natural
  number $m$, then $\fvalid{\Upsilon}{F}$ must also hold. 
\end{lemma}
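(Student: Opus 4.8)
The plan is to prove Lemma~\ref{lem:ann__unann} by relating the validity of the annotated formula $F^{@^i}$ under all height assignments $\assn{\Upsilon}{@^i}{m}$ to the validity of the unannotated formula $F$ under $\Upsilon$. The key observation is that the only difference between $F$ and $F^{@^i}$ is the annotation $@^i$ placed on the single atomic formula $\fatm{G}{M:A}$, and by the hypothesis this atomic formula occurs positively (it sits to the left of an implication, i.e. as an antecedent). First I would note that, by Definition~\ref{def:ann_form_valid}, the difference in meaning between $\eqannaux{i}{\fatm{G}{M:A}}$ and $\fatm{G}{M:A}$ is solely that the former additionally demands that the LF typing derivation $\lfchecktype{G}{M}{A}$ have height at most $\Upsilon(@^i)$. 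Since this annotated atomic formula appears as a hypothesis, making its validity \emph{harder} to achieve (by bounding the height) makes the enclosing implication \emph{easier} to satisfy.

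The central step is to prove, by induction on the number of top-level quantifiers and implications preceding the occurrence of $\fatm{G}{M:A}$ (exactly the induction structure used in Lemma~\ref{lem:all_eq__lt}), that if $\fvalid{\Upsilon}{F}$ fails then $\fvalid{\assn{\Upsilon}{@^i}{m}}{F^{@^i}}$ must fail for some particular $m$; equivalently, I would show the contrapositive direction directly. In the base case $F$ has the form $\fimp{\fatm{G}{M:A}}{F''}$ where $@^i,*^i$ are absent from $F''$, and $F^{@^i}$ is $\fimp{\eqannaux{i}{\fatm{G}{M:A}}}{F''}$. Suppose $\fvalid{\Upsilon}{F}$ does not hold: then $\fvalid{\Upsilon}{\fatm{G}{M:A}}$ holds but $\fvalid{\Upsilon}{F''}$ fails. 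The validity of the unannotated atomic formula gives an actual LF derivation of $\lfchecktype{G}{M}{A}$; let $m$ be one more than its height. Then $\fvalid{\assn{\Upsilon}{@^i}{m}}{\eqannaux{i}{\fatm{G}{M:A}}}$ holds, while by Lemma~\ref{lem:ann_ind} (since $@^i,*^i$ do not occur in $F''$) $\fvalid{\assn{\Upsilon}{@^i}{m}}{F''}$ still fails, so $\fvalid{\assn{\Upsilon}{@^i}{m}}{F^{@^i}}$ fails, contradicting the hypothesis that it holds for every $m$. The quantifier and implication cases propagate this reasoning, using Lemma~\ref{lem:ann_ind} to strip the annotation-irrelevant subformulas $F_j$ and using the definition of validity for $\forall$, $\fctx{\Gamma}{\mathcal{C}}{}$, and $\supset$; the universal and context quantifier cases extract a witness substitution under which the inner formula fails and appeal to the induction hypothesis, while the implication case assumes the antecedent valid (again invoking Lemma~\ref{lem:ann_ind} to transport its validity across height assignments).

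I expect the main obstacle to be bookkeeping the height witness correctly across the quantifier cases: when $F$ begins with $\fall{x:\alpha}{F'}$ and fails, I obtain a single witnessing term $t$ under which $\hsubst{\{\langle x,t,\alpha\rangle\}}{F'}$ fails, and I must ensure the induction hypothesis applied to this substitution instance yields a \emph{single} height $m$ at which $F^{@^i}$ fails, rather than one depending unboundedly on the quantifier structure. This is manageable because the annotated atomic formula occurs at a fixed position, so the relevant LF derivation height is determined once the witnesses for the preceding quantifiers are fixed; the height $m$ is then read off from that one derivation. The only subtlety is confirming that the substitution instances of $F'^{@^i}$ remain closed, well-formed, and structurally simpler, which follows from Theorem~\ref{th:subst-formula} and the closedness hypotheses, exactly as in Lemma~\ref{lem:all_eq__lt}. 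Having established the contrapositive of the base and inductive cases, the lemma follows immediately: if $\fvalid{\assn{\Upsilon}{@^i}{m}}{F^{@^i}}$ holds for all $m$, then $\fvalid{\Upsilon}{F}$ must hold.
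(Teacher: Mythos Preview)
Your proposal is correct and follows essentially the same approach as the paper: induction on the number of top-level quantifiers and implications preceding the distinguished atomic formula, with Lemma~\ref{lem:ann_ind} used to transport validity of the annotation-free subformulas across height assignments. The paper argues the base and inductive cases directly rather than by contrapositive, which sidesteps the bookkeeping worry you raise about tracking a single witness $m$ across quantifier cases---under the direct phrasing the hypothesis ``for all $m$, $\fvalid{\assn{\Upsilon}{@^i}{m}}{F^{@^i}}$'' is simply carried through each quantifier instantiation and implication discharge, so no single $m$ ever needs to be extracted.
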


\begin{proof}
This lemma is also proved by an induction on the number of top-level
quantifiers and implications in $F$ prior to the occurrence of
the formula $\fatm{G}{M:A}$ in it.
The argument again implicitly uses the information that the $F$ and
$F^{@^i}$ are closed and well-formed.

In the base case, $F$ is a formula of the form
$\fimp{\fatm{G}{M:A}}{F''}$, where $F''$ is a formula that is devoid
of the annotations $@^{i}$ and $*^i$, and $F^{@^i}$ is correspondingly
the formula $\fimp{\eqannaux{i}{\fatm{G}{M:A}}}{F''}$.
If $\fvalid{\assn{\Upsilon}{@^i}{m}}{\fatm{G}{M:A}}$ does not hold for
any $m$, then $\lfchecktype{G}{M}{A}$ must not have a derivation and
hence $\fvalid{\Upsilon}{\fatm{G}{M:A}}$ must not hold. 
It is easy to see that the lemma must be true in this situation.
On the other hand, if
$\fvalid{\assn{\Upsilon}{@^i}{m}}{\fatm{G}{M:A}}$ does hold for some
$m$, then, by the assumptions of the lemma,
$\fvalid{\assn{\Upsilon}{@^i}{m}}{F''}$ must also  hold.
But then, by Lemma~\ref{lem:ann_ind}, $\fvalid{\Upsilon}{F''}$ must
hold and, hence, so also should $\fvalid{\Upsilon}{F}$.

The argument in the inductive cases follows a pattern that is very
similar to that in the proof of Lemma~\ref{lem:all_eq__lt}.
We avoid repeating the details that should be obvious at this point. 
\end{proof}

\begin{theorem}\label{th:ind-sound}
If the conclusion sequent of an instance of the \ind\ rule is
well-formed, the premise sequent is valid and
the requirement of non-occurrence of the annotations $@^i$ and $*^i$
is satisfied, then the conclusion sequent of the rule instance must be
valid. 
\end{theorem}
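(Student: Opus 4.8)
The plan is to prove the soundness of the \ind\ rule by a meta-level induction on natural numbers, using the two preceding lemmas to bridge between the annotated premise and the unannotated conclusion. The conclusion sequent's goal formula is an unannotated formula $F$ of the shape displayed in Lemma~\ref{lem:ann__unann}, and the premise sequent's goal is the correspondingly $*^i$-annotated formula $F^{*^i}$ appearing as an additional assumption, with an $@^i$-annotated goal $F^{@^i}$. By Theorem~\ref{th:valid__valid} it suffices to establish validity in the refined (height-assignment) semantics, and by Definition~\ref{def:ann_seq_valid} this means showing that every closed instance of the conclusion is valid with respect to every height assignment $\Upsilon$.

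First I would fix an arbitrary closed instance of the conclusion sequent, identified by term and context substitutions $\theta$ and $\sigma$, and an arbitrary height assignment $\Upsilon$. Since the annotations $@^i$ and $*^i$ are fresh to the conclusion sequent, the assumption formulas in $\Omega$ and the formula $F$ itself are devoid of these annotations, so by Lemma~\ref{lem:ann_ind} their validity is insensitive to the value $\Upsilon$ assigns to $@^i$. Assuming every formula in the instantiated $\Omega$ is valid (otherwise the instance is trivially valid), the task reduces to showing $\fvalid{\Upsilon}{F'}$ for the instantiated goal $F'$. By Lemma~\ref{lem:ann__unann}, this follows if I can show $\fvalid{\assn{\Upsilon}{@^i}{m}}{F'^{@^i}}$ for every natural number $m$, where $F'^{@^i}$ is the $@^i$-annotated instance. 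So the core of the proof becomes a strong induction on $m$.

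For the inductive step, I would assume $\fvalid{\assn{\Upsilon}{@^i}{l}}{F'^{@^i}}$ holds for every $l < m$ and aim to derive $\fvalid{\assn{\Upsilon}{@^i}{m}}{F'^{@^i}}$. Here Lemma~\ref{lem:all_eq__lt} is exactly the tool: the induction hypothesis gives validity of the $@^i$-formula at all smaller heights, which that lemma converts into validity of the $*^i$-annotated formula $F'^{*^i}$ at height $m$. This $*^i$-formula is precisely the extra assumption added to $\Omega$ in the premise sequent. Thus $\theta$ and $\sigma$, together with the height assignment $\assn{\Upsilon}{@^i}{m}$, identify a closed instance of the premise sequent in which all assumptions---the original ones in $\Omega$ and the newly valid $F'^{*^i}$---hold, so by the assumed validity of the premise sequent its goal $F'^{@^i}$ must be valid at height $m$, closing the induction.

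The main obstacle, and the step requiring the most care, is managing the interaction between the meta-level height induction and the substitutions $\theta$, $\sigma$: I must check that the closed instances of the premise and conclusion sequents are identified by the \emph{same} substitutions (which holds because the two sequents differ only in the assumption set and the annotation on one atomic subformula, both invisible to the notions of substitution compatibility and appropriateness, per Definition~\ref{def:ann_seq_valid}) and that the freshness proviso genuinely guarantees the annotations do not already constrain $\Omega$ or $F$. A subtle point is that $F'^{*^i}$ and $F'^{@^i}$ are obtained from the \emph{instantiated} goal, so I must confirm that applying $\theta$ and $\sigma$ commutes with erasing/placing annotations---which it does, since annotations attach only to atomic formulas and substitution application ignores them---ensuring that the instantiated formulas still have the quantifier-implication shape required by Lemmas~\ref{lem:all_eq__lt} and \ref{lem:ann__unann}.
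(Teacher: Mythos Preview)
Your proposal is correct and follows essentially the same approach as the paper's proof: fix a closed instance and height assignment, reduce via Lemma~\ref{lem:ann__unann} to showing the $@^i$-annotated goal is valid at every height $m$, and prove this by strong induction on $m$ using Lemma~\ref{lem:all_eq__lt} to obtain validity of the $*^i$-assumption and then the validity of the premise sequent to close the step. The paper additionally singles out the base case $m=0$ explicitly (using that no LF typing derivation has height $0$), but this is subsumed by your strong induction since the hypothesis is vacuous there.
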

\begin{proof}
Let $\theta$ and $\sigma$ be substitutions that identify a closed
instance of the conclusion sequent and let $\Upsilon$ be a height
assignment.
The instance of the conclusion sequent would obviously be valid with
respect to $\Upsilon$ if any formula in
$\subst{\sigma}{\hsubst{\theta}{\Omega}}$ is not valid with respect to
$\Upsilon$, so let us assume that they are all in fact so valid. 
Our task then is to show that the instance of the goal formula in the
conclusion sequent determined by $\theta$ and $\sigma$ is valid with
respect to $\Upsilon$.
Using a harmless overloading of notation, we will identify this
instance as the formula
\[
\mathcal{Q}_1.
       \fimp{F_1}{\fimp{\ldots}
                       {\mathcal{Q}_{k-1}.
                         \fimp{F_{k-1}}
                              {\mathcal{Q}_k.
                                   \fimp{\fatm{G}{M:A}}
                                        {\mathcal{Q}_{k+1}.
                                            \fimp{F_{k+1}}{\fimp{\ldots}{F_n}}}}}},
\]
and refer to it as $F$.
We will also use the notation $\eqannaux{i}{F}$ and $\ltannaux{i}{F}$
below to denote the formulas that differ from $F$ only in that the
distinguished occurrence of the subformula $\fatm{G}{M:A}$ is
annotated with $@^{i}$ and $*^{i}$, respectively.

Our objective is to show thar $\fvalid{\Upsilon}{F}$ holds.
We will do this by showing that the judgement
$\fvalid{\assn{\Upsilon}{@^i}{m}}{\eqannaux{i}{F}}$ holds for  
every $m$ and then invoking Lemma~\ref{lem:ann__unann}.
In proving the latter, we will use the observation that 
$\theta$ and $\sigma$ identify a closed instance also of the premise 
sequent of the rule and the assumption that all instances of the
premise sequent are valid.
The argument is carried out by (strong) induction on $m$. 
Given the structure of the formula $F^{@^i}$ and the fact that no
typing derivation in LF has height $0$, it is obvious that
$\fvalid{\assn{\Upsilon}{@^i}{0}}{\eqannaux{i}{F}}$ holds.
Now assume that $\fvalid{\assn{\Upsilon}{@^i}{l}}{\eqannaux{i}{F}}$
holds for every $l$ less than $m$.
By Lemma~\ref{lem:all_eq__lt}, it follows then that
$\fvalid{\assn{\Upsilon}{@^i}{m}}{\ltannaux{i}{F}}$ holds.
Since every formula in $\subst{\sigma}{\hsubst{\theta}{\Omega}}$ is
valid with respect to $\Upsilon$ by assumption, it follows by virtue
of Lemma~\ref{lem:ann_ind} and the proviso that the annotations $@^i$
and $*^i$ does not appear in the formulas of $\Omega$ that
every assumption formula in the instance of the premise
sequent identified by $\theta$ and $\sigma$ must be true under the
height assignment $\assn{\Upsilon}{@^i}{m}$.
That $\fvalid{\assn{\Upsilon}{@^i}{m}}{\eqannaux{i}{F}}$ must hold is
now a consequence of the validity of this instance of the premise sequent. 
\end{proof}

\subsubsection{Additional Proof Rules that Interpret Annotations}

\begin{figure}[htb]

\begin{center}
\begin{tabular}{c}

\infer[\annappL]
      {\seq[\mathbb{N}]{\Psi}
           {\Xi}
           {\setand{\Omega}{\fatmann{G}{R:P}{Ann}}}
           {F}}
      {\begin{array}{c}
         \mathcal{CS} = 
               \casesfn[\fatm{G}{R:P}]
                       {\seq[\mathbb{N}]
                            {\Psi}
                            {\Xi}
                            {\setand{\Omega}{\fatm{G}{R:P}}}
                            {F}}
         \\
         \begin{array}{l} 
                \{\seq[\mathbb{N}']
                      {\Psi'}
                      {\Xi'}
                      {\setand{\Omega'}
                            {\fatmann{G_1}{M_1:A_1}{*^i},\ldots,\fatmann{G_k}{M_k:A_k}{*^i}}}
                      {F'} 
          \\ 
           \qquad\qquad
                  \mid\ 
                  \seq[\mathbb{N}']
                      {\Psi'}{\Xi'}
                      {\setand{\Omega'}{\fatm{G_1}{M_1:A_1},\ldots,\fatm{G_k}{M_k:A_k}}}
                      {F'} \in \mathcal{CS}
                  \}
         \end{array}    
       \end{array}}

\\[12pt] 

\infer[\annappR]
      {\seq[\mathbb{N}]{\Psi}
           {\Xi}
           {\Omega}
           {\fatmann{G}{h\app M_1\ldots M_n:P'}{@^i}}}
      {\begin{array}{c}
           h:\typedpi{x_1}{A_1}{\ldots \typedpi{x_n}{A_n}{P}}
                      \in\Sigma\mbox{ or the explicit bindings in }G
             \\
           G\ \mbox{appears as the context of some atomic formula in}\ \Omega\\
           \hsub{\{\langle x_1, M_1, \erase{A_1}\rangle,\ldots,\langle x_n, M_n, \erase{A_n}\rangle\}}
                {P}
                {P'}
             \\
         \begin{array}{l}
            \{\seq[\mathbb{N}]
                  {\Psi}
                  {\Xi}
                  {\Omega}
                  {}
            \\
            \qquad
            \fatmann{G}
                    {M_i:
                      \hsubst{\{\langle x_1, M_1, \erase{A_1}\rangle,\ldots,
                                \langle x_{i-1}, M_{i-1}, \erase{A_{i-1}}\rangle\}}
                             {A_i}}
                    {*^i}
               \ \mid\ 1\leq i\leq n \}
         \end{array}
       \end{array} 
      } 

\\[12pt]

\infer[\annabsL]
      {\seq[\mathbb{N}]
           {\Psi}
           {\Xi}
           {\setand{\Omega}
                   {\fatmann{G}{\lflam{x}{M}:\typedpi{x}{A_1}{A_2}}{Ann}}}
           {F}}
      {\begin{array}{c}
           n\ \mbox{is new to}\ \mathbb{N} \mbox{ and has type } \erase{A_1}
         \\
         \Xi' =
           \begin{cases}
             \left(\Xi \setminus
             \left\{\ctxvarty{\Gamma}
                             {\mathbb{N}_{\Gamma}}
                             {\ctxty{\mathcal{C}}
                             {\mathcal{G}}}\right\}\right)
             \cup 
             \left\{\ctxvarty{\Gamma}
                             {(\mathbb{N}_{\Gamma},n:\erase{A_1})}
                             {\ctxty{\mathcal{C}}{\mathcal{G}}}\right\}
               & \mbox{if }\Gamma\mbox{ appears in }G \\
             \Xi & \mbox{otherwise}
           \end{cases}
         \\
         \begin{array}{l}
           \seqsansfmlas{\mathbb{N},n:\erase{A_1}}
                        {\Psi}
                        {\Xi'}
           \\
           \qquad
           \seqsansctxts{\setand{\Omega}
                                {\fatmann{G,n:A_1}
                                         {\hsubst{\{\langle x,n,\erase{A_1}\rangle\}}{M}:
                                           \hsubst{\{\langle x,n,\erase{A_1}\rangle\}}{A_2}}
                                {*^i}}}
                        {F}
         \end{array}
       \end{array}}

\\[12pt]

\infer[\absR]
      {\seq[\mathbb{N}]
           {\Psi}
           {\Xi}
           {\Omega}{\fatmann{G}
                            {\lflam{x}{M}:\typedpi{x}{A_1}{A_2}}
                            {@^i}}}
      {\begin{array}{c}
           n\ \mbox{is new to}\ \mathbb{N} \mbox{ and has type } \erase{A_1}
         \\
         \Xi' =
           \begin{cases}
             \left(\Xi \setminus
             \left\{\ctxvarty{\Gamma}
                             {\mathbb{N}_{\Gamma}}
                             {\ctxty{\mathcal{C}}
                             {\mathcal{G}}}\right\}\right)
             \cup 
             \left\{\ctxvarty{\Gamma}
                             {(\mathbb{N}_{\Gamma},n:\erase{A_1})}
                             {\ctxty{\mathcal{C}}{\mathcal{G}}}\right\}
               & \mbox{if }\Gamma\mbox{ appears in }G \\
             \Xi & \mbox{otherwise}
           \end{cases}
         \\
         \begin{array}{l}
           \seqsansfmlas{\mathbb{N},n:\erase{A_1}}
                        {\Psi}
                        {\Xi'}
           \\
           \qquad             
           \seqsansctxts{\Omega}
                        {\fatmann{G,n:A_1}
                            {\hsubst{\{\langle x,n,\erase{A_1}\rangle\}}{M}:
                                 \hsubst{\{\langle x,n,\erase{A_1}\rangle\}}{A_2}}
                            {*^i}}
         \end{array}
       \end{array}}
\end{tabular}
\end{center}

\caption{Rules that Introduce Atomic Formulas with Annotations}\label{fig:ann-atomic}
\end{figure}

With the addition of annotations to the syntax, we may consider 
additional rules for introducing annotated atomic formulas.
The form that these rules take must pay attention to the heights of
the LF derivations that these formulas encode; this is essential to
being able to make effective use of the induction rule.
In this spirit, our proof system includes the variants of the existing
rules for atomic formulas that are shown in
Figure~\ref{fig:ann-atomic}.
In the $\annappL$ and $\annabsL$ rules, $Ann$ must be either $@^i$ or
$*^i$.
In the $\annappL$ rule, $\Omega'$ denotes the formulas that result from
$\Omega$ and $\fatm{G_1}{M_1:A_1}, \ldots, \fatm{G_k}{M_k:A_k}$
denote the formulas that result from $\fatm{G}{R:P}$ in the reduction
of each of the instances of the conclusion sequent that are considered
by \casessans.
Note that the $@^i$ annotation on the assumption formula that is the
focus of the rule is replaced by the $*^i$ annotation in the
assumption formulas in the premise sequent, as is needed for the
hypothesis added by the induction rule to be useful. 

The \id\ rule described in Section~\ref{sssec:initial-rules} requires
the assumption and conclusion formulas that are matched to be
equivalent in terms of validity.
With the inclusion of annotations, it is possible to weaken this
requirement.
For example, the assumption formula $\ltann{\fatm{G}{M:A}}$ ensures
the validity of a sequent in which the conclusion formula is 
$\eqann{\fatm{G}{M:A}}$.
Similarly, the presence of either of these formulas as an assumption
in a sequent suffices to ensure its validity if its conclusion formula
is $\fatm{G}{M:A}$.
These observations can be expanded to include non-atomic formulas with
the proviso that the polarity of the occurrence of the formula must be
paid attention to.
For example, it is the validity of $\fimp{\eqann{\fatm{G}{M:A}}}{F}$
that implies that $\fimp{\ltann{\fatm{G}{M:A}}}{F}$ is valid, and the
validity of both of these formulas is implied by the validity of
$\fimp{\fatm{G}{M:A}}{F}$.  
We make these thought precise through a definition of comparative
strengths of formulas.
\begin{definition}\label{def:fmla-strength}
We first identify an ordering over annotations: $*^i$ is stronger than
$@^i$ which, in turn, is stronger than no annotation.
Next we define a strength relation between atomic formulas relative 
to a context variables context $\Xi$ and a permutation $\pi$ of
nominal constants: $\fatmann{G}{M:A}{Ann}$ is at least as strong as 
$\fatmann{G'}{M':A'}{Ann'}$ in this context if 
$\formeq{\Xi}{\pi}{\fatm{G}{M:A}}{\fatm{G'}{M':A'}}$ holds
and $Ann$ is stronger than or identical to $Ann'$. 
Finally, we extend this strength relation to arbitrary formulas.
The formula $\fimp{F_2}{F_2'}$ is at least as strong as $\fimp{F_1}{F_1'}$
with respect to $\Xi$ and $\pi$ if $F_1$ is at least as strong as
$F_2$ with respect to $\Xi$ and $\inv{\pi}$ and $F_2'$ is at least as
strong as $F_1'$ with respect to $\Xi$ and $\pi$.
For all other non-atomic formulas, $F_2$ is at least as strong as
$F_1$ with respect to $\Xi$ and  $\pi$ if their components satisfy the
same relation, under a possibly extended $\Xi$ in  the case of context 
quantification, allowing for renaming of variables bound by
quantifiers. 
The ``at least as strong as'' relation for formulas relative to $\Xi$
and $\pi$ is represented by the judgement
$\streq{\Xi}{\pi}{F_2}{F_1}$. 
\end{definition}

The strength relation between formulas is preserved by term and
context variables substitutions under some provisos.
\begin{lemma}\label{lem:streq-subs}
Suppose that $\streq{\Xi}{\pi}{F_2}{F_1}$ holds for suitable $F_1$,
$F_2$, $\Xi$ and $\pi$. Then,
\begin{enumerate}
\item if $\theta$ is a term substitution such that
$\supp{\theta}\cap\supp{\pi}=\emptyset$ and
both $\hsub{\theta}{F_2}{F_2'}$ and $\hsub{\theta}{F_1}{F_1'}$ 
have derivations for some $F_1'$ and $F_2'$, then
$\streq{\hsubst{\theta}{\Xi}}{\pi}{\hsubst{\theta}{F_2}}{\hsubst{\theta}{F_1}}$
holds; and

\item if $\sigma$ is an appropriate substitution for $\Xi$ with respect to some 
term variables context $\Psi$ and set of nominal constants
$\mathbb{N}$ then
$\streq{\ctxvarminus{\Xi}{\sigma}}{\pi}{\subst{\sigma}{F_2}}{\subst{\sigma}{F_1}}$ 
holds.
\end{enumerate}
\end{lemma}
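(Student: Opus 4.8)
The plan is to prove Lemma~\ref{lem:streq-subs} as a direct analogue of Lemmas~\ref{lem:equiv-hsub} and \ref{lem:equiv-sub}, which establish exactly these preservation properties for the formula \emph{equivalence} relation $\formeq{\Xi}{\pi}{\cdot}{\cdot}$. Since $\streq{\Xi}{\pi}{F_2}{F_1}$ is defined in Definition~\ref{def:fmla-strength} by recursion on formula structure, with the atomic base case requiring $\formeq{\Xi}{\pi}{\fatm{G}{M:A}}{\fatm{G'}{M':A'}}$ together with a comparison of annotations, both parts will be proved by induction on the structure of $F_1$ (equivalently $F_2$), reusing the earlier lemmas at the atomic case and invoking the induction hypothesis at the connective and quantifier cases.

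For part (1), I would first recall the key observation used in the proof of Lemma~\ref{lem:equiv-hsub}: when $\supp{\theta} \cap \supp{\pi} = \emptyset$, the type or term $\permute{\pi}{(\hsubst{\theta}{E})}$ is identical to $\hsubst{\theta}{(\permute{\pi}{E})}$. At the atomic base case, $F_1 = \fatmann{G_1}{M_1:A_1}{Ann_1}$ and $F_2 = \fatmann{G_2}{M_2:A_2}{Ann_2}$, where the underlying equivalence $\formeq{\Xi}{\pi}{\fatm{G_2}{M_2:A_2}}{\fatm{G_1}{M_1:A_1}}$ holds and $Ann_2$ is stronger than or identical to $Ann_1$. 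Lemma~\ref{lem:equiv-hsub} directly yields that $\formeq{\hsubst{\theta}{\Xi}}{\pi}{\hsubst{\theta}{\fatm{G_2}{M_2:A_2}}}{\hsubst{\theta}{\fatm{G_1}{M_1:A_1}}}$ holds; and since hereditary substitution affects only the term, type, and context components of an atomic formula and leaves its annotation untouched, the annotation ordering $Ann_2 \geq Ann_1$ is preserved verbatim. This establishes the base case. For the non-atomic cases, the only subtlety is the implication case, where Definition~\ref{def:fmla-strength} flips the direction of comparison and the permutation to $\inv{\pi}$ on the antecedent; here I would note that $\supp{\inv{\pi}} = \supp{\pi}$, so the disjointness hypothesis transfers and the induction hypothesis applies to both antecedent and consequent. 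The remaining connectives and the quantifiers distribute straightforwardly, using the fact (as in Lemma~\ref{lem:equiv-hsub}) that the substitution respects quantifier scopes through renaming.

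For part (2), the argument mirrors Lemma~\ref{lem:equiv-sub}. The base case again reduces to the atomic formula equivalence, for which Lemma~\ref{lem:equiv-sub} guarantees that $\formeq{\ctxvarminus{\Xi}{\sigma}}{\pi}{\subst{\sigma}{\fatm{G_2}{M_2:A_2}}}{\subst{\sigma}{\fatm{G_1}{M_1:A_1}}}$ holds; the annotation is unaffected by a context variables substitution, so the strength ordering is again preserved. The connective and quantifier cases proceed by induction on the structure of $F_1$, with the implication case handling the reversed comparison exactly as in part (1), and the context-quantifier case requiring the usual care about the augmentation of $\Xi$ and the renaming of the bound context variable away from $\supportof{\sigma}$.

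The main obstacle here is not any single hard step but rather the bookkeeping in the implication case, where the definition of the strength relation reverses both the order of the two formulas and the permutation on the antecedent. I must verify that the hypotheses of each part remain available after this reversal: for part (1) that $\supp{\inv{\pi}} = \supp{\pi}$ keeps the disjointness condition intact, and for part (2) that appropriateness of $\sigma$ for $\Xi$ is symmetric in a way that supports applying the induction hypothesis to the antecedent under $\inv{\pi}$. Once this directional subtlety is dispatched, the remainder is a routine structural induction that closely tracks the two equivalence lemmas already proved, so I would present it compactly, omitting the calculations that are identical to those cases.
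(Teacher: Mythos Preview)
Your proposal is correct and follows essentially the same approach as the paper: a structural induction on $F_2$ that invokes Lemmas~\ref{lem:equiv-hsub} and~\ref{lem:equiv-sub} at the atomic base case, with the annotation ordering carried along unchanged since substitutions do not touch annotations. The paper's proof is stated in a single sentence and omits the bookkeeping you spell out for the implication case (the reversal to $\inv{\pi}$ and the observation that $\supp{\inv{\pi}} = \supp{\pi}$), but your elaboration of that point is sound and consistent with the intended argument.
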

\begin{proof}
Both clauses can be shown to be true by an inductive argument on the
structure of $F_2$, using the definition of substitution and invoking
Lemmas~\ref{lem:equiv-hsub} and \ref{lem:equiv-sub} in the case of
atomic formulas. 
\end{proof}

The main property for the strength definition that we will use is that
validity of a closed formula is ensured by that of another closed
formula that is at least as strong as it is. 
\begin{lemma}\label{lem:streq-valid}
Let $F_1$ and $F_2$ be closed well-formed formulas, let $\Xi$ be a 
context variables context and let $\pi$ be a permutation of nominal
constants such that $\streq{\Xi}{\pi}{F_2}{F_1}$ holds.
If $F_2$ is valid with respect to a height assignment $\Upsilon$, then
$F_1$ must also be valid with respect to $\Upsilon$.
\end{lemma}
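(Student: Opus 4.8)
The plan is to prove Lemma~\ref{lem:streq-valid} by induction on the structure of $F_2$ (equivalently, on the common structure of $F_1$ and $F_2$ modulo the equivalences and renamings built into the strength relation), proceeding relative to an arbitrary but fixed height assignment $\Upsilon$. The base case is when $F_2$ is atomic, and the inductive cases track the clauses of Definition~\ref{def:fmla-strength}, with the implication case requiring special care because of the contravariance in its antecedent.

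First I would handle the atomic case. Here $F_2$ is $\fatmann{G'}{M':A'}{Ann'}$ and $F_1$ is $\fatmann{G}{M:A}{Ann}$, where $\streq{\Xi}{\pi}{F_2}{F_1}$ means $\formeq{\Xi}{\pi}{\fatm{G'}{M':A'}}{\fatm{G}{M:A}}$ holds and $Ann'$ is stronger than or identical to $Ann$. Since $F_1$ and $F_2$ are closed, the equivalence forces $\permute{\pi}{\fatm{G'}{M':A'}}$ to be the same formula as $\fatm{G}{M:A}$, exactly as observed in the proof of Theorem~\ref{th:other-sound}. Theorem~\ref{th:perm-form} (or rather its proof, via Theorem~\ref{th:perm-lf}) then gives that $\lfctx{G}$, $\lftype{G}{A}$ and $\lfchecktype{G}{M}{A}$ have derivations of exactly the same structure, hence the same height, as the corresponding derivations for the primed judgements. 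By the validity of $F_2$ with respect to $\Upsilon$ under Definition~\ref{def:ann_form_valid}, the derivation of $\lfchecktype{G'}{M'}{A'}$ satisfies the height bound dictated by $Ann'$; since $Ann'$ is at least as strong as $Ann$ (so the bound for $Ann'$ is at least as tight, $*^i$ implying the $@^i$ bound and any annotation implying the unannotated case), the height of $\lfchecktype{G}{M}{A}$ meets the weaker bound required by $Ann$. Thus $\fvalid{\Upsilon}{F_1}$ holds.

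Next I would dispatch the non-atomic cases using the definition of $\fvalid{\Upsilon}{\cdot}$ clause by clause. For $\ftrue$, $\ffalse$, conjunction and disjunction, the components of $F_2$ are at least as strong as those of $F_1$ with respect to the same $\Xi$ and $\pi$, so the induction hypothesis applies directly to each component. For the quantifier cases $\fall{x:\alpha}{\cdot}$, $\fexists{x:\alpha}{\cdot}$ and $\fctx{\Gamma}{\mathcal{C}}{\cdot}$, I would fix a witnessing or arbitrary instantiation as dictated by the semantics, invoke Lemma~\ref{lem:streq-subs} to transport the strength relation through the relevant hereditary or context substitution (the two clauses of that lemma handle term and context instantiation respectively, with the disjointness proviso $\supp{\theta}\cap\supp{\pi}=\emptyset$ available since the substituted terms are closed and can be chosen to avoid $\supp{\pi}\subseteq\mathbb{N}_\Gamma$), and then apply the induction hypothesis to the resulting pair of closed instances, which are less complex than $F_2$.

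The main obstacle is the implication case, where $F_2 = \fimp{F_2^a}{F_2^b}$ and $F_1 = \fimp{F_1^a}{F_1^b}$, and Definition~\ref{def:fmla-strength} gives that $F_1^a$ is at least as strong as $F_2^a$ with respect to $\Xi$ and $\inv{\pi}$ while $F_2^b$ is at least as strong as $F_1^b$ with respect to $\Xi$ and $\pi$. To show $\fvalid{\Upsilon}{F_1}$, I assume $\fvalid{\Upsilon}{F_1^a}$ and must derive $\fvalid{\Upsilon}{F_1^b}$. The contravariant direction means I first apply the induction hypothesis to the antecedents using the relation $\streq{\Xi}{\inv{\pi}}{F_1^a}{F_2^a}$ to conclude $\fvalid{\Upsilon}{F_2^a}$ from $\fvalid{\Upsilon}{F_1^a}$; this is why the strength relation is stated with $\inv{\pi}$ on the antecedent, and I must check that $\inv{\pi}$ is a legitimate permutation with the same support, which it is. Then $\fvalid{\Upsilon}{F_2}$ yields $\fvalid{\Upsilon}{F_2^b}$, and a second application of the induction hypothesis, now using $\streq{\Xi}{\pi}{F_2^b}{F_1^b}$, delivers $\fvalid{\Upsilon}{F_1^b}$, completing the case. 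The care required is purely in correctly threading the two permutation directions and the two opposite strength directions through the single induction, but no new machinery beyond the induction hypothesis and the closedness of the formulas is needed.
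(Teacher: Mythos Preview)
Your approach matches the paper's: both argue by induction on the structure of $F_2$, treat the atomic case as the crux (closedness collapses $\formeq{\Xi}{\pi}{\cdot}{\cdot}$ to an actual permutation, Theorem~\ref{th:perm-lf} then gives same-structure and hence same-height derivations, and the annotation ordering finishes), and regard the remaining cases as routine. The paper in fact says nothing about the non-atomic cases beyond calling them ``obvious''; your implication case is handled correctly, including the contravariant use of $\inv{\pi}$.

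One step in your quantifier cases does not go through as written. You appeal to Lemma~\ref{lem:streq-subs}(1) and claim its proviso $\supp{\theta}\cap\supp{\pi}=\emptyset$ is available because the instantiating term ``can be chosen'' to avoid $\supp{\pi}$. But for a universal quantifier the term $M$ on the $F_1$ side is arbitrary, and for an existential the witness on the $F_2$ side is given; in neither case are you free to choose it, and it may well meet $\supp{\pi}$. The fix is not to substitute the \emph{same} term on both sides: given $M$ on the $F_1$ side, use $\permute{\inv{\pi}}{M}$ on the $F_2$ side. The identity $\permute{\pi}{\hsubst{\{\langle x,\permute{\inv{\pi}}{M},\alpha\rangle\}}{E}}=\hsubst{\{\langle x,M,\alpha\rangle\}}{(\permute{\pi}{E})}$ then restores the atomic-level equivalence, and the induction hypothesis applies to the resulting pair of closed instances. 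This is exactly the device used in the proof of Theorem~\ref{th:perm-form}.
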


\begin{proof}
The proof by an induction on the structure of $F_2$ using the
definition of $\streq{\Xi}{\pi}{F_2}{F_1}$.
The argument takes an obvious form in the inductive cases and we
consider only the case for atomic formulas, where the annotations
become important, in detail.
In this case, $F_2$ and $F_1$ must, respectively, be of the form
$\fatmann{G_2}{M_2:A_2}{Ann_2}$ and $\fatmann{G_1}{M_1:A_1}{Ann_1}$,
where $Ann_2$ is an annotation that is stronger than or identical to
$Ann_1$, and
$\formeq{\Xi}{\pi}{\fatm{G_2}{M_2:A_2}}{\fatm{G_1}{M_1:A_1}}$ must hold.
Since the formulas are closed, the last observation leads to the
conclusion that $\permute{\pi}{(\fatm{G_2}{M_2:A_2})}$ and 
$\fatm{G_1}{M_1:A_1}$ are identical.
Using Theorem~\ref{th:perm-lf}, it follows from this that if there are
derivations for $\lfctx{G_2}$, $\lftype{G_2}{A_2}$ and
$\lfchecktype{G_2}{M_2}{A_2}$, then there must be derivations that
have the \emph{same structure} for $\lfctx{G_1}$, $\lftype{G_1}{A_1}$
and $\lfchecktype{G_1}{M_1}{A_1}$, respectively.
The claim in the lemma can now be verified using the definition of
validity.
\end{proof}

\begin{figure}
\[
\infer[\id]
      {\seq[\mathbb{N}]{\Psi}{\Xi}{\Omega}{F_1}}
      {F_2\in\Omega
         &
       \pi\ \mbox{is a permutation of nominal constants such that}\ \supportof{\pi}\subseteq\mathbb{N}
         &
       \streq{\Xi}{\pi}{F_2}{F_1}}
\]
\caption{The Axiom Rule Generalized to Annotated Formulas}\label{fig:ann-id}
\end{figure}

Figure~\ref{fig:ann-id} presents a generalized version of the
\id\ rule based on the discussions above.
We show the preservation of wellformedness and the soundness
properties for the new rules below.

\begin{theorem}\label{th:ann-sound}
The following properties hold for every instance of the rules
in Figures~\ref{fig:ann-atomic} and~\ref{fig:ann-id}:

\begin{enumerate}
\item If the conclusion sequent is
well-formed, the premises expressing typing conditions have
derivations and the conditions expressed by the other, non-sequent
premises are satisfied, then all the sequent premises must
be well-formed.

\item If the conclusion sequent is well-formed, the premises expressing
typing judgements are derivable, the conditions described in the other
non-sequent premises are satisfied and the premise sequent is valid,
then the conclusion sequent must also be valid. 
\end{enumerate}
\end{theorem}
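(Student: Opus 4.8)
The plan is to treat the two clauses of Theorem~\ref{th:ann-sound} as a bundled extension of the soundness arguments already carried out for the unannotated rules, handling the four rules of Figure~\ref{fig:ann-atomic} by reduction to Theorems~\ref{th:atom-wf} and~\ref{th:atom-sound} and handling the annotated \id\ rule of Figure~\ref{fig:ann-id} by appeal to the new strength machinery in Lemmas~\ref{lem:streq-subs} and~\ref{lem:streq-valid}. For wellformedness (clause~1), the observation is that wellformedness is defined purely by erasure of annotations (Definition~\ref{def:ann-wf}); since the $\annappL$, $\annappR$, $\annabsL$, and annotated $\absR$ rules differ from their unannotated counterparts only by the placement of $@^i$ and $*^i$ annotations on atomic formulas, erasing annotations turns each premise and conclusion into exactly the premise and conclusion of the corresponding rule from Figure~\ref{fig:rules-atom}. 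Thus clause~1 follows immediately from Theorem~\ref{th:atom-wf} for those four rules, and for the annotated \id\ rule it is vacuous, just as in Theorem~\ref{th:other-wf}.

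The substance is in clause~2, and here I would work relative to an arbitrary but fixed height assignment $\Upsilon$, then generalize. For the annotated \id\ rule the argument mirrors Theorem~\ref{th:other-sound}: given $\theta$ and $\sigma$ identifying a closed instance of the conclusion, the premises $\supportof{\pi}\subseteq\mathbb{N}$ and $\streq{\Xi}{\pi}{F_2}{F_1}$ together with Lemma~\ref{lem:streq-subs} (both clauses, to push the strength relation through the term substitution $\theta$ and then the context substitution $\sigma$) yield $\streq{\emptyset}{\pi}{\subst{\sigma}{\hsubst{\theta}{F_2}}}{\subst{\sigma}{\hsubst{\theta}{F_1}}}$ for the closed instances. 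Since $F_2\in\Omega$, its closed instance is among the assumption formulas; if that instance is valid with respect to $\Upsilon$ then Lemma~\ref{lem:streq-valid} forces the closed instance of $F_1$ to be valid with respect to $\Upsilon$, and if it is not valid the sequent is valid vacuously. Either way the closed instance is valid, and since the argument is independent of $\theta$, $\sigma$, and $\Upsilon$, the conclusion sequent is valid.

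For $\annappR$ and the annotated $\absR$, the reasoning re-runs the corresponding proofs in Theorem~\ref{th:atom-sound}, but now tracking heights. The key point is that Theorem~\ref{th:atomictype}, part (1), produces a derivation of $\lfchecktype{}{}{}$ for the application whose \emph{immediate subderivations} for the arguments $M_i$ have strictly smaller height; so if each premise asserts validity of $\fatmann{G}{M_i:A'_i}{*^i}$ with respect to $\Upsilon$ (meaning the subderivations have height $<\Upsilon(@^i)$), the assembled derivation for the $@^i$-annotated conclusion has height $\le\Upsilon(@^i)$, exactly as Definition~\ref{def:ann_form_valid} requires. For $\annappL$ and $\annabsL$ the flow is reversed and leans on the covering property: Lemma~\ref{lem:cases-cover} already reduces validity of the conclusion to validity of all sequents in $\casesfn[\fatm{G}{R:P}]{\mathcal{S}}$, and Lemma~\ref{lem:decomp-decr} clause~(2) supplies precisely the height-decrease fact that justifies re-annotating the decomposed assumption formulas with $*^i$; I would invoke these to show that validity of the $*^i$-annotated premise sequents (with respect to $\Upsilon$) entails validity of the $@^i$- or $*^i$-annotated conclusion.

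The main obstacle I anticipate is the bookkeeping in the two left rules, $\annappL$ and $\annabsL$, where I must confirm that attaching the $*^i$ annotation to \emph{every} decomposed atomic formula in \emph{each} case sequent is semantically justified uniformly across the covering set. Concretely, the closed instance whose existence is guaranteed by Lemma~\ref{lem:heads-cover} and the height-strict subderivations guaranteed by Lemma~\ref{lem:decomp-decr} clause~(2) must be shown to align: the height bound $k$ used as $\Upsilon(@^i)$ for the assumption being analyzed must be the same bound against which the $*^i$-annotated residuals are interpreted, and the permutation $\pi$ introduced by Lemma~\ref{lem:heads-cover} together with Theorem~\ref{th:perm-valid} must be checked not to disturb the height assignment (which it does not, since permutations preserve derivation structure by Theorem~\ref{th:perm-lf}). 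I expect the remaining cases to be routine once this height-tracking identity is stated carefully, so I would isolate it as the one explicitly argued step and delegate the rest to the cited lemmas.
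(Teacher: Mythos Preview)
Your approach is correct and essentially matches the paper's: clause~1 via annotation erasure reducing to Theorems~\ref{th:other-wf} and~\ref{th:atom-wf}; the annotated \id\ rule via Lemmas~\ref{lem:streq-subs} and~\ref{lem:streq-valid}; $\annappL$ via the height-decrease content of Lemma~\ref{lem:decomp-decr} threaded through the case-analysis machinery; and $\annappR$ via the height bound implicit in Theorem~\ref{th:atomictype}. One misattribution to correct: you group $\annabsL$ with $\annappL$ under the covering-set apparatus of Lemmas~\ref{lem:cases-cover} and~\ref{lem:heads-cover}, but $\annabsL$ involves no case analysis---it is a single-premise rule for a $\Pi$-typed abstraction, and (like the annotated $\absR$) its height argument is simply the observation that $\lfchecktype{\Gamma}{\lflam{x}{M}}{\typedpi{x}{A_1}{A_2}}$ has a derivation of height $k{+}1$ if and only if $\lfchecktype{\Gamma,x{:}A_1}{M}{A_2}$ has one of height $k$. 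The paper handles the two abstraction rules separately on exactly this basis; once you make that adjustment, your anticipated ``main obstacle'' pertains only to $\annappL$, and your plan for it is what is needed.
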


\begin{proof}
The truth of the first clause can be verified by using
Theorems~\ref{th:other-wf} and \ref{th:atom-wf} and noting that the
wellformedness of sequents with annotations is determined by the
wellformedness of the sequent that results from erasing all the
annotations.

The truth of the second clause for the new \id\ rule follows easily
from Lemmas~\ref{lem:streq-subs} and \ref{lem:streq-valid} and the
definition of validity for sequents.
For the rules in Figure~\ref{fig:ann-atomic}, we adapt the argument
for Theorem~\ref{th:atom-sound} to the new semantics.
The refinement to the semantics to take into account the heights of
derivations for typing judgements associated with atomic formulas does
not impact the consideration of any formula other than the one
introduced by the rule and the ones in the premise sequents associated
with it.
For the formula introduced by the $\annappL$ rule, we use the fact
verified in Lemma~\ref{lem:decomp-decr} that if the typing 
judgement that determines the height of a closed instance of the formula
$\fatm{G}{R : P}$ has a derivation of height $k$, then the typing judgements
that determine the heights of the corresponding closed instances of the new
assumption formulas in the reduced form of the sequent must have
derivations of height less than $k$.
For the formula introduced by the $\annappR$ rule, we use the
observation contained in Theorem~\ref{th:atomictype} that the typing
judgement that determines the height of a closed instance of the formula
$\fatm{G}{h\app M_1 \app \ldots \app M_n : P'}$ must have a derivation
of height at most $(k + 1)$ if the typing judgement associated with the
corresponding instances of the goal formulas of the premise sequents
have height at most $k$.
Finally, for the formula introduced by the $\annabsL$ and $\annabsR$
rules, we use the easily verified fact that an LF judgement of the
form $\lfchecktype{\Gamma}{\lflam{x}{M}}{\typedpi{x}{A_1}{A_2}}$ 
has a derivation of height $(k+1)$ if and only if the
judgement $\lfchecktype{\Gamma, x:A_1}{M}{A_2}$ has a derivation of
height $k$.
\end{proof}

\subsection{Proof Rules Encoding LF Meta-Theorems}
\label{ssec:meta-theorems}

We now present proof rules that encode the meta-theorems discussed in 
Section~\ref{ssec:lf-properties} are often used in informal
arguments about the properties of LF specifications.
The strengthening, permutation and instantiation meta-theorems can be
encoded as axioms, \ie, as proof rules that have no premise sequents
although they may have side-conditions that determine applicability.
Given the interpretation of atomic formulas, the weakening rule, which
introduces a new type assignment into the LF context, requires the
type to be checked for wellformedness.
This checking is built into the proof rule that encodes it via
suitable premise sequents. 

To determine the premise sequents for the weakening rule, we use the 
process of ``type decomposition'' that is embodied in the following
definition.

\begin{definition}\label{def:typdecomp}
The decomposition of an LF type $A$ with respect to a
collection of nominal constants $\mathbb{N}$, a context variables
context $\Xi$ and a context expression $G$, denoted by
$\typdecomp{\mathbb{N}}{\Xi}{G}{A}$, is defined by recursion on the
type $A$ as follows: 
\begin{enumerate}
\item If $A$ is a type of the form $(a\app M_1\ldots M_n)$
  where $a:\typedpi{x_1}{A_1}{\ldots\typedpi{x_n}{A_n}{\type}}
  \in\Sigma$, then $\typdecomp{\mathbb{N}}{\Xi}{G}{A}$ is the
  collection
\[\left\{
     \langle
      \mathbb{N},
      \Xi,
      \fatm{G}
           {M_i:\hsubst{\{\langle x_1,M_1,\erase{A_1}\rangle,\ldots,
                           \langle x_{i-1},M_{i-1},\erase{A_{i-1}}}
                       {A_i}}
    \rangle\ \vert\ 1 \leq i \leq n \right\}.\]

\item If $A=\typedpi{x}{A_1}{A_2}$, then letting 
$G'$ be $G,n:A_1$, $\mathbb{N}'$ be
  $\mathbb{N} \cup \{ n \}$, and $\Xi'$ be the set 
  \[\left\{\ctxvarty{\Gamma}
                      {\mathbb{N}'_{\Gamma}}
                      {\ctxty{\mathcal{C}}
                        {\mathcal{G}}}\ 
             \middle|\ 
             \ctxvarty{\Gamma}{\mathbb{N}_{\Gamma}}{\ctxty{\mathcal{C}}{\mathcal{G}}}\in\Xi
                          \mbox{ and }
               \mathbb{N}'_{\Gamma}\ \mbox{is}\ \mathbb{N}_{\Gamma} \cup \{n\}\
                          \mbox{ if }\Gamma\ \mbox{occurs in}\ G\ \mbox{and}\
                          \mathbb{N}_{\Gamma}\ \mbox{ otherwise}
         \right\}\]
for some nominal constant $n:\erase{A_1} \in \noms \setminus \mathbb{N}$, 
$\typdecomp{\mathbb{N}}{\Xi}{G}{A}$ is the collection
$\typdecomp{\mathbb{N}}{\Xi}{G}{A_1}\cup
    \typdecomp{\mathbb{N}'}{\Xi'}{G'}{\hsubst{\{\langle x,
        n,\erase{A_1}\rangle\}}{A_2}}.$
\end{enumerate}
We shall refer to the set
$\bigcup \{ \mathbb{N}'\ \vert\ \langle \mathbb{N}',\Xi',F'\rangle \in
                                        \typdecomp{\mathbb{N}}{\Xi}{G}{A}
                                        \}
\setminus \mathbb{N}$ as the \emph{new name set} of
$\typdecomp{\mathbb{N}}{\Xi}{G}{A}$; we circumvent the ambiguity in
this identification by assuming that new names are chosen in the same
way each time.
\end{definition}
\noindent The definition above is abusive of notation because it uses
the convention discussed after Theorem~\ref{th:aritysubs} for denoting
the result of applying a substitution without the certainty that such
a result will exist.
However, this abuse is harmless: the result will exist in all the
situations that we will use \typdecompsans.
The following lemma describes the conditions under which we will use
this function and shows the harmlessness of the abuse in addition to
establishing additional properties of \typdecompsans\ that we will
need in justifying its use.

\begin{lemma}\label{lem:typdecomp}
Let $\Xi$ be a context variables context, $\Psi$ a term
variables context, and $\mathbb{N}$ a set
of nominal constants such that for each
$\ctxvarty{\Gamma_i}{\mathbb{N}_i}{\ctxty{\mathcal{C}_i}{\mathcal{G}_i}}$
in $\Xi$ the judgement
$\wfctxvarty{\mathbb{N} \setminus \mathbb{N}_i}
            {\Psi}
            {\ctxty{\mathcal{C}_i}{\mathcal{G}_i}}$
has a derivation.
Further, let $G$ be a context expression and $B$ be a type such that
there are derivations for
$\wfctx{\mathbb{N}\cup\STLCGamma_0\cup\Psi}{\ctxsanstype{\Xi}}{G}$ and
$\wftype{\mathbb{N}\cup\STLCGamma_0\cup\Psi}{B}$. 
Then $\typdecomp{\mathbb{N}}{\Xi}{G}{B}$ must be defined.
Moreover, letting  $\overline{\mathbb{N}}$ be the new name set of
$\typdecomp{\mathbb{N}}{\Xi}{G}{B}$, the following properties hold of
the set $\typdecomp{\mathbb{N}}{\Xi}{G}{B}$ identifies: 
\begin{enumerate}
\item For each 
$\langle\mathbb{N}',\Xi',F'\rangle\in\typdecomp{\mathbb{N}}{\Xi}{G}{B}$ it is the
  case that 
\begin{enumerate}

\item
  $\ctxvarty{\Gamma_i}{\mathbb{N}_i}{\ctxty{\mathcal{C}_i}{\mathcal{G}_i}}\in
  \Xi$ if and only if 
  $\ctxvarty{\Gamma_i}{\mathbb{N}_i'}{\ctxty{\mathcal{C}_i}{\mathcal{G}_i}}
  \in \Xi'$ for some $\mathbb{N}'_i$ such  
  $\mathbb{N}'\setminus\mathbb{N} \supseteq \mathbb{N}'_i\setminus \mathbb{N}_i$,
  
\item each member
  $\ctxvarty{\Gamma_i}{\mathbb{N}_i'}{\ctxty{\mathcal{C}_i}{\mathcal{G}_i}}$
  of $\Xi'$ is such that 
$\wfctxvarty{\mathbb{N}' \setminus \mathbb{N}_i'}
            {\Psi}
            {\ctxty{\mathcal{C}_i}{\mathcal{G}_i}}$
has a derivation,

\item there is a derivation for
  $\wfform{\mathbb{N}'\cup\STLCGamma_0\cup\Psi}{\ctxsanstype{\Xi'}}{F'}$.
\end{enumerate}

\item For any closed term substitution $\theta$ with $\domain{\theta}
  = \Psi$ and closed context substitution $\sigma$ with
  $\domain{\sigma} = \Xi$ and $\supportof{\sigma}$ disjoint from
  $\overline{\mathbb{N}}$, if it is the case that
  $\subst{\sigma}{\hsubst{\theta}{G}}$ and $\hsubst{\theta}{B}$ are
  defined and $\lfctx{\subst{\sigma}{\hsubst{\theta}{G}}}$ has a
  derivation, and, further, that $\subst{\sigma}{\hsubst{\theta}{F'}}$
  is defined and is a valid formula for every 
  $\langle \mathbb{N}',\Xi',F'\rangle \in \typdecomp{\mathbb{N}}{\Xi}{G}{B}$,
  then there must be a derivation for the LF judgement
  $\lftype{\subst{\sigma}{\hsubst{\theta}{G}}}{\hsubst{\theta}{B}}$.
\end{enumerate}
\end{lemma}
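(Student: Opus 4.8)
The plan is to prove both clauses of the lemma by a single induction on the structure of the type $B$, matching the recursive definition of $\typdecompsans$ in Definition~\ref{def:typdecomp}. The preliminary step is to observe that the hypotheses of the lemma feed exactly the two recursive clauses of that definition, so definedness of $\typdecomp{\mathbb{N}}{\Xi}{G}{B}$ and the structural claims in clause (1) can be established in the same pass as the recursion unfolds. I would first dispatch the base case, where $B$ has the atomic form $(a\app M_1\ldots M_n)$ with $a:\typedpi{x_1}{A_1}{\ldots\typedpi{x_n}{A_n}{\type}}\in\Sigma$; here $\typdecompsans$ produces no new names, the context data $\mathbb{N}$ and $\Xi$ are unchanged in each triple, and the formulas $F'$ are $\fatm{G}{M_i:A_i'}$ for the hereditarily-substituted types $A_i'$. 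The subtlety in this case is verifying the definedness of these substituted types and the wellformedness judgement in clause (1c); this is handled by appealing to Lemma~\ref{lem:arityrespecting} together with Theorem~\ref{th:arityapprox} to see that each $A_i$ respects the appropriate arity context, exactly as in the coherence argument at the start of the proof of Theorem~\ref{th:atomictype}.

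For the inductive case, where $B=\typedpi{x}{A_1}{A_2}$, the definition splits $\typdecompsans$ into the decomposition of $A_1$ under the original data and the decomposition of $\hsubst{\{\langle x,n,\erase{A_1}\rangle\}}{A_2}$ under the augmented data $\mathbb{N}'=\mathbb{N}\cup\{n\}$, $G'=G,n:A_1$, and the enlarged context variables context $\Xi'$. The key preparatory observation is that the enlarged data satisfy the hypotheses of the lemma: $\wfctx{\mathbb{N}'\cup\STLCGamma_0\cup\Psi}{\ctxsanstype{\Xi'}}{G'}$ follows from the wellformedness of $G$ and the fact that $\wftype{\mathbb{N}\cup\STLCGamma_0\cup\Psi}{A_1}$ is a consequence of $\wftype{\mathbb{N}\cup\STLCGamma_0\cup\Psi}{B}$, and the updated context variable type annotations in $\Xi'$ continue to be wellformed by Theorem~\ref{th:ctx-ty-wk}. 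I would also use Theorem~\ref{th:aritysubs-ty} to confirm that $\hsubst{\{\langle x,n,\erase{A_1}\rangle\}}{A_2}$ is defined and respects the relevant arity context, so the second recursive call is legitimate. Clause (1) then follows from the two induction hypotheses by tracking how the name sets $\mathbb{N}_i$ associated with context variables can only grow by the single name $n$ when $\Gamma$ occurs in $G$, which is precisely the containment $\mathbb{N}'\setminus\mathbb{N}\supseteq\mathbb{N}'_i\setminus\mathbb{N}_i$ asserted in (1a).

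The substantive part is clause (2), the semantic statement that validity of all the decomposed formulas forces derivability of $\lftype{\subst{\sigma}{\hsubst{\theta}{G}}}{\hsubst{\theta}{B}}$. In the atomic case this is direct: validity of each $\fatm{G}{M_i:A_i'}$ gives derivations of the corresponding LF typing judgements $\lfchecktype{\subst{\sigma}{\hsubst{\theta}{G}}}{\cdot}{\cdot}$, and assembling these via the \atomfamapp\ and \atomfamconst\ kinding rules — equivalently, invoking Theorem~\ref{th:atomickind}, the atomic-type analogue of Theorem~\ref{th:atomictype} — yields $\lfsynthkind{\subst{\sigma}{\hsubst{\theta}{G}}}{\hsubst{\theta}{B}}{\type}$ and hence the desired $\lftype{}{}$ judgement by \canonfamatom. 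In the $\Pi$ case I would apply the induction hypothesis to the $A_1$-decomposition to obtain $\lftype{\subst{\sigma}{\hsubst{\theta}{G}}}{\hsubst{\theta}{A_1}}$, and to the $A_2$-decomposition over the extended context $G'=G,n:A_1$ to obtain the formation judgement for the body; the \canonfampi\ rule then combines these. The main obstacle I anticipate is the careful bookkeeping around the fresh nominal constant $n$ introduced in the inductive clause: I must ensure, via the disjointness assumption $\supportof{\sigma}\cap\overline{\mathbb{N}}=\emptyset$ in the statement, that the closed substitutions $\theta$ and $\sigma$ do not clash with $n$, so that the LF derivation built over $G'$ genuinely corresponds to a typing derivation with $n$ playing the role of the bound variable $x$ under the nominal-constant interpretation of Definition~\ref{def:semantics}. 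Reconciling this fresh-name discipline with the definedness of $\subst{\sigma}{\hsubst{\theta}{F'}}$ across all triples — where different triples carry different augmented name sets — is where the argument requires the most attention, and I would lean on the freshness guarantee built into the new name set together with Theorem~\ref{th:perm-lf} to rename $n$ harmlessly when needed.
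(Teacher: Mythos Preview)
Your proposal is correct and follows essentially the same route as the paper: induction on the structure of $B$, with the atomic case handled via Theorem~\ref{th:atomickind} and the $\Pi$ case by applying the inductive hypothesis to each recursive call and combining with \canonfampi. One small remark: the appeal to Theorem~\ref{th:perm-lf} at the end is unnecessary---the disjointness hypothesis $\supportof{\sigma}\cap\overline{\mathbb{N}}=\emptyset$ together with the choice of $n\notin\mathbb{N}$ already guarantees directly that $n$ is not bound in $\subst{\sigma}{\hsubst{\theta}{G}}$, so no renaming is needed; the paper also uses Theorem~\ref{th:subspermute} in the atomic case to commute $\theta$ past the inner substitution before invoking Theorem~\ref{th:atomickind}, a step you gloss over but which is routine.
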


\begin{proof}
We must first determine the coherence of the lemma by checking that
$\subst{\sigma}{\hsubst{\theta}{G}}$, $\hsubst{\theta}{B}$ and
$\subst{\sigma}{\hsubst{\theta}{F'}}$ are closed expressions for each
of the substitutions $\theta$, $\sigma$ and formulas $F'$ considered
in clause (3). 
This must be the case because the substitutions are closed,
all the free variables in $B$ are contained in $\Psi$, and all the
free variables in $F'$ and $G$ must be contained in $\Psi$ and $\Xi$.

The lemma is now proved by induction on the structure of the type $B$.
We consider the cases for this structure below.

\medskip
\noindent {\it $B$ is the atomic type $(a\app M_1 \app \ldots \app M_n)$.}

\smallskip
\noindent Since $\wftype{\mathbb{N}\cup\STLCGamma_0\cup\Psi}{B}$ is
derivable, $\Sigma$ must assign a kind of the form
$\typedpi{x_1}{A_1}{\ldots\typedpi{x_n}{A_n}{\type}}$ to $a$ 
and there must be derivations for
$\stlctyjudg{\mathbb{N}\cup\STLCGamma_0\cup\Psi}{M_i}{\erase{A_i}}$
for $1 \leq i \leq n$.
From the wellformedness of $\Sigma$ it follows that
$\wftype{\mathbb{N}\cup\STLCGamma_0\cup\Psi\cup
                  \{x_1:\erase{A_1},\ldots, x_{i-1}:\erase{A_{i-1}}\}}
        {A_i}$
has a derivation.
We may now use Theorem~\ref{th:aritysubs-ty} to conclude that
$\hsubst{\{\langle x_1,M_1,\erase{A_1}\rangle,\ldots,
              \langle x_{i-1}, M_{i-1}, \erase{A_{i-1}}\rangle \}}
        {A_i}$ 
is defined, thereby ensuring that $\typdecomp{\mathbb{N}}{\Xi}{G}{B}$
is defined in this case.

We now turn to showing the various clauses.
Clauses 1(a) is trivial and clause 1(b) follows from the assumptions
of the lemma. 
For the remainder of the argument, let us write
$A'_i$ to denote the type
$\hsubst{\{\langle x_1,M_1,\erase{A_1}\rangle,\ldots,
              \langle x_{i-1}, M_{i-1}, \erase{A_{i-1}}\rangle \}}
        {A_i}$, for $1 \leq i \leq n$.
Theorem~\ref{th:aritysubs-ty} yields the further observation
that there are derivations for
$\wftype{\mathbb{N}\cup\STLCGamma_0\cup\Psi}{A'_i}$.
We also note that $\typdecomp{\mathbb{N}}{\Xi}{G}{B}$ is the set
$\{\langle \mathbb{N},\Xi,\fatm{G}{M_i : A'_i} \rangle
       \ \vert\ 1 \leq i \leq n \}$.
Now, for clause 1(c), it suffices to show that
$\wfctx{\mathbb{N}\cup\STLCGamma_0\cup\Psi}{\ctxsanstype{\Xi}}{G}$,
$\wftype{\mathbb{N}\cup\STLCGamma_0\cup\Psi}{A'_i}$,
and 
$\stlctyjudg{\mathbb{N}\cup\STLCGamma_0\cup\Psi}{M_i}{\erase{A'_i}}$
have derivations.
The first is true by assumption and the latter two follow from earlier
observations and the fact that erasure is preserved under
substitution~(Theorem~\ref{th:erasure}).

This leaves only clause (2).
The validity of
$\fatm{\subst{\sigma}{\hsubst{\theta}{G}}}
      {\hsubst{\theta}{M_i} : \hsubst{\theta}{A'_i}}$,
for $1 \leq i \leq n$, implies that 
there must be LF derivations for 
$\lfchecktype{\subst{\sigma}{\hsubst{\theta}{G}}}
             {\hsubst{\theta}{M_i}}{\hsubst{\theta}{A'_i}}$.
Using Theorem~\ref{th:subspermute}, we see that
$\hsubst{\theta}{A'_i}$ is identical to
$\hsubst{\{ \langle x_1, \hsubst{\theta}{M_1}, \erase{A_1} \rangle, \ldots, 
             \langle x_{i-1},\hsubst{\theta}{M_{i-1}},\erase{A_{i-1}}\rangle\}}
        {\hsubst{\theta}{A_i}}$;
we assume here that the variables $x_1,\ldots,x_n$ are chosen so as to not
be members of $\domain{\theta}$ or to appear free in the terms in
$\range{\theta}$, as is possible to ensure through renaming.
Using Theorems~\ref{th:atomickind} and \ref{th:erasure}, we can
conclude now that there must be a derivation for
$\lfsynthkind{\subst{\sigma}{\hsubst{\theta}{G}}}
             {(a\app \hsubst{\theta}{M_1} \app \ldots \app \hsubst{\theta}{M_n})}
             {\type}$.
The claim follows easily from this by noting that $\hsubst{\theta}{B}$
is identical to
$(a\app \hsubst{\theta}{M_1} \app \ldots \app \hsubst{\theta}{M_n})$.

\medskip
\noindent {\it $B$ is the type $\typedpi{x}{A_1}{A_2}$.}

\smallskip
\noindent We first show that the parameters of the two recursive uses
of $\typdecompsans$ satisfy what is required of them for the lemma
statement to apply.
The assumption that
$\wftype{\mathbb{N}\cup\STLCGamma_0\cup\Psi}{\typedpi{x}{A_1}{A_2}}$
has a derivation implies that 
$\wftype{\mathbb{N}\cup\STLCGamma_0\cup\Psi}{A_1}$
and
$\wftype{\mathbb{N}\cup\STLCGamma_0\cup\Psi\cup \{x: \erase{A_1}\}}{A_2}$
have derivations.
From the latter, using Theorem~\ref{th:aritysubs-ty}, we see that
$\hsubst{\{\langle x, n, \erase{A_1}\rangle \}}{A_2}$ must be defined
and that         
$\wftype{\mathbb{N}\cup\STLCGamma_0\cup\Psi}
        {\hsubst{\{\langle x, n, \erase{A_1}\rangle \}}{A_2}}$.
must have a derivation for any nominal constant $n$ that is assigned
the type $\erase{A_1}$.
Given these observations, it is straightforward to verify that the
required conditions are indeed satisfied.

The induction hypothesis now yields the conclusion that
$\typdecomp{\mathbb{N}}{\Xi}{G}{B}$
is defined.
A similar observation applies to clauses 1(a), 1(b) and 1(c).
For clause 2, we observe first that substitutions $\theta$ and
$\sigma$ are ``good'' for this case only if they are good for
the recursive calls; hence we may use them without specific reference
to the context.
We then observe that if $\hsubst{\theta}{B}$ is defined, then so must
$\hsubst{\theta}{A_1}$ and $\hsubst{\theta}{A_2}$ be.
Now, if can show that, for the $n$ chosen in the
definition of $\typdecompsans$, it is the case that 
$\subst{\sigma}{\hsubst{\theta}{(G, n: A_1)}}$ is defined and that
$\lfctx{\subst{\sigma}{\hsubst{\theta}{(G,n:A_1)}}}$ has a derivation,
we can invoke the induction hypothesis relative to the second
recursive call to $\typdecompsans$ to conclude that 
$\lftype{\subst{\sigma}{\hsubst{\theta}{(G,n:A_1)}}}{\hsubst{\theta}{A_2}}$
must have a derivation and, hence, that
$\lftype{\subst{\sigma}{\hsubst{\theta}{G}}}{\hsubst{\theta}{\typedpi{x:A_1}{A_2}}}$
has one, as we desired to show.
That $\subst{\sigma}{\hsubst{\theta}{(G, n: A_1)}}$ is defined follows
from the assumption that $\subst{\sigma}{\hsubst{\theta}{G}}$ defined
and the earlier observation that $\hsubst{\theta}{A_1}$ is.
The induction hypothesis invoked relative to the first recursive call
to $\typdecompsans$ allows us to conclude that there is a
derivation for
$\lftype{\subst{\sigma}{\hsubst{\theta}{G}}}{\hsubst{\theta}{A_1}}$.
By assumption, $\lfctx{\subst{\sigma}{\hsubst{\theta}{G}}}$ has a
derivation.
Noting that $\subst{\sigma}{\hsubst{\theta}{(G,n:A_1)}}$ is the same
expression as
$(\subst{\sigma}{\hsubst{\theta}{G}},n:\hsubst{\theta}{A_1})$, we see
that $\lfctx{\subst{\sigma}{\hsubst{\theta}{(G,n:A_1)}}}$
would have a derivation if we could be sure that $n$ is not assigned a
type in $\subst{\sigma}{\hsubst{\theta}{G}}$.
But this must be the case: the way the $n$ is chosen ensures that
$G$ cannot have such a binding and the condition that
$\supportof{\sigma}$ is disjoint from
$\overline{\mathbb{N}}$ ensures that $\sigma$
cannot introduce one.
\end{proof}

\begin{figure}[tbh]

\begin{center}
\begin{tabular}{c}   

\infer[\lfwk]
      {\seq[\mathbb{N}]
           {\Psi}{\Xi}
           {\Omega}
           {\fimp{\fatmann{G}{M:A}{Ann}}{\fatmann{G,n:B}{M:A}{Ann}}}}
      {\begin{array}{c}
           n\mbox{ does not appear in } M,\, A,\, \mbox{or the explicit bindings in } G
         \\[3pt]
         \mbox{ if }\ctxvarty{\Gamma_i}
                             {\mathbb{N}_i}
                             {\ctxty{\mathcal{C}_i}{\mathcal{G}_i}}
                \in\Xi\mbox{ and }\Gamma_i\mbox{ appears in }G,\, \mbox{then }n\in\mathbb{N}_i
         \\[3pt]
         \left\{
         \seq[\mathbb{N}']{\Psi}{\Xi'}{\Omega}{F'}\ |\ 
                (\mathbb{N}',\Xi',F')\in\typdecomp{\mathbb{N}}{\Xi}{G}{B}
         \right\}
       \end{array}}

\\[15pt]

\infer[\lfstr]
      {\seq[\mathbb{N}]
           {\Psi}{\Xi}
           {\Omega}
           {\fimp{\fatmann{G,n:B}{M:A}{Ann}}{\fatmann{G}{M:A}{Ann}}}}
      {n\mbox{ does not appear in }M\mbox{, } A\mbox{, or the explicit bindings in }G}

\\[15pt]
      
\infer[\lfperm]
      {\seq[\mathbb{N}]{\Psi}
           {\Xi}
           {\Omega}
           {\fimp{\fatmann{G}{M:A}{Ann}}
                 {\fatmann{G'}{M:A}{Ann}}}}
      {\begin{array}{c}
          G = G_1,n_1:A_1,n_2:A_2,G_2
          \\[3pt]
          G' = G_1,n_2:A_2,n_1:A_1,G_2
          \\[3pt]
          n_1\mbox{ does not appear in }A_2
       \end{array}}

\\[15pt]

\infer[\lfinst]
      {\seq[\mathbb{N}]
           {\Psi}
           {\Xi}
           {\Omega}
           {\fimp{\fatm{G,G'}{M:A}}
                 {\fimp{\fatm{G}{N:B}}
                       {\fatm{G''}{M':A'}}}}}
      {\begin{array}{c}
           G' = n:B,n_1:A_1,\ldots,n_m:A_m
           \qquad
           \left\{\hsub{\{\langle n, N, \erase{B}\rangle\}}{A_i}{A_i'}
                \ \mid\ 1\leq i\leq m\right\}
           \\[3pt]      
           G''=G,n_1:A_1',\ldots,n_m:A_m'
           \\[3pt]
           \hsub{\{\langle n, N, \erase{B}\rangle\}}{M}{M'}
             \qquad
           \hsub{\{\langle n, N, \erase{B}\rangle\}}{A}{A'}
       \end{array}}
\end{tabular}
\end{center}
\caption{Rules that Encode Meta-Theoretic Properties of LF}\label{fig:rules-meta}
\end{figure}
  
Figure~\ref{fig:rules-meta} presents the proof rules which encode the
content of the LF meta-theorems.
The symbol $Ann$ in the first three rules, which encode weakening,
strengthening, and context permutation, stands for no annotation,
$@^i$ or $*^i$ for some $i$, used in the same manner throughout the
rule instance.
Annotations are permitted in these rules to reflect the fact that
the corresponding meta-theorems guarantee the preservation of the
structure, and thus height, of LF derivations.

We show that these rules are well-defined by demonstrating that they
satisfy the preservation of wellformedness property for sequents and
that they are sound.
\begin{theorem}\label{th:meta-wf}
The following property holds for every instance of the rules in
Figure~\ref{fig:rules-meta}: if the conclusion sequent is well-formed,
the premises expressing typing conditions have derivations and the
conditions expressed by the other, non-sequent premises are satisfied,
then the premise sequents must be well-formed.
\end{theorem}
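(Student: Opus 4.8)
This theorem asserts that each of the four rules in Figure~\ref{fig:rules-meta} (\lfwk, \lfstr, \lfperm, and \lfinst) preserves wellformedness: assuming the conclusion sequent is well-formed and the side-conditions hold, the premise sequents must be well-formed as well.

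My plan is to proceed rule by rule, since the four rules differ substantially in how much work their premise sequents require. For \lfstr, \lfperm, and \lfinst, the premise sequents share the same support set $\mathbb{N}$, term variables context $\Psi$, context variables context $\Xi$, and assumption set $\Omega$ as the conclusion, and differ only in the goal formula. Hence the only nontrivial obligation is to check that the new goal formula is well-formed with respect to $\mathbb{N} \cup \STLCGamma_0 \cup \Psi$ and $\ctxsanstype{\Xi}$. For each of these three rules, I would argue this by starting from the wellformedness of the conclusion goal formula---which, by Definition~\ref{def:ann-wf}, reduces to wellformedness after erasing annotations---and observing that the side-conditions guarantee that the rearranged, strengthened, or instantiated context expressions and types remain well-formed. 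In the \lfstr\ case, the premise that $n$ does not appear in $M$, $A$, or the explicit bindings in $G$ lets me conclude that removing the binding $n:B$ leaves a well-formed context and does not disturb the arity typing of $M$ or $A$. For \lfperm, reordering two adjacent bindings with $n_1$ not free in $A_2$ preserves the arity context used to check the types, so wellformedness transfers directly. For \lfinst, I would invoke Theorem~\ref{th:subst-formula} (together with Theorems~\ref{th:uniqueness} and~\ref{th:aritysubs}) to see that the hereditary substitution of $N$ for $n$ yields well-formed instantiated types $A_i'$, $M'$, and $A'$, after checking that $\{\langle n, N, \erase{B}\rangle\}$ is arity type preserving with respect to the relevant context.

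The main obstacle will be the \lfwk\ rule, whose premise sequents are generated by \typdecompsans\ and have modified support sets $\mathbb{N}'$ and context variables contexts $\Xi'$. Here I would lean heavily on Lemma~\ref{lem:typdecomp}. The crucial preliminary is to verify that the hypotheses of that lemma are met: I must show that $\wftype{\mathbb{N}\cup\STLCGamma_0\cup\Psi}{B}$ has a derivation (which follows from the wellformedness of the conclusion goal formula $\fimp{\fatmann{G}{M:A}{Ann}}{\fatmann{G,n:B}{M:A}{Ann}}$, since that requires $G,n:B$ and hence $B$ to be well-formed) and that $\wfctx{\mathbb{N}\cup\STLCGamma_0\cup\Psi}{\ctxsanstype{\Xi}}{G}$ has a derivation, together with the per-context-variable wellformedness of the types in $\Xi$ that comes from the wellformedness of the conclusion sequent. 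With these in place, Lemma~\ref{lem:typdecomp}, clause~(1), directly delivers for each $\langle \mathbb{N}',\Xi',F'\rangle \in \typdecomp{\mathbb{N}}{\Xi}{G}{B}$ that the context variable types in $\Xi'$ satisfy $\wfctxvarty{\mathbb{N}'\setminus\mathbb{N}_i'}{\Psi}{\ctxty{\mathcal{C}_i}{\mathcal{G}_i}}$ and that $\wfform{\mathbb{N}'\cup\STLCGamma_0\cup\Psi}{\ctxsanstype{\Xi'}}{F'}$ is derivable.

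To finish the \lfwk\ case I would combine the two parts of Definition~\ref{def:sequent}: clause~1(b) of Lemma~\ref{lem:typdecomp} supplies the wellformedness of the context variables context of each premise, and clause~1(c) supplies the wellformedness of the new goal formula $F'$; the wellformedness of the assumption formulas in $\Omega$ carries over from the conclusion sequent via Theorem~\ref{th:wfsupset}, since the support set only grows from $\mathbb{N}$ to $\mathbb{N}'$ and the context variable collection $\ctxsanstype{\Xi'}$ equals $\ctxsanstype{\Xi}$ up to the annotations on support sets. The one point demanding care is confirming that $\ctxsanstype{\Xi'} = \ctxsanstype{\Xi}$, which holds because \typdecompsans\ alters only the nominal constant annotations $\mathbb{N}_\Gamma$ on context variables and never adds or removes a context variable (clause~1(a) of Lemma~\ref{lem:typdecomp}). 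Assembling these observations establishes that every premise sequent of each rule is well-formed, completing the proof.
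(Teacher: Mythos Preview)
Your treatment of \lfwk\ is correct and mirrors the paper's argument: Lemma~\ref{lem:typdecomp} supplies the wellformedness of $\Xi'$ and $F'$ for each premise sequent, and the wellformedness of the formulas in $\Omega$ carries over from the conclusion sequent (using that $\ctxsanstype{\Xi'} = \ctxsanstype{\Xi}$ and $\mathbb{N} \subseteq \mathbb{N}'$).

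However, you have misread the rules \lfstr, \lfperm, and \lfinst. These rules have \emph{no premise sequents} at all---the material above the line in Figure~\ref{fig:rules-meta} consists entirely of side conditions (non-sequent premises). Consequently the theorem holds vacuously for these three rules, which is exactly what the paper's proof says. Your paragraph arguing that ``the premise sequents share the same support set $\mathbb{N}$, term variables context $\Psi$, context variables context $\Xi$, and assumption set $\Omega$ as the conclusion, and differ only in the goal formula'' is arguing about sequents that do not exist; the entire discussion of how the goal formula would be well-formed under strengthening, permutation, or instantiation is therefore beside the point. This is not a soundness error in your reasoning so much as a misreading of the rules: there is nothing to check.
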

\begin{proof}
The theorem holds vacuously for the rules \lfstr, \lfperm, and
\lfinst. For \lfwk, Lemma~\ref{lem:typdecomp} yields the conditions
that are required of $\Xi'$ and $F'$ for each of the premise sequents,
and the wellformedness of the conclusion sequent yields the condition
required of the formulas in $\Omega$.
\end{proof}

\begin{theorem}\label{th:meta-sound}
The following property holds for every instance of each of the rules
in Figure~\ref{fig:rules-meta}:
if the premises expressing typing judgements are derivable, the
conditions described in the other non-sequent premises are satisfied
and all the premise sequents are valid, 
then the conclusion sequent must also be valid.
\end{theorem}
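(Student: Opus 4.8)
The plan is to prove the soundness of each of the four rules in Figure~\ref{fig:rules-meta} by taking an arbitrary closed instance of the conclusion sequent, identified by substitutions $\theta$ and $\sigma$, and showing that it is valid with respect to an arbitrary height assignment $\Upsilon$. Since each conclusion formula is an implication (or nested implication) whose consequent is an atomic formula, I would assume the antecedent atomic formulas are valid and then establish the validity of the consequent by appealing to the corresponding LF meta-theorem. Because the rules carry annotations $Ann$ in the first three cases, the argument must track not just derivability but the \emph{height} of the relevant LF typing derivation; here the crucial fact is that Theorems~\ref{th:weakening}, \ref{th:strengthening}, and \ref{th:exchange} each guarantee a derivation of the \emph{same structure}, hence the same height, so the annotation condition in Definition~\ref{def:ann_form_valid} is preserved.

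First I would dispatch \lfstr, \lfperm, and \lfinst, which have no premise sequents. For \lfstr, given validity of $\fatmann{G,n:B}{M:A}{Ann}$ I extract LF derivations for $\lfctx{G,n:B}$, $\lftype{G,n:B}{A}$, and $\lfchecktype{G,n:B}{M}{A}$, and since $n$ does not appear in $M$, $A$, or the explicit bindings of $G$, Theorem~\ref{th:strengthening} yields same-structure derivations over $G$; the wellformedness of $\lfctx{G}$ follows by inspection. For \lfperm I would similarly invoke Theorem~\ref{th:exchange}, using the side condition that $n_1$ does not appear in $A_2$ to justify the reordering and the preservation of context wellformedness. For \lfinst, the antecedents give LF derivations for $\lfctx{G,G'}$ and $\lfchecktype{G,G'}{M}{A}$ together with $\lfchecktype{G}{N}{B}$; I would recognize that $G' = n:B, n_1:A_1,\ldots,n_m:A_m$ together with the substitution $\{\langle n,N,\erase{B}\rangle\}$ is exactly the setup of the transitivity Theorem~\ref{th:transitivity}, whose conclusion supplies derivations for $\hsub{\theta}{\Gamma_2}{\Gamma_2'}$, for $\lfchecktype{G''}{M'}{A'}$, and for the wellformedness of the resulting context and type. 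The matching of the various hereditary-substitution side conditions of the rule with the outputs of Theorem~\ref{th:transitivity} is the routine but slightly fiddly part here.

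The substantive case is \lfwk, since introducing a new binding $n:B$ into the context demands that $B$ be verified well-formed over the instantiated context, and this is exactly what the premise sequents deliver. Assuming the antecedent $\fatmann{G}{M:A}{Ann}$ is valid, I obtain a derivation for $\lfctx{\subst{\sigma}{\hsubst{\theta}{G}}}$, and the validity of the premise sequents, under the same $\theta$ and $\sigma$, makes every formula $\subst{\sigma}{\hsubst{\theta}{F'}}$ valid for $\langle\mathbb{N}',\Xi',F'\rangle\in\typdecomp{\mathbb{N}}{\Xi}{G}{B}$. I would then invoke clause~(2) of Lemma~\ref{lem:typdecomp} to conclude $\lftype{\subst{\sigma}{\hsubst{\theta}{G}}}{\hsubst{\theta}{B}}$ is derivable; the disjointness of $\supportof{\sigma}$ from the new name set, required by that lemma, is ensured by choosing the premise-sequent instances under a permutation moving the freshly introduced nominal constants away from $\supportof{\sigma}$, appealing to Theorem~\ref{th:perm-valid} exactly as in Lemma~\ref{lem:heads-cover}. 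From wellformedness of the extended context and Theorem~\ref{th:weakening} I then get same-structure (hence same-height) LF derivations for $\lfctx{\subst{\sigma}{\hsubst{\theta}{(G,n:B)}}}$, $\lftype{\cdot}{\hsubst{\theta}{A}}$, and $\lfchecktype{\cdot}{\hsubst{\theta}{M}}{\hsubst{\theta}{A}}$, establishing validity of the consequent at the correct annotation level.

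The main obstacle I anticipate is the \lfwk\ case, specifically the bookkeeping needed to apply Lemma~\ref{lem:typdecomp}: I must verify that the side conditions on the rule (that $n$ does not appear in $M$, $A$, or the explicit bindings of $G$, and that $n\in\mathbb{N}_i$ for every context variable $\Gamma_i$ appearing in $G$) together with the permutation argument guarantee both the disjointness hypothesis of clause~(2) and the wellformedness hypotheses of the lemma, and that the new bindings introduced during type decomposition interact correctly with the context variable types recorded in $\Xi$. The height-preservation observation needed for the annotated versions is itself immediate from the same-structure guarantees of the three structural LF theorems, so the annotation handling adds little difficulty once the derivability argument is in place.
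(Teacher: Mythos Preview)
Your proposal is correct and follows essentially the same approach as the paper: the three axiom-like rules are handled by direct appeal to the corresponding LF meta-theorems (with the ``same structure'' clauses justifying annotation preservation), and \lfwk\ is handled via clause~(2) of Lemma~\ref{lem:typdecomp} together with a permutation argument to secure the disjointness hypothesis. One small refinement: the paper requires both $\supportof{\theta}$ and $\supportof{\sigma}$ to be disjoint from the new name set $\overline{\mathbb{N}}$ (not just $\supportof{\sigma}$), since the premise sequents have enlarged support sets and $\theta$ must remain substitution compatible with them; the paper handles this by permuting $\theta$ and $\sigma$ themselves to variants $\theta',\sigma'$ avoiding $\overline{\mathbb{N}}$, arguing validity for that closed instance, and then transferring back via Theorem~\ref{th:perm-subst} and (an annotated version of) Theorem~\ref{th:perm-valid}.
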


\begin{proof}
To establish the theorem, we must show the validity of each closed
instance of the conclusion sequent under the conditions described.
In the case of the \lfwk, \lfstr\ and \lfperm, the notion of validity
must also be relativized to a height assignment.
Since closed instances of atomic formulas are interpreted via the
derivability of LF judgements, we may use the meta-theorems presented
in Section~\ref{ssec:lf-properties} in this argument.
The details are entirely straightforward in the case of the rules
\lfstr, \lfperm and \lfinst, and we therefore avoid their presentation
here.
The argument is more involved for the rule \lfwk: the interpretation
of atomic formulas includes the wellformedness of contexts and types
and we must show that the premises of this rule suffice to ensure this
holds for the extension of the context embodied in the consequent of
the goal formula.
We therefore present the proof for this case in more detail below. 

Let $\theta$ and $\sigma$ be substitutions that identify a closed
instance of the conclusion sequent of this rule.
As noted above, we must show that this instance is valid with respect
to any height assignment $\Upsilon$ if the assumptions of the lemma
are satisfied.  
Let $\overline{\mathbb{N}}$ be the new name set of
$\typdecomp{\mathbb{N}}{\Xi}{G}{B}$.
We show what is required initially under the assumption that
$\supportof{\theta}$ and $\supportof{\sigma}$ are disjoint from
$\overline{\mathbb{N}}$; we will explain later how this condition can
be obviated.

We claim that $\theta$ and $\sigma$ identify closed instances of all
the premise sequents under the above assumption.
Such a sequent must have the form 
$\seq[\mathbb{N}']{\Psi}{\Xi'}{\Omega}{F'}$
where $\mathbb{N}' \subseteq \overline{\mathbb{N}} \cup \mathbb{N}$
and, by Lemma~\ref{lem:typdecomp}, each
$\ctxvarty{\Gamma_i}{\mathbb{N}'_i}{\ctxty{\mathcal{C}_i}{\mathcal{G}_i}}
\in \Xi'$ corresponds to a 
$\ctxvarty{\Gamma_i}{\mathbb{N}_i}{\ctxty{\mathcal{C}_i}{\mathcal{G}_i}}
\in \Xi$
with
$\mathbb{N}'\setminus\mathbb{N} \supseteq \mathbb{N}'_i\setminus\mathbb{N}_i$.
Since $\ctxsanstype{\Xi'} = \ctxsanstype{\Xi}$, we only need to show
that $\theta$ and $\sigma$ are proper with respect to this sequent to
be sure they identify a closed instance of it.
The compatibility of $\langle \theta, \emptyset\rangle$ with the
conclusion sequent implies that $\supportof{\theta} \cap \mathbb{N} =
\emptyset$. 
Since we also have $\supportof{\theta}\cap
\overline{\mathbb{N}}=\emptyset$, it follows that $\supportof{\theta}$ 
is disjoint from the support set of the shown sequent and 
hence $\langle \theta,\emptyset \rangle$ must be substitution
compatible with it. 
Using Theorems~\ref{th:meta-wf} and \ref{th:seq-term-subs-ok}, we see that
the context types in the context variables context that results from
applying $\theta$ to the sequent must be well-formed relative to it. 
Since the application of a term variables substitution does not remove
any entries from a context variables context, there must be an
association in $\hsubst{\theta}{\Xi'}$ for every variable in
$\domain{\sigma}$. 
To determine that $\sigma$ is appropriate for the sequent, it only 
remains to be shown that there is a derivation for
$\ctxtyinst{\supportof{\sigma} \setminus \mathbb{N}'_i}
           {\emptyset}
           {\emptyset}
           {\ctxty{\mathcal{C}_i}{\hsubst{\theta}{\mathcal{G}_i}}}{G_i}$
for each context expression $G_i$ that $\sigma$ substitutes for a
variable $\Gamma_i$. 
The appropriateness of $\sigma$ for the corresponding instance of
the conclusion sequent tells us that
$\ctxtyinst{\supportof{\sigma} \setminus \mathbb{N}_i}
           {\emptyset}
           {\emptyset}
           {\ctxty{\mathcal{C}_i}{\hsubst{\theta}{\mathcal{G}_i}}}{G_i}$
has a derivation.
We would get the desired conclusion from this if
$\supportof{\sigma} \setminus \mathbb{N}'_i = \supportof{\sigma}
\setminus \mathbb{N}_i$ 
But this must be the case: since $\mathbb{N}'\setminus\mathbb{N}
\supseteq\mathbb{N}'_i\setminus\mathbb{N}_i$  it follows that
$\overline{\mathbb{N}} \supseteq \mathbb{N}'_i\setminus\mathbb{N}_i$
and we know that $\supportof{\sigma} \cap \overline{\mathbb{N}} =
\emptyset$.

We turn now to the main task, that of showing that the instance of the
conclusion sequent is valid respect to the height assignment $\Upsilon$.
For this, it suffices to show that if all the formulas in
$\subst{\sigma}{\hsubst{\theta}{\Omega\cup\{\fatmann{G}{M:A}{Ann}\}}}$
are valid with respect to $\Upsilon$, then
$\subst{\sigma}{\hsubst{\theta}{\fatmann{G,n:B}{M:A}{Ann}}}$ is also
so valid.
Using Theorem~\ref{th:weakening}, the interpretation of atomic
formulas, and the fact that $\sigma$ cannot introduce a binding for
$n$ into $G$ if it is appropriate for the instance of the conclusion
sequent resulting from applying $\theta$, we see that this would be
the case if
$\lftype{\subst{\sigma}{\hsubst{\theta}{G}}}{\hsubst{\theta}{B}}$ 
has a derivation.
To show that this is so, we use Lemma~\ref{lem:typdecomp}.
The wellformedness of the conclusion sequent ensures that the
parameters of the call to \typdecompsans\ satisfy the requirements of
the lemma.
The substitutions $\theta$ and $\sigma$ also meet the conditions
required of them by clause 2: they are closed substitutions with
domains identical to $\Psi$ and $\ctxsanstype{\Xi}$, respectively,
$\subst{\sigma}{\hsubst{\theta}{G}}$ and $\hsubst{\theta}{B}$ must
obviously be defined, and
$\lfctx{\subst{\sigma}{\hsubst{\theta}{G}}}$ must have a derivation
because $\subst{\sigma}{\hsubst{\theta}{\fatmann{G}{M:A}{Ann}}}$ is
valid by assumption. 
Thus, the lemma would allow us to conclude that
$\lftype{\subst{\sigma}{\hsubst{\theta}{G}}}{\hsubst{\theta}{B}}$
has a derivation if $\subst{\sigma}{\hsubst{\theta}{F'}}$ is valid
for every $\langle \mathbb{N}',\Xi',F'\rangle \in
\typdecomp{\mathbb{N}}{\Xi}{G}{B}$.
But this must be the case.
These are the goal formulas of closed instances of the premise sequents
that we have assumed to be valid.
Further, the assumption formulas of these sequents are
$\subst{\sigma}{\hsubst{\theta}{\Omega}}$,
which are valid with respect to $\Upsilon$ by assumption.
It follows that each $\subst{\sigma}{\hsubst{\theta}{F'}}$ must be
valid with respect to $\Upsilon$.
Since these formulas do not contain annotations, it is easy to see
that they must be valid without qualification. 

It remains only to dispense with the assumption concerning
$\supportof{\theta}$ and $\supportof{\sigma}$.
If these substitutions do not satisfy the assumption at the outset,
then we may consider their variants $\theta'$ and $\sigma'$ that are
obtained through a permutation that swaps
nominal constants in the set $\overline{\mathbb{N}}$ with ones that do
not appear in
$\overline{\mathbb{N}}\cup\mathbb{N} \cup
    \supportof{\theta} \cup \supportof{\sigma}$.
This permutation will leave the conclusion sequent unaffected and will
ensure that $\supportof{\theta'}$ and 
$\supportof{\sigma'}$ are disjoint from $\overline{\mathbb{N}}$.
We may then use the earlier argument with respect to $\theta'$ and
$\sigma'$ and conclude the proof by invoking
Theorem~\ref{th:perm-subst} and a version of
Theorem~\ref{th:perm-valid} relativized to validity with respect to
height assignments that can be easily shown to be true. 
\end{proof}

\section{An Example Development: Uniqueness of Typing}
\label{sec:examples}

We illustrate the proof system that has been developed in
Section~\ref{sec:proof-system} by showing how a formal proof can be
provided within it for the type uniqueness property of the STLC, the
running example in this paper.
We have discussed how this property can be expressed within \logic\  
in Section~\ref{ssec:logic-examples}.
To recall, it the situation in which the LF signature
parameterizing \logic\ is the one presented in
Figure~\ref{fig:stlc-term-spec} and $c$ denotes the context schema
comprising the single block
\begin{tabbing}
  \qquad\=\kill
  \> $\{t : o\}x:tm,y:\ofty\app x\app t$,
\end{tabbing}
the formula expressing the property of interest is the following:
\begin{tabbing}
\qquad\=\qquad\=\qquad\=\kill
\>$\fctx{\Gamma}{c}{\fall{e:\oty}{\fall{t_1:\oty}{\fall{t_2:\oty}{\fall{d_1:\oty}{\fall{d_2:\oty}{}}}}}}$\\
\>\>$\fimp{\fatm{\Gamma}{e:\tmty}}
          {\fimp{\fatm{\Gamma}{t_1:\tpty}}
                {\fimp{\fatm{\Gamma}{t_2:\tpty}}{}}}$\\
\>\>\> $\fimp{\fatm{\Gamma}{d_1:\ofty\app e\app t_1}}
             {\fimp{\fatm{\Gamma}{d_2:\ofty\app e\app t_2}} 
                   {\fexists{d_3:\oty}
                            {\fatm{.}{d_3:\eqty\app t_1\app t_2}}}}$.
\end{tabbing}
Following the earlier discussion, we will actually construct a proof
for the sequent $\seq[\emptyset]{\emptyset}{\emptyset}{\emptyset}{F}$
where $F$ is the formula  
\begin{tabbing}
\qquad\=\qquad\=\qquad\=\kill
\>$\fctx{\Gamma}{c}{\fall{e:\oty}{\fall{t_1:\oty}{\fall{t_2:\oty}{\fall{d_1:\oty}{\fall{d_2:\oty}{}}}}}}$\\
\>\>$\fimp{\fatm{\Gamma}{e:\tmty}}
          {\fimp{\fatm{\Gamma}{t_1:\tpty}}
                {\fimp{\fatm{\Gamma}{t_2:\tpty}}{}}}$\\
\>\>\> $\fimp{\fatm{\Gamma}{d_1:\ofty\app e\app t_1}}
             {\fimp{\fatm{\Gamma}{d_2:\ofty\app e\app t_2}} 
                   {\fexists{d_3:\oty}
                            {\fatm{\Gamma}{d_3:\eqty\app t_1\app t_2}}}}$.
\end{tabbing}
We can then obtain a proof from the sequent representing the formula
that we really want to prove using a proof of the strengthening lemma
for equality of types identified in Section~\ref{ssec:logic-examples}.
We will not discuss a proof of the strengthening lemma or how exactly
the use of lemmas plays out within our proof system.
Suffice it to say that the former should hold no mysteries after we
have discussed the proof of the sequent
$\seq[\emptyset]{\emptyset}{\emptyset}{\emptyset}{F}$ and the latter
involves the use of the \cut\ rule in the manner that we would
expect. 
Also, our discussion of proofs will be in a ``backwards,''
proof-construction style: given a target or goal sequent, we will
identify a collection of new goal sequents from whose proofs we can
construct a proof of the original one by using rules of the proof
system. 

To get to the details, the informal proof of the formula $F$ was based
on an induction of the height of the typing derivation associated with
the formula $\fatm{\Gamma}{d_1 : \ofty\app e\app t_1}$ that appears as
an antecedent of an implication in $F$.
We can mimic this process in the formal proof through the use of the
\ind\ rule, that would yield as a new goal the task of constructing a
proof for the sequent
$\seq[\emptyset]{\emptyset}{\emptyset}{\{\ltann[1]{F}\}}{\eqann[1]{F}}$;
we use the notation $\ltann[1]{F}$ and $\eqann[1]{F}$ to denote
formulas that are identical to $F$ except that the occurrence of the
formula $\fatm{\Gamma}{d_1 : \ofty\app e\app t_1}$ within them is
annotated with a $*$ and an $@$, respectively.
We may now think of deriving this sequent by using a sequence of
rules for introducing logical symbols into its goal formula, leaving
us with the task of constructing a proof for the following sequent:
\begin{tabbing}
\quad\=\hspace{10cm}\=\kill
\>$\emptyctx;
(e:\oty, t_1:\oty, t_2:\oty, d_1:\oty, d_2:\oty);
\ctxvarty{\Gamma}{\emptyset}{\ctxty{c}{\cdot}};$\\
\>$\{F^*, \fatm{\Gamma}{e : \tmty}, \fatm{\Gamma}{t_1 : \tpty}, \fatm{\Gamma}{t_2 : \tpty}, 
      \eqann{\fatm{\Gamma}{d_1:\ofty\app e\app t_1}}, 
      \fatm{\Gamma}{d_2:\ofty\app e\app t_2}\}$\\ 
\>\>$\longrightarrow
\fexists{d_3:\oty}{\fatm{\Gamma}{d_3:\eqty\app t_1\app t_2}}$
\end{tabbing}

At this point in the informal proof we used case analysis to identify
the different ways in which the formula 
$\fatm{\Gamma}{d_1:\ofty\app e\app t_1}$ may have been derived.
This effect is realized in our proof system by considering the
application of the $\appL$ rule with respect to the assumption formula
$\eqann{\fatm{\Gamma}{d_1:\ofty\app e\app t_1}}$.
Doing so leads to the conclusion that we can complete the proof if we
can construct proofs for the following four sequents:
\begin{enumerate}
\item \label{item1}
\begin{tabbing}
\=\qquad\ \=\hspace{8cm}\=\kill
\>$(n:\oty, n_1:\oty, n_2:\oty);
(t_2:\oty, d_2:\oty, r:\arr{\oty}{\oty}, t:\oty, u:\oty, a:\arr{\oty}{\arr{\oty}{\oty}});
\ctxvarty{\Gamma}{\emptyset}{\ctxty{c}{\cdot}};$\\
\>$\{\ltann[1]{F}, \fatm{\Gamma}{\lamtm\app t\app (\lflam{x}{r\app x}) : \tmty}, \fatm{\Gamma}{\arrtm\app t\app u : \tpty}, \fatm{\Gamma}{t_2 : \tpty}, $\\
\>\> $\fatm{\Gamma}{d_2:\ofty\app (\lamtm\app t\app (\lflam{x}{r\app x}))\app t_2}, 
      \ltann{\fatm{G,n:\tmty}{r\app n : \tmty}}, 
      \ltann{\fatm{\Gamma}{t : \tpty}}, $\\
\>\> $\ltann{\fatm{\Gamma}{u : \tpty}}, \ltann{\fatm{G,n_1:\tmty, n_2:\ofty\app n_1\app t}{a\app n_1\app n_2 : \ofty\app(r\app n_1)\app u}}\}$\\ 
\>\>\>$\longrightarrow
\fexists{d_3:\oty}{\fatm{\Gamma}{d_3:\eqty\app (\arrtm\app t\app u)\app t_2}}$
\end{tabbing}
\item \label{item2}
\begin{tabbing}
\=\qquad\ \=\hspace{8cm}\=\kill
\>$\emptyctx;
(t_1:\oty, t_2:\oty, d_2:\oty, m:\oty, n:\oty, u:\oty, a_1:\oty, a_2:\oty);
\ctxvarty{\Gamma}{\emptyset}{\ctxty{c}{\cdot}};$\\
\>$\{\ltann[1]{F}, \fatm{\Gamma}{\apptm\app m\app n : \tmty}, \fatm{\Gamma}{t_1 : \tpty}, \fatm{\Gamma}{t_2 : \tpty},$\\
\>\> $\fatm{\Gamma}{d_2:\ofty\app (\apptm\app m\app n)\app t_2}, 
      \ltann{\fatm{\Gamma}{m : \tmty}}, \ltann{\fatm{\Gamma}{n : \tmty}}, 
      \ltann{\fatm{\Gamma}{t_1 : \tpty}},$\\
\>\> $\ltann{\fatm{\Gamma}{u : \tpty}},
      \ltann{\fatm{\Gamma}{a_1 : \ofty\app M\app(\arrtm\app u\app t_1)}}, 
      \ltann{\fatm{\Gamma}{a_2 : \ofty\app n\app u}}\}$\\ 
\>\>\>$\longrightarrow
\fexists{d_3:\oty}{\fatm{\Gamma}{d_3:\eqty\app t_1\app t_2}}$
\end{tabbing}
\item \label{item3}
\begin{tabbing}
\=\hspace{9cm}\=\kill
\>$\emptyctx;
(t_2:\oty, d_2:\oty);
\ctxvarty{\Gamma}{\emptyset}{\ctxty{c}{}};$\\
\>$\{\ltann[1]{F}, \fatm{\Gamma}{\emptytm : \tmty}, \fatm{\Gamma}{\unittm : \tpty}, \fatm{\Gamma}{t_2 : \tpty}, 
      \fatm{\Gamma}{d_2:\ofty\app \emptytm\app t_2}\}$\\ 
\>\>$\longrightarrow
\fexists{d_3:\oty}{\fatm{\Gamma}{d_3:\eqty\app \unittm\app t_2}}$
\end{tabbing}
\item \label{item4}
\begin{tabbing}
\=\hspace{8cm}\=\kill
\>$(n:\oty,n_1:\oty);
(t_1:\arr{\oty}{\arr{\oty}{\oty}}, t_2:\arr{\oty}{\arr{\oty}{\oty}}, d_2:\arr{\oty}{\arr{\oty}{\oty}});$\\
\>$\ctxvarty{\Gamma}{\emptyset}{\ctxty{c}{n:\tmty,n_1:\ofty\app n\app (t_1\app n\app n_1)}};$\\
\>$\{\ltann[1]{F}, \fatm{\Gamma}{n : \tmty}, \fatm{\Gamma}{t_1\app n\app n_1 : \tpty}, \fatm{\Gamma}{t_2\app n\app n_1 : \tpty}, 
      \fatm{\Gamma}{d_2\app n\app n_1:\ofty\app n\app (t_2\app n\app n_1)}\}$\\ 
\>\>$\longrightarrow
\fexists{d_3:\oty}{\fatm{\Gamma}{d_3:\eqty\app (t_1\app n\app n_1)\app (t_2\app n\app n_1)}}$
\end{tabbing}
\end{enumerate} 
The first of these sequents covers all of the cases where the head of
$d_1$ is the constant $\oflamtm$, the second where it is 
$\ofapptm$, the third where it is $\ofemptytm$, 
and the fourth where it is a bound variable that appears in the
context that instantiates $\Gamma$. 

In the first of these cases, the informal proof was based on considering
the derivation of the main LF typing judgement associated with 
$\fatm{\Gamma}{d_2:\ofty\app(\lamtm\app t\app (\lflam{x}{r\app
    x}))\app t_2}$.
Given the structure of the type, such a derivation will exist only when
the head of $d_2$ is the constant $\oflamtm$.
In the proof system, this analysis is realized by considering the use
of the \appL\ rule relative to this formula, leading to the conclusion
that we could complete a proof for this subcase if we are able to
provide a proof for the sequent 
\begin{tabbing}
\quad\=\qquad\ \=\hspace{7.5cm}\=\kill
\>$(n:\oty, n_1:\oty, n_2:\oty, n_3:\oty, n_4:\oty, n_5:\oty);$\\
\>$(r:\arr{\oty}{\oty}, t:\oty, u:\oty, a:\arr{\oty}{\arr{\oty}{\oty}}, u_2:\oty, a_2:\arr{\oty}{\arr{\oty}{\oty}});
\ctxvarty{\Gamma}{\emptyset}{\ctxty{c}{\cdot}};$\\
\>$\{\ltann[1]{F}, \fatm{\Gamma}{\lamtm\app t\app (\lflam{x}{r\app x}) : \tmty}, 
         \fatm{\Gamma}{\arrtm\app t\app u : \tpty}, \fatm{\Gamma}{\arrtm\app t\app u_2 : \tpty},
         \ltann{\fatm{\Gamma,n:\tmty}{r\app n : \tmty}},$\\
\>\> $\ltann{\fatm{\Gamma}{t : \tpty}}, \ltann{\fatm{\Gamma}{u : \tpty}}, 
      \ltann{\fatm{\Gamma,n_1:\tmty, n_2:\ofty\app n_1\app t}{a\app n_1\app n_2 : \ofty\app(r\app n_1)\app u}},$\\
\>\> $\fatm{\Gamma,n_3:\tmty}{r\app n_3 : \tmty}, 
      \fatm{\Gamma}{t : \tpty}, \fatm{\Gamma}{u_2 : \tpty},$\\ 
\>\> $\fatm{\Gamma,n_4:\tmty, n_5:\ofty\app n_1\app t}{a_2\app n_4\app n_5 : \ofty\app(r\app n_4)\app u_2}\}$\\
\>\>\>$\longrightarrow
\fexists{d_3:\oty}{\fatm{\Gamma}{d_3:\eqty\app (\arrtm\app t\app u)\app (\arrtm\app t\app u_2)}}$
\end{tabbing}

The informal proof now uses the induction hypothesis based on the
strictly smaller derivation that is assumed to exist for the main
typing judgement associated with the assumption formula
$\fatm{\Gamma,n_1:\tmty, n_2:\ofty\app n_1\app t}
      {a\app n_1\app n_2 : \ofty\app(r\app n_1)\app u}$.
In the formal proof, this effect is realized by using the formula
$\ltann[1]{F}$ together with the other assumption formulas in the
sequent.
Specifically, we would consider a sequence of uses of the
$\ctxL$ and $\allL$, $\impL$,\id\ and \weakening\ rules that would
leave us needing to prove a sequent in which the consequent of a
relevant instance of the formula $F$ has been added to the set of
assumption formulas.
To actually carry out this process, it would be necessary to check
that the context expression
$\Gamma,n_1:\tmty, n_2:\ofty\app n_1\app t$ has the structure needed
to satisfy the context schema represented by $c$, and we would need to
instantiate the quantified variable $e$ in $F$ with $(r\app n_1)$,
$t_1$ with $u_1$, $t_2$ with $u_2$, $d_1$ with $(a\app n_1\app n_2)$,
and $d_2$ with $(a_2\app n_1\app n_2)$.
We leave it to the reader to check that these requirements can be
satisfied and to elaborate the process by which the task can be
reduced to that of constructing a proof for the sequent
\begin{tabbing}
\quad\=\qquad\ \=\hspace{7.5cm}\=\kill
\>$(n:\oty, n_1:\oty, n_2:\oty, n_3:\oty, n_4:\oty, n_5:\oty);$\\
\>$(r:\arr{\oty}{\oty}, t:\oty, u:\oty, a:\arr{\oty}{\arr{\oty}{\oty}}, u_2:\oty, a_2:\arr{\oty}{\arr{\oty}{\oty}}, d_3:\arr{\oty}{\arr{\oty}{\arr{\oty}{\arr{\oty}{\arr{\oty}{\arr{\oty}{\oty}}}}}});$\\
\>$\ctxvarty{\Gamma}{\emptyset}{\ctxty{c}{\cdot}};$\\
\>$\{\ltann[1]{F}, \fatm{\Gamma}{\lamtm\app t\app (\lflam{x}{r\app x}) : \tmty}, 
         \fatm{\Gamma}{\arrtm\app t\app u : \tpty}, \fatm{\Gamma}{\arrtm\app t\app u_2 : \tpty},
         \ltann{\fatm{\Gamma,n:\tmty}{r\app n : \tmty}},$\\
\>\> $\ltann{\fatm{\Gamma}{t : \tpty}}, 
      \ltann{\fatm{\Gamma}{u : \tpty}}, 
      \ltann{\fatm{\Gamma,n_1:\tmty, n_2:\ofty\app n_1\app t}{a\app n_1\app n_2 : \ofty\app(r\app n_1)\app u}},$\\
\>\> $\fatm{\Gamma,n_3:\tmty}{r\app n_3 : \tmty}, 
      \fatm{\Gamma}{t : \tpty}, \fatm{\Gamma}{u_2 : \tpty}, $\\
\>\> $\fatm{\Gamma,n_4:\tmty, n_5:\ofty\app n_1\app t}{a_2\app n_4\app n_5 : \ofty\app(r\app n_4)\app u_2},$\\
\>\> $\fatm{\Gamma,n_1:\tmty,n_2:\ofty\app n_1\app t}
           {d_3\app n\app n_1\app n_2\app n_3\app n_4\app n_5:
               \eqty\app u\app u_2}\}$\\ 
\>\>\>$\longrightarrow
\fexists{d_3:\oty}{\fatm{\Gamma}{d_3:\eqty\app (\arrtm\app t\app u)\app (\arrtm\app t\app u_2)}}$
\end{tabbing}

The key observation now is that the main typing judgement
corresponding to the new assumption formula in this sequent is
derivable only of the head of $d_3$ is an instance of $\refltm$ and
$u$ and $u_2$ are instantiated with identical types. 
This reasoning step is reflected in the formal proof in the
employment, again, of the $\appL$ with respect to this assumption
formula, which results in the conclusion that it suffices to construct
a proof for the sequent
\begin{tabbing}
\quad\=\qquad\ \=\hspace{7.5cm}\=\kill
\>$(n:\oty, n_1:\oty, n_2:\oty, n_3:\oty, n_4:\oty, n_5:\oty);$\\
\>$(r:\arr{\oty}{\oty}, t:\oty, u:\oty, a:\arr{\oty}{\arr{\oty}{\oty}}, a_2:\arr{\oty}{\arr{\oty}{\oty}});
   \ctxvarty{\Gamma}{\emptyset}{\ctxty{c}{\cdot}};$\\
\>$\{\ltann[1]{F}, \fatm{\Gamma}{\lamtm\app t\app (\lflam{x}{r\app x}) : \tmty},
         \fatm{\Gamma}{\arrtm\app t\app u : \tpty}, \fatm{\Gamma}{\arrtm\app t\app u_2 : \tpty},
         \ltann{\fatm{\Gamma,n:\tmty}{r\app n : \tmty}},$\\
\>\> $\ltann{\fatm{\Gamma}{t : \tpty}}, 
      \ltann{\fatm{\Gamma}{u : \tpty}}, 
      \ltann{\fatm{\Gamma,n_1:\tmty, n_2:\ofty\app n_1\app t}{a\app n_1\app n_2 : \ofty\app(r\app n_1)\app u}},$\\
\>\> $\fatm{\Gamma,n_3:\tmty}{r\app n_3 : \tmty},
      \fatm{\Gamma}{t : \tpty}, \fatm{\Gamma}{u : \tpty},$\\
\>\> $\fatm{\Gamma,n_4:\tmty, n_5:\ofty\app n_1\app t}{a_2\app n_4\app n_5 : \ofty\app(r\app n_4)\app u},
      \fatm{\Gamma,n_1:\tmty,n_2:\ofty\app n_1\app t}{u :\tpty}\}$\\ 
\>\>\>$\longrightarrow
\fexists{d_3:\oty}{\fatm{\Gamma}{d_3:\eqty\app (\arrtm\app t\app u)\app (\arrtm\app t\app u)}}$
\end{tabbing}
It is easy to see that this sequent would be derivable if the goal
formula in it is replaced with the one obtained by instantiating the
existentially quantified variable $d_3$ with the term
$(\refltm\app(\arrtm\app t\app u))$. 
However, that suffices to complete the proof of this subcase: we
simply use the $\existsR$ below that derivation to get one for the
sequent that is really of interest. 

The analysis in the three remaining cases all begin with a step
similar to that in the case just considered, that of analyzing the 
assumption formula corresponding to the other posited STLC typing
derivation for the term.
Since the first derivation has fixed the top level structure of the
term, this analysis will identify only one possibility, i.e., that the
final step in that derivation is the same as that in the analysis of
the first derivation. 
In the case corresponding to sequent identified by the
label~\ref{item2}, the argument will again employ the inductive
hypothesis by identifying a suitable instance of the formula
$\ltann[1]{F}$. 
In each of the cases, the conclusion of the proof for that case will
involve the instantiation of the existential goal, and decomposition
of the resulting atomic goal formula.
We leave it to the reader to fill out the details. 

\section{Related Work and Conclusion}\label{sec:conclusion}

We have described a logic called \logic\ in this paper that provides
a means for formalizing properties of LF specifications. 
The atomic formulas in this logic encode typing judgements relative to
an LF signature that parameterizes the logic.
Propositional connectives can be used over these formulas to describe
more complex properties and the logic also supports quantification
over variables denoting LF contexts and terms.
We have complemented the logic by presenting a proof system that can
be used to establish the validity of formulas that express properties
of the specified systems.
In addition to providing rules that interpret the logical symbols, the
proof system builds in the capability to carry out a case-analysis
style reasoning over typing judgements in LF and to reason inductively
based on the heights of LF derivations.
The logic also enables the encoding of generic mechanisms for
reasoning about LF derivability, a fact that we have illustrated by
describing proof rules that embody a collection of commonly used
meta-theorems about LF.
In work that builds on the results of this paper, we have developed a
proof assistant called Adelfa that provides mechanized support for the
proof system and that we have used to demonstrate its effectiveness in
reasoning~\cite{southern21phd,southern21lfmtp}.

There have been other efforts directed at building a capability for
reasoning about LF specifications.
However, the approaches taken within these efforts have differed
from the one that we have explored in this paper.
One prominent approach is that embedded in what might be called the
``Twelf family'' of systems.
The first realization of this approach is the Twelf system
itself~\cite{pfenning99cade}. 
This system uses LF once again to formalize properties of systems
described in it.
More specifically, the properties that are of interest are described
by other LF types that have a functional structure.
The validity of these properties is then demonstrated by constructing
inhabitants of the types that represent them and exhibiting the
totality of these inhabitants as functions via means that are external
to the encoding calculus.
This approach has achieved much success but it is also limited by the
fact that the property that need to be expressed must be transformed
into a form that is encodable by a function type, \ie, by formulas
that have a $\forall\exists$ quantifier structure.
Another drawback of the approach is that is that there isn't an
explicit proof to be extracted at the end of a development for a
property that has ostensibly been demonstrated.
The logic $M_2^+$ has been enunciated towards mitigating this
issue~\cite{schurmann00phd}, and it appears possible to mechanically 
relate the ``reasoning'' embodied in a Twelf development to a proof in
this logic~\cite{wang13lfmtp}.
While some of the proof rules in $M_2^+$ bear a resemblance to the
ones in \logic, there are significant differences in the specific
treatment especially of inductive reasoning.

The Twelf system and the $M_2^+$ logic differ in an another marked way
from the logic that we have described in this paper: they do not
provide an explicit means for quantifying over contexts. It is
possible to parameterize a development by a context description, but
it is one fixed context that then permeates the development.
As an example, it is not possible to express the strengthening lemma 
pertaining to the equality of types in the STLC that we discussed in
Section~\ref{ssec:logic-examples}.
The Beluga system~\cite{pientka10ijcar} alleviates this problem by using
a richer version of type theory that allows for an explicit treatment
of contexts as its basis~\cite{nanevski08tocl}.
Beluga is based on a computational view of reasoning that is similar
in many respects to the philosophy underlying Twelf: one writes
dependently-typed recursive functions and the type system ensures that
the admitted 
functions are ones that are total and hence embody ``proofs'' of the
properties expressed by the types. 
This system is more expressive than Twelf because the type system is
more expressive, but the structure of the formulas that encode
properties is similarly limited. 
A recent development that appears related to this line of work is that
of {\sc Cocon}, a Martin-L\"{o}f style type theory that embeds LF within
a rich dependently typed calculus that supports
recursion~\cite{pientka2019arxiv}. 
It would be interesting to compare the reasoning capabilities that
result from this kind of a combination with what is possible to
achieve with an approach like ours.
We leave a further exploration of this issue to future work.

Another approach that has been explored for reasoning about LF
specifications is based on their translation to a predicate logic
form.
A particular exemplar of this approach is one that translates LF
specifications into specifications in the logic of hereditary Harrop 
formulas~\cite{miller12proghol}, to then be reasoned about using the
Abella system~\cite{southern14fsttcs}.
This approach has the virtue that benefit can be derived from any of
the (generic) reasoning capabilities that have been developed for the
host system. 
In the mentioned example, several such advantages are derived from the
expressiveness of the logic underlying Abella~\cite{gacek11ic}: 
it is possible to define relations between contexts, to treat
binding notions explicitly in the reasoning process through the
$\nabla$-quantifier, and to use inductive (and co-inductive)
definitions in the reasoning logic.
There are, however, some drawbacks to the translation-based
approach. 
Perhaps the most significant problem is that the proof steps that can
be taken under it are determined by the logic of the host system, and
this may allow for more possibilities than are sensible in the LF
context.  
One way to overcome this difficulty is to design macro proof steps
that capture the natural process of reasoning about LF
specifications. 
In this respect, one important outcome of the work that we have
described here, especially for the structure that we have developed
for case analysis, might be an understanding of how one might
build within proof assistants such as Abella a targeted
capability for reasoning about LF specifications. 

\section*{Acknowledgements}

This paper is based upon work supported by the National Science
Foundation under Grant No. CCF-1617771.
Any opinions, findings, and conclusions or recommendations expressed
in this material are those of the authors and do not necessarily
reflect the views of the National Science Foundation.

\bibliographystyle{abbrv}
\bibliography{../../references/master}

\end{document}